%
%
%
\documentclass[]{article}
\usepackage{amsmath} 
\usepackage{amsthm} 
\usepackage{latexsym} 
\usepackage{amssymb}

\usepackage[usenames]{xcolor}
\usepackage{latexsym}
\usepackage{amssymb}
\usepackage{amsmath}
\usepackage{graphicx}
\usepackage{tikz}
\usetikzlibrary{calc}
\usetikzlibrary{arrows}
\usetikzlibrary{decorations.pathmorphing}
\usetikzlibrary{patterns}
\usepackage{enumerate}
\newtheorem{theorem}{Theorem}
\newtheorem{lemma}[theorem]{Lemma}
\newtheorem{corollary}[theorem]{Corollary}

\allowdisplaybreaks

\renewcommand{\phi}{\varphi}

\newcommand{\SECT}{Sec.~}

\newcommand{\tseq}[1]{\mathfrak{#1}}
\newcommand{\intermediate}[1]{\dot{#1}}
\newcommand{\inner}[1]{\ddot{#1}}

\newcommand{\pcpW}{\mathsf{W}} 
\newcommand{\pcpw}{\mathsf{w}} 

\newcommand{\partition}{\mathsf{part}}  
\newcommand{\colourComp}{\mathsf{colour}} 
\newcommand{\cpartition}{\mathsf{sc\text{-}part}}
\newcommand{\noncontact}{\mathsf{notC}}
\newcommand{\noncontacti}{\mathsf{K5m}}
\newcommand{\stack}{\mathsf{stack}}
\newcommand{\frameFla}{\mathsf{frame}}
\newcommand{\stacki}{\ti{\mathsf{stack}}}
\newcommand{\frameFlai}{\ti{\mathsf{frame}}}

\newcommand{\Sat}{\textit{Sat}}
\newcommand{\ti}[1]{{#1}^\circ}
\newcommand{\tc}[2][]{{#2}^{-_{#1}}}
\newcommand{\ic}{c^\circ}

\newcommand{\md}[2][] {{\lfloor#2\rfloor_{#1}}}

\newcommand{\ConRC}{{\sf ConRC}}
\newcommand{\RegC}{{\sf RC}}
\newcommand{\RC}{{\sf RC}}
\newcommand{\RCP}{{\sf RCP}}

\newcommand{\R}{\mathbb{R}}
\newcommand{\bS}{\mathcal{S}}
\newcommand{\cK}{\mathcal{K}}%
\newcommand{\cL}{\mathcal{L}}%

\newcommand{\ExpTime}{\textsc{ExpTime}}
\newcommand{\NP}{\textsc{NP}}
\newcommand{\PSpace}{\textsc{PSpace}}

\newcommand{\cB}{\ensuremath{\mathcal{B}}}%
\newcommand{\cBc}{\ensuremath{\mathcal{B}c}}%
\newcommand{\cBci}{\ensuremath{\mathcal{B}{\lowercase{c}}^\circ}}%
\newcommand{\cBC}{\ensuremath{\mathcal{C}}}%
\newcommand{\cBCc}{\ensuremath{\mathcal{C}c}}%
\newcommand{\cBCci}{\ensuremath{\mathcal{C}c^\circ}}%
\newcommand{\RCCE}{\ensuremath{\mathcal{RCC}8}}%
\newcommand{\RCCF}{\ensuremath{\mathcal{RCC}5}}%
\newcommand{\BRCCE}{\ensuremath{\mathcal{BRCC}8}}%
\newcommand{\SF}{\ensuremath{\mathcal{S}4}}%
\newcommand{\SFU}{\ensuremath{\mathcal{S}4_u}}%


\begin{document}
\title{Topological Logics with Connectedness over Euclidean Spaces}
\author{\and Roman Kontchakov
\and Yavor Nenov
\and Ian Pratt-Hartmann
\and Michael Zakharyaschev}
\date{}

\def\pkg#1{{\texttt{#1}}}
\let\cn=\pkg
\def\latex{\LaTeX}
\def\amslatex{{\protect\AmS-\protect\LaTeX}}
\def\latextwoe{\LaTeX2\raisebox{-1pt}{$\epsilon$}}
\def\cmit{\fontfamily{cmr}\fontshape{it}\selectfont}

\maketitle

\begin{abstract}
We consider the quantifier-free languages, $\cBc$ and $\cBci$,
obtained by augmenting the signature of Boolean algebras with a unary
predicate representing, respectively, the property of {\em being
  connected}, and the property of {\em having a connected
  interior}. These languages are interpreted over the regular closed
sets of $\R^n$ ($n \geq 2$) and, additionally, over the regular closed
{\em polyhedral} sets of $\R^n$.  The resulting logics are examples of
formalisms that have recently been proposed in the Artificial
Intelligence literature under the rubric {\em Qualitative Spatial
  Reasoning}.  We prove that the satisfiability problem for $\cBc$ is
undecidable over the regular closed polyhedra in all dimensions
greater than 1, and that the satisfiability problem for both languages
is undecidable over both the regular closed sets and the regular
closed polyhedra in the Euclidean plane. However, we also prove that
the satisfiability problem for $\cBci$ is \NP-complete over the
regular closed sets in all dimensions greater than 2, while the
corresponding problem for the regular closed polyhedra is
\ExpTime-complete.  Our results show, in particular, that spatial
reasoning over Euclidean spaces is much harder than reasoning over
arbitrary topological spaces.
\end{abstract}

\section{Introduction}

Let \cBc{} be the quantifier-free fragment of first-order logic in the
signature\linebreak
$(+,\cdot,-,0,1,c)$, where $c$ is a unary predicate; and let
$\RCP(\R^n)$ be the collection of regular closed, polyhedral sets in
$n$-dimensional Euclidean space.  (A set is {\em regular closed} if it
is the closure of an open set, and {\em polyhedral} if it is a finite
union of finite intersections of closed half-spaces.)  The collection
$\RCP(\R^n)$ forms a Boolean algebra under the subset ordering; and we
interpret \cBc{} over $\RCP(\R^n)$ by taking the symbols
$+,\cdot,-,0,1$ to have their natural meanings in this Boolean
algebra, and by taking $c$ to denote the property of \emph{being
  connected}.  Intuitively, we think of elements of $\RCP(\R^n)$ as
regions of space, and of formulas of \cBc{} as descriptions of
arrangements of these regions. Our primary concern is the {\em
  satisfiability problem}: given a \cBc-formula, is there an
assignment of elements of $\RCP(\R^n)$ to its variables making it
true?

The motivation for studying this problem comes from the field of {\em
  Qualitative Spatial Reasoning} in Artificial Intelligence, the aim
of which is to develop formal languages for representing and
processing qualitative spatial information.  In this context, \cBc{}
constitutes a parsimonious language: it has no quantifiers, and its
non-logical primitives express only the basic region-combining
operations and the property of connectedness. At the same time, the structures $\RCP(\R^n)$---particularly
in the cases $n=2$ and $n=3$---constitute its most natural domains of
interpretation, given current practice in the fields of
Qualitative Spatial Reasoning, Geographic Information Systems and
Spatial Databases.

For reasons discussed below, we broaden the subject of enquiry
slightly.  Let \cBci{} denote the quantifier-free fragment of
first-order logic in the signature $(+,\cdot,-,0,1,\ic)$, where
$+,\cdot,-,0,1$ are as before, and $\ic$ is a unary predicate
interpreted as the property of {\em having a connected
  interior}. Further, let $\RC(\R^n)$ denote the collection of regular
closed sets in $n$-dimensional Euclidean space. Again, $\RC(\R^n)$
forms a Boolean algebra under the subset ordering, having $\RCP(\R^n)$
as a sub-algebra. Intuitively, we think of $\RC(\R^n)$ as a more
liberal model of spatial regions than $\RCP(\R^n)$. In the sequel, we
consider the satisfiability problem for \cBc{} and \cBci{} over the
structures $\RC(\R^n)$ and $\RCP(\R^n)$. The results of this paper are
as follows: ({\em i}) the satisfiability problem for \cBc{} over
$\RCP(\R^n)$ is undecidable for all $n \geq 2$; ({\em ii}) the
satisfiability problem for \cBc{} over $\RC(\R^2)$ is undecidable, as
are the satisfiability problems for \cBci{} over both $\RC(\R^2)$ and
$\RCP(\R^2)$; ({\em iii}) the satisfiability problem for \cBci{} over
$\RC(\R^n)$ is \NP-complete for all $n \geq 3$, while over
$\RCP(\R^n)$ the corresponding problem is \ExpTime-complete. (It may
be of interest to note that, over $\RC(\R)$ and $\RCP(\R)$, the
satisfiability problem for \cBc{} and \cBci{} is \NP-complete.)  The
decidability of the satisfiability problems for \cBc{} over
$\RC(\R^n)$, for $n \geq 3$, is left open.  Results ({\em ii}) and
({\em iii}) were announced, without proofs, in~\cite{KPZ10kr, KNPZ11ijcai}.

Mathematically, it is also meaningful to consider the satisfiability
of \cBc- and \cBci-formulas over the regular closed subsets of {\em
  any} topological space. If $T$ is a topological space, we denote the
collection of regular closed subsets of $T$ by $\RC(T)$; again, this
collection always forms a Boolean algebra under the subset
ordering. The satisfiability problem for $\cBc$ over the class of
structures of the form $\RC(T)$ is known to be \ExpTime-complete,
while for \cBci{}, the corresponding problem is
\NP-complete~\cite{iscloes:kp-hwz10, KPZ10kr}.
However,
satisfiability over {\em arbitrary} topological spaces is of at most
marginal relevance to Qualitative Spatial Reasoning. Indeed, the
results reported here show that, for languages able to express the
property of connectedness, reasoning over {\em Euclidean} spaces is a
different kettle of fish altogether.  In the remainder of this
section, we discuss the significance of these results in the context
of recent developments in spatial, algebraic and modal logics.

\subsection{Spatial logic}
\label{sec:sl}

A spatial logic is a formal language interpreted over some class of
geometrical structures. Spatial logics, thus understood, have a long
history, tracing their origins back both to the axiomatic tradition in
geometry~\cite{hilbert09,Tarski59} and also the region-based theory of
space~\cite{Whitehead29,deLaguna22}, subsequently developed
in~\cite{KK:Clarke81,KK:Clarke85,BiacinoGerla91}. Such logics were
proposed as a formalism for Qualitative Spatial Reasoning in the
seminal paper~\cite{Randelletal92}. The basic idea is as follows:
numerical coordinate-based descriptions of the objects that surround
us are hard to acquire, inherently error-prone, and probably
unnecessary for everyday spatial reasoning tasks; therefore---so goes
the argument---we should employ a representation language whose
variables range over spatial regions (rather than points), and whose
non-logical primitives are interpreted as qualitative (rather than
quantitative) relations and operations.  On this view, formulas are to
be understood as expressing descriptions of (putative) configurations
of objects in space, with the satisfiability of a formula over the
space in question equating to the geometrical realizability of the
described arrangement. If we imagine an intelligent agent employing
such a language to represent spatial arrangements of objects, then the
problem of recognizing satisfiable formulas amounts to that of
eliciting the geometrical knowledge latent in that agent's operating
environment and cognitive design.

The best-known, and most intensively studied, qualitative spatial
representation language is
\RCCE~\cite{Egenhofer&Franzosa91,Randelletal92,Smith&Park92}. This
language features predicates for the six topological relations
$\mathsf{DC}$ (disconnection), $\mathsf{EC}$ (external connection),
$\mathsf{PO}$ (partial overlap), $\mathsf{EQ}$ (equality),
$\mathsf{TPP}$ (tangential proper part) and $\mathsf{NTPP}$
(non-tangential proper part) illustrated, for the case of closed
disc-homeomorphs, in Fig.~\ref{fig:ccircles}.  (The name \RCCE{}
becomes less puzzling when we observe that the relations
$\mathsf{TPP}$ and $\mathsf{NTPP}$ are asymmetric.)  Note that \RCCE{}
has no individual constants or function symbols, and no
quantifiers.
\begin{figure}[ht]
\begin{center}
\begin{tikzpicture}[scale=0.6]\small
\draw[fill=gray!50] (0,0.7) circle (0.5);
\draw[fill=gray!10,fill opacity=0.5] (0,-0.5) circle (0.5);
\node at(0,0.7) {$r_2$};
\node at(0,-0.5) {$r_1$};
\node at(0,-1.5) {\footnotesize ${\mathsf{DC}}(r_1,r_2)$};
%
\draw[fill=gray!50] (3,0.5) circle (0.5);
\draw[fill=gray!10,fill opacity=0.5] (3,-0.5) circle (0.5);
\node at(3,0.5) {$r_2$};
\node at(3,-0.5) {$r_1$};
\node at(3,-1.5) {\footnotesize ${\mathsf{EC}}(r_1,r_2)$};
%
\draw[fill=gray!50] (6,0.4) circle (0.7);
\draw[fill=gray!10,fill opacity=0.5] (6,-0.4) circle (0.7);
\draw (6,0.4) circle (0.7);
\node at(6,0.6) {$r_2$};
\node at(6,-0.6) {$r_1$};
\node at(6,-1.5) {\footnotesize ${\mathsf{PO}}(r_1,r_2)$};
%
\draw[fill=gray!50] (9,0) circle (1);
\draw[fill=gray!10,fill opacity=0.5] (9,0) circle (1);
\node at(9,0.4) {$r_2$};
\node at(9,-0.4) {$r_1$};
\node at(9,-1.5) {\footnotesize ${\mathsf{EQ}}(r_1,r_2)$};
%
\draw[fill=gray!50] (12,0) circle (1);
\draw[fill=gray!10,fill opacity=0.5] (12,-0.5) circle (0.5);
\node at(12,0.6) {$r_2$};
\node at(12,-0.5) {$r_1$};
\node at(12,-1.5) {\footnotesize ${\mathsf{TPP}}(r_1,r_2)$};
%
\draw[fill=gray!50] (15,0) circle (1);
\draw[fill=gray!10,fill opacity=0.5] (15,-0.3) circle (0.5);
\node at(15,0.6) {$r_2$};
\node at(15,-0.3) {$r_1$};
\node at(15,-1.5) {\footnotesize ${\mathsf{NTPP}}(r_1,r_2)$};
\end{tikzpicture}
\end{center}
\caption{\RCCE-relations over discs in $\R^2$.}
\label{fig:ccircles}
\end{figure}
Traditionally, \RCCE{} is interpreted over the regular closed sets in
some topological space, so as to finesse the awkward issue of whether
regions should be taken to include their boundary points. The
satisfiability problems for $\RCCE$ over the class of structures of
the form $\RC(T)$ (for $T$ a topological space) is easily seen to be
\NP-complete, though tractable fragments have been
explored~\cite{Renz&Nebel01,aledThesis}. Further, satisfiability of an
$\RCCE$ formula over any structure in this class implies
satisfiability over $\RCP(\R^n)$, for all $n \geq
1$~\cite{Renz&Nebel99}. Thus, the satisfiability problems for
$\RC(\R^n)$ and $\RCP(\R^n)$ ($n \geq 1$) coincide and are
\NP-complete---a fact which testifies to the restricted expressive
power of $\RCCE$.

A word of caution is in order at this point.  Satisfiability of an
\RCCE-formula over $\RC(\R^2)$ does not necessarily imply
satisfiability by {\em natural} or {\em familiar} regions---for
example, closed disc-homeomorphs.  The $\RCCE$-satisfiability problem
for such interpretations requires specialized, and highly
non-trivial, techniques. A landmark result~\cite{iscloes:sss03} in the
area shows, however, that the satisfiability problem for \RCCE{}
interpreted over the closed disc-homeomorphs in $\R^2$ is still in
\NP. The contribution of present paper, with its emphasis on Euclidean
spaces and the property of connectedness, imposes severe limits on
what further results of this kind we can hope for.

We mentioned above that, if $T$ is a topological space, the collection
$\RC(T)$ always forms a Boolean algebra under the subset
ordering. This enables us to extend \RCCE{} with the function symbols
$+$, $\cdot$, $-$ and constants $0$, $1$, interpreting these in the
natural way over any structure $\RC(T)$. Such an extended language was
originally introduced in~\cite{Wolter&Z00ecai} under the name \BRCCE{}
({\em Boolean} \RCCE).  Intuitively, if $a_1$ and $a_2$ are regular
closed sets, we may think of $a_1 + a_2$ as the agglomeration of $a_1$
and $a_2$, $a_1 \cdot a_2$ as the common part of $a_1$ and $a_2$,
$-a_1$ as the complement of $a_1$, $0$ as the empty region and $1$ as
the whole space. The satisfiability problem for \BRCCE{} over the
class of structures of the form $\RC(T)$ is still \NP-complete;
however, restricting attention to {\em connected} spaces $T$ yields
a \PSpace-complete satisfiability problem. Thus, \BRCCE{}, unlike
\RCCE{}, has sufficient expressive power to distinguish between
satisfiability over arbitrary spaces and satisfiability over connected
spaces. But that is about as far as this extra expressive power takes
us: satisfiability of a \BRCCE-formula over any structure $\RC(T)$,
for $T$ connected, implies satisfiability over $\RCP(\R^n)$ for all $n
\geq 1$. Hence, the satisfiability problems for $\RC(\R^n)$ and
$\RCP(\R^n)$ ($n \geq 1$) coincide, and are \PSpace-complete. Note
in particular that \BRCCE{} does not enable us to say that a given
{\em region} of space is connected.

We end this discussion of \RCCE{} and \BRCCE{} with a remark on the
absence of quantification from these languages. This restriction is
motivated by computability considerations: essentially all
region-based spatial logics with full first-order syntax have
undecidable satisfiability problems, and so are considered unsuitable
for Qualitative Spatial
Reasoning~\cite{KK:Grzegorczyk51,KK:Dornheim98kr,Davis06,Lutz&WolterLMCS}.
To be sure, first-order spatial logics are nevertheless of
considerable model-theoretic interest; see~\cite{PH:HSL} for a
survey. We note in particular that, if we can quantify over regions,
then the \RCCE{}-primitives easily enable us to define, over most
interesting classes of interpretations, all of the primitives $+$,
$\cdot$, $-$, $0$, $1$, $c$ and $\ic$. However, as computability
considerations are to the fore in this paper, we too confine
ourselves to quantifier-free formalisms in the sequel.



\subsection{Algebraic and modal logic}
\label{sec:aml}

The standard view of
topology takes a topological space to consist of a set of points on
which a collection of open subsets is defined. However, a dual view is
possible, in which one begins with a Boolean algebra, and then adds
algebraic structure defining distinctively topological relations
between its objects. There are two main approaches to developing this
second view.  On the first, we think of the underlying Boolean algebra
as a field of sets, and we augment this Boolean algebra with a pair of
unary operators, conceived of as representing the operations of {\em
  closure} and {\em interior}, and assumed to obey the standard
Kuratowski axioms~\cite{McKinsey&Tarski44}.  The striking similarity
between these axioms and the axioms for the propositional modal logic
\SF~\cite{Orlov28,Godel33a} led to the development of topological
semantics for modal logics. Under this semantics, the (propositional)
variables are taken to range over any collection of subsets of a
topological space (not just regular closed sets), and the logical
connectives are interpreted by the operations of union, intersection,
complement and topological interior (for necessity) and closure (for
possibility). The extension of this language with the universal
modality, denoted \SFU~\cite{Goranko&Passy92}, is known to be a
super-logic for \RCCE{} and
\BRCCE~\cite{Bennett94,Renz&Nebel97,Cohnetal97,Nutt99,Wolter&Z00ecai}. The
satisfiability problem for \SFU{} is the same over every connected,
separable, dense-in-itself metric space, and this problem is
\PSpace-complete~\cite{McKinsey&Tarski44,Shehtman99,Arecesetal00}.
We remark that, as for \RCCE{} and \BRCCE{}, \SFU{} is
unable to express the condition that a region is connected. For a
survey of the relationship between spatial and modal logics see
\cite{Benthem&Guram07hb,Gabelaiaetal05,KPWZ08aiml} and references
therein.

On the second approach, we instead think of the underlying Boolean
algebra as an algebra of {\em regular closed} sets, and we augment
this Boolean algebra with a binary predicate $C$, conceived of as
representing the relation of {\em contact}. (Two sets are said to be
{\em in contact} if they have a non-empty intersection). This binary
predicate is assumed to satisfy the axioms of {\em contact algebras},
a category which is known to be dual to the category of dense
sub-algebras of regular closed algebras of topological
spaces~\cite{DW05,DV1,DV2,Balbianietal07,Tinchev&Vakarelov10,Vakarelov07}.
The contact relation as a basis for topology actually has a venerable
career, having originally been introduced in~\cite{Whitehead29} under
the name `extensive connection'.  More relevantly for the present
paper, it is straightforward to show that all the \RCCE{} relations
can be expressed, in purely propositional terms, using this
signature~\cite{Balbianietal07,KPWZ08aiml}.  (Thus, for example,
${\mathsf{EC}}(\tau_1,\tau_2)$ is equivalent to $C(\tau_1,\tau_2)
\wedge (\tau_1 \cdot \tau_2 = 0)$.)  For this reason, we regard the
propositional language over the signature $(+,\cdot,-,0,1,C)$, here
denoted \cBC, as equivalent to the language \BRCCE{} mentioned above.
The purely Boolean fragment of \cBC{} (without the contact predicate
$C$) is denoted by \cB. This language is in fact equivalent to the
extension of the spatial logic \RCCF~\cite{Bennett94} with the
function symbols $+$, $\cdot$ and $-$.

%


\subsection{Spatial logics with connectedness}
\label{sec:lwc}

Most spatial regions of interest---plots of land in a
cadastre, the space occupied by physical objects, paths swept out by
moving objects---are either connected or at least contain few
connected components~\cite{Cohn&Renz08}. It seems, therefore, that to
be genuinely useful, logics for Qualitative Spatial Reasoning should
possess some means of expressing this notion. The simplest way of
proceeding is to consider languages featuring a unary predicate
denoting this property. Various such languages have been investigated
before~\cite{PrattHartmann02,KPWZ08lpar,iscloes:kp-hwz10,Vakarelov07,Tinchev&Vakarelov10};
the language \cBc{} is chosen for study here because it is so
parsimonious.

It is worth bearing in mind, however, that `connectedness,' in the
topologists' sense may not be exactly what we want. For example, a
region consisting of two closed discs externally touching is, in this
sense, connected, yet, in certain contexts, may be functionally
equivalent to a disconnected region. (Imagine having a garden that
shape.)  In such contexts, it may be more useful to employ the notion
of a region's having a connected {\em interior}, a property we refer
to as {\em interior-connectedness}. Note that every regular closed,
interior-connected set is connected; also, in the space $\R$, the
notions of connectedness and interior-connectedness coincide.  So as
not to prejudge the issue here, we employ predicates for both notions:
$c$ to denote the standard property of connectedness, $\ic$ to denote
the property of interior-connectedness.  Hence, in addition to the
`minimal' language \cBc, we have its counterpart \cBci.

Strikingly, the languages \RCCE{} and \BRCCE, which cannot represent
connectedness (or, for that mater, interior-connectedness), are far
less sensitive to the underlying geometrical interpretation than the
languages \cBc{} and \cBci, which can.  For example, an
\RCCE-formula that is satisfiable over the regular closed algebra of
{\em any} topological space is satisfiable over $\RCP(\R^n)$, for all
$n \geq 1$~\cite{Renz98}.  Or again, a \BRCCE-formula that is
satisfiable over the regular closed algebra of any {\em connected}
topological space is satisfiable over $\RCP(\R^n)$, for all $n \geq
1$~\cite{Wolter&Z00ecai}. Thus, \RCCE{} and \BRCCE{} care neither
about the dimension of the (Euclidean) space we are reasoning about,
nor about the distinction between regular closed polyhedra and
arbitrary regular closed sets. Not so with the languages \cBc{} or
\cBci, which are sensitive both to the dimension of space and to the
restriction to polyhedral regular closed sets. This sensitivity is
easy to demonstrate for \cBci, and we briefly do so here, by way of
illustration.

Consider first sensitivity to dimension. The \cBci-formula
\begin{align}\label{eq:1vs2}
\bigwedge_{1 \leq i \leq 3}\hspace*{-0.5em} \bigl(\ic(r_i) \wedge (r_i \neq 0)\bigr) \ \ \land\ \
\bigwedge_{1 \leq i < j \leq 3}\hspace*{-0.5em}
          \bigl(\ic(r_i + r_j) \wedge (r_i \cdot r_j = 0)\bigr)
\end{align}
`says' that $r_1$, $r_2$ and $r_3$ are non-empty regions with
connected interiors, such that each forms an interior-connected sum
with the other two, without overlapping them. It is obvious that this
formula is not satisfiable over $\RC(\R)$. For the non-empty,
(interior-) connected regular closed sets on the real line are
precisely the non-punctual, closed intervals, and
it is impossible for
three such intervals to touch each other without overlapping. On the
other hand,~\eqref{eq:1vs2} is easily seen to be satisfiable over
$\RC(\R^n)$ for all $n \geq 2$. Likewise, the \cBci-formula
\begin{align}\label{eq:2vs3}
\bigwedge_{1 \leq i \leq 5}\hspace*{-0.5em} \bigl(\ic(r_i) \wedge (r_i \neq 0)\bigr) \ \ \land\ \
\bigwedge_{1 \leq i < j \leq 5}\hspace*{-0.5em}
          \bigl(\ic(r_i + r_j) \wedge (r_i \cdot r_j = 0)\bigr),
\end{align}
which makes the analogous claim for regions $r_1, \dots, r_5$, is not
satisfiable over $\RC(\R^2)$, since any satisfying assignment would
permit a plane drawing of the graph $K_5$. On the other
hand,~\eqref{eq:2vs3} is easily seen to be satisfiable over
$\RC(\R^n)$ for all $n \geq 3$. Thus, the satisfiability problems for
\cBci{} over $\RC(\R)$, $\RC(\R^2)$ and $\RC(\R^3)$ are all different.
(We shall see in \SECT\ref{sec:3d}, however, that the satisfiability
problem for \cBci{} over $\RC(\R^n)$ is the same for all $n \geq 3$.)

Consider next sensitivity to restriction to (regular closed)
\emph{polyhedral} sets.  The \cBci-formula
\begin{align}\label{eq:wiggly}
\bigwedge_{1 \leq i \leq 3}\hspace*{-0.5em} \ic(r_i) \ \ \land\ \ \ic(r_1 + r_2 + r_3) \ \ \land\ \ \bigwedge_{2 \leq i \leq 3} \neg\ic(r_1 + r_i)
\end{align}
is satisfiable over $\RC(\R^2)$, as we see from the regular closed
sets in Fig.~\ref{fig:wiggly}, where $r_2$ and $r_3$ lie,
respectively, above and below the graph of the function
$\sin\frac{1}{x}$ on the interval $(0,1]$.  By contrast,
  formula~\eqref{eq:wiggly} is unsatisfiable over $\RCP(\R^n)$ for all
  $n \geq 1$~\cite{iscloes:kp-hwz10}.
\begin{figure}[ht]
\begin{center}
\setlength{\unitlength}{1mm}%
\begin{picture}(35,25)%
\put(0,0){\includegraphics[scale=0.5]{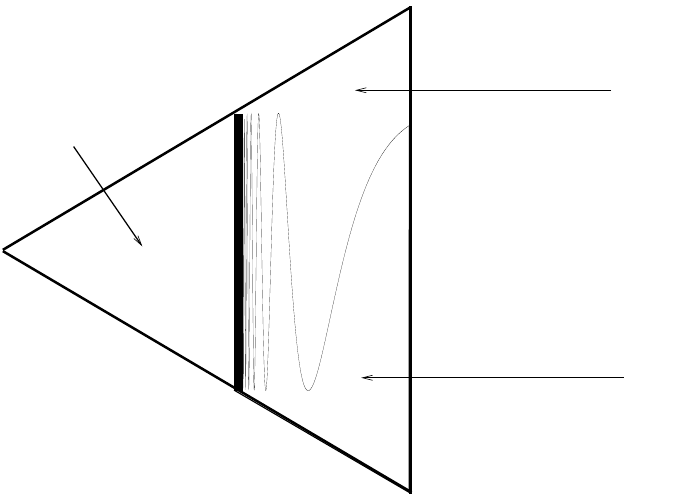}}%
\put(2,19){\small $r_1$}
\put(32,20){\small $r_2$}
\put(32,5.5){\small $r_3$}
\end{picture}%
\end{center}
\caption{Three regions in $\RC(\R^2)$ satisfying \eqref{eq:wiggly}.}
\label{fig:wiggly}
\end{figure}
Actually, the result can be sharpened: \eqref{eq:wiggly} is
unsatisfiable over {\em any} Boolean sub-algebra of $\RC(\R^n)$ whose
regions all satisfy a form of the {\em curve selection lemma} from
real algebraic geometry~(see, e.g.~\cite{BCR98}). As we might say, in
dimensions 2 and above, \cBci{} is sensitive to the presence of
`non-tame' regions. And since---at least conceivably---non-tame
regions may be thought implausible models of the space occupied by any
physical objects---it is natural to consider satisfiability of
\cBci{}-formulas over $\RCP(\R^n)$ rather than over $\RC(\R^n)$.

The language \cBc{} is similarly sensitive to the dimension of the
Euclidean space over which it is interpreted, and also to the
restriction to polyhedral regions. For dimensionality, this
sensitivity can be demonstrated by examples similar to~\eqref{eq:1vs2}
and~\eqref{eq:2vs3}; see~\cite{KPWZ08aiml}. For the restriction to
polyhedral regions, this result follows from \SECT\ref{infcomp}, where we
show that there exists a \cBc{}-formula satisfiable in $\RC(\R^n)$ for
all $n \geq 2$, but only by tuples of regions having infinitely many
connected components!


\subsection{Plan of the paper and summary of results}
The remainder of this paper is organized as follows.
\SECT\ref{prelim}~defines the syntax and semantics of \cBc{} and
\cBci{}. To simplify proofs, we also employ the more expressive
languages \cBCc{} and \cBCci{}, obtained by adding the predicates $c$
and $\ic$, respectively, to \cBC{} ($=\BRCCE$). In \SECT\ref{infcomp}, we
prove that there exist \cBCc-, \cBCci- and \cBc-formulas satisfiable
over $\RC(\R^n)$, for all $n\geq 1$, but only by tuples of regions
some of which have infinitely many connected components, and hence
which cannot belong to $\RCP(\R^n)$. By further developing the ideas
encountered in this proof, we show in \SECT\ref{sec:undecidability} that
\cBCc, \cBCci{} and \cBc{} (but not \cBci{}) are r.e.-hard over
$\RCP(\R^n)$, for all $n\ge 2$. Using a different approach, we show in
\SECT\ref{sec:2d} that all four of our logics---\cBc, \cBci{}\!, \cBCc{}
and \cBCci{}---are r.e.-hard over both $\RCP(\R^2)$ and
$\RC(\R^2)$. Finally, we show in \SECT\ref{sec:3d} that \cBci{} is
\NP-complete over $\RC(\R^n)$, and \ExpTime-complete over
$\RCP(\R^n)$, for all $n\ge 3$. The decidability of satisfiability for
\cBCc, \cBCci{} and \cBc{} over $\RC(\R^n)$, for all $n\ge 3$, is left
open.


\section{Preliminaries}\label{prelim}

We begin by formally defining the syntax and semantics of the topological logics considered in this paper. This section also contains the basic technical definitions and results we need in what follows.

\subsection{Basic topological notions}

A \emph{topological space} is a pair $(T,\mathcal{O})$, where $T$ is a
set and $\mathcal{O}$ a collection of subsets of $T$ containing
$\emptyset$ and $T$, and closed under arbitrary unions and finite
intersections. The elements of $\mathcal{O}$ are referred to as {\em
  open sets}; their complements are {\em closed} sets. If
$\mathcal{O}$ is clear from context, we refer to the topological space
$(T,\mathcal{O})$ simply as $T$. If $X \subseteq T$, the {\em closure}
of $X$, denoted $\tc{X}$, is the smallest closed set including $X$,
and the {\em interior} of $X$, denoted $\ti{X}$, is the largest open
set included in $X$. These sets always exist.  The {\em boundary} of
$X$, denoted $\delta X$, is the set $\tc{X} \setminus \ti{X}$. The
Euclidean space $\R^n$ is assumed always to have the usual metric
topology.  We may treat any subset $X
\subseteq T$ as a topological space in its own right by defining the
{\em subspace topology} on $X$ to be the collection of sets
$\mathcal{O}_X = \{O \cap X \mid O \in \mathcal{O}\}$.

We call $X$ {\em regular closed} if it is the closure of an open
set---equivalently, if $X = \tc{(\ti{X})}$. We denote by $\RC(T)$ the
set of regular closed subsets of $T$. It is a standard result that
$\RC(T)$ forms a complete Boolean algebra, with operations $\sum A =
\tc{(\bigcup A)}$, $\prod A = \tc{\ti{(\bigcap A)}}$ and $-X =
\tc{\smash{(T \setminus {X})}}$ (see, e.g.~\cite{Koppelberg89}).  The
partial order induced by this Boolean algebra is simply
$(T,\subseteq)$; we often write $X \leq Y$ in preference to $X
\subseteq Y$ where $X$ and $Y$ are regular closed.  Note that, if $A =
\{X_1, X_2\}$, then $\smash{\sum A} = X_1 + X_2 = X_1 \cup X_2$.

A topological space $T$ is said to be \emph{connected} if it cannot be
decomposed into two disjoint, non-empty closed sets; likewise, $X$ is
\emph{connected} if it is a connected space under the subspace
topology. We call $X$ \emph{interior-connected} if
$\ti{X}$ is connected. A maximal connected subset of $X$ will be
called a {\em component} of $X$ (some authorities prefer the term {\em
  connected component}).  The following facts are easily verified:
every non-empty
connected subset of $X$ is included in a unique component of
$X$; every component of a closed set is closed.

The space $T$ is said to be {\em locally connected} if every
neighbourhood of any point of $T$ includes a connected neighbourhood
of that point (a \emph{neighbourhood} of a point $p$ is a set $X$ that
includes an open set $O$ containing $p$).  In a locally connected
space, every component of an open set is open; note however that
components of regular closed sets are closed but, in general, not
regular closed, even in locally connected spaces.  The space $T$ is
said to be {\em unicoherent} if, for any closed, connected subsets
$X_1$, $X_2$ such that $T = X_1 \cup X_2$, the set $X_1 \cap X_2$ is
connected. For all $n \geq 1$, the Euclidean space $\R^n$ is
(obviously) locally connected and (much less obviously)
unicoherent~\cite{utl:kuratowski}.  A simple example of a non-locally
connected space is the rational numbers $\mathbb Q$ under the usual
metric topology. Simple examples of non-unicoherent spaces are the
Jordan curve and the torus.

The most important properties of local connectedness and unicoherence,
from our point of view, are given by the following lemmas.
\begin{lemma}\label{lma:lc}
Let $X$ be a regular closed subset of a topological space $T$ and $S$
a component of $-X$. If $-X$ has finitely many components, then
$\delta S \subseteq X$.  Alternatively, if $T$ is locally connected,
then $\delta S \subseteq X$.
\end{lemma}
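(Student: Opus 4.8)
The plan is to prove the two halves of the claim by a common strategy: show that $\delta S$ cannot meet $-X$, so that $\delta S \subseteq \tc{(-X)} \setminus (-X) \subseteq X$ (using that $\delta S = \tc S \setminus \ti S$ and $S \subseteq -X$, so $\tc S \subseteq \tc{(-X)}$, while $\delta S \cap \ti{S} = \emptyset$; more to the point, I want $\delta S \cap \ti{(-X)} = \emptyset$, and since $-X$ is regular closed this gives $\delta S \cap (-X)$ has empty interior). Actually the cleanest route is: since $S$ is a component of $-X$ and $-X \subseteq T$, any point $p \in \delta S \setminus X$ lies in $\ti{(-X)}$ (because $-X$ is regular closed, every point of $-X$ not in $X$ is interior to $-X$; recall $\ti{(-X)} = T \setminus X$ up to closure issues, so $-X \setminus X \subseteq \ti{(-X)}$). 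So it suffices to show no point of $\ti{(-X)}$ lies in $\delta S$, i.e. that $S$ contains a neighbourhood of each of its points that lies in $\ti{(-X)}$; equivalently, that $S$ is open in the subspace $\ti{(-X)}$ — indeed $S$ would then be relatively open, hence $S = \ti S$ near such points and $\delta S$ avoids them.

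For the \emph{locally connected} case, I would argue directly: $\ti{(-X)}$ is open in $T$, and in a locally connected space every open subspace is locally connected, so the components of $\ti{(-X)}$ are open in $T$. A component $S$ of $-X$ meets $\ti{(-X)}$ in a union of components of $\ti{(-X)}$ — in fact in a single one, since $S$ is connected and $S \setminus X = S \cap \ti{(-X)}$ is dense in $S$ (as $S$ is regular-closed-ish inside $-X$); in any case $S \cap \ti{(-X)}$ is open in $T$, so every point of it is interior to $S$, so $\delta S \cap \ti{(-X)} = \emptyset$, giving $\delta S \subseteq X$.

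For the \emph{finitely many components} case, let $S = S_1$ and let $S_2, \dots, S_k$ be the other components of $-X$, so $-X = S_1 \cup \dots \cup S_k$ with the $S_i$ pairwise disjoint and each closed (components of a closed set are closed). The key observation is that a \emph{finite} pairwise-disjoint family of closed sets is also a family of \emph{open} sets in their union: $S_1 = (-X) \setminus (S_2 \cup \dots \cup S_k)$ is the complement in $-X$ of a closed set, hence relatively open in $-X$. Therefore $S_1$ is open in $-X$, so $S_1 \cap \ti{(-X)}$ is open in $\ti{(-X)}$, which is open in $T$; hence $S_1 \cap \ti{(-X)}$ is open in $T$, and every point of it is interior to $S_1$. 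Thus $\delta S_1 \cap \ti{(-X)} = \emptyset$. Since any point of $\delta S_1$ not in $X$ would lie in $\ti{(-X)}$ (as $-X$ is regular closed), we conclude $\delta S_1 \subseteq X$.

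The main obstacle — really the only subtle point — is the bookkeeping identifying which points of $-X$ are interior to $-X$: one must use regular closedness of $-X$ to know $(-X) \setminus X \subseteq \ti{(-X)}$, equivalently that $\delta(-X) \subseteq X$, which holds because $\ti{(-X)} = \tc{(T \setminus X)}{}^\circ \supseteq T \setminus X \cap (\text{open})$; concretely $\delta(-X) = \delta(\tc{(T \setminus X)})$ and for regular closed $-X$ we have $\ti{(-X)} = T \setminus X$ whenever $X$ is also regular closed, so $\delta(-X) = \tc{(-X)} \setminus (T\setminus X) = (-X) \cap X \subseteq X$. Once this is in hand, both cases reduce to the elementary fact that $S \cap \ti{(-X)}$ is relatively open in $\ti{(-X)}$ — immediate from local connectedness in one case, and from finiteness-of-disjoint-closed-sets in the other — and the inclusion $\delta S \subseteq X$ follows.
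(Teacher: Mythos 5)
Your proposal is correct and takes essentially the same route as the paper: in both proofs the finite case rests on the observation that the union of the remaining components is closed (so $S$ is relatively open in $-X$, equivalently $\tc{(T\setminus S)} = X \cup Z$), and the locally connected case rests on the existence of connected open neighbourhoods inside $\ti{(-X)}$, which force any boundary point of $S$ lying in $\ti{(-X)}$ to be interior to $S$. The paper states the two cases slightly differently (a direct computation of $\tc{(T \setminus S)}$ in the first, a short contradiction argument in the second) rather than isolating the common reduction to $\delta S \cap \ti{(-X)} = \emptyset$, but the mathematical content is the same; your aside about $S \cap \ti{(-X)}$ being a \emph{single} component is unneeded and its stated justification is shaky, but nothing in the argument depends on it.
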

\begin{proof}
For the first statement, let $Z$ be the union of all components of
$-X$ other than $S$.  By definition, $T = \ti{X} \cup S \cup
Z$. Further, both $S \cap \ti{X}$ and $S \cap Z$ are empty, whence $T
\setminus S = \ti{X} \cup Z$. Since $X$ is regular closed, and $Z$ is
closed (as the union of finitely many closed sets), $\tc{(T \setminus
  S)} = X \cup Z$. Finally, since $S$ is closed, and $S \cap Z =
\emptyset$, $\delta S = S \cap \tc{(T \setminus S)} \subseteq X$. For
the second statement, suppose, to the contrary, that $\delta S$
contains a point $p$ lying in $\ti{(-X)}$. Since $S$ is closed, $p \in
S$.  By local connectedness, let $Y$ be a connected open set such that
$p \in Y \subseteq \ti{(-X)}$. Since $p \in S$ and $Y$ is a connected
subset of $-X$, we have $p \in Y \subseteq S$. But this contradicts
the assumption that $p \in \delta S$.
\end{proof}

\begin{lemma}\label{lma:ourNewman}
Let $T$ be a unicoherent space and $X\in\RC(T)$ be connected.  Then
every component of $-X$ has a connected boundary.
\end{lemma}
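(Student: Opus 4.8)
The plan is to exhibit $\delta S$ as the intersection of two closed, connected subsets of $T$ whose union is $T$, so that unicoherence finishes the job. Fix a component $S$ of $-X$ and put $X^{\ast}:=T\setminus\ti S=\tc{(T\setminus S)}$. This is closed; since $T\setminus S\subseteq X^{\ast}$ we have $T=X^{\ast}\cup S$; and, $S$ being closed, $X^{\ast}\cap S=S\cap\tc{(T\setminus S)}=\delta S$. As $S$ is connected by hypothesis, it therefore suffices to prove that $X^{\ast}$ is connected: then unicoherence applied to $T=X^{\ast}\cup S$ gives that $\delta S=X^{\ast}\cap S$ is connected. (The cases $X=\emptyset$ and $X=T$ are trivial.)

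So the whole task is to show that $X^{\ast}$ is connected, and here I would argue through the components of $-X$. First, $\ti X=T\setminus(-X)\subseteq T\setminus S$, so taking closures and using that $X$ is regular closed, $X=\tc{\ti X}\subseteq\tc{(T\setminus S)}=X^{\ast}$; thus $X^{\ast}$ contains the connected set $X$. Let $\mathcal F$ be the set of components of $-X$ other than $S$, and set $W:=X\cup\bigcup_{C\in\mathcal F}C$. Each $C\in\mathcal F$ is closed and connected, and by Lemma~\ref{lma:lc} its boundary satisfies $\delta C\subseteq X$; since $T$ is connected and $C\neq T$, $C$ is not clopen, so $\emptyset\neq\delta C\subseteq C\cap X$, whence $X\cup C$ is connected. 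Consequently $W$ is connected: if $\mathcal F=\emptyset$ then $W=X$, and otherwise $W=\bigcup_{C\in\mathcal F}(X\cup C)$ is a union of connected sets all containing $X$. Since $T\setminus S=\ti X\cup\bigcup_{C\in\mathcal F}C\subseteq W\subseteq X^{\ast}$ and $X^{\ast}$ is closed, we obtain
\[
X^{\ast}=\tc{(T\setminus S)}\subseteq\tc W\subseteq X^{\ast},
\]
so $X^{\ast}=\tc W$ is the closure of a connected set, hence connected; unicoherence then completes the proof as above.

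The step I expect to be the main obstacle is the appeal to Lemma~\ref{lma:lc} to obtain $\delta C\subseteq X$ for the components $C$ of $-X$: that lemma is stated under the hypothesis that either $-X$ has finitely many components or $T$ is locally connected, and neither is among the hypotheses of the present lemma. In the applications that matter here --- $T=\R^{n}$, or more generally any locally connected unicoherent space --- the containment is immediate, so the argument goes through unchanged; for a bare unicoherent $T$ one would need either to add a local-connectedness assumption or to supply a direct argument that no component of $-X$ other than $S$ can be disjoint from $X$ (equivalently, that no component of the open set $\ti{(-X)}$ is closed in $T$ without being open). The remaining ingredients --- the identities $X^{\ast}\cap S=\delta S$ and $X^{\ast}=\tc W$, and the containment $X\subseteq X^{\ast}$ --- are routine point-set topology given the Boolean-algebra description of $\RC(T)$ recorded earlier.
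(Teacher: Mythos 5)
Your argument follows the paper's own proof step for step: you set $X^{\ast}=\tc{(T\setminus S)}$ (the paper calls it $S^{\ast}$), verify $T=S\cup X^{\ast}$ and $S\cap X^{\ast}=\delta S$, reduce the problem to showing $X^{\ast}$ is connected, obtain this by realizing $X^{\ast}$ as the closure of $X$ together with the components of $-X$ other than $S$ — each of which meets the connected set $X$ — and then close with unicoherence. The only substantive difference is how the intermediate fact ``every component $C\neq S$ of $-X$ meets $X$'' is justified. You route it through Lemma~\ref{lma:lc}, getting $\emptyset\neq\delta C\subseteq X$, and you rightly flag that Lemma~\ref{lma:lc} requires local connectedness or finitely many components of $-X$, neither of which is stated here.

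That concern is genuine, but it applies equally to the paper's own proof, which at the corresponding point simply asserts ``By connectedness of $T$, $X$ intersects every component of $-X$.'' Connectedness of $T$ alone does not deliver this: a component $C$ of $-X$ disjoint from $X$ is a component of the open set $T\setminus X$ and is certainly closed in $T$, but one needs local connectedness (or finitely many components) to conclude that $C$ is also \emph{open}, hence clopen, hence impossible. So you have not lost anything relative to the paper — if anything, you have made explicit a hypothesis the paper leaves tacit. And in the two places the lemma is actually invoked, the extra hypothesis is available: in Lemma~\ref{lma:noCycles} the frame is finitely decomposible, so $-X$ has finitely many components, and in Theorem~\ref{theo:inftyCc} the space is assumed locally connected. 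Nothing downstream is affected.
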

\begin{proof}
Let $S$ be a
connected component
of $-X$, and let $Z$ be the union of all components of $-X$ other than
$S$.  Thus, $T \setminus S = Z \cup \ti{X}$.  We write $S^* = \tc{(T
  \setminus S)}$.  Since $X$ is regular closed, $S^* = \tc{Z} \cup
X$. By connectedness of $T$, $X$ intersects every component of $-X$.
It follows that $Z \cup X$, and hence $\tc{Z} \cup X = S^*$ are
connected.  By definition, $S$ is connected, whence, by unicoherence
of $T$, $\delta S = S \cap S^*$ is connected.
\end{proof}

\subsection{Frames}
A {\em frame} is a pair $(T, \bS)$, where $T$ is a topological space,
and $\bS$ is a Boolean sub-algebra of $\RC(T)$.  Where $T$ is clear
from context, we refer to $(T, \bS)$, simply, as $\bS$. Furthermore,
where $\bS$ is clear from context, we refer to elements of $\bS$ as
{\em regions}. We denote the class of frames of the form $(T, \RC(T))$
by $\RegC$. Note that not all frames are of this form: in particular,
when working in $n$-dimensional Euclidean spaces, we shall be
principally interested in the following proper sub-algebra of
$\RC(\R^n)$.  Any $(n-1)$-dimensional hyperplane bounds two elements
of $\RC(\R^n)$ called \emph{half-spaces}. We denote by $\RCP(\R^n)$
the Boolean sub-algebra of $\RC(\R^n)$ generated by the half-spaces,
and call the elements of $\RCP(\R^n)$ (regular closed)
\emph{polyhedra}. If $n = 2$, we speak of (regular closed)
\emph{polygons}. Polyhedra may be regarded as `well-behaved' or, in
topologists' parlance, `\emph{tame}.' We call $(T, \bS)$ {\em
  unicoherent} if $T$ is unicoherent, and {\em finitely
  decomposible} if, for all $s \in \bS$, there exist connected
elements $s_1, \dots, s_k$ of $\bS$, such that $s= s_1+ \cdots +
s_k$. Evidently, $(\R^n,\RCP(\R^n))$ is finitely decomposible, since
any product of half-planes is connected. Equally obviously:
\begin{lemma}\label{lma:fd}
Suppose the frame $(T, \bS)$ is finitely decomposible, and $s \in \bS$.
Then every component of $s$ is in $\bS$, and $s$ is equal to the sum of
those components.
\end{lemma}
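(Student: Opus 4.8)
The plan is to show that, given finite decomposibility, the (a priori possibly infinite) decomposition of $s$ into components of the ambient space collapses to a finite one, and that these components all lie in $\bS$. First I would fix $s \in \bS$ and invoke finite decomposibility to write $s = s_1 + \cdots + s_k$ with each $s_i \in \bS$ connected. The key observation is that each connected $s_i$ is contained in a unique component of $s$: indeed $s_i \subseteq s$ is connected and non-empty (discarding any $s_i = 0$), so by the fact recalled in the Preliminaries (``every non-empty connected subset of $X$ is included in a unique component of $X$''), there is a component $C$ of $s$ with $s_i \subseteq C$. Hence $s$, being the union of the $s_i$, is a union of finitely many of its components; since distinct components are disjoint and their union is all of $s$, there are in fact only finitely many components of $s$, say $C_1, \dots, C_m$, and each is a union of some subcollection of the $s_i$.

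Next I would establish that each $C_j$ belongs to $\bS$. Since $C_j$ is a union of some of the connected regions $s_i$, it is the Boolean sum (in $\RC(T)$) of those $s_i$: for a union of finitely many regular closed sets, the topological union already equals the Boolean sum, as noted in the Preliminaries ($X_1 + X_2 = X_1 \cup X_2$). As $\bS$ is a sub-algebra closed under finite sums, $C_j \in \bS$. Finally, $s = C_1 + \cdots + C_m$ is the sum of its components, again because a finite union of regular closed sets is their Boolean sum, which gives the second assertion of the lemma.

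There is one subtlety worth flagging as the main thing to get right rather than a genuine obstacle: one must check that the components $C_j$ of $s$, which a priori are only closed subsets of $T$ (components of closed sets are closed, but need not be regular closed in general), are nonetheless regular closed here — but this is immediate once we know $C_j$ equals the Boolean sum of some of the $s_i \in \bS \subseteq \RC(T)$, since any Boolean sum in $\RC(T)$ is by definition regular closed. The only other point to be careful about is the passage from ``$s$ is a union of finitely many components'' to ``$s$ has only finitely many components'': this uses that the components of $s$ partition $s$, so a component not among the finitely many appearing in the union would be disjoint from their union and hence empty. With these remarks the proof is essentially a bookkeeping argument, and I expect no real difficulty.
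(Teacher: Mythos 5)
Your proof is correct, and it supplies exactly the argument the paper omits: the authors state this lemma without proof (prefacing it with ``Equally obviously''). Your route — decompose $s$ into finitely many connected $s_i \in \bS$, note each nonzero $s_i$ lies in a unique component so that $s$ has only finitely many components, and then recover each component as the finite Boolean sum (equal to the topological union for regular closed sets) of those $s_i$ it contains, hence in $\bS$ — is the natural bookkeeping the authors had in mind, and your flagged subtleties (regular-closedness of the components, and the finitude of the component collection) are precisely the points that make the ``obvious'' claim true.
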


The following basic concepts will be used repeatedly in the
sequel. Let $(T,\bS)$ be a frame. A tuple of elements $(s_0, \dots,
s_{k-1})$, where $k \geq 1$, will be called a \emph{partition},
provided
\begin{align*}
s_0 + \cdots + s_{k-1} = 1\quad\text{ and }\quad s_i \cdot s_j
= 0 \ \ \text{ for all } 0 \leq i < j < k.
\end{align*}
We do not insist that the $s_i$ are non-empty. We call a partition
$(s_0, \dots, s_{k-1})$ {\em sub-cyclic} if the $s_i$ are non-empty
and
\begin{align*}
s_i \cap s_j = \emptyset, \ \ \text{ for all } 0 \leq i,j < k \text{ such
that } 
1<j-i<k-1.
\end{align*}
The term `sub-cyclic' refers to an imagined graph with nodes $\{s_0,
\dots, s_{k-1}\}$ and edges $\{(s_i,s_j) \mid i \neq j \text{ and }
s_i \cap s_j \neq \emptyset\}$: this graph is required to be a (not
necessarily proper) subgraph of the cyclic graph on $\{s_0, \dots,
s_{k-1}\}$.

Suppose $s$ is a non-empty element of a frame $(T,\bS)$, and $\vec{s}
= (s_0, \dots, s_{k-1})$ a partition in that frame. We say that
$\vec{s}$ is a {\em colouring} of the components of $s$ if every
component of $s$ is included in exactly one of the regions of
$\vec{s}$.  Colourings will be used repeatedly in the sequel,
particularly in situations where we may regard the components of $s$
as positions in a finite sequence; by regarding the set of elements of
$\vec{s}$ as an alphabet, colourings define {\em words} over that
alphabet in the obvious way.

\subsection{Topological logics}
In this paper, the focus of attention is not on frames themselves, but
rather, on frames {\em as they are described in some language}.  The
languages considered here all employ a countably infinite collection
of variables $r_1, r_2, \dots$.
The language \cBC{} is defined by the following syntax:
\begin{align*}
\tau \quad & ::=
\quad r \ \ \mid
\ \ \tau_1 + \tau_2 \ \ \mid
\ \ \tau_1 \cdot \tau_2 \ \ \mid
\ \ - \tau_1 \ \ \mid
\ \ 0 \ \ \mid
\ \ 1,\\
\phi \quad & ::= \quad \tau_1 =  \tau_2 \ \
\mid \ \ C(\tau_1,\tau_2) \ \
\mid \ \ \phi_1 \lor \phi_2 \ \
\mid \ \ \phi_1 \land \phi_2 \ \
\mid \ \ \neg \phi_1.
\end{align*}
The language \cB{} is defined analogously, but without the predicate
$C$.  Thus, \cB{} is the quantifier-free language of the variety of Boolean
algebras.

An \emph{interpretation over} a frame $(T, \bS)$ is a function
$\cdot^\mathfrak{I}$ mapping variables $r$ to elements
$r^\mathfrak{I}$ of $\bS$. We extend $\cdot^\mathfrak{I}$ to terms
$\tau$ by setting $(\tau_1 + \tau_2)^\mathfrak{I} =
\tau_1^\mathfrak{I} + \tau_2^\mathfrak{I}$, $(\tau_1 \cdot
\tau_2)^\mathfrak{I} = \tau_1^\mathfrak{I} \cdot \tau_2^\mathfrak{I}$,
$(- \tau_1)^\mathfrak{I} = - (\tau_1^\mathfrak{I})$, $0^\mathfrak{I} =
\emptyset$ and $1^\mathfrak{I} = T$.  We write $\mathfrak{I} \models
\tau_1 = \tau_2$ if and only if $\tau_1^\mathfrak{I} =
\tau_2^\mathfrak{I}$, and $\mathfrak{I} \models C(\tau_1,\tau_2)$ if
and only if $\tau_1^\mathfrak{I} \cap \tau_2^\mathfrak{I} \neq
\emptyset$, extending this relation to non-atomic formulas in the
standard way.  We read $C(\tau_1, \tau_2)$ as `$\tau_1$
\emph{contacts} $\tau_2$.' If $\phi$ is a formula whose variables,
taken in some order, are $\vec{r} = (r_1, \dots, r_n)$, and
$\mathfrak{I}\models \phi$, then the tuple $\vec{a} =
(r_1^\mathfrak{I}, \dots, r_n^\mathfrak{I})$ is said to {\em satisfy}
$\phi(\vec{r})$; in such a case, we will often say `$\vec{a}$
satisfies $\phi(\vec{r})$.'

We remark that the property that a $k$-tuple $(r_0,\dots,r_{k-1})$
forms a partition is evidently expressible using the \cB-formula
\begin{equation*}
\partition(r_0,\dots,r_{k-1})
\ \ =\ \ \left(\sum_{i=0}^{k-1} r_i = 1\right) \ \ \land\hspace*{-0.3em} \bigwedge_{0 \leq i < j <
  k}\hspace*{-0.5em} (r_i \cdot r_j = 0).
\end{equation*}
The property that a $k$-tuple forms a
sub-cyclic partition is expressible using the \cBC-formula
\begin{multline*}
\cpartition(r_0,\dots,r_{k-1})
\ \ = \ \ \\
\partition(r_0,\dots,r_{k-1}) \wedge
   \bigwedge_{0 \leq i < k} (r_i \neq 0)
\wedge
   \bigwedge_{1<j-i<k-1}\neg C(r_i, r_j).
\end{multline*}
And, assuming that $\partition(r_0,\dots,r_{k-1})$ is satisfied, the
\cBC-formula
$$
\colourComp(r; r_0,\dots,r_{k-1})
\ \ =\ \
   \bigwedge_{0 \leq i < j < k} \neg C((r \cdot r_i), \ (r \cdot r_j))
$$
ensures that the partition $r_0, \dots, r_{k-1}$ colours the
components of $r$. Conversely, over finitely decomposible frames, any
colouring of $r$ by a partition $r_0, \dots, r_{k-1}$ must satisfy
$\colourComp(r; r_0,\dots,r_{k-1})$.

Turning to connectedness predicates, we define the languages \cBc{}
and \cBCc{} to be extensions of \cB{} and \cBC{}, respectively, with
the unary predicate $c$. We set $\mathfrak{I} \models c(\tau)$ if and
only if $\smash{\tau^{\mathfrak I}}$ is connected in the topological
space under consideration. Similarly, we define \cBci{} and \cBCci{}
to be extensions of \cB{} and \cBC{} with the predicate $\ic$, setting
$\mathfrak I \models \ic(\tau)$ if and only if
$\ti{\smash{(\tau^{\mathfrak I})}}$ is connected. If $\cK$ is a class
of frames, and $\cL$ is one of \cBc, \cBCc, \cBci{} or \cBCci{}, then
$\Sat(\mathcal{L},\cK)$ is the set of $\mathcal{L}$-formulas
satisfiable over $\cK$.

Setting $\cK = \RegC$, the complexity of this problem in known for all
of the languages $\cL$ considered
above~\cite{iscloes:kp-hwz10, KPZ10kr}.
If $\phi$ is a formula of any
of the languages \cBc, \cBCc{} or \cBCci{}, and $\phi$ is
satisfiable over $\RegC$, then $\phi$ is satisfiable over some frame
$\RC(T)$, where $|T|$ is bounded by a singly-exponential function of
$|\phi|$; and the problems $\Sat(\mathcal{L},\RegC)$, for $\mathcal{L}
\in \{\cBc, \cBCc, \cBCci\}$, are all \ExpTime-complete. On the
other hand, if $\psi$ is a \cBci-formula satisfiable over $\RegC$,
then $\psi$ is satisfiable over some frame $\RC(T)$, where $|T|$ is
bounded by a polynomial function of $|\psi|$; and the problem
$\Sat(\cBci,\RegC)$ is \NP-complete.  Thus, we observe a difference
between \cBc, \cBCc{} and \cBCci{} on the one hand, and \cBci{} on
the other.

However, satisfiability over $\RegC$ is of little interest from the
point of view of Artificial Intelligence, where almost all conceivable
applications concern the frames over Euclidean space of dimensions 2
or 3.  Accordingly, we shall be concerned with $\Sat(\cL, \cK)$, where
$\cL$ is any of \cBc{}, \cBci{}, \cBCc{} or \cBCci, and $\cK$ is $\{
\RC(\R^n) \}$ or $\{\RCP(\R^n)\}$ for $n \geq 2$. For ease of reading,
we write $\Sat(\cL, \RC(\R^n))$ and $\Sat(\cL, \RCP(\R^n))$ rather
than $\Sat(\cL, \{\RC(\R^n)\})$ and $\Sat(\cL,
\{\RCP(\R^n)\})$.

\subsection{Graphs}
Unless explicitly indicated to the contrary, all graphs in this paper
are taken to be finite, and to have no multiple edges and no loops:
i.e., if $G = (V,E)$ is a graph, $(v,v') \in E$ implies $v \neq
v'$. We also assume that the edges have no direction, i.e., $(v,v')\in
E$ if and only if $(v',v)\in E$.  A \emph{path} in $G$ is a sequence
of distinct vertices $v_0, \dots, v_{n-1}$ such that $(v_i,v_{i+1})$
is an edge, for all $0 \leq i < n-1$; further, a \emph{cycle} in $G$
is a path $v_0, \dots, v_{n-1}$ such that, in addition,
$(v_{n-1},v_0)$ is an edge. Informally, in this case, we speak of the
sequence $v_0, \dots, v_{n-1}, v_0$ as a cycle.  A graph is connected
if any two nodes are joined by some path; a graph which contains no
cycles is {\em acyclic}; and a connected, acyclic graph is a {\em
  tree}. If $G$ is a tree, then any pair of nodes in $G$ is joined by
a unique path. Further, if $v_0, \dots, v_{n-1}$ is a sequence of
nodes in $G$ such that $(v_i, v_{i+1})$ is an edge for all $0 \leq i <
n-1$, and $v_i \neq v_{i+2}$ for all $0 \leq i < n-2$, then this
sequence contains no duplicates, and thus is a path.

Let $\bS$ be a finitely decomposible frame over some topological
space, and $\vec{s}$ a connected partition in $\bS$.  We can associate
a graph with $\vec{s}$, denoted $H(\vec{s})$, as follows: the vertices
of $H(\vec{s})$ are the components of the elements of $\vec{s}$; the
edges of $H(\vec{s})$ are the pairs $(X,Y)$ such that $X \neq Y$ and
$X \cap Y \neq \emptyset$. We refer to $H(\vec{s})$ as the
\emph{component graph} of $\vec{s}$. Note that the number of vertices
of $H(\vec{s})$ is in general larger than the number of elements in
$\vec{s}$; however, since $\bS$ is finitely decomposible, this number
is still finite.

We prove a simple but powerful lemma connecting some of the notions
encountered above.
\begin{lemma}\label{lma:noCycles}
Let $T$ be a unicoherent topological space, $\bS$ a
finitely decomposible frame on $T$, and $\vec{s}$ a sub-cyclic
partition in $\bS$. Then the component graph, $H(\vec{s})$, is a tree.
\end{lemma}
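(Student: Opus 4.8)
The plan is to show that $H(\vec s)$ is connected and acyclic; being a finite graph, it is then a tree. Connectedness is the easy half: since $\vec s$ is a connected partition, each element $s_i$ is connected, and so each component of $s_i$ is one of the vertices lying in $s_i$. Because $s_i$ is connected and equals the sum (hence the union) of its components, the subgraph of $H(\vec s)$ induced by the components of $s_i$ is connected — if it split into two pieces, the corresponding unions of components would be closed, disjoint and cover $s_i$. It then suffices to observe that the whole graph is connected because $s_0 + \dots + s_{k-1} = 1 = T$ is connected: any partition of the vertex set of $H(\vec s)$ into two nonempty parts with no edge between them would again yield a disconnection of $T$ into two disjoint closed sets (each part is a finite union of closed components, with empty intersection since there is no edge). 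So $H(\vec s)$ is connected.

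The substantive part is acyclicity, and this is where unicoherence enters. Suppose for contradiction that $H(\vec s)$ contains a cycle $X_0, X_1, \dots, X_{m-1}, X_0$ of distinct vertices (components), with $m \geq 3$. The idea is to cut the space along this cycle. Consider $X = X_0 + X_1 + \dots + X_{m-1}$, the sum in the Boolean algebra of the components on the cycle. I would first argue that, by shrinking if necessary, one may assume the cycle is chordless in $H(\vec s)$, i.e. $X_i \cap X_j = \emptyset$ unless $j - i \equiv \pm 1 \pmod m$ — a standard graph-theoretic observation (a shortest cycle through a given edge is chordless, or more simply: any cycle contains a chordless cycle). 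Then $X$ is a "topological cycle": it is connected (consecutive components meet), its underlying set is $X_0 \cup \dots \cup X_{m-1}$, and each point lies in at most two of the $X_i$. The crucial point is to locate this inside the sub-cyclic structure: the vertices $X_i$ belong to elements of $\vec s$, and because $\vec s$ is sub-cyclic, components drawn from non-adjacent elements of $\vec s$ are disjoint; combined with chordlessness this pins down how the $X_i$ can intersect and lets us treat $X$ as a genuine embedded circle-like set.

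The contradiction with unicoherence is then obtained as follows. Write $X$ as $A \cup B$ where $A = X_0 \cup X_1 \cup \dots \cup X_{\lceil m/2 \rceil}$ and $B = X_{\lceil m/2 \rceil} \cup \dots \cup X_{m-1} \cup X_0$ are two "arcs" of the cycle, each connected and closed, whose union is $X$. Their intersection $A \cap B$ is, using chordlessness and sub-cyclicity, exactly two disjoint nonempty closed sets (neighbourhoods of the two "endpoints" $X_0$ and $X_{\lceil m/2\rceil}$ of the arcs) — hence disconnected. This already looks like a failure of unicoherence, but $A$ and $B$ need not cover $T$, so I instead enlarge them: let $Z$ be the sum of all components of all elements of $\vec s$ not on the cycle, together with one of the two arcs. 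More precisely, one takes $T = C_1 \cup C_2$ with $C_1, C_2$ closed connected, $C_1 \cup C_2 = T$, built from $A$, $B$ and the off-cycle components (attachable on one side or the other, using connectedness of each $s_i$ and of $T$), arranged so that $C_1 \cap C_2 \subseteq A \cap B$ remains disconnected. Unicoherence of $T$ then forces $C_1 \cap C_2$ to be connected — a contradiction.

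The main obstacle I anticipate is the bookkeeping in this last step: arranging the off-cycle components into $C_1$ and $C_2$ so that both remain connected, their union is all of $T$, and their intersection stays equal to the disconnected set $A \cap B$ (rather than acquiring extra connecting pieces). Handling this cleanly will likely use Lemma~\ref{lma:fd} (each $s_i$ is the sum of its components, so an off-cycle component attached to $s_i$ can be routed to whichever side contains the on-cycle part of $s_i$) together with the connectedness of each $s_i$ and a careful choice of which arc endpoint goes where; the sub-cyclicity hypothesis is exactly what guarantees no unwanted contacts appear between components of non-adjacent $s_i$'s. Everything else — connectedness of $H(\vec s)$, reduction to a chordless cycle, finiteness — is routine given finite decomposibility.
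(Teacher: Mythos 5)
The gap is genuine and it is exactly where you flagged it. Extending the two arcs $A$ and $B$ to closed, connected sets $C_1, C_2$ with $C_1 \cup C_2 = T$ and $C_1 \cap C_2$ still disconnected is not bookkeeping; it is the whole content of the lemma, and your proposed routing rule does not determine a construction. You propose sending each off-cycle component of $s_i$ ``to whichever side contains the on-cycle part of $s_i$,'' but a single $s_i$ may contribute several on-cycle vertices lying on \emph{both} arcs, or none at all; and, more seriously, you invoke ``connectedness of each $s_i$,'' which is \emph{not} a hypothesis: sub-cyclicity only says the $s_i$ are non-empty and cyclically non-adjacent ones are disjoint. An off-cycle component $Y$ of some $s_j$ can simultaneously touch cycle vertices in $A$ and cycle vertices in $B$ (e.g.\ $Y \subseteq s_0$, $Y \neq X_0$, touching both $X_1$ and $X_{m-1}$), so whichever side receives $Y$, the intersection $C_1 \cap C_2$ grows by pieces of the form $Y \cap X_\ell$; you give no argument that these extra pieces can be kept from bridging the two ends of the arc, nor that both $C_1$ and $C_2$ can be made connected, nor indeed that the cycle separates $T$ in the first place. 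The cycle-splitting strategy is morally sensible, but nothing in the proposal turns it into a proof.

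The paper avoids the global decomposition entirely by arguing locally. Given an edge $(X_0, X_1)$ of $H(\vec s)$ with $X_i$ a component of $s_i$, it considers the component $S$ of $-X_1$ containing $X_0$. Lemma~\ref{lma:lc} (via finite decomposibility) gives $\delta S \subseteq X_1$, and Lemma~\ref{lma:ourNewman} --- the place where unicoherence enters --- makes $\delta S$ connected; combining this with sub-cyclicity pins $\delta S$ inside a single $s_i$, in fact inside $s_0$. From this it follows that every \emph{other} neighbour $X_2$ of $X_1$ lies in a \emph{different} component of $-X_1$, so a cycle $X_0, X_1, X_2, \ldots, X_{m-1}, X_0$ is impossible: the connected set $X_2 + \cdots + X_{m-1} + X_0$ would lie entirely in $-X_1$ while joining $X_0$ to $X_2$. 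This connected-boundary argument does the separation at a single vertex, with no routing of off-cycle components and no need to realise a global two-sided cover of $T$; try reproving the lemma along those lines.
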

\begin{proof}
Write $\vec{s} = (s_0, \dots, s_{n-1})$.
Since $\bS$ is finitely decomposible, and $T$ is connected, $H(\vec{s})$
is obviously finite and connected. We need only show that it contains
no cycles.  If $n =1$, then $|H(\vec{s})|=1$, and this is trivial.
We assume, for ease of formulation, that $n \geq 4$, since a similar
(and in fact simpler) argument applies if $n =2$ or $n =3$.

Suppose $(X_0, X_1)$ is an edge of $H(\vec{s})$. We may assume,
without loss of generality, that $X_0$ is a component of $s_0$, and
$X_1$ a component of $s_1$. The sub-cyclicity condition ensures that
$s_0 \cap s_i = \emptyset$ for $2 \leq i <n-1$, and $s_1 \cap s_i =
\emptyset$ for $3 \leq i <n$.  Now let $S$ be the component of $-X_1$
containing $X_0$: we claim that $\delta S \subseteq s_0$.  By the
first statement of Lemma~\ref{lma:lc}, $\delta S \subseteq X_1
\subseteq s_1$, whence $\delta S$ contains no point of $s_3 + \cdots +
s_{n-1}$.  On the other hand, $\delta S$ is obviously included in
$-s_1 = s_0 + s_2 + s_3 + \cdots s_{n-1}$, and hence in $s_0 + s_2$.
Since $s_0 \cap s_2 = \emptyset$, and, by Lemma~\ref{lma:ourNewman},
$\delta S$ is connected, we have either $\delta S \subseteq s_0$ or
$\delta S \subseteq s_2$. Now, since $(X_0, X_1)$ is an edge of
$H(\vec{s})$, and any point of $X_0 \cap X_1$ must lie in both $S$ and
$-S$, we have $\delta S \cap X_0 \neq \emptyset$, and, therefore,
$\delta S \subseteq s_0$, as claimed.

Now suppose $(X_1, X_2)$ is also an edge of $H(\vec{s})$, with $X_0$
and $X_2$ distinct.  We claim that $X_0$ and $X_2$ lie in different
components of $-X_1$ (i.e., $X_2 \nsubseteq S$). For suppose
otherwise.  Again, since any point of $X_1 \cap X_2$ lies in
both $S$ and $-S$, $\delta S \cap X_2 \neq \emptyset$. Furthermore,
since $s_1 \cap s_i = \emptyset$ for $3 \leq i < n$, $X_2$ must be a
component of either $s_0$ or $s_2$.  But if $X_2 \subseteq s_0$, then
the connected set $\delta S \subseteq s_0$ has points in common with
the components $X_0$, $X_2$ of $s_0$, contradicting the assumption
that $X_0$ and $X_2$ are distinct. On the other hand, if $X_2
\subseteq s_2$ then $\delta S \subseteq s_0$ contains a point of
$s_2$, which is again impossible.

Finally, suppose that $X_0, X_1, X_2, \dots, X_m$ is a cycle in
$H(\vec{s})$, where $m \geq 3$ and $X_m = X_0$. Then the connected set $X_2 + \cdots +
X_{m-1} + X_0$ lies entirely in $-X_1$, contradicting the fact that
$X_0$ and $X_2$ lie in different components of $-X_1$.
\end{proof}

\subsection{Post correspondence problem}
In the sequel, we make use of the well-known \emph{Post correspondence
  problem} (PCP). Fix finite alphabets $T$ and $U$, where $|T| \geq 7$
and $|U| \geq 2$.  A \emph{morphism} from $T$ to $U$ is a function
$\pcpw\colon T \rightarrow U^*$ mapping each element of $T$ to a word
over $U$.  We extend $\pcpw$ to a mapping $\pcpw\colon T^* \rightarrow
U^*$ by defining, for any word $\tau = t_1 \cdots t_k \in T^*$,
$\pcpw(\tau) = \pcpw(t_1) \cdots \pcpw(t_k)$.  An \emph{instance} of
the PCP is a pair of morphisms $\pcpW = (\pcpw^1, \pcpw^2)$ from $T$
to $U$. The instance $\pcpW$ is \emph{positive} if there exists a
non-empty word $\tau \in T^*$ such that $\pcpw^1(\tau) =
\pcpw^2(\tau)$. Intuitively, we are invited to think of each element
of $T$ as a `tile' inscribed with an `upper' word over $U$, given by
$\pcpw^1(t)$, and a `lower' word over $U$, given by $\pcpw^2(t)$; we
are asked to determine, for the given collection of tiles, whether
there exists a non-empty, finite sequence of these tiles (repeats
allowed) such that the concatenation of their upper words equals the
concatenation of their lower words. The set of positive PCP instances
is known to be r.e.-complete~\cite{KK:Post46}, and remains so even
under the restriction that $\pcpw^k(t)$ is non-empty for every $t \in
T$. In fact, nothing hinges on the exact choice of $T$ and $U$,
subject to the restrictions mentioned above. In particular, we may
assume $T$ and $U$ are disjoint.

\section{Forcing infinitely many components in locally connected unicoherent spaces.}\label{infcomp}
In this section, we construct \cBCc-, \cBCci- and
\cBc-formulas $\phi$ with the following properties: ({\em i}) $\phi$
is satisfiable over $\RC(\R^n)$ for all $n \geq 2$; ({\em ii}) if $T$
is a locally connected, unicoherent space and $\vec{a}$ is a tuple
from $\RC(T)$ satisfying $\phi$, then $\vec{a}$ includes members with
\emph{infinitely many} connected components.  Since $\RCP(\R^n)$ is
finitely decomposible, these properties entail that
$\Sat(\cL,\RC(\R^n)) \neq \Sat(\cL,\RCP(\R^n))$ for $\cL$ any of
\cBCc, \cBCci{} or \cBc, and all $n \geq 2$. Furthermore, the
techniques developed in this section will be used in \SECT\ref{sec:undecidability} to prove
that satisfiability of \cBCc-, \cBCci- and \cBc-formulas over
$\RCP(\R^n)$, for $n\ge 2$, is undecidable.

We now construct a \cBCc-formula, $\phi_\infty$ with properties
({\em i}) and ({\em ii}).  As an aid to intuition, consider any
locally connected unicoherent space $T$. We equivocate between
variables and the regions they are assigned in some putative
interpretation over $\RC(T)$. In this section we write $\md{i}$ to
denote the value of $i$ modulo $4$.  The first conjunct of
$\phi_\infty$ states that $r_0,r_1,r_2,r_3$ form a sub-cyclic partition:
\begin{align}\label{eq:InfPart1}
\cpartition(r_0,r_1,r_2,r_3).
\end{align}
We also require non-empty sub-regions $r'_i$ of $r_i$ and a non-empty
region $t$:
\begin{align}\label{eq:basic-regions}
& \bigwedge_{i = 0}^3
    \bigl((r'_i \ne 0) \land (r'_i \leq r_i)\bigr),\\ & (t\ne 0).
\end{align}
The configuration we have in mind is depicted in
Fig.~\ref{fig:InfCmpSat}, where components of the $r_i$ are arranged
like the layers of an onion.
\begin{figure}[ht]
\begin{center}
\begin{tikzpicture}[yscale=0.8]\small
	\coordinate (LB1) at (0,0); 			
	\coordinate (RT1) at (2,1); 			
	\coordinate (LB2) at (0,.3); 			
	\coordinate (RT2) at (.5,.8);			
	\coordinate (LBP) at (.2pt,.2pt); 		
	\coordinate (RTP) at (-.2pt,-.2pt); 	
	\coordinate (Dx) at (0.5,0); 			
	\coordinate (Dy) at (0,.2); 			
	\coordinate (Dy2) at (0,.15); 			
	\coordinate (HH) at (.25,.5);			
	
	\foreach \i/\c in {0/gray!8,1/gray!50,2/gray!20,3/gray!80,0/gray!8,1/gray!50,2/gray!20,3/gray!80}
	{
		\begin{scope}[even odd rule]
			\clip ($(LBP)-(.2pt,.2pt)$) rectangle ($(RTP)+(.2pt,.2pt)$)
				($(LB1)-(.2pt,.2pt)$) rectangle ($(RT1)+(.2pt,.2pt)$) ;
			\filldraw[fill=\c] (LB1) rectangle (RT1); \node at ($(RT1)-(HH)$) {$r_\i$};
			\filldraw[fill=\c] (LB2) rectangle (RT2) node[midway] {$r_\i'$};
			\coordinate (LBP) at (LB1);
			\coordinate (RTP) at (RT1);
			\coordinate (LB1) at ($(LB1)-(Dx)-(Dy)$);
			\coordinate (RT1) at ($(RT1)+(Dx)+(Dy)$);
			\coordinate (LB2) at ($(LB2)-(Dx)-(Dy2)$);
			\coordinate (RT2) at ($(RT2)-(Dx)+(Dy2)$);
			\coordinate (HH) at ($(HH)+(Dy)$);
		\end{scope}
	}
	\filldraw[fill opacity=0.5, fill=gray] (6.2,.8)--(1.5,.8)--(6.2,1.3);
	\node at (6.5,1.05) {$t$};
	\node at (6,.5) {$\cdots$};
	\node at (-3.9,.5) {$\cdots$};
%
\end{tikzpicture}		
\end{center}
\caption{Regions satisfying $\phi_\infty$.}\label{fig:InfCmpSat}	
\end{figure}
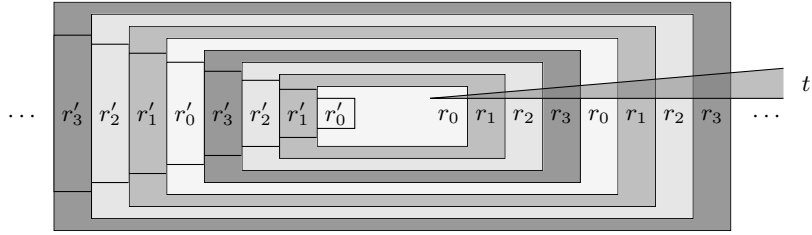
The `innermost' component of $r_0$ is
surrounded by a component of $r_1$, which in turn is surrounded by a
component of $r_2$, and so on. The region $t$ passes through every
layer, but avoids the $r'_i$. To enforce a configuration of this sort,
we need the following formulas:
\begin{align}
\label{eq:InfContact}	& \bigwedge_{i = 0}^3 c(r'_i+r_{\md{i+1}}+t),\\
\label{eq:notC2} & \bigwedge_{i = 0}^3 \neg C(r'_i, t),\\
\label{eq:notC}	& \bigwedge_{i = 0}^3 \neg C(r'_i, \ r_{\md{i+1}}\cdot (-r'_{\md{i+1}})).
\end{align}
Observe that~\eqref{eq:InfContact}--\eqref{eq:notC} ensure
each component of $r'_i$ is in contact with $r'_{\md{i+1}}$.  Denote by
$\phi_\infty$ the conjunction of~\eqref{eq:InfPart1}--\eqref{eq:notC}.

\begin{theorem}\label{theo:inftyCc}
The \cBCc-formula $\phi_\infty$ is satisfiable over
$\RC(\R^n)$, $n \geq 2$. On the other hand, if $T$ is a locally
connected, unicoherent space, then any tuple from $\RC(T)$ satisfying
$\phi_\infty$ features sets that have infinitely many components.
\end{theorem}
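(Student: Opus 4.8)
The plan is to prove the two halves separately.

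For the \emph{satisfiability} half, I would exhibit an explicit assignment in $\RC(\R^2)$ (which lifts trivially to $\RC(\R^n)$ for $n \geq 2$ by taking a product with a suitable cube, or by embedding a plane configuration). The picture in Fig.~\ref{fig:InfCmpSat} is the blueprint: take a sequence of nested squares (or annuli), so that $r_i$ receives, for each $m \geq 0$, the $m$-th annulus whose index is congruent to $i$ modulo $4$, with $r'_i$ the ``core'' part of each such annulus sitting away from the annulus boundaries; then let $t$ be a thin wedge that crosses all the annuli but is separated from every $r'_i$. One verifies directly: $(r_0,r_1,r_2,r_3)$ is a partition and is sub-cyclic because $r_0$ only touches $r_1$ and $r_3$, etc.; each $r'_i$ is a nonempty subregion of $r_i$; $t$ is nonempty; each $r'_i + r_{\md{i+1}} + t$ is connected because $t$ threads through all the annuli and glues together the (infinitely many) components of $r'_i$ via the adjacent $r_{\md{i+1}}$-annuli; $\neg C(r'_i,t)$ holds by construction; and $\neg C(r'_i,\ r_{\md{i+1}}\cdot(-r'_{\md{i+1}}))$ holds because the core of one annulus meets the next annulus only in its core. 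This is essentially a routine check once the picture is set up carefully, with the one subtlety that one must make the assignment genuinely infinite (infinitely many annuli) so that everything stays in $\RC(\R^2)$ rather than collapsing.

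For the \emph{infinitely-many-components} half, suppose $T$ is locally connected and unicoherent and $\vec a = (r_0,r_1,r_2,r_3,r'_0,\dots,r'_3,t)$ satisfies $\phi_\infty$ in $\RC(T)$; argue for contradiction that some $r'_i$, hence all the relevant regions, have only finitely many components. The key is the claim already flagged in the text: formulas \eqref{eq:InfContact}--\eqref{eq:notC} force each component of $r'_i$ to be in contact with $r'_{\md{i+1}}$. The argument for this: fix a component $X$ of $r'_i$; by \eqref{eq:InfContact}, $X$ lies in the connected set $r'_i + r_{\md{i+1}} + t$, and since $X$ is a component of $r'_i$ separated (by sub-cyclicity and the partition property) from the rest of $r'_i$, $X$ must touch $r_{\md{i+1}} + t$; by \eqref{eq:notC2} it does not touch $t$, so it touches $r_{\md{i+1}}$; and by \eqref{eq:notC} it cannot touch $r_{\md{i+1}}\cdot(-r'_{\md{i+1}})$, so it touches $r_{\md{i+1}}\cdot r'_{\md{i+1}} = r'_{\md{i+1}}$ (using $r'_{\md{i+1}} \leq r_{\md{i+1}}$). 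Now, starting from one component $X_0$ of $r'_0$, we get a component $X_1$ of $r'_1$ in contact with it, then $X_2$ of $r'_2$, and so on around the cycle of residues; but one must show this chain keeps producing \emph{fresh} components and never closes up. Here I would invoke Lemma~\ref{lma:noCycles}: the component graph of the sub-cyclic partition $(r_0,r_1,r_2,r_3)$ is a tree, and the components of the $r'_i$ sit inside components of the $r_i$, so the contact chain $X_0 - X_1 - X_2 - X_3 - X_4 - \cdots$ projects to a walk in a tree that can never revisit a vertex along a non-backtracking step — and the contacts here are genuinely ``moving on'' because $X_{j}$ and $X_{j+2}$ lie in components of $r_{\md j}$ and $r_{\md{j+2}}$ which are \emph{different} vertices of the tree (as $r_{\md j} \cap r_{\md{j+2}} = \emptyset$ in a sub-cyclic $4$-partition). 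Hence the $X_j$ are pairwise distinct, giving infinitely many components among the $r'_i$ (and so among the $r_i$), contradicting finiteness.

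I expect the main obstacle to be making the chain-distinctness argument airtight: one has to rule out the chain ``folding back'' on itself within a single component of some $r_i$, and this is exactly where local connectedness (to apply Lemma~\ref{lma:lc}, so that component boundaries behave) and unicoherence (to apply Lemma~\ref{lma:ourNewman} and hence Lemma~\ref{lma:noCycles}) are both essential. Concretely I would track, for each $j$, the component $Y_j$ of $r_{\md j}$ containing $X_j$, show $(Y_0, Y_1, Y_2, \dots)$ is a non-backtracking walk in the tree $H((r_0,r_1,r_2,r_3))$ — non-backtracking because consecutive $Y_j$'s are distinct (different colours, and contact is witnessed) and $Y_j \neq Y_{j+2}$ (different colours again, since in a sub-cyclic $4$-partition $r_i$ and $r_{i+2}$ are disjoint) — and conclude the $Y_j$ are all distinct, a fortiori the $X_j$ are. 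The rest is bookkeeping.
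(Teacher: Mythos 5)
Your satisfiability half matches the paper's (Fig.~\ref{fig:InfCmpSat} plus cylindrification), so I focus on the second half, where you take a different route from the paper — and where there is a gap.

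Your plan is to invoke Lemma~\ref{lma:noCycles} to conclude that $H((r_0,\dots,r_3))$ is a tree and then argue the contact chain projects to an infinite non-backtracking walk. But Lemma~\ref{lma:noCycles} requires the frame $\bS$ to be \emph{finitely decomposible}, and $\RC(T)$ is not finitely decomposible for locally connected unicoherent $T$ — the very theorem under discussion produces regions with infinitely many components, so the ambient frame cannot satisfy that hypothesis. Your reductio (``assume the relevant regions have finitely many components'') does not repair this: finite decomposibility is a property of the whole Boolean algebra, not of a tuple, and Lemma~\ref{lma:noCycles}, via Lemma~\ref{lma:fd}, needs every component of every region in $\bS$ to be regular closed and to lie in $\bS$. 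To salvage your route you would have to first show that in a locally connected space a regular closed set with finitely many components has components that are themselves regular closed (true, but not stated in the paper and not obvious), pass to the finite Boolean subalgebra generated by those components, check it is finitely decomposible, and only then apply Lemma~\ref{lma:noCycles} to that subframe. Supplying this amounts to re-deriving Lemma~\ref{lma:noCycles}'s acyclicity argument using the locally connected clause of Lemma~\ref{lma:lc} in place of the finitely-many-components clause. That is structurally what the paper's proof does, only directly: it builds components $X_i$ of $r_{\md{i}}$ together with auxiliary connected open sets $V_i$ (available by local connectedness) joining $X_i$ to $X_{i+1}$ while avoiding $r_{\md{i+2}}$, and proves distinctness via the nesting $S_{i+1} \subseteq \ti{S}_{i+2}$ of complement components using only Lemmas~\ref{lma:lc} and~\ref{lma:ourNewman}, with no appeal to Lemma~\ref{lma:noCycles} or to finite decomposibility.

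A smaller slip: you justify ``each component $X$ of $r'_i$ contacts $r'_{\md{i+1}}$'' by saying $X$ is ``separated (by sub-cyclicity and the partition property) from the rest of $r'_i$.'' Those conditions constrain the $r_i$ and say nothing about how the components of $r'_i$ sit relative to one another. What you actually need is that $X$ is relatively clopen in $r'_i$ (so that connectedness of $r'_i + r_{\md{i+1}} + t$ from~\eqref{eq:InfContact} forces $X$ to touch $r_{\md{i+1}} + t$), and that again rests on the reductio hypothesis of finitely many components together with local connectedness, not on the formulas you cite. The paper's choice to track components of the larger regions $r_{\md{i}}$, with the explicit connecting sets $V_i$, is precisely how it avoids this dependency.
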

\begin{proof}
Fig.~\ref{fig:InfCmpSat} shows how $\phi_\infty$ can be satisfied over
$\RC(\R^2)$. By cylindrification, it is also satisfiable over any
$\RC(\R^n)$, for $n> 2$. This establishes the first statement of the
lemma.  For the second statement, we suppose that $\phi_\infty$ is
satisfied in a frame on a locally connected, unicoherent space $T$; we
show that some members of the satisfying tuple have infinitely many
components.  To avoid clumsy circumlocutions, we equivocate between
variables and the regions to which they are assigned in the satisfying
interpretation: thus we speak about the `regions' $r_0, \ldots, r_3$,
$r'_0, \ldots, r'_3$ and $t$. No confusion should result.

We proceed by constructing a sequence of disjoint components $X_i$ of
$r_{\md i}$ and open sets $V_i$ connecting $X_i$ to $X_{i+1}$; see
Fig.~\ref{fig:InfCmpConstr}.
\begin{figure}[ht]
\hspace*{-7mm}%
\begin{tikzpicture}[yscale=0.8]\small		
	\coordinate (dx1) at (-1,0);
	\coordinate (dx2) at (-1.3,0);
	\coordinate (dy) at (0,.15);
	\coordinate (dx) at (-.1,0);			
	\coordinate (ML) at (0,.5);
		\newcommand{\iterate}[1]{
			\coordinate (BL) at (4,0);
			\coordinate (TR) at (4,1);			
			\coordinate (oBL) at (4,0);
			\coordinate (oTR) at (4,0);
			
			\foreach \n/\x/\l/\c in {0/0/0/gray!8,1/1/0/gray!50,2/2/0/gray!20,3/3/0/gray!80,4/0/1/gray!8}
			{
				#1
				
				\coordinate (oBL) at (BL);
				\coordinate (oTR) at (TR);
				\coordinate (BL) at ($(BL)-2*(dy)-2*(dx)$);
				\coordinate (TR) at ($(TR)+2*(dy)+(dx1)+(dx2)$);
			}
		}
		\iterate{
			\begin{scope}[even odd rule]	
	           \coordinate (RECT) at ($(BL)-2*(dy)-2*(dx)$);
	           \coordinate (RECTSZ) at ($(TR)+2*(dy)+(dx1)+(dx2)$);
				\clip (BL) rectangle (TR) (RECT) rectangle (RECTSZ);
				\ifnum \l=0
					\filldraw[thick,fill=white!20] (RECT) rectangle (RECTSZ);
				\fi
				
				\filldraw[fill=\c] ($(BL)-(dy)-(dx)$) rectangle ($(TR)+(dy)+(dx1)$);
				\node at ($(TR)+.5*(dx1)-(ML)-\n*(dy)-\n*(dy)$){$X_\n$};
			\end{scope}
		}
		\iterate{					
			\ifnum \l=0
		        \coordinate (EC) at ($(TR)+(dx1)+.5*(dx2)-(ML)-\n*(dy)-\n*(dy)-(0,.35)$);
				\draw[thick] (EC) ellipse (.8cm and .25cm);
				\node at (EC) {$V_\x$};
				\ifnum \n>0
					\draw[-latex] ($(TR)+(dx1)-(0,.6)$) --++($.4*(dx2)$) node[midway,above] {$R_\n$};
					\draw[-latex] ($(TR)-(0,.6)$) --++($-.4*(dx2)$) node[midway,above] {$S_\n$};
				\fi
			\fi
		}
\end{tikzpicture}
\caption{Sequences $X_0,X_1,\dots$ of components $X_i$ of $r_{\md{i}}$ and $V_0,V_1,\dots$ of open sets $V_i$ connecting $X_i$ to $X_{i+1}$ with
the `holes' $S_{i+1}$ and $R_{i+1}$ of $X_{i+1}$ containing $X_{i}$ and $X_{i+2}$, respectively.
}
\label{fig:InfCmpConstr}
\end{figure}
By the first conjunct
of~\eqref{eq:basic-regions}, let $X_0$ be a component of $r_0$
containing points in $r'_0$. Suppose $X_i$ has been constructed.
By~\eqref{eq:InfContact}--\eqref{eq:notC}, $X_i$ is in contact with
$r'_{\md{i+1}}$. Using~\eqref{eq:InfPart1} and the fact that $T$ is
locally connected, one can find a component $X_{i+1}$ of
$r_{\md{i+1}}$ which has points in $r'_{\md{i+1}}$, and a connected open set
$V_i$ such that $V_i \cap X_i$ and $V_i \cap X_{i+1}$ are non-empty,
but $V_i \cap r_{\md{i+2}}$ is empty.

To see that the $X_i$ are distinct, let $S_{i+1}$ and $R_{i+1}$ be the
components of $-X_{i+1}$ containing $X_i$ and $X_{i+2}$,
respectively. It suffices to show that we have $S_{i+1}
\subseteq\ti{S}_{i+2}$.  Note that the connected set $V_i$ must
intersect $\delta S_{i+1}$.  By the second statement of
Lemma~\ref{lma:lc}, $\delta S_{i+1} \subseteq X_{i+1} \subseteq
r_{\md{i+1}}$.  Also, $\delta S_{i+1} \subseteq -X_{i+1}$; hence,
by~\eqref{eq:InfPart1}, $\delta S_{i+1} \subseteq r_{\md{i}} \cup
r_{\md{i+2}}$.  By Lemma~\ref{lma:ourNewman}, $\delta S_{i+1}$ is
connected, and therefore, by~\eqref{eq:InfPart1}, $\delta S_{i+1}$ is
entirely contained either in $r_{\md{i}}$ or in $r_{\md{i+2}}$. Since
$V_i \cap \delta S_{i+1} \neq \emptyset$ and $V_i \cap r_{\md{i+2}} =
\emptyset$, we have $\delta S_{i+1} \nsubseteq r_{\md{i+2}}$, so
$\delta S_{i+1} \subseteq r_{\md{i}}$. Similarly, $\delta
R_{i+1}\subseteq r_{\md{i+2}}$.  By~\eqref{eq:InfPart1}, then, $\delta
S_{i+1} \cap \delta R_{i+1} = \emptyset$, and since $S_{i+1}$ and
$R_{i+1}$ are components of the same set, and have non-empty
boundaries, they are disjoint. Hence, we obtain $S_{i+1}\subseteq
\ti{(-R_{i+1})}$, and since $X_{i+2}\subseteq R_{i+1}$, also
$S_{i+1}\subseteq \ti{(-X_{i+2})}$. So, using local connectedness
again, $S_{i+1}$ lies in the interior of a component of $-X_{i+2}$,
and since $\delta S_{i+1}\subseteq X_{i+1}\subseteq S_{i+2}$, that
component must be $S_{i+2}$.
\end{proof}

Now we show how the \cBCc-formula $\phi_\infty$ can be transformed
to \cBCci- and \cBc-formulas with similar properties.  Note first
that all occurrences of $c$ in $\phi_\infty$ have positive polarity.
Let $\ti{\phi}_\infty$ be the result of replacing them with the
predicate $\ic$. In Fig.~\ref{fig:InfCmpSat}, the connected regions
mentioned in~\eqref{eq:InfContact} are in fact interior-connected;
hence $\ti{\phi}_\infty$ is satisfiable over $\RC(\R^n)$. Since
interior-connectedness implies connectedness, $\ti{\phi}_\infty$
entails $\phi_\infty$, and we obtain:

\begin{corollary}\label{cor:inftyCci}
The \cBCci-formula $\ti{\phi}_\infty$ is satisfiable over $\RC(\R^n)$,
\mbox{$n \geq 2$}. On the other hand, if $T$ is a locally connected,
unicoherent space, then any tuple from $\RC(T)$ satisfying
$\ti{\phi}_\infty$ features sets that have infinitely many components.
\end{corollary}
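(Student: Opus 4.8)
The plan is to obtain Corollary~\ref{cor:inftyCci} as an essentially immediate consequence of Theorem~\ref{theo:inftyCc}, using the fact that $\ti{\phi}_\infty$ is obtained from $\phi_\infty$ by replacing certain \emph{positive} occurrences of $c$ by $\ic$.

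For the first assertion (satisfiability over $\RC(\R^n)$), I would re-use the very assignment that witnesses $\phi_\infty$ in Fig.~\ref{fig:InfCmpSat} and check that it also satisfies $\ti{\phi}_\infty$. The only conjuncts of $\phi_\infty$ affected by the substitution are those of~\eqref{eq:InfContact}, which become $\bigwedge_{i=0}^{3}\ic(r'_i + r_{\md{i+1}} + t)$; so it suffices to observe that, in the planar ``onion'' configuration of Fig.~\ref{fig:InfCmpSat}, each region $r'_i + r_{\md{i+1}} + t$ --- a two-dimensional layer $r_{\md{i+1}}$, with the nested sub-rectangle $r'_i$ glued to it along a segment and the strip $t$ crossing it --- has a connected interior. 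Cylindrification then extends the result to $\RC(\R^n)$ for all $n > 2$, exactly as in Theorem~\ref{theo:inftyCc}.

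For the second assertion, the key step is to note that $\ti{\phi}_\infty$ logically entails $\phi_\infty$ over \emph{every} frame. Indeed, for any term $\tau$ and any interpretation $\mathfrak{I}$ over a frame, $\tau^{\mathfrak{I}}\in\RC(T)$, and a regular closed, interior-connected set is connected (its interior is dense in it, so connectedness of the interior passes to the closure); hence $\mathfrak{I}\models\ic(\tau)$ implies $\mathfrak{I}\models c(\tau)$. Since all occurrences of $c$ in $\phi_\infty$ lie inside the conjunction~\eqref{eq:InfContact} and are therefore positive, the routine monotonicity argument gives that any $\mathfrak{I}$ with $\mathfrak{I}\models\ti{\phi}_\infty$ also satisfies $\phi_\infty$. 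Now, if $T$ is locally connected and unicoherent and $\vec{a}$ is a tuple from $\RC(T)$ satisfying $\ti{\phi}_\infty$, then $\vec{a}$ satisfies $\phi_\infty$, and Theorem~\ref{theo:inftyCc} shows that $\vec{a}$ contains a set with infinitely many components.

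The only genuine work beyond invoking the theorem is the geometric verification in the first part --- that the contacts demanded by~\eqref{eq:InfContact} can be realised by regions with connected interiors --- and this is read off directly from Fig.~\ref{fig:InfCmpSat}; I do not expect any real obstacle here.
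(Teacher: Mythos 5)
Your proposal is correct and follows essentially the same route as the paper: observe that the only occurrences of $c$ in $\phi_\infty$ (namely in~\eqref{eq:InfContact}) are positive, verify from Fig.~\ref{fig:InfCmpSat} that those regions are in fact interior-connected so the same assignment works, and use that interior-connectedness of a regular closed set implies connectedness to deduce $\ti{\phi}_\infty \models \phi_\infty$ and then invoke Theorem~\ref{theo:inftyCc}. The only additions are spelling out the closure argument and the cylindrification, both of which the paper leaves implicit.
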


We next consider the language \cBc.  Observe that all occurrences of
$C$ in $\phi_\infty$ are negative. We eliminate these using the
predicate $c$: we use the fact that, if the sum of two connected
regions is not connected, then they are not in contact. If $\tau_1$
and $\tau_2$ are any terms, we employ the abbreviation
\begin{equation*}
	\noncontact(\tau_1, \tau_2) \ \ = \ \ c(\tau_1)\land c(\tau_2)
           \land \neg c(\tau_1+\tau_2).
\end{equation*}
Observe that $\noncontact(\tau_1, \tau_2)$ is always a \cBc-formula.
Furthermore, $\noncontact(\tau_1, \tau_2)$ implies $\neg C(\tau'_1,
\tau'_2)$ for any $\tau'_1 \leq \tau_1$ and $\tau'_2 \leq \tau_2$.
Now we replace~\eqref{eq:notC2} by
\begin{equation}\tag{\ref{eq:notC2}$^c$}
\noncontact(r_0'+r_1'+r_2'+r_3', \  t).
\end{equation}
The resulting formula thus implies the original; on the other hand, it
is satisfied by the configuration of Fig.~\ref{fig:InfCmpSat}.
Next, we replace each conjunct $\neg C(r'_i,r_{\md{i+1}}\cdot (-
r'_{\md{i+1}}))$ in~\eqref{eq:notC} by
\begin{equation}\tag{\ref{eq:notC}$^c$}
\noncontact(r_i' + s, \ \ r_{\md{i+1}}\cdot (- r'_{\md{i+1}}) + t),
\end{equation}
where $s$ is a fresh variable.  Again, the resulting formula implies
the original, and, furthermore, is evidently satisfied by the
configuration of Fig.~\ref{fig:InfCmpElAiAi3}, where $s$ lies inside
$\sum_{j = 0}^3 r_j'$, symmetrically to $t$ lying inside $\sum_{j=0}^3
(r_j\cdot (-r'_j))$.
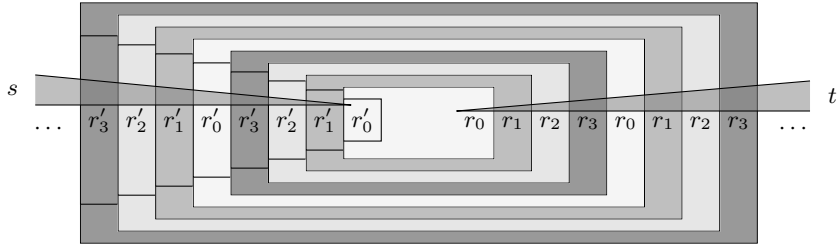
\begin{figure}[ht]
\begin{center}
\begin{tikzpicture}[yscale=0.8]\small
	\coordinate (LB1) at (0,0); 			
	\coordinate (RT1) at (2,1.2); 			
	\coordinate (LB2) at (0,.3); 			
	\coordinate (RT2) at (.5,1);			
	\coordinate (LBP) at (.2pt,.2pt); 		
	\coordinate (RTP) at (-.2pt,-.2pt); 	
	\coordinate (Dx) at (0.5,0); 			
	\coordinate (Dy) at (0,.2); 			
	\coordinate (Dy2) at (0,.15); 			
	\coordinate (HH) at (.25,.6);			
	
	\foreach \i/\c in {0/gray!8,1/gray!50,2/gray!20,3/gray!80,0/gray!8,1/gray!50,2/gray!20,3/gray!80}
	{
		\begin{scope}[even odd rule]
			\clip ($(LBP)-(.2pt,.2pt)$) rectangle ($(RTP)+(.2pt,.2pt)$)
				($(LB1)-(.2pt,.2pt)$) rectangle ($(RT1)+(.2pt,.2pt)$) ;
			\filldraw[fill=\c] (LB1) rectangle (RT1); \node at ($(RT1)-(HH)$) {$r_\i$};
			\filldraw[fill=\c] (LB2) rectangle (RT2) node[midway] {$r_\i'$};
			\coordinate (LBP) at (LB1);
			\coordinate (RTP) at (RT1);
			\coordinate (LB1) at ($(LB1)-(Dx)-(Dy)$);
			\coordinate (RT1) at ($(RT1)+(Dx)+(Dy)$);
			\coordinate (LB2) at ($(LB2)-(Dx)-(Dy2)$);
			\coordinate (RT2) at ($(RT2)-(Dx)+(Dy2)$);
			\coordinate (HH) at ($(HH)+(Dy)$);
		\end{scope}
	}
	\filldraw[fill opacity=0.5, fill=gray] (6.2,.8)--(1.5,.8)--(6.2,1.3);
	\node at (6.5,1.05) {$t$};
	\node at (6,.5) {$\cdots$};
	\filldraw[fill opacity=0.5, fill=gray] (-4.1,.9)--(0.1,.9)--(-4.1,1.4);
	\node at (-4.4,1.15) {$s$};
	\node at (-3.9,.5) {$\cdots$};
%
\end{tikzpicture}		
\end{center}
\caption{Region $s$ lying inside $\sum_{j = 0}^3 r_j'$ and connecting
  the components of each $r_i'$
.}\label{fig:InfCmpElAiAi3}	
\end{figure}%
The only remaining occurrences of the contact predicate $C$ are
in~\eqref{eq:InfPart1}.  We deal with them by partitioning the
regions: instead of each $\neg C(r_i,r_{\md{i+2}})$ we consider the
equivalent conjunction of 4 formulas:
\begin{align*}
	& \neg C(r'_i,r'_{\md{i+2}}) & \land & \qquad \neg C(r_i\cdot(-r'_i), \ \ r_{\md{i+2}}\cdot(-r'_{\md{i+2}}))& \land \\
    & \neg C(r_i\cdot(-r'_i), \ \ r'_{\md{i+2}}) & \land & \qquad  \neg C(r'_i, \ \ r_{\md{i+2}}\cdot(-r'_{\md{i+2}})).
\end{align*}
The formulas in the second row are replaced by
\begin{align*}
\noncontact(r_i\cdot (-r'_i) + t,\ \ r_{\md{i+2}}' + s) \ \ \land \ \ \noncontact(r_i' + s, \ \ r_{\md{i+2}}\cdot (- r'_{\md{i+2}}) + t).
\end{align*}
Again, the resulting formula implies the original and is satisfied
by the configuration of Fig.~\ref{fig:InfCmpElAiAi3}.
The formulas in the first row are replaced by
\begin{multline*}
\noncontact(r'_i + s_i, \ \ r'_{\md{i+2}} + s_{\md{i+2}}) \land \noncontact(r_i\cdot(-r'_i) + t_i, \ \ r_{\md{i+2}}\cdot(-r'_{\md{i+2}}) + t_{\md{i+2}}).
\end{multline*}
Again, the resulting formula implies the original. To see that it is
still satisfiable, we select regions $s_0,\dots,s_3$, with $s_i$ and
$s_{i+2}$ disjoint ($i = 0,1$), such that each $s_i$ ($0 \leq i < 4$)
connects together the components of $r'_i$ as shown in
Fig.~\ref{fig:InfCmpElAiAi2}.
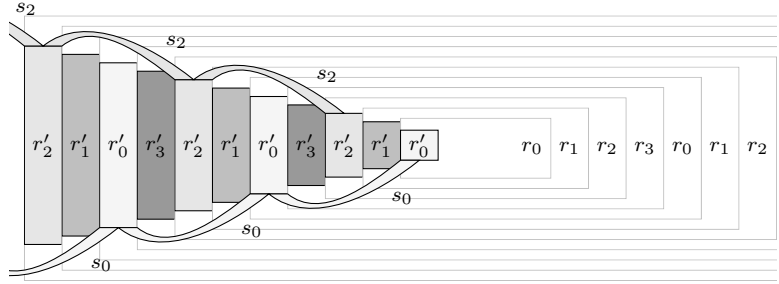
\begin{figure}[ht]
\begin{center}
\begin{tikzpicture}[yscale=0.8]\footnotesize
\clip (-5.2,-2) rectangle (5,3);
	\coordinate (LB1) at (0,0); 			
	\coordinate (RT1) at (2,1); 			
	\coordinate (LB2) at (0,.3); 			
	\coordinate (RT2) at (.5,.8);			
	\coordinate (LBP) at (.2pt,.2pt); 		
	\coordinate (RTP) at (-.2pt,-.2pt); 	
	\coordinate (Dx) at (0.5,0); 			
	\coordinate (Dy) at (0,.17); 			
	\coordinate (Dy2) at (0,.14); 			
	\coordinate (HH) at (.25,.5);			
	
	\foreach \i/\c in {0/gray!8,1/gray!50,2/gray!20,3/gray!80,0/gray!8,1/gray!50,2/gray!20,3/gray!80,0/gray!8,1/gray!50,2/gray!20}
	{
		\begin{scope}[even odd rule]
			\clip ($(LBP)-(.2pt,.2pt)$) rectangle ($(RTP)+(.2pt,.2pt)$)
				($(LB1)-(.2pt,.2pt)$) rectangle ($(RT1)+(.2pt,.2pt)$) ;
			\draw[draw=gray!50] (LB1) rectangle (RT1); \node at ($(RT1)-(HH)$) {$r_\i$};
			\filldraw[fill=\c] (LB2) rectangle (RT2) node[midway] {$r_\i'$};
			\coordinate (LBP) at (LB1);
			\coordinate (RTP) at (RT1);
			\coordinate (LB1) at ($(LB1)-(Dx)-(Dy)$);
			\coordinate (RT1) at ($(RT1)+(Dx)+(Dy)$);
			\coordinate (LB2) at ($(LB2)-(Dx)-(Dy2)$);
			\coordinate (RT2) at ($(RT2)-(Dx)+(Dy2)$);
			\coordinate (HH) at ($(HH)+(Dy)$);
		\end{scope}
	}
	
	\begin{scope}
	\coordinate (LB2) at (0,.3);	
	\foreach \x in {1,2,3}{
		\filldraw[fill=gray!8] ($(LB2)$) to[out=-135,in=-45] ($(LB2)-3*(Dx)-4*(Dy2)$)
			--++($-0.5*(Dx)$) to[in=-135,out=-45] ($(LB2)+.5*(Dx)$)
			node[very near end, below=4pt]{$s_0$} --cycle;
		\coordinate (LB2) at ($(LB2)-4*(Dx)-4*(Dy2)$);
	}
	\coordinate (LT2) at (0,.8);
	\coordinate (LT2) at ($(LT2)-2*(Dx)+2*(Dy2)$);
	\foreach \x in {1,2,3}{
		\filldraw[fill=gray!20] ($(LT2)$) to[out=135,in=45] ($(LT2)-3*(Dx)+4*(Dy2)$)
			--++($-0.5*(Dx)$) to[in=135,out=45] ($(LT2)+.5*(Dx)$)
			node[very near end, above=4pt]{$s_2$} --cycle;
		\coordinate (LT2) at ($(LT2)-4*(Dx)+4*(Dy2)$);
	}
	\end{scope}	
\end{tikzpicture}		
\end{center}
	\caption{Disjoint connected regions $s_0$ and $s_2$ containing the components of $r_0'$ and $r_2'$, respectively.}
	\label{fig:InfCmpElAiAi2}	
\end{figure}
In a symmetric way, select regions
$t_0,\dots,t_3$, with $t_i$ and $t_{i+2}$ disjoint ($i = 0,1$), such
that each $t_i$ ($0 \leq i < 4$) connects together the components of
$r_i \cdot (-r_i')$.
Transforming $\phi_\infty$ in the way just described, we obtain a
\cBc-formula $\phi_\infty^c$ with the required properties.
\begin{theorem}\label{cor:inftyBc}
The \cBc-formula $\phi_\infty^c$ is satisfiable over $\RC(\R^n)$,
\mbox{$n \geq 2$}. On the other hand, if $T$ is a locally connected,
unicoherent space, then any tuple from $\RC(T)$ satisfying
$\phi_\infty^c$ features sets that have infinitely many components.
\end{theorem}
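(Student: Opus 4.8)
The proof splits into the two halves of the statement, and both halves exploit the way $\phi_\infty^c$ was built from $\phi_\infty$: by a finite list of local replacements, each of which is (a) \emph{stronger than} the subformula it replaces, yet (b) still \emph{satisfiable by an explicit planar configuration}. Property (b) yields satisfiability over $\RC(\R^n)$; property (a) reduces the ``infinitely many components'' claim to Theorem~\ref{theo:inftyCc}.

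\emph{Satisfiability.} I would produce a single interpretation over $\RC(\R^2)$ by superimposing the configurations of Figures~\ref{fig:InfCmpSat}, \ref{fig:InfCmpElAiAi3} and~\ref{fig:InfCmpElAiAi2}, together with the $t$-symmetric version of the last: the nested ``onion'' supplies $r_0,\dots,r_3$ and $r'_0,\dots,r'_3$; the region $t$ threads through the layers inside $\sum_j r_j\cdot(-r'_j)$, avoiding every $r'_j$; the region $s$ threads through the layers inside $\sum_j r'_j$, avoiding $t$ and each $r_{\md{i+1}}\cdot(-r'_{\md{i+1}})$; the regions $s_0,\dots,s_3$ are chosen as in Fig.~\ref{fig:InfCmpElAiAi2} so that $s_i$ joins all the components of $r'_i$ while $s_i\cap s_{\md{i+2}}=\emptyset$; and $t_0,\dots,t_3$ are placed symmetrically so that $t_i$ joins all the components of $r_i\cdot(-r'_i)$ while $t_i\cap t_{\md{i+2}}=\emptyset$. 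One then checks the conjuncts of $\phi_\infty^c$ in turn. The partition conjuncts, the conjuncts $(r_i\neq0)$, $(r'_i\neq0)$, $(r'_i\leq r_i)$, $(t\neq0)$ and~\eqref{eq:InfContact} are inherited unchanged from $\phi_\infty$ and hold by construction of the onion. For each conjunct of the form $\noncontact(\sigma_1,\sigma_2)$ one verifies that $\sigma_1$ and $\sigma_2$ are connected in the picture — because the spine region occurring in $\sigma_1$ (resp. $\sigma_2$) joins up all the components on the corresponding side — and that $\sigma_1\cap\sigma_2=\emptyset$, so that $\sigma_1+\sigma_2$ is disconnected; the disjointness uses the partition conditions $r_i\cap r_{\md{i+2}}=\emptyset$ together with the deliberate separation of the ``$s$-side'' from the ``$t$-side'' and of $s_i$ from $s_{\md{i+2}}$ (dually for $t_i$). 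Finally, satisfiability over $\RC(\R^n)$ for $n>2$ follows by cylindrification $A\mapsto A\times\R^{n-2}$, exactly as in Theorem~\ref{theo:inftyCc}: this map commutes with the Boolean operations and preserves connectedness together with its negation, hence preserves the truth of every \cBc-formula.

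\emph{Forcing infinitely many components.} I would prove that $\phi_\infty^c$ logically implies $\phi_\infty$ (on the variables of the latter, namely $r_0,\dots,r_3,r'_0,\dots,r'_3,t$). The only conjuncts of $\phi_\infty$ not literally present in $\phi_\infty^c$ are the negated contact atoms: $\neg C(r'_i,t)$ from~\eqref{eq:notC2}, $\neg C\bigl(r'_i,\,r_{\md{i+1}}\cdot(-r'_{\md{i+1}})\bigr)$ from~\eqref{eq:notC}, and $\neg C(r_0,r_2)$, $\neg C(r_1,r_3)$ occurring inside $\cpartition(r_0,r_1,r_2,r_3)$ in~\eqref{eq:InfPart1}. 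Using the observation already recorded above — that $\noncontact(\tau_1,\tau_2)$ implies $\neg C(\tau'_1,\tau'_2)$ whenever $\tau'_1\leq\tau_1$ and $\tau'_2\leq\tau_2$ — each replacement implies what it replaced: the conjunct $\noncontact(r'_0+r'_1+r'_2+r'_3,\,t)$ yields every $\neg C(r'_i,t)$; the conjunct $\noncontact\bigl(r'_i+s,\ r_{\md{i+1}}\cdot(-r'_{\md{i+1}})+t\bigr)$ yields $\neg C\bigl(r'_i,\,r_{\md{i+1}}\cdot(-r'_{\md{i+1}})\bigr)$; and, writing $r_i=r'_i+r_i\cdot(-r'_i)$, $r_{\md{i+2}}=r'_{\md{i+2}}+r_{\md{i+2}}\cdot(-r'_{\md{i+2}})$ and using that $C(a+b,c)$ is equivalent to $C(a,c)\vee C(b,c)$, the four $\noncontact$-conjuncts replacing $\neg C(r_i,r_{\md{i+2}})$ together yield $\neg C(r_i,r_{\md{i+2}})$. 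Hence every interpretation satisfying $\phi_\infty^c$ restricts, on those variables, to one satisfying $\phi_\infty$; so if $T$ is locally connected and unicoherent, Theorem~\ref{theo:inftyCc} tells us that this restricted tuple, and a fortiori the full satisfying tuple, contains regions with infinitely many components.

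\emph{Where the work is.} The entailment step is routine. The step requiring care is the satisfiability construction: the eight auxiliary spine regions $s,t,s_0,\dots,s_3,t_0,\dots,t_3$ must be realised \emph{simultaneously} in one planar picture so that all the $\noncontact$ conjuncts hold at once — in particular so that $s_i$ is routed among the infinitely many components of $r'_i$ without ever meeting $s_{\md{i+2}}$ or the $t$-side, and dually for $t_i$. Figures~\ref{fig:InfCmpSat}, \ref{fig:InfCmpElAiAi3} and~\ref{fig:InfCmpElAiAi2} indicate exactly how to do this, and once the routing is fixed the verification is mechanical; but this is the only place in the argument where a genuine geometric construction is needed.
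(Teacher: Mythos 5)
Your proposal is correct and follows essentially the same route the paper takes: the ``forcing'' half reduces to Theorem~\ref{theo:inftyCc} by observing that each $\noncontact$-conjunct of $\phi_\infty^c$ entails the corresponding negated contact atom of $\phi_\infty$ (with the $\cpartition$ atoms $\neg C(r_i,r_{\md{i+2}})$ handled by the four-way split via $r'$), and the satisfiability half is the explicit planar onion of Figs.~\ref{fig:InfCmpSat}--\ref{fig:InfCmpElAiAi2} plus cylindrification. Both the decomposition and the key observation about $\noncontact$ match the paper's in-line argument preceding the theorem statement.
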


The results of this section make no
reference to the language \cBci.  In fact, an analogue of
Theorem~\ref{cor:inftyBc} for \cBci{} will be proved in the special
case $n = 2$, in \SECT\ref{subsec:infCompPlane}, using a planarity
argument. For $n \geq 3$, however, this result fails, as we show in
\SECT\ref{sec:3d}.  As we observed above, Theorem~\ref{cor:inftyBc} shows
that, for all $n \geq 2$, $\Sat(\cBc,\RC(\R^n)) \neq
\Sat(\cBc,\RCP(\R^n))$. The reader will recall from \SECT\ref{sec:lwc}
that the corresponding inequations for the language \cBci{} hold
anyway, by~\eqref{eq:wiggly}. Finally, we remark on the case
of the real line, $\R$, which was considered in~\cite{KPZ10kr}.  The
analogue of Theorem~\ref{theo:inftyCc} for the case $n=1$ holds
(though we need to use a different formula to force an infinitude of
components); however, the analogue of Theorem~\ref{cor:inftyBc} for
$n=1$ fails: indeed, we have 
$\Sat(\cBc,\RC(\R))=\Sat(\cBc,\RCP(\R))$.

\section{Undecidability: the polyhedral case}\label{sec:undecidability}
We use the techniques developed in the previous section to prove that
the satisfiability problem for any of the languages \cBc, \cBCc{} or
\cBCci{} over the frame $\RCP(\R^n)$, $n \geq 2$, is undecidable.
Recall that a frame $(T, \bS)$ is unicoherent if $T$ is unicoherent;
and $(T, \bS)$ is finitely decomposible if, for all $s \in \bS$, there
exist connected elements $s_1, \dots, s_k$ of $\bS$, such that $s=
s_1+ \cdots + s_k$.

\begin{theorem}\label{theo:cBCcn}
Let $\cK$ be any class of unicoherent, finitely
decomposible frames, such that $\cK$ contains some frame of
the form $(\R^n,\bS)$, $n\geq2$, where $\RCP(\R^n) \subseteq \bS$. Then the
problem $\Sat(\cBCc, \cK)$ is r.e.-hard.
\end{theorem}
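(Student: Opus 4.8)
The plan is to reduce the set of positive PCP instances --- which is r.e.-complete, and remains so when every $\pcpw^k(t)$ is non-empty --- to $\Sat(\cBCc,\cK)$. Given an instance $\pcpW=(\pcpw^1,\pcpw^2)$ over alphabets $T,U$, I would construct, by a computable map, a $\cBCc$-formula $\phi_\pcpW$ that is satisfiable over $\cK$ if and only if $\pcpW$ is positive. The formula is built on the ``onion'' of \SECT\ref{infcomp}. Recall that the argument of Theorem~\ref{theo:inftyCc} extracts, from any assignment satisfying $\phi_\infty$ over a locally connected unicoherent space, an infinite chain $X_0,X_1,X_2,\dots$ of disjoint nested components, cyclically coloured by $r_0,r_1,r_2,r_3$, with the thread $t$ threading through them; it uses only Lemma~\ref{lma:lc} and Lemma~\ref{lma:ourNewman}. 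Over a finitely decomposible frame the same mechanism still produces such a chain, but now it must stop, since finite decomposibility (Lemma~\ref{lma:fd}) forces every region to have finitely many components, so the first clause of Lemma~\ref{lma:lc} applies in place of the local-connectedness clause and no infinite chain can exist. Thus $\phi_\infty$ becomes, over finitely decomposible unicoherent frames, a device building a finite but unbounded tower of layers; $\phi_\pcpW$ will be a variant of $\phi_\infty$ in which each layer is obliged to spawn the next one \emph{unless} it is a designated ``halting'' layer, the halting condition being exactly ``a PCP solution has just been completed''.

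Onto the layers of this tower I would graft PCP data. I adjoin to the signature a further partition of $1$ whose cells are indexed by the symbols of $T$, by the symbols of $U$ and by a bounded set of control values, together with a few extra threads playing the role of $t$; using contact atoms $\neg C(\cdot,\cdot)$ and connectedness atoms $c(\cdot)$ I force that (i) each component $X_i$ reads off exactly one cell of each kind, so that the chain, read from the innermost layer outwards, spells a word $w\in U^*$ together with a run of the PCP matching procedure --- a tile word $\tau=t_1t_2\cdots\in T^*$ and two pointers recording how much of the upper expansion $\pcpw^1(\tau)$ and of the lower expansion $\pcpw^2(\tau)$ has been matched against $w$; (ii) passing from layer $i$ to layer $i+1$ advances this run by one step (consume one $U$-letter on each side, and start a new common tile whenever both sides finish their current tile); (iii) the innermost layer is a legal start (empty $\tau$, both pointers at $0$) and a layer may be a halting layer only when, after a non-empty $\tau$, both expansions have simultaneously consumed all of $w$ --- which is precisely the assertion $\pcpw^1(\tau)=\pcpw^2(\tau)=w$ for some non-empty $\tau$.

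For soundness, a PCP solution $\tau$ gives a finite model: put $w=\pcpw^1(\tau)$, realise the finitely many layers and threads by nested rectangles and polygonal combs in $\R^2$ exactly as in Figs.~\ref{fig:InfCmpSat} and \ref{fig:InfCmpElAiAi2}, annotate the rectangles with the symbol cells prescribed by the run, and cylindrify to $\R^n$; since $\RCP(\R^n)\subseteq\bS$ and the truth of $C$- and $c$-atoms is intrinsic to the regions, this yields a model in the distinguished frame $(\R^n,\bS)$ of $\cK$. For completeness, given any assignment over a unicoherent finitely decomposible frame in $\cK$ satisfying $\phi_\pcpW$, I rerun the onion argument of Theorem~\ref{theo:inftyCc}, now drawing on finite decomposibility (via Lemma~\ref{lma:fd} and the first clause of Lemma~\ref{lma:lc}) in place of local connectedness, to obtain a finite nested chain $X_0,\dots,X_m$, necessarily ending in a halting layer; reading the symbol cells and control threads off the chain recovers a word $w$ and a completed run of the matching procedure, whence $\tau=t_1\cdots t_k$ is non-empty with $\pcpw^1(\tau)=\pcpw^2(\tau)$, so $\pcpW$ is positive.

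The main obstacle, and where the real work lies, is item (ii): expressing one step of the matching procedure by a Boolean combination of $C$- and $c$-atoms that is \emph{forced} on every consecutive pair of layers --- not merely consistent with some successor --- so that the data at layer $i+1$ is a deterministic function of the data at layer $i$. This must be piggy-backed on the contact already arranged in \eqref{eq:InfContact}--\eqref{eq:notC}, the delicate points being (a) obtaining determinism rather than a loose ``a compatible successor exists'', and (b) preventing spurious extra components of the decorating regions from corrupting the readings; the first clause of Lemma~\ref{lma:lc} and the sub-cyclicity of $(r_0,r_1,r_2,r_3)$ are the tools for pinning down which layer lies next to which. Everything else is bookkeeping on top of the machinery already established in \SECT\ref{infcomp}.
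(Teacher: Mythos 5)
Your high-level scaffolding is sound: $\phi_\infty$ can be redeployed so that over a unicoherent, finitely decomposible frame it produces a \emph{finite} nested chain of components (the chain is a path in a tree by Lemma~\ref{lma:noCycles}, and the frame having finitely many components per region caps its length), and the soundness direction (positive PCP $\Rightarrow$ satisfiable in $\RCP(\R^n)\subseteq\bS$) can be handled by extending the polygonal picture of Fig.~\ref{fig:InfCmpSat}. The paper's proof of Theorem~\ref{theo:cBCcn} does indeed force such a path. But your item (ii) --- carrying ``a tile word $\tau$ and two pointers'' forward from layer to layer by a bounded set of control colours --- does not go through, and this is not just the ``delicate'' determinism issue you flag. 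The obstruction is that the constraint $\pcpw^1(\tau)=\pcpw^2(\tau)$ is a \emph{non-local} correlation along your chain: since $|\pcpw^1(t)|\neq|\pcpw^2(t)|$ in general, the upper and lower expansions are on different tiles at layer $i$, so the event ``upper starts its $j$th tile'' and ``lower starts its $j$th tile'' occur at different layers, unboundedly far apart. A bounded per-layer colour (current upper tile, position in it, current lower tile, position in it, $U$-letter) together with layer-to-layer adjacency constraints can only enforce that the upper tile sequence and the lower tile sequence are each internally consistent; it cannot enforce that they are the \emph{same} sequence $\tau$, because that comparison would require unbounded look-back. ``Start a new common tile whenever both sides finish their current tile'' never triggers unless the tile pair has $|\pcpw^1(t)|=|\pcpw^2(t)|$, which trivializes the problem.

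The paper avoids this by not trying to encode $\tau$ within the same chain. It introduces a \emph{second} sub-cyclic partition $\vec{s}$, whose component graph $H(\vec{s})$ is also a tree, plus two filter regions $e^1,e^2$. The single path $A_1,\dots,A_{n+1}$ in $H(\vec{r})$ is traced twice (once inside $e^1$, once inside $e^2$), and~\eqref{eq:trackInclusions} forces each filtered component $X^k_i$ into some vertex of $H(\vec{s})$, yielding two sequences $B^k_1,\dots,B^k_{m^k+1}$; because $H(\vec{s})$ is a tree with a unique path from the common start to $w^*$, these two sequences are forced to be \emph{the same} path $B_1,\dots,B_{m+1}$, which encodes a single $\tau$. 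The two filter-induced groupings of the $A$-path into blocks corresponding to that one $\tau$, together with the position colours $p^k_{h,\ell}$, enforce $\pcpw^1(\tau)=\upsilon=\pcpw^2(\tau)$ with only bounded local constraints. This ``two trees, two filters, one common $s$-path'' architecture is exactly what your single-chain run encoding is missing, and it is where the real work of the proof lies.
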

\begin{proof}
We proceed via a reduction of the Post correspondence problem (PCP),
constructing, for any instance $\pcpW$, a formula $\phi_\pcpW$ with
the property that (\emph{i}) if $\pcpW$ is positive then $\phi_\pcpW$
is satisfiable over $\RCP(\R^n)$, $n \geq 2$, and (\emph{ii}) if
$\phi_\pcpW$ is satisfiable over a unicoherent, finitely decomposible
frame then $\pcpW$ is positive.  The formula $\phi_\pcpW$ will be a
conjunction of \cBCc-literals. As in the proof
of Theorem~\ref{theo:inftyCc}, we equivocate between variables and the
regions to which they are assigned in some putative interpretation
over a frame in $\cK$: this will allow us to motivate the conjuncts of
$\phi_\pcpW$ as they are presented. In the remainder of this proof, if
$i$ is an integer, $\md{i}$ indicates the value of $i$ modulo 4.

Let the PCP-instance $\pcpW = (\pcpw^1,\pcpw^2)$ over alphabets $T$
and $U$ be given, and let $\vec{r} = (r_0, \dots, r_3)$ and $\vec{s} =
(s_0, \dots, s_3)$ be tuples of variables.  The first conjuncts of
$\phi_\pcpW$ ensure that $\vec{r}$ and $\vec{s}$ are sub-cyclic
partitions:
\begin{align}
\label{eq:partitionA}
& \cpartition(r_0, r_1, r_2, r_3),\\
\label{eq:partitionB}
& \cpartition(s_0, s_1, s_2, s_3).
\end{align}
By Lemma~\ref{lma:noCycles}, the component graphs $H(\vec{r})$ and
$H(\vec{s})$ are trees. Thus, any two vertices of $H(\vec{r})$ are
joined by a unique path, and likewise for $H(\vec{s})$.  The vertices
of $H(\vec{r})$ will be used to represent letters in some word
$\upsilon \in U^*$, and those of $H(\vec{s})$, letters in some word
$\tau \in T^*$.

Let $e^1$ and $e^2$ be fresh variables. We shall use these to
represent the morphisms $\pcpw^1$ and $\pcpw^2$, respectively. The
next conjuncts of $\phi_\pcpW$ ensure that, for all $0 \leq i <
4$, the components of both $r_i \cdot e^1$ and $r_i \cdot e^2$ are
coloured by the elements $\vec{s}$, as defined in \SECT\ref{prelim}:
\begin{align}
\label{eq:trackInclusions}
& \bigwedge_{k = 1}^2\bigwedge_{i=0}^3
\colourComp(r_i\cdot e^k; \ \ s_0, s_1, s_2, s_3).
\end{align}
Fig.~\ref{fig:onion} shows a configuration conforming to these
conditions. In this arrangement, $H(\vec{r})$ has vertices $\{A_1,
\dots, A_7\}$, where $A_i$ is a component of $r_{\md{i}}$, and
$H(\vec{s})$ has vertices $\{ B_1, \dots, B_4\}$ (indicated by thick
boundaries), where $B_j$ is a component of $s_{\md{j}}$. Observe that,
for $0 \leq i <4$ and $1 \leq k \leq 2$, each component of $r_i \cdot
e^k$ is included in exactly one of $s_0, \dots, s_3$, and hence in a
single vertex $B_j$ of $H(\vec{s})$; however, outside
$e^1 + e^2$, elements of $H(\vec{s})$ may intersect elements of
$H(\vec{r})$ without including them.  A word of warning: in the configuration of
Fig.~\ref{fig:onion}, the various sets $A_i \cdot e^k$ and $B_j \cdot
e^k$ are all \emph{connected}; however, the formula $\phi_\pcpW$ does
not enforce this. That is, there is nothing to prevent the sets $e^k$
from chopping elements of $H(\vec{r})$ and $H(\vec{s})$ into several
pieces.
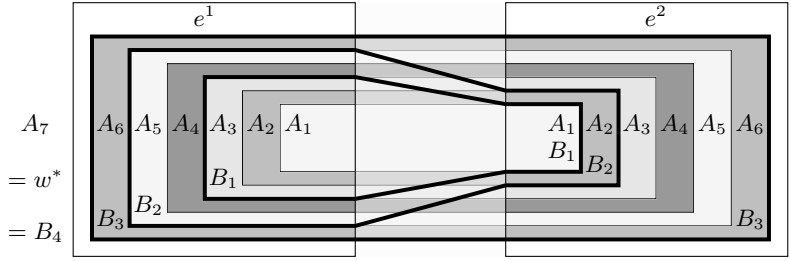
\begin{figure}[ht]
\begin{center}
\begin{tikzpicture}[yscale=0.9]
\small
	\coordinate (LB1) at (0,0); 			
	\coordinate (RT1) at (4,1); 			
	\coordinate (LBP) at (.2pt,.2pt); 		
	\coordinate (RTP) at (-.2pt,-.2pt); 	
	\coordinate (Dx) at (0.5,0); 			
	\coordinate (Dy) at (0,.2); 			
	\coordinate (HH) at (.25,.5);			
	
	\foreach \i/\c in {1/gray!8,2/gray!50,3/gray!20,4/gray!80,5/gray!8,6/gray!50}
	{
		\begin{scope}[even odd rule]
			\clip ($(LBP)-(.2pt,.2pt)$) rectangle ($(RTP)+(.2pt,.2pt)$)
				($(LB1)-(.2pt,.2pt)$) rectangle ($(RT1)+(.2pt,.2pt)$) ;
			\filldraw[fill=\c] (LB1) rectangle (RT1);
            \node at ($(RT1)-(HH)+(0,0.2)$) {$A_\i$};
            \node at ($(LB1)+(HH)+(0,0.2)$) {$A_\i$};
			\coordinate (LBP) at (LB1);
			\coordinate (RTP) at (RT1);
			\coordinate (LB1) at ($(LB1)-(Dx)-(Dy)$);
			\coordinate (RT1) at ($(RT1)+(Dx)+(Dy)$);
			\coordinate (HH) at ($(HH)+(Dy)$);
		\end{scope}
	}
    \fill[fill opacity=0.5,fill=gray!5] (1,-1.25) rectangle (3,2.5);
    \draw (-2.75,-1.25) rectangle (1,2.5);
    \draw (3,-1.25) rectangle (6.75,2.5);
    \node at (-1,2.3) {$e^1$};
    \node at (5,2.3) {$e^2$};
    \begin{scope}[line width=1.5pt]
    \draw (3,1) -- (1,1.4) -- (-1,1.4) -- (-1,-0.4) -- (1,-0.4) -- (3,0) -- (4,0) -- (4,1) -- cycle;
    \draw (3,1.2) -- (1,1.8) -- (-2,1.8) -- (-2,-0.8) -- (1,-0.8) -- (3,-0.2) -- (4.5,-0.2) -- (4.5,1.2) -- cycle;
    \draw (-2.5,-1) rectangle (6.5,2);
    \end{scope}
    \node at (-0.75,-0.1) {$B_1$};
    \node at (-1.75,-0.5) {$B_2$};
    \node at (-2.25,-0.7) {$B_3$};
    \node at (-3.25,0.7) {$A_7$};
    \node at (-3.25,-0.1) {$= w^*$};
    \node at (-3.25,-0.9) {$= B_4$};
    \node at (3.75,0.3) {$B_1$};
    \node at (4.25,0.1) {$B_2$};
    \node at (6.25,-0.7) {$B_3$};
%
\end{tikzpicture}		
%
%
\end{center}
\caption{Arrangement satisfying
  {\eqref{eq:partitionA}--\eqref{eq:stopB}} such that $H(\vec{r})$ has
  nodes $A_1,\dots,A_7$ (different shades) and $H(\vec{s})$ has nodes
  $B_1,\dots,B_4$ (surrounded by thick lines); the unbounded region
  outside the largest of the depicted rectangles is $A_7 = B_4 =
  w^*$.}
\label{fig:onion}
\end{figure}

Let $w^*$ be a fresh variable.  The next conjuncts of $\phi_\pcpW$
ensure that the graphs $H(\vec{r})$ and $H(\vec{s})$ contain a common
vertex, $w^*$:
\begin{align}
& c(w^*) \wedge (w^* \neq 0),
\label{eq:stopW}\\
& \colourComp(w^*; \ \ r_0, \dots, r_3) \ \ \wedge \ \
  \bigwedge_{i=0}^{3} \colourComp(r_i; \ w^*, -w^*),
\label{eq:stopA}\\
& \colourComp(w^*; \ \ s_0, \dots, s_3) \ \ \wedge \ \
  \bigwedge_{i=0}^{3} \colourComp(s_i; \ w^*, -w^*).
\label{eq:stopB}
\end{align}
To see why, note that the first conjunct
of~\eqref{eq:stopA} ensures that $w^*$ is included in one of the sets
$r_i$, and hence---since it is connected, by~\eqref{eq:stopW}---in one
of the vertices of $H(\vec{r})$; on the other hand, the remaining
conjunct of~\eqref{eq:stopA} ensures that every vertex of $H(\vec{r})$
is included in either $w^*$ or $-w^*$. Since $w^*$ is non-empty, it
must therefore be identical to a single vertex of $H(\vec{r})$. The
same conclusion holds for $H(\vec{s})$ using~\eqref{eq:stopB}.
In the arrangement of Fig.~\ref{fig:onion}, we have $w^* = A_7 = B_4$.

We need to impose a little more structure on the graphs $H(\vec{r})$
and $H(\vec{s})$. Let $r'_0, \dots, r'_3$ and $w_1$ be fresh variables, and let
$\phi_\pcpW$ contain the conjuncts: 
\begin{align}
%
\label{eq:init1}
& c(w_1) \ \wedge \ (w_1 \leq r_1)  \ \wedge \ (w_1 \leq s_1) \ \wedge \ (w_1 \cdot w^* = 0),\\
\label{eq:onionA1}
& \bigwedge_{i= 0}^3 (r'_i \leq r_i),\\
\label{eq:init2}
& \bigwedge_{k = 1}^2 (e^k \cdot r'_1 \cdot w_1  \neq 0).
\end{align}
Since $w_1$ is a non-empty, connected subset of both $r_1$ and $s_1$,
let $A_1$ be the component of $r_1$ including $w_1$, and let $B_1$ be
the component of $s_1$ including $w_1$. It follows that $A_1 \leq
B_1$; the final conjunct of~\eqref{eq:init1} ensures that $A_1$ and
$B_1$ are both distinct from $w^*$. In the sequel, we shall construct
a path in the graph $H(\vec{r})$ from $A_1$ to $w^*$, and a path in
the graph $H(\vec{s})$ from $B_1$ to $w^*$. The proof will hinge on
analysing the properties of these paths.

Let $t$ be a fresh variable, and let $\phi_\pcpW$ contain the conjuncts:
\begin{align}
\label{eq:onionA1.5}
& \bigwedge_{k = 1}^2 (e^k \cdot t \neq 0),\\
\label{eq:onionA2}
& \bigwedge_{k = 1}^2 \bigwedge_{i = 0}^{3} c(e^k \cdot ((r'_i \cdot (-w^*)) + r_{\md{i+1}} + t)),\\
\label{eq:onionA3}
& \bigwedge_{i = 0}^3 \neg C(r'_i,\  t),\\
\label{eq:onionA4}
& \bigwedge_{i = 0}^3 \neg C(r'_i, \ \ r_{\md{i+1}} \cdot (-r'_{\md{i+1}})).
\end{align}
From~\eqref{eq:init2}, select a point $q^k_1$ in the interior of $e^k
\cdot r'_1\cdot w_1$. By~\eqref{eq:init1}, $q^k_1 \notin w^*$. Let
$X^k_1$ be the component of $e^k \cdot r_1$ containing $q^k_1$, and
$A^k_1$ the component of $r_1$ including $X^k_1$; note that $A^k_1$ is
a vertex of the graph $H(\vec{r})$. Evidently, $A_1^k\ne w^*$, and so
$X^k_1 \leq (e^k \cdot r_1 \cdot (-w^*))$. Now suppose $X^k_i$ has
been defined and contains some point $q^k_i \in e^k\cdot
r'_{\md{i}}\cdot
(-w^*)$. From~\eqref{eq:onionA1.5}--\eqref{eq:onionA3}, $X^k_i$
contains a point $q^k_{i+1} \in e^k \cdot r_{\md{i+1}}$, which,
by~\eqref{eq:onionA4}, is in fact in $e^k \cdot r'_{\md{i+1}}$. Let
$X^k_{i+1}$ be the component of $e^k \cdot r_{\md{i+1}}$ containing
$q^k_{i+1}$, and $A^k_{i+1}$ the component of $r_{\md{i+1}}$ including
$X^k_{i+1}$; again, $A^k_{i+1}$ is a vertex of $H(\vec{r})$.  Note
that either $q^k_{i+1}\in w^*$ or $q^k_{i+1}\notin w^*$, and in the
latter case, $q^k_{i+1} \in e^k \cdot r'_{\md{i+1}} \cdot
(-w^*)$. This process either continues forever, or, at some point,
$q^k_{i+1}\in w^*$. But now consider any sequence $A^k_1, A^k_2, \dots
A^k_\ell$ obtained in this way. Evidently, $(A_i,A_{i+1})$ is an edge
of $H(\vec{r})$ for all $1 \leq i < \ell$; moreover, since $A_i \leq
r_{\md{i}}$, we see from~\eqref{eq:partitionA} that $A_i \neq A_{i+2}$
for all $1 \leq i < \ell-1$, whence, since $H(\vec{r})$ is a tree,
$A^k_1, A^k_2, \dots A^k_\ell$ is a path (i.e., has no repeated
nodes). It follows that, for some value of $i$, denoted by $n^k$, the
condition $q_{i+1}^k \in w^*$ must hold, for otherwise, $H(\vec{r})$
would contain an infinite path, contradicting the assumption that the
frame in question is finitely decomposible. Since $q_{n^k+1}\in
r_{n^k+1}$, we have $A^k_{n^k+1}=w^*$, and hence, for $k = 1, 2$,
there is a path $A^k_1, A^k_2, \dots, A^k_{n^k+1}$ in $H(\vec{r})$
from $A^k_1 = A_1$ to $A^k_{n^k+1} = w^*$.  Indeed, this must be the
\emph{same} path for both $k = 1,2$, so that we may drop the
$k$-superscripts, and write:
\begin{equation*}
	A_1, A_2, \dots, A_n, A_{n+1},\qquad  \text{where } A_{n+1} = w^*.
\end{equation*}
(Note that the letter $n$ here is simply a convenient label for the
length of this path: it has nothing to do with the dimension of the
space.) It is important to remember that the sets $X^1_i$ and $X^2_i$,
for a fixed value of $i$, will in general be distinct
(Fig.~\ref{fig:path}).
\begin{figure}[ht]
\begin{center}
\begin{tikzpicture}[>=latex,point/.style={circle,draw=black,minimum size=1mm,inner sep=0pt},yscale=0.9]\scriptsize
\small
	\coordinate (LB1) at (0,0); 			
	\coordinate (RT1) at (4,1); 			
	\coordinate (LBP) at (.2pt,.2pt); 		
	\coordinate (RTP) at (-.2pt,-.2pt); 	
	\coordinate (Dx) at (0.5,0); 			
	\coordinate (Dy) at (0,.2); 			
	\coordinate (HH) at (.25,.5);			
	
	\foreach \i/\c/\f in {1/gray!8/gray!16,2/gray!50/gray!80,3/gray!20/gray!40,4/gray!80/gray,5/gray!8/gray!16,6/gray!50/gray!80}
	{
		\begin{scope}[even odd rule]
			\clip ($(LBP)-(.2pt,.2pt)$) rectangle ($(RTP)+(.2pt,.2pt)$)
				($(LB1)-(.2pt,.2pt)$) rectangle ($(RT1)+(.2pt,.2pt)$) ;
			\filldraw[fill=\c] (LB1) rectangle (RT1);
			\filldraw[pattern color=\f,pattern=crosshatch dots] ($(LB1)+(0,0.3)$) rectangle ($(RT1)-(0,0.5)$);
            \node at ($(RT1)-(HH)+\i*(0,0.2)$) {$A_\i$};
            \node at ($(LB1)+(HH)+\i*(0,0.2)$) {$A_\i$};
			\coordinate (LBP) at (LB1);
			\coordinate (RTP) at (RT1);
			\coordinate (LB1) at ($(LB1)-(Dx)-(Dy)$);
			\coordinate (RT1) at ($(RT1)+(Dx)+(Dy)$);
			\coordinate (HH) at ($(HH)+(Dy)$);
		\end{scope}
	}
    \fill[fill opacity=0.5,fill=gray!5] (1,-1.25) rectangle (3,2.5);
    \draw (-3,-1.25) rectangle (1,2.5);
    \draw (3,-1.25) rectangle (7,2.5);
    \node at (-1,2.3) {$e^1$};
    \node at (5,2.3) {$e^2$};
%
    \node at (-3.5,0.7) {$A_7$};
    \node at (-3.5,-0.1) {$= w^*$};
    \node at (-1,0.5) [point,fill=black,label=left:{\footnotesize $q_4^1$\hspace*{-0.3em}}] {};
    \node at (-1.5,0.5) [point,fill=black,label=left:{\footnotesize $q_5^1$\hspace*{-0.3em}}] {};
    \draw[thick] (0.5,0.2) -- ++(3,0) -- ++(1,-1.8) -- ++(-5,0) -- cycle;
    \node at (2,-0.8) {$t$};
%
\end{tikzpicture}		
%
\end{center}
\caption{The regions $A_1, \dots, A_7$ and $e^1$, $e^2$ from Fig.~\ref{fig:onion}. Each region $A_i$ includes two components of $r'_{\md{i}}$: one lying in $e^1$, the other in $e^2$. Given a point $q_i^k\in e^k\cdot r'_{\md{i}} \cdot (-w^*)$, here illustrated for $i = 4$ and $k = 1$, formulas~\eqref{eq:onionA1}--\eqref{eq:onionA4} ensure the existence of a point $q_{i+1}^k$ in the same component of $e^k\cdot r_{\md{i}}$, and also in $e^k\cdot r'_{\md{i+1}}$.}
\label{fig:path}
\end{figure}
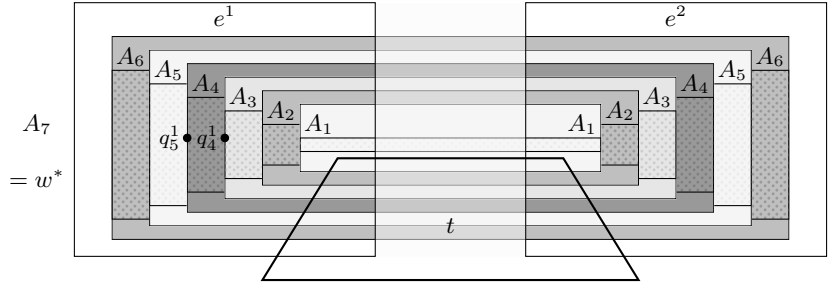

Let us now turn our attention to the graph $H(\vec{s})$.  Fix the
value of $k$ for the moment ($1 \leq k \leq 2$), and consider the
sequence $X^k_1, \dots, X^k_{n+1}$. Since $X^k_i$ is a
connected subset of $e^k \cdot r_{\md{i}}$, it follows
from~\eqref{eq:trackInclusions} that each $X^k_i$ is included in some
vertex of $H(\vec{s})$, say, $\hat{B}^k_i$.  Thus, for $k = 1$ and $k=2$, we have a sequence
\begin{equation*}
\hat{B}^k_1, \dots, \hat{B}^k_{n+1}, \qquad \text{where } A_1 \leq \hat{B}^k_1 \ \ \text{ and } \ \ \hat{B}^k_{n+1} =
w^*.
\end{equation*}
Of course, these sequences may contain adjacent duplicates, since
there is nothing to stop $X^k_i$ and $X^k_{i+1}$ being included in the
same vertex of $H(\vec{s})$. Furthermore, the two sequences (for $k =
1,2$) may be distinct, since, for fixed $i$, there is nothing to stop
$X^1_i$ and $X^2_i$ lying in different vertices of $H(\vec{s})$; see
Figs.~\ref{fig:onion} and~\ref{fig:path}. But now suppose we remove adjacent duplicates, obtaining
sequences:
\begin{align*}
B^k_1, \dots , B^k_{m^k+1}, \qquad \text{ where } A_1 \leq B^k_1 \ \ \text{ and } \ \ B^k_{m^k+1} = w^*
\end{align*}
with $m^k \leq n$.
Thus, every $B^k_j$ is the result of coalescing a contiguous block of
identical vertices $\hat{B}^k_i$. Evidently, $(B^k_j, B^k_{j+1})$ must be an edge of
$H(\vec{s})$, for $1 \leq j \leq m^k$.

Now let $\phi_\pcpW$ contain the conjuncts:
\begin{align}
\label{eq:progress}
& \bigwedge_{k =1}^2\bigwedge_{i=0}^3\bigwedge_{j=0}^3
       \neg C(e^k \cdot r_i \cdot s_j, \ \ e^k \cdot r_{\md{i+1}} \cdot s_{\md{j-1}}).
\end{align}
We claim that, in that case, $B^k_j \leq s_{\md{j}}$ for all $1 \leq j
\leq m^k$.  The proof is by induction on $j$. By~\eqref{eq:init1},
$X^k_1 \leq s_1$, and therefore $B^k_1 \leq s_1$.  Suppose, then,
$B^k_j \leq s_{\md{j}}$ for some $1 \leq j < m^k$.  Let $\hat{B}^k_i$,
be the last vertex in the block coalescing to $B^k_j$, so that
$\hat{B}^k_{i+1}$ is the first element of the block coalescing to
$B^k_{j+1}$.  Thus, $X^k_i \leq r_{\md{i}}$ and $X^k_{i+1} \leq
r_{\md{i+1}}$.  But~\eqref{eq:partitionB} and~\eqref{eq:progress} then
ensure that either $B^k_{j+1} \leq s_{\md{j}}$ or $B^k_{j+1} \leq
s_{\md{j+1}}$; and the former is impossible, since then $\hat{B}^k_i$
and $\hat{B}^k_{i+1}$ would have coalesced to the same block. This
proves the claim.  By~\eqref{eq:partitionB} (and the fact that
$B^k_{m^k-1} \neq w^*$), we then have $B^k_j \neq
B^k_{j+2}$ for all $1 \leq j < m^k$.  And since $H(\vec{s})$ is a
tree, it follows that $B^k_1, \dots, B^k_{m^k+1}$ is a path through
$H(\vec{s})$ with $B_1^1 = B_1^2$ and $B^1_{m^1+1} = B^2_{m^2+1} =
w^*$. Indeed, this is the \emph{same} path through $H(\vec{s})$ for
both values of $k$, so that we can again drop the superscripts, and
just write
\begin{equation*}
	B_1, \dots , B_m, B_{m+1},\qquad \text{where } B_{m+1} = w^*.
\end{equation*}

Taking stock, we see that, for each $k = 1,2$, the path
$A_1, \dots, A_n$ may be grouped into $m$ contiguous blocks $E^k_1,\dots,E^k_m$ by taking the
vertex $A_i$ to be in the $j$th block $E^k_j$ just in case $e^k \cdot X^k_i \leq B_j$. We may depict
this grouping as follows:
\begin{equation*}
  A_1,\dots,A_n = \underbrace{A^k_{1,1}, \dots, A^k_{1,h^k_1}}_{E^k_1},\dots,
  \underbrace{A^k_{j,1}, \dots, A^k_{j,h^k_j}}_{E^k_j},\dots,
  \underbrace{A^k_{m,1}, \dots, A^k_{m,h^k_m}}_{E^k_m}.
\end{equation*}
It is important to realize that, although there is only one path $A_1,
\dots, A_{n+1}$ and one path $B_1, \dots, B_{m+1}$, the two values $k = 1$
and $k = 2$ will in general give rise to different groupings of the
vertices of the former into blocks corresponding to the vertices of the latter (hence, the two sequences of indices $h^k_1,\dots,h^k_m$).

\bigskip

Recall the PCP-instance $\pcpW = (\pcpw^1, \pcpw^2)$ over the
alphabets $T$ and $U$, which we wish to encode.  We regard the
elements of these alphabets as fresh variables, and order them in some
way to form tuples $\vec{t}$ and $\vec{u}$. We use these variables to
colour the vertices of $H(\vec{s})$ and $H(\vec{r})$, respectively, by
taking $\phi_\pcpW$ to contain the conjuncts:
\begin{align}
\label{eq:patternLabels}& \partition(\vec{t}) \ \ \land \ \ \bigwedge_{j = 0}^3 \colourComp(s_j; \ \vec{t}),\\
\label{eq:letterLabels} & \partition(\vec{u}) \ \ \land \ \ \bigwedge_{i = 0}^3 \colourComp(r_i; \ \vec{u}).
\end{align}
In this way, the path $A_1, \dots, A_n$ defines a word $\upsilon \in
U^*$, and the path $B_1, \dots, B_m$ defines a word $\tau \in
T^*$. Using the groupings of the sequence $A_1, \dots, A_n$ obtained
above, we shall write conjuncts of $\phi_\pcpW$ ensuring that
$\pcpw^k(\tau) = \upsilon$ for $k = 1,2$.  This will mean that, if
$\phi_\pcpW$ has a satisfying assignment over some frame $\bS \in
\cK$, then the PCP-instance $\pcpW$ is positive.

For $k = 1, 2$, let $\bigl\{ p^k_{h,\ell} \mid 1 \leq h \leq |T|, 1 \leq \ell \leq |\pcpw^k(t_h)| \}$,
be a collection of fresh variables, enumerated in some way as
$\vec{p}^k$, which we shall use to colour the vertices of
$H(\vec{r})$. That is, we add to $\phi_\pcpW$ the conjuncts:
\begin{align}
\label{eq:positionLabels}
& \bigwedge_{k=1}^2\bigl(\partition(\vec{p}^k)  \ \land \ \bigwedge_{i=0}^3 \colourComp(r_i; \ \vec{p}^k)\bigr).
\end{align}
We
refer to these variables as \emph{position colours}, because we are
to think of $p^k_{h,\ell}$ as denoting the $\ell$th position in the
word $\pcpw^k(t_h)$. In particular, any position colour $p^k_{h,\ell}$
is naturally associated to the letter $t_h$ of $T$.  Fixing $k$ for
the moment, consider the vertices $A^k_{j,1}, \dots, A^k_{j,h^k_j}$
grouped into the $j$th block, $E_j^k$. What ensures that these
vertices belong to one block is the existence of a single $B_j$ such
that, if $A_i$ is one of these vertices, then the corresponding set
$X^k_i \leq A_i$ is included in $B_j$.  By~\eqref{eq:patternLabels},
$X^k_i \leq t_h$, for some $1 \leq h \leq |T|$.  It follows that the
conjuncts
\begin{align}
& \bigwedge_{k = 1}^2 \bigwedge_{h=1}^{|T|} \bigl(t_h \leq \sum_{\ell = 1}^{|\pcpw^k(t_h)|} p^k_{h,\ell}\bigr)
\end{align}
ensure that the vertices in this block are assigned
`position' colours associated to the common tile $t_h$.

We proceed to write constraints ensuring that these colours are
assigned in exactly the canonical order: $p^k_{h,1}, \dots
p^k_{h,|\pcpw^k(t_h)|}$ (from which, incidentally, it follows that
$|\pcpw^k(t_h)| = h^k_j$).
We begin by adding to $\phi_\pcpW$ the conjuncts:
\begin{align}
\label{eq:block1j}
& \bigwedge_{k = 1}^2
    \bigwedge_{h = 1}^{|T|}  \bigwedge_{\ell = 2}^{|\pcpw^k(t_h)|}
     \bigl((w_1 \cdot p^k_{h,\ell} = 0) \ \land \ \bigwedge_{i = 0}^3\neg C(s_i, \ s_{\md{i+1}} \cdot p^k_{h,\ell})\bigr).
\end{align}
These ensure that the first vertex of each block is assigned one of
the colours $p^k_{h,1}$, for $1 \leq h \leq |T|$. The rules for
colouring successive vertices can now be simply stated.  Consider the
following binary relation on the variables in $\vec{p}^k$:
\begin{multline*}
\Psi^k = \bigl\{ (p_{h,\ell}^k, p_{h,\ell + 1}^k )
         \mid 1 \leq h \leq |T|,
                                         1 \leq \ell < |\pcpw^k(t_h)|\bigr\}\\
  \cup
     \bigl\{ (p_{h,|\pcpw^k(t_h)|}^k, p_{h', 1}^k) \mid 1 \leq h, h' \leq |T| \bigr\}.
\end{multline*}
This relation captures the rules of possible succession for colouring
by the variables $p_{h,\ell}^k$: (\emph{a}) if $A_i$ is coloured $p_{h,\ell}^k$,
where $\ell$ indicates a non-final position in the word
$\pcpw^k(t_h)$, then $A_{i+1}$ must be coloured $p_{h,\ell+1}^k$;
(\emph{b}) if $A_i$ is coloured $p_{h,|\pcpw^k(t_h)|}^k$, indicating the final
position in the word $\pcpw^k(t_h)$, then $A_{i+1}$ must be coloured
$p_{h',1}^k$ for some $1 \leq h' \leq |T|$. We therefore
add to $\phi_\pcpW$ the conjuncts
\begin{equation}
\bigwedge_{k=1}^2
\bigwedge_{i=0}^3\hspace*{1em}
\bigwedge_{(p^k_{h,\ell}, p^k_{h',\ell'}) \notin \Psi^k}
\neg C(p^k_{h,\ell} \cdot r_i, \ \ p^k_{h',\ell'} \cdot r_{\md{i+1}}).
\label{eq:blockjk}
\end{equation}
We also ensure that each block spells out only one word. That is, we
ensure that no vertex of the sequence $A_1,\dots,A_n$ can be coloured
with the starting position in a word if the previous vertex belongs to
the same block:
\begin{equation}
\bigwedge_{k = 1}^2\bigwedge_{i = 0}^3\bigwedge_{j = 0}^3\bigwedge_{h = 1}^{|T|}
\neg C( r_i\cdot s_j, \ \  r_{\md{i+1}} \cdot s_j \cdot p^k_{h,1}).
\end{equation}
Lastly, we ensure that the final vertex of the final block corresponds to the final position in a word. In other words,
we ensure that the vertex $A_n$ (which contacts $A_{n+1} = w^*$) is
coloured $p^k_{h,|\pcpw^k(t_h)|}$, for some $1 \leq h \leq |T|$:
\begin{align}
\label{eq:endBlocks}
& \bigwedge_{k = 1}^2 \bigwedge_{h=1}^{|T|} \bigwedge_{\ell=1}^{(|\pcpw^k(t_h)|-1)}
   \neg C(p^k_{h,\ell}, \ w^*).
\end{align}
%


At this stage, we have ensured that, for $k=1$ and $k=2$, vertices of
each block $A^k_{j,1}, \dots, A^k_{j,h_j^k}$, $1\leq j\leq m$, are
coloured $p^k_{h,1}, \dots, p^k_{h,|\pcpw^k(t_h)|}$, where $t_h$ is
the $j$th letter of the word $\tau$.  This easily enables us to
enforce the sought-after conditions $\pcpw^k(\tau) = \upsilon$ for $k
= 1,2$.  Denoting by $u^k_{h,\ell}$ the variable in $\vec{u}$
(i.e. that letter of the alphabet $U$) that is the $\ell$th letter in
the word $\pcpw^k(t_h)$, we add to $\phi_\pcpW$ the conjuncts:
\begin{align}
& \bigwedge_{k = 1}^2
\bigwedge_{h = 1}^{|T|}
\bigwedge_{\ell = 1}^{|\pcpw^k(t_h)|} (p^k_{h,\ell} \leq u^k_{h,\ell}).
\label{eq:morphism}
\end{align}
That $\pcpw^k(\tau) = \upsilon$ for $k = 1,2$ then follows from the
fact that each vertex $A_i$ is assigned a unique colour from
$\vec{u}$. Thus, if $\phi_\pcpW$ is satisfiable over $\cK$
then $\pcpW$ is positive.

Conversely, if $\pcpW$ is positive, it is obvious that $\phi_\pcpW$
may be satisfied over $\RCP(\R^n)$, $n \geq 2$, by suitably extending
a configuration similar to that shown in Fig.~\ref{fig:onion}.
\end{proof}

\begin{corollary}\label{cor:cBCcin}
Let $\cK$ be any class of unicoherent, finitely
decomposible frames, such that $\cK$ contains some frame of
the form $(\R^n,\bS)$, $n\geq2$, where $\RCP(\R^n) \subseteq \bS$. Then the
problem $\Sat(\cBCci, \cK)$ is r.e.-hard.
\end{corollary}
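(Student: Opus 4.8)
The plan is to mimic, at the level of formulas, the passage from Theorem~\ref{theo:inftyCc} to Corollary~\ref{cor:inftyCci}. First I would inspect the $\cBCc$-formula $\phi_\pcpW$ constructed in the proof of Theorem~\ref{theo:cBCcn}. It is a conjunction of literals, and the connectedness predicate $c$ occurs in it only in the three conjuncts~\eqref{eq:stopW},~\eqref{eq:init1} and~\eqref{eq:onionA2}; all the remaining topological conjuncts are built out of $\neg C(\cdot,\cdot)$ and the formulas $\partition$, $\cpartition$ and $\colourComp$, none of which mentions $c$ at all. In particular, every occurrence of $c$ in $\phi_\pcpW$ has positive polarity. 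Let $\ti{\phi}_\pcpW$ be the $\cBCci$-formula obtained by replacing every subformula $c(\tau)$ of $\phi_\pcpW$ by $\ic(\tau)$.

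For the easy direction of the reduction, recall that every regular closed interior-connected set is connected, so $\ic(\tau)$ entails $c(\tau)$ over any frame; and since all occurrences of $c$ in $\phi_\pcpW$ are positive, $\ti{\phi}_\pcpW$ entails $\phi_\pcpW$ over every frame. Hence, if $\ti{\phi}_\pcpW$ is satisfiable over some unicoherent, finitely decomposible frame in $\cK$, then so is $\phi_\pcpW$, and the analysis in the proof of Theorem~\ref{theo:cBCcn} shows that the PCP-instance $\pcpW$ is positive.

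For the converse, suppose $\pcpW$ is positive. I would check that the polyhedral interpretation over $\RCP(\R^2)$ that witnesses $\phi_\pcpW$ --- obtained by suitably extending the configuration of Fig.~\ref{fig:onion}, as at the end of the proof of Theorem~\ref{theo:cBCcn} --- can be chosen so that the three regions assigned as arguments of $c$, namely $w^*$, $w_1$, and each $e^k \cdot ((r'_i \cdot (-w^*)) + r_{\md{i+1}} + t)$, have connected interiors. Indeed, $w^*$ is realised as the closure of the unbounded component of the complement of a large polyhedral box, and this region is interior-connected; $w_1$ may be taken to be a small polyhedral box straddling the boundary between $e^1$ and $e^2$ inside the innermost shells $A_1$ and $B_1$; and the layered region of~\eqref{eq:onionA2}, being a finite union of polyhedral shells joined by a polyhedral ``bridge'' lying in $t$, can be arranged so that this bridge meets each adjacent shell in a set of full dimension, making the union interior-connected --- exactly as in Fig.~\ref{fig:InfCmpSat} for the analogous conjunct~\eqref{eq:InfContact} of $\phi_\infty$. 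Since $\ic$, $C$ and $=$ depend only on the assigned regions as subsets of $\R^2$, this assignment also satisfies $\ti{\phi}_\pcpW$; and cylindrifying yields satisfiability over $\RCP(\R^n)$ for all $n \geq 2$, hence over any frame $(\R^n,\bS) \in \cK$ with $\RCP(\R^n) \subseteq \bS$. Combining the two directions gives that $\Sat(\cBCci,\cK)$ is r.e.-hard.

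The only point that is not pure bookkeeping is the verification, in the converse direction, that the three regions occurring under $c$ in the polyhedral witness can simultaneously be taken interior-connected; but this is a purely local modification of an already explicitly drawn configuration, entirely parallel to the observation used in deriving Corollary~\ref{cor:inftyCci}, so I expect no genuine difficulty there.
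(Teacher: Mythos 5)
Your proposal is correct and follows essentially the same route as the paper's own proof: replace every (positively occurring) $c$ in $\phi_\pcpW$ by $\ic$ to obtain $\ti{\phi}_\pcpW$, note that interior-connectedness implies connectedness to get the entailment $\ti{\phi}_\pcpW \Rightarrow \phi_\pcpW$, and observe that the polyhedral witness built from Fig.~\ref{fig:onion} already makes the relevant regions interior-connected. The extra detail you supply in the converse direction (which three regions need interior-connectedness, and why the depicted configuration delivers it) is sound and merely makes explicit what the paper leaves to ``inspection of Fig.~\ref{fig:onion}.''
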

\begin{proof}
We start with the formula $\phi_\pcpW$ of Theorem~\ref{theo:cBCcn},
and replace all occurrences of $c$ by $\ic$. Denote the resulting
\cBCci-formula by $\ti{\phi}_\pcpW$.  Since all atoms of the form
$c(\tau)$ in $\phi_\pcpW$ occur with positive polarity,
$\ti{\phi}_\pcpW$ entails $\phi_\pcpW$. On the other hand, by
inspection of Fig.~\ref{fig:onion}, we see that if $\pcpW$ is
positive, then $\ti{\phi}_\pcpW$ will be satisfiable in $\RCP(\R^2)$,
and hence in $\RCP(\R^n)$ for all $n \geq2$.  This proves the
corollary.
\end{proof}

\begin{corollary}\label{cor:cBcin}
Let $\cK$ be any class of unicoherent, finitely
decomposible frames, such that $\cK$ contains some frame of
the form $(\R^n,\bS)$, $n\geq2$, where $\RCP(\R^n) \subseteq \bS$. Then the
problem $\Sat(\cBc, \cK)$ is r.e.-hard.
\end{corollary}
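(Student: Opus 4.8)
The plan is to mimic the passage from $\phi_\infty$ to $\phi_\infty^c$ carried out in \SECT\ref{infcomp} (cf.\ the construction behind Theorem~\ref{cor:inftyBc}), this time starting from the \cBCc-formula $\phi_\pcpW$ built in the proof of Theorem~\ref{theo:cBCcn}. The key observation is that $\phi_\pcpW$, just like $\phi_\infty$, is a conjunction of literals in which every occurrence of the contact predicate $C$ is \emph{negative} --- each $C$-atom sits inside a subformula $\cpartition$ or $\colourComp$, or in a conjunct of the form $\neg C(\tau_1,\tau_2)$ such as~\eqref{eq:progress} or~\eqref{eq:blockjk} --- while every occurrence of the connectedness predicate $c$ is \emph{positive}. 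Hence we may replace each conjunct $\neg C(\tau_1,\tau_2)$ by a stronger \cBc-conjunct $\noncontact(\sigma_1,\sigma_2)$ with $\tau_1\le\sigma_1$ and $\tau_2\le\sigma_2$, where $\sigma_1,\sigma_2$ are chosen so as to be realised by connected polygons in the intended model; as in \SECT\ref{infcomp} one groups families of $C$-atoms under a common envelope whenever possible (mirroring the replacement of $\bigwedge_i\neg C(r'_i,t)$ by $\noncontact(r'_0+\cdots+r'_3,t)$ there), and introduces fresh variables for extra ``connector'' regions where no common envelope is otherwise available. Since $\noncontact(\sigma_1,\sigma_2)\models\neg C(\tau_1,\tau_2)$ whenever $\tau_i\le\sigma_i$, and the $c$-literals and Boolean literals of $\phi_\pcpW$ are left untouched, the resulting \cBc-formula $\phi_\pcpW^c$ entails $\phi_\pcpW$; moreover $\pcpW\mapsto\phi_\pcpW^c$ is clearly computable.

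The two directions of the reduction are then immediate. If $\phi_\pcpW^c$ is satisfiable over some unicoherent, finitely decomposible frame in $\cK$, then so is $\phi_\pcpW$, whence, by the second part (\emph{ii}) of the proof of Theorem~\ref{theo:cBCcn}, the instance $\pcpW$ is positive. Conversely, if $\pcpW$ is positive, we must produce a satisfying assignment for $\phi_\pcpW^c$ over $\RCP(\R^n)$, $n\ge2$. We start from (an extension of) the polygonal configuration of Fig.~\ref{fig:onion}, which already satisfies $\phi_\pcpW$, and thread into it additional polygonal wire regions --- placed symmetrically on the inner and outer sides of the onion in the manner of the regions $t$, $s$, $s_i$, $t_i$ of Figs.~\ref{fig:InfCmpSat}, \ref{fig:InfCmpElAiAi3} and~\ref{fig:InfCmpElAiAi2} --- chosen so that every envelope $\sigma_1,\sigma_2$ occurring in a $\noncontact$-atom becomes connected while no new contacts are created. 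As this can be done with regular closed polygons, $\phi_\pcpW^c$ is satisfiable over $(\R^n,\RCP(\R^n))$, and hence over $\cK$. Since the PCP is r.e.-complete, it follows that $\Sat(\cBc,\cK)$ is r.e.-hard.

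The main obstacle is the bookkeeping in the converse direction: the wire regions must be laid out so that \emph{simultaneously} each $\noncontact$ envelope is connected and each wire avoids every region it is forbidden to touch by some other $\neg C$-constraint inherited from $\phi_\pcpW$. The ``inside/outside the onion'' discipline of \SECT\ref{infcomp} --- wires serving the $r'_i$-type regions living in the interior strata, wires serving the $r_i\cdot(-r'_i)$-type regions in the exterior strata --- is precisely what keeps these requirements compatible, and it carries over to the richer layered picture of Fig.~\ref{fig:onion}; verifying that it copes with all the PCP-specific families of $\neg C$-conjuncts is the one laborious but entirely routine step.
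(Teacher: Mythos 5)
Your proposal takes essentially the same route as the paper's own proof: observe that all occurrences of $C$ in $\phi_\pcpW$ are negative, replace each $\neg C$-literal by a stronger $\noncontact$-formula over fresh connector variables in the style of the $\phi_\infty \mapsto \phi^c_\infty$ transformation of \SECT\ref{infcomp}, and note that the resulting \cBc-formula entails $\phi_\pcpW$ while remaining satisfiable over $\RCP(\R^n)$ whenever $\pcpW$ is positive. The paper states this very tersely, deferring to the earlier proof; your proposal simply spells out the bookkeeping that the paper leaves implicit.
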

\begin{proof}
	Consider again the formula $\phi_\pcpW$ of
        Theorem~\ref{theo:cBCcn}.  Since all occurrences of the
        predicate $C$ in $\phi_\pcpW$ have negative polarity, we can
        replace them with \cBc-formulas as we did in the proof of
        Theorem~\ref{cor:inftyBc}.  The resulting formula
        $\phi^c_\pcpW$ implies $\phi_\pcpW$, and is satisfiable in
        $\RCP(\R^n)$, for all $n \geq 2$, whenever $\pcpW$ is
        positive.
\end{proof}

\section{Undecidability: the plane case}\label{sec:2d}
In \SECT\ref{infcomp}, we established that, if $\cL$ is any of the
languages \cBc, \cBCc{} or \cBCci, there exists an $\cL$-formula
that is satisfiable over $\RC(\R^n)$, $n \geq 2$, but only by
regions having infinitely many components. Nothing was mentioned in this
regard about the language \cBci. In \SECT\ref{sec:undecidability},
we established the undecidability of $\Sat(\cL, \RCP(\R^n))$, $n \geq
2$, where $\cL$ is any of the languages \cBc, \cBCc, \cBCci.
Nothing was mentioned in this regard about the problems
$\Sat(\cBci,\RCP(\R^n))$, $n \geq 2$, or indeed about the problems
$\Sat(\cL,\RC(\R^n))$ where $\cL$ is any of \cBc, \cBci, \cBCc,
\cBCci.  In this section, we complete the picture in the case
$n=2$. Specifically, we establish the existence of a \cBci-formula
satisfiable over $\RC(\R^2)$, but only by regions having infinitely
many components; and we establish the undecidability of the problems
$\Sat(\cL,\RC(\R^2))$ and $\Sat(\cL,\RCP(\R^2))$, where $\cL$ is any
of \cBc, \cBci, \cBCc{} or \cBCci.

We employ the standard terminology of Jordan arcs and curves: a {\em
  non-degenerate Jordan arc} is a continuous, 1--1 function $\alpha$
from the unit interval to $\R^2$; a {\em degenerate Jordan arc} is a
constant function from the unit interval to $\R^2$; a {\em Jordan arc}
is a degenerate Jordan arc or a non-degenerate Jordan arc.  A {\em
  Jordan curve} is a continuous, 1--1 function from the unit circle to
$\R^2$. Where no confusion results, we identify Jordan arcs and curves
with their loci (ranges). If $\alpha_1$ and $\alpha_2$ are Jordan arcs
which intersect in the unique point $\alpha_1(1) = \alpha_2(0)$, then
we write $\alpha_1\alpha_2$ to denote, ambiguously, any Jordan arc
$\alpha$ with locus $\alpha_1 \cup \alpha_2$ such that $\alpha(0) =
\alpha_1(0)$ and $\alpha(1) = \alpha_2(1)$.  We employ the following
notation: if $\alpha$ is a Jordan arc, $\alpha^{-1}$ denotes a Jordan
arc with the same locus but opposite direction, e.g., $\alpha^{-1}(t)
= \alpha(1-t)$, for all $0 \leq t \leq 1$.  If, in addition, $p_1 =
\alpha(t_1)$, $p_2 = \alpha(t_2)$ are points on $\alpha$ with $t_2
\geq t_1$, $\alpha[p_1,p_2]$ denotes a Jordan arc whose locus is the
segment of $\alpha$ between $p_1$ and $p_2$, and which has the same
direction as $\alpha$: $\alpha[p_1,p_2](t) = \alpha(t_1 + t(t_2 -
t_1))$.
An \emph{end-cut} to $p$ in a set $X$ is a Jordan arc $\alpha\subseteq
\ti{X}\cup\{p\}$ such that $\alpha(1) = p$.  A \emph{cross-cut} in $X$
is a Jordan arc $\alpha$ in $X$ intersecting the boundary $\delta X$
of $X$ only at its endpoints $\alpha(0)$ and $\alpha(1)$.

\subsection{Forcing infinitely many components with \cBci{}}
\label{subsec:infCompPlane}
We begin by showing that there exists a \cBci-formula that is
satisfiable over $\RC(\R^2)$, but only by regions having infinitely
many components.  Many of the techniques we employ will prove useful
in \SECT\ref{subsec:undecPlane}. Our basic tools are two formulas that
enable us to construct Jordan arcs and curves containing points in
specified regions. But before presenting
these formulas, we need to establish the following property of regular
closed sets:
\begin{lemma}\label{lemma:rc-intersect}
Let $T$ be any topological space, and $a,b_1$ and $b_2$ elements of
$\RC(T)$ such that $b_1 \cdot b_2 = 0$. Then $\ti{(a +
  b_1)}\cap\ti{(a + b_2)} = \ti{a}$.
\end{lemma}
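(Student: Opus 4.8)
The plan is to prove the two inclusions separately. The inclusion $\ti{a} \subseteq \ti{(a+b_1)} \cap \ti{(a+b_2)}$ is immediate: since $a \leq a + b_i$ as regular closed sets (equivalently $a \subseteq a \cup b_i$), monotonicity of the interior operator gives $\ti{a} \subseteq \ti{(a+b_i)}$ for $i = 1,2$, and hence $\ti{a}$ is contained in the intersection. Note here that $a + b_i = a \cup b_i$ since the sum of two regular closed sets is their union.

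For the reverse inclusion, let $p \in \ti{(a+b_1)} \cap \ti{(a+b_2)}$; I want to show $p \in \ti{a}$. The key observation is that $b_1 \cdot b_2 = 0$ means $\tc{\ti{(b_1 \cap b_2)}} = \emptyset$, i.e. $b_1 \cap b_2$ has empty interior; but in fact I will use the slightly stronger-sounding route via complements. Since $p \in \ti{(a \cup b_1)}$, there is an open neighbourhood $O_1$ of $p$ with $O_1 \subseteq a \cup b_1$; similarly an open $O_2 \ni p$ with $O_2 \subseteq a \cup b_2$. Put $O = O_1 \cap O_2$, an open neighbourhood of $p$ with $O \subseteq (a \cup b_1) \cap (a \cup b_2) = a \cup (b_1 \cap b_2)$. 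It remains to show $O \subseteq a$, for then $p \in O \subseteq \ti{a}$. Suppose not: some $q \in O$ lies in $b_1 \cap b_2$ but not in $a$. Since $a$ is closed, $O \setminus a$ is an open neighbourhood of $q$ contained in $b_1 \cap b_2$; hence $q \in \ti{(b_1 \cap b_2)}$. But $b_1 \cdot b_2 = \tc{\ti{(b_1 \cap b_2)}} = \emptyset$ forces $\ti{(b_1 \cap b_2)} = \emptyset$, a contradiction. Therefore $O \subseteq a$ and $p \in \ti{a}$.

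I do not expect any serious obstacle here; the only point requiring a moment's care is translating the algebraic hypothesis $b_1 \cdot b_2 = 0$ into the topological statement that $b_1 \cap b_2$ has empty interior. This is exactly the definition of the product in $\RC(T)$ recalled in the preliminaries: $b_1 \cdot b_2 = \prod\{b_1,b_2\} = \tc{\ti{(b_1 \cap b_2)}}$, and a closure is empty iff the set closed is empty, so $\tc{\ti{(b_1 \cap b_2)}} = \emptyset$ is equivalent to $\ti{(b_1 \cap b_2)} = \emptyset$. Everything else is a routine manipulation of open neighbourhoods and the fact that $a$, being regular closed, is in particular closed.
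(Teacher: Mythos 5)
Your proof is correct, but it takes a genuinely different route from the paper's. You argue pointwise: given $p$ in the intersection of the two interiors, you intersect witnessing open neighbourhoods, observe that the intersection lies in $a \cup (b_1 \cap b_2)$, and then use regularity of the $b_i$ (via $\ti{(b_1 \cap b_2)} = \emptyset$) together with closedness of $a$ to squeeze the neighbourhood down into $a$. The paper instead works entirely at the level of the Boolean algebra $\RC(T)$: from $b_1 \cdot b_2 = 0$ it deduces the identity $-a = -(a + b_1) + -(a + b_2)$, and then translates each term using the standard fact that for regular closed $s$ one has $-s = T \setminus \ti{s}$, so the right-hand side becomes $T \setminus \bigl(\ti{(a+b_1)} \cap \ti{(a+b_2)}\bigr)$ and the result follows by taking complements. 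The paper's argument is shorter and exploits the algebraic machinery already set up in the preliminaries; yours is more elementary and self-contained, making the role of $b_1 \cdot b_2 = 0$ (as "$b_1 \cap b_2$ has empty interior") completely transparent. Both are sound; neither has a gap.
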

\begin{proof}
Note that, for any $s \in \RC(T)$, $-s = T \setminus \ti{s}$.  Since
$b_1 \cdot b_2 = 0$, we have
$-a  = \sum_{i = 1,2} (-(a +b_i))$, which is then equal to
$\bigcup_{i = 1,2} (T \setminus \ti{(a + b_i)})  = T \setminus (\bigcap_{i = 1,2} \ti{(a + b_i)})$.
\end{proof}

Consider now the following \cBci-formula:
\begin{multline*}
\frameFlai(r_0,\dots,r_{n-1}) \ \ = \bigwedge_{i = 0}^{n-1} \bigl((r_i\neq 0) \land 
		\ic(r_i+ r_{\md{i+1}}) \bigr) \ \land
\bigwedge_{0 \leq i < j < n}(r_i\cdot r_j=0),	
\end{multline*}
where $\md{k}$ denotes the value of $k$ modulo $n$. This formula allows us to
construct Jordan curves that contain points of all regions $r_0,\dots,r_{n-1}$:
\begin{lemma}\label{lma:FrameLemmaInt}
Fix $n \geq 3$, and let $(a_0, \dots, a_{n-1})$ be a tuple of elements
of $\RC(\R^2)$ satisfying $\frameFlai(r_0, \dots,r_{n-1})$.  Then
there exist Jordan arcs $\alpha_0, \dots, \alpha_{n-1}$ and points
$p_0,\dots,p_{n-1}$ such that: for all $0 \leq i < n$, $\alpha_i$ is a
Jordan arc from $p_i$ to $p_{\md{i+1}}$, with $\alpha_i \subseteq
\ti{(a_i+ a_{\md{i+1}})}$; $\alpha_0 \cdots \alpha_{n-1}$ is a Jordan
curve lying in $\ti{(a_0+\cdots+a_{n-1})}$; and $p_i\in\ti{a}_i$, for
all $0 \leq i < n$.
\end{lemma}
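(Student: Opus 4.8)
The plan is to build the points $p_i$ first, then the arcs $\alpha_i$, working locally inside the open sets $\ti{a}_i$ and $\ti{(a_i + a_{\md{i+1}})}$, and finally to verify that the concatenation $\alpha_0\cdots\alpha_{n-1}$ is a genuine Jordan curve rather than a self-intersecting closed path. First I would observe that, by hypothesis, each $a_i$ is non-empty and regular closed, so $\ti{a}_i\neq\emptyset$; pick any point $p_i\in\ti{a}_i$. Since $a_i\cdot a_j = 0$ for $i\neq j$, Lemma~\ref{lemma:rc-intersect} gives $\ti{(a_i+a_{\md{i+1}})}\cap\ti{(a_{\md{i+1}}+a_{\md{i+2}})} = \ti{a}_{\md{i+1}}$, which is the crucial disjointness fact: the ``corridor'' $\ti{(a_i+a_{\md{i+1}})}$ meets the next corridor exactly in $\ti{a}_{\md{i+1}}$, and (again by Lemma~\ref{lemma:rc-intersect}, using $a_i\cdot a_j = 0$) meets any non-adjacent corridor $\ti{(a_j+a_{\md{j+1}})}$ in a set contained in $\ti{a}_i\cdot\ti{a}_j\cdot\ldots = \ti{(a_i\cdot a_j)} = \emptyset$ when the index sets are disjoint, and in $\ti{a}_i$ or $\ti{a}_{\md{i+1}}$ otherwise. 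So distinct corridors overlap only in the interiors $\ti{a}_\ell$ of single regions, and only when they are consecutive.

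Next I would construct each arc $\alpha_i$. The set $a_i + a_{\md{i+1}}$ satisfies $\ic$, so $U_i := \ti{(a_i+a_{\md{i+1}})}$ is a non-empty connected open subset of $\R^2$; an open connected subset of $\R^2$ is path-connected, indeed arc-connected, so there is a Jordan arc inside $U_i$ from $p_i$ to $p_{\md{i+1}}$ (both points lie in $U_i$ since $\ti{a}_i\subseteq U_i$ and $\ti{a}_{\md{i+1}}\subseteq U_i$). This gives arcs $\alpha_i\subseteq U_i\subseteq\ti{(a_0+\cdots+a_{n-1})}$ with $\alpha_i(0)=p_i$, $\alpha_i(1)=p_{\md{i+1}}$. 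Concatenating, $\alpha_0\cdots\alpha_{n-1}$ is a closed path through $p_0,\dots,p_{n-1},p_0$ whose locus lies in $\ti{(a_0+\cdots+a_{n-1})}$.

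The main obstacle is that this closed path need not a priori be a \emph{simple} closed curve: $\alpha_i$ and $\alpha_j$ might intersect away from the designated shared endpoint. Here is where the overlap analysis pays off. Since $\alpha_i\subseteq U_i$ and $\alpha_j\subseteq U_j$, any intersection point lies in $U_i\cap U_j$, which by the paragraph above is empty unless $i,j$ are equal or adjacent (mod $n$; using $n\geq 3$ so that ``adjacent'' is well-defined and $U_i\neq U_{\md{i+1}}$), and in the adjacent case is contained in a single $\ti{a}_\ell$. So non-adjacent arcs are automatically disjoint, and adjacent arcs $\alpha_i,\alpha_{\md{i+1}}$ can meet only inside $\ti{a}_{\md{i+1}}$. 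To remove the possibility of spurious intersections I would apply a standard trimming argument: first shorten $\alpha_i$ to its portion from $p_i$ up to the \emph{first} time it re-enters $\ti{a}_{\md{i+1}}$, calling that first entry point the new $p_{\md{i+1}}$ — this is legitimate because $p_i\in\ti{a}_i$ is disjoint from $\ti{a}_{\md{i+1}}$, and $\ti{a}_{\md{i+1}}$ is open so ``first time in a closed-from-the-left sense'' makes sense after passing to the closure; similarly re-choose the initial point of $\alpha_{\md{i+1}}$. A cleaner variant, which I would actually write up, is: within each connected open set $U_i$, choose $\alpha_i$ from $p_i$ to $p_{\md{i+1}}$, then discard everything on $\alpha_i$ before its last exit from $\ti{a}_i$ and after its first entry into $\ti{a}_{\md{i+1}}$, and finally concatenate and take a simple sub-arc of the resulting closed curve using the fact that any closed path in the plane contains a simple closed sub-curve through any prescribed finite set of its points in cyclic order — provided we have arranged the arcs to be edge-disjoint except at the $p_i$, which the trimming guarantees. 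The endpoints $p_i$ are then readjusted but still lie in $\ti{a}_i$ (we only ever move $p_{\md{i+1}}$ to another point of $\ti{a}_{\md{i+1}}$), the arc $\alpha_i$ still lies in $U_i = \ti{(a_i+a_{\md{i+1}})}$, and the concatenation is now a Jordan curve in $\ti{(a_0+\cdots+a_{n-1})}$, as required.
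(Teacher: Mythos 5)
Your high-level plan is right and matches the paper's in spirit: pick interior points, run arcs through the corridors $U_i=\ti{(a_i+a_{\md{i+1}})}$, note that Lemma~\ref{lemma:rc-intersect} confines any overlap of distinct corridors to a single $\ti{a}_\ell$ and makes non-adjacent corridors disjoint, and then trim. That overlap analysis is a genuinely useful observation that the paper uses only implicitly. But the trimming step as you describe it does not work, and this is the crux of the lemma.

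The problem is that ``the first time $\alpha_i$ enters $\ti{a}_{\md{i+1}}$'' is not a point \emph{of} $\ti{a}_{\md{i+1}}$: since $\ti{a}_{\md{i+1}}$ is open and $\alpha_i(0)=p_i\notin\ti{a}_{\md{i+1}}$, the infimum of the entry times lands on $\delta a_{\md{i+1}}$, not in the interior, so it cannot serve as the new $p_{\md{i+1}}$. The deeper problem is that no such cut helps: any point of $\alpha_i$ in $\ti{a}_{\md{i+1}}$ other than the final one sits inside a nondegenerate sub-arc of $\alpha_i$ still contained in $\ti{a}_{\md{i+1}}$, and $\alpha_{\md{i+1}}$ likewise begins with a nondegenerate sub-arc inside $\ti{a}_{\md{i+1}}$; confining both to $\ti{a}_{\md{i+1}}$ near the junction is perfectly compatible with their intersecting many times there. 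Your ``cleaner variant'' appeals to a supposed standard fact that any closed path in the plane contains a simple closed sub-curve passing through a prescribed finite subset of its points in cyclic order; that is false in general (a figure-eight with one marked point on each loop has no Jordan sub-curve through both marks), and in any case if your trimming had already made the arcs pairwise edge-disjoint except at the $p_i$, the concatenation would already be a Jordan curve and no further ``take a simple sub-arc'' step would be needed, so this passage is internally inconsistent.

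The correct fix, and what the paper does, is to trim at points of \emph{intersection with the next arc}, not at points of entry into $\ti{a}_{\md{i+1}}$. Start with arcs $\alpha''_i\subseteq U_i$ joining interior points $p'_i$. Let $p_1$ be the first point of $\alpha''_0$ lying on $\alpha''_1$ (well defined by compactness: the intersection $\alpha''_0\cap\alpha''_1$ is a non-empty closed set and $\alpha''_0$ is parameterised), cut $\alpha''_0$ there and cut $\alpha''_1$ to start there; then $\alpha'_0\cap\alpha'_1=\{p_1\}$ by the minimality of $p_1$. Iterate this around the cycle, and close the loop with one additional cut pairing $\alpha'_{n-1}$ with the already-trimmed $\alpha'_0$. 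Your corridor-disjointness observation then shows the non-consecutive trimmed arcs are disjoint for free, and Lemma~\ref{lemma:rc-intersect} (applied to the junction $p_i\in\alpha_{\md{i-1}}\cap\alpha_i\subseteq U_{\md{i-1}}\cap U_i=\ti{a}_i$, using $a_{\md{i-1}}\cdot a_{\md{i+1}}=0$, which needs $n\geq 3$) gives $p_i\in\ti{a}_i$.
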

\begin{proof}
For every $0\leq i<n$, select points $p_i'$ in the interior of $a_i$
and connect each $p'_i$ to $p'_{\md{i+1}}$ with an arc
$\alpha_i''\subseteq \ti{(a_i+a_{\md{i+1}})}$. Let $p_1$ be the first
point on $\alpha_0''$ that is on $\alpha_1''$, let $\alpha_0'$ be the
initial segment of $\alpha_0''$ ending at $p_1$, and let $\alpha_1'$
be the final segment of $\alpha_1''$ starting at $p_1$. Note that
$\alpha_0'\cap\alpha_1'=\{p_1\}$. For $2\leq i<n-1$, let $p_i$ be the
first point on $\alpha_{i-1}'$ that is on $\alpha_i''$, let
$\alpha_{i-1}$ be the initial segment of $\alpha_{i-1}'$ ending at
$p_i$, and let $\alpha_i'$ be the final segment of $\alpha_i''$
starting at $p_i$. Note that $\alpha_{i-1}\cap \alpha_i'=\{p_i\}$.
Finally, let $p_0$ be the first point on $\alpha'_{n-1}$ that is on $\alpha_0'$, let $\alpha_{n-1}$ be the initial
segment of $\alpha_{n-1}'$ ending at $p_0$, and let $\alpha_0$ be the
final segment of $\alpha_0'$ starting at $p_0$. Note that
$\alpha_{n-1}\cap\alpha_0=\{p_0\}$.  By construction, for every $0\leq
i<n$, $\alpha_i$ connects points $p_i$ and $p_{\md{i+1}}$, and
$\alpha_{\md{i-1}}\cap\alpha_i=\{p_i\}$, whence, by
Lemma~\ref{lemma:rc-intersect}, $p_i\in \ti{a}_i$.
\end{proof}

Consider now the \cBCci-formula, for $n > 1$,
\begin{multline*}
 \stacki(r_1,\dots, r_n) \ \ = \\ \bigwedge_{i=1}^n \ic(r_i+\cdots+r_n) \ \ \ \land
	\bigwedge_{1 \leq i < j \leq n} (r_i\cdot r_j=0) \ \ \ \land \ \bigwedge_{\begin{subarray}{c}1 \leq i<j \leq n\\j - i > 1\end{subarray}}  \neg C(r_i, r_j),
\end{multline*}
which will allow us to construct arcs containing points of all the regions $r_1,\dots,r_n$:
\begin{lemma}\label{lma:StackLemmai}
Let $(a_1, \dots, a_n)$ be a tuple of elements of $\RC(\R^2)$ satisfying 
\linebreak
$\stacki(r_1,\dots,r_n)$.
Then every point $p_1\in \ti a_1$ can
be connected to every point $p_n\in \ti a_n$ by a Jordan arc
$\alpha=\alpha_1 \cdots \alpha_{n-1}$ such that, for all $1\leq i< n$, $\alpha_i$ is a non-degenerate
Jordan arc in $\ti{(a_i+a_{i+1})}$, starting at a point $p_i\in\ti a_i$.
\end{lemma}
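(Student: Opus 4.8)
The plan is to build the arc $\alpha$ one segment at a time, pushing from $\ti a_1$ into $\ti a_2$, then from $\ti a_2$ into $\ti a_3$, and so on, truncating candidate segments where necessary so as to keep everything simple. First I would dispose of degeneracies: if $\ti a_1=\emptyset$ or $\ti a_n=\emptyset$ the statement is vacuous, and no intermediate $a_i$ can be empty either, since if $i$ were the least index in $\{2,\dots,n-1\}$ with $a_i=\emptyset$, then $a_{i-1}$ and $a_{i+1}+\dots+a_n$ would be non-empty and disjoint (the latter by the conjuncts $\neg C(r_{i-1},r_j)$, $j\ge i+1$), so $\ti{(a_{i-1}+a_{i+1}+\dots+a_n)}$ would be the disjoint union of the non-empty open sets $\ti a_{i-1}$ and $\ti{(a_{i+1}+\dots+a_n)}$, contradicting the conjunct $\ic(r_{i-1}+\dots+r_n)$. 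So I may assume every $\ti a_i$ is non-empty. The case $n=2$ is then immediate: $\ti{(a_1+a_2)}$ is open and connected (conjunct $\ic(r_1+r_2)$), hence arcwise connected, and $p_1\neq p_n$ because $\ti a_1\cap\ti a_n=\emptyset$, so any non-degenerate Jordan arc from $p_1$ to $p_n$ inside $\ti{(a_1+a_2)}$ works. Assume henceforth $n\ge 3$.

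Next I would extract the structural facts on which the construction rests. From $r_i\cdot r_j=0$ one gets $\ti a_i\cap\ti a_j=\emptyset$ for $i<j$, and from $\neg C(r_i,r_j)$ one gets $a_i\cap a_j=\emptyset$, hence $\ti a_i\cap a_j=\emptyset$, whenever $j-i>1$. A short neighbourhood argument (splitting on whether a point lies in $a_i$) then gives, for each $i<n$, the decomposition $\ti{(a_i+\dots+a_n)}=\ti{(a_i+a_{i+1})}\cup\ti{(a_{i+1}+\dots+a_n)}$; Lemma~\ref{lemma:rc-intersect}, applied with $a=a_{i+1}$, $b_1=a_i$, $b_2=a_{i+2}+\dots+a_n$, gives $\ti{(a_i+a_{i+1})}\cap\ti{(a_{i+1}+\dots+a_n)}=\ti a_{i+1}$, and likewise $\ti{(a_i+a_{i+1})}\cap\ti{(a_{i+1}+a_{i+2})}=\ti a_{i+1}$; and $\ti{(a_i+a_{i+1})}\cap\ti{(a_j+a_{j+1})}=\emptyset$ whenever $|i-j|\ge 2$, since a common point would have to lie in $a_k\cap a_\ell$ for some $k,\ell$ with $|k-\ell|\ge 2$, or be interior to $a_i+a_{i+1}$ while lying on $\delta a_{i+1}$, both impossible. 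The technical heart of the proof is the claim that \emph{if $p\in\ti a_i$ with $i<n$ and $C$ is the component of $\ti{(a_i+a_{i+1})}$ containing $p$, then $C\cap\ti a_{i+1}\neq\emptyset$}. To prove it I would argue by contradiction: if $C\cap\ti a_{i+1}=\emptyset$ then, by the identity just noted, $C$ is disjoint from $\ti{(a_{i+1}+\dots+a_n)}$; but $C$, a component of the open set $\ti{(a_i+a_{i+1})}$, is clopen in it, so $C$ and its relative complement are both open, and the decomposition writes $\ti{(a_i+\dots+a_n)}$ as the disjoint union of the non-empty open sets $C$ and $\bigl(\ti{(a_i+a_{i+1})}\setminus C\bigr)\cup\ti{(a_{i+1}+\dots+a_n)}$, contradicting the connectedness of $\ti{(a_i+\dots+a_n)}$ guaranteed by the conjunct $\ic(r_i+\dots+r_n)$.

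Granting the claim, the construction runs as follows. Since components of open planar sets are open, hence arcwise connected, the claim lets me pick a non-degenerate Jordan arc $\beta_1\subseteq\ti{(a_1+a_2)}$ from $p_1$ to a point of $\ti a_2$. Inductively, given a non-degenerate Jordan arc $\beta_k\subseteq\ti{(a_k+a_{k+1})}$ from $p_k\in\ti a_k$ to a point of $\ti a_{k+1}$, and provided $k+1<n$, I would use the claim (with $i=k+1$) to pick a non-degenerate Jordan arc $\beta'_{k+1}\subseteq\ti{(a_{k+1}+a_{k+2})}$ from $\beta_k(1)$ to a point of $\ti a_{k+2}$; then let $p_{k+1}$ be the first point of $\beta_k$ (traversed from $p_k$) lying on $\beta'_{k+1}$, set $\alpha_k:=\beta_k[p_k,p_{k+1}]$, and let $\beta_{k+1}$ be the sub-arc of $\beta'_{k+1}$ from $p_{k+1}$ to its end. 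By the choice of $p_{k+1}$ we have $\alpha_k\cap\beta_{k+1}=\{p_{k+1}\}$, and $p_{k+1}\in\ti{(a_k+a_{k+1})}\cap\ti{(a_{k+1}+a_{k+2})}=\ti a_{k+1}$, so $\alpha_k$ is a non-degenerate Jordan arc in $\ti{(a_k+a_{k+1})}$ starting at $p_k\in\ti a_k$. This yields $\alpha_1,\dots,\alpha_{n-2}$ and an arc $\beta_{n-1}\subseteq\ti{(a_{n-1}+a_n)}$ from $p_{n-1}\in\ti a_{n-1}$ to some $q\in\ti a_n$ with $\alpha_{n-2}\cap\beta_{n-1}=\{p_{n-1}\}$. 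For the last piece I would use that $\ti a_n$ is connected (conjunct $\ic(r_n)$) to choose a Jordan arc $\gamma\subseteq\ti a_n$ from $q$ to the prescribed point $p_n$; since $\ti a_n$ meets none of the sets $\ti{(a_i+a_{i+1})}$ with $i\le n-2$ (by the separation facts), $\gamma$ is disjoint from $\alpha_1,\dots,\alpha_{n-2}$, so, letting $p$ be the first point of $\beta_{n-1}$ on $\gamma$ and putting $\alpha_{n-1}:=\beta_{n-1}[p_{n-1},p]\cdot\gamma[p,p_n]$, I obtain a non-degenerate Jordan arc in $\ti{(a_{n-1}+a_n)}$ from $p_{n-1}$ to $p_n$ with $\alpha_{n-2}\cap\alpha_{n-1}=\{p_{n-1}\}$.

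Finally I would observe that $\alpha:=\alpha_1\cdots\alpha_{n-1}$ is a non-degenerate Jordan arc: each $\alpha_i$ lies in $\ti{(a_i+a_{i+1})}$, so $\alpha_i\cap\alpha_j=\emptyset$ for $|i-j|\ge 2$, while consecutive pieces meet only at the shared point $p_{i+1}$; hence the concatenation is injective, and it is non-degenerate because its endpoints $p_1$ and $p_n$ are distinct, with $\alpha_i$ starting at $p_i\in\ti a_i$ as required. I expect the component claim to be the main obstacle: the tempting shortcut would be to assert that every $\ti{(a_i+a_{i+1})}$ is connected, but only the last such sum is forced connected by a conjunct of $\stacki$, so one genuinely has to play off connectedness of the suffix sums against the identity $\ti{(a_i+a_{i+1})}\cap\ti{(a_{i+1}+\dots+a_n)}=\ti a_{i+1}$. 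The only other point needing care is injectivity of the concatenation, which is secured by the disjointness of non-adjacent ambient sets and by the ``first-intersection'' truncation performed at each junction.
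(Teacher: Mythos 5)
Your proof is correct, and it takes a route that is noticeably different from the paper's. Both arguments are iterative constructions that use the ``first intersection'' truncation device to keep the concatenation simple, but they maintain different invariants. The paper never leaves the far endpoint $p_n$: at step $i$ it draws a fresh arc $\alpha'_{i+1}\subseteq\ti{(a_{i+1}+\cdots+a_n)}$ from the current breakpoint all the way to $p_n$, using the interior-connectedness of each \emph{suffix} sum $a_{i+1}+\cdots+a_n$ to produce these long arcs directly, and then truncates. Your proof instead distills the role of the conjuncts $\ic(r_i+\cdots+r_n)$ into a local \emph{component claim}: the component of $\ti{(a_i+a_{i+1})}$ through any point of $\ti a_i$ must meet $\ti a_{i+1}$, because otherwise that component would be a proper non-empty clopen piece of $\ti{(a_i+\cdots+a_n)}$ --- a neat way of playing the $\ic$-conjunct off against the decomposition $\ti{(a_i+\cdots+a_n)}=\ti{(a_i+a_{i+1})}\cup\ti{(a_{i+1}+\cdots+a_n)}$ and Lemma~\ref{lemma:rc-intersect}. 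You then build short arcs $\beta_k\subseteq\ti{(a_k+a_{k+1})}$ step by step, and only at the very end invoke $\ic(r_n)$ to steer the last arc to the prescribed $p_n$, whereas the paper gets the final landing at $p_n$ for free since every intermediate arc already ends there; in exchange you need the extra (easy, and correctly established) observation that $\ti a_n$ is disjoint from $\ti{(a_i+a_{i+1})}$ for $i\le n-2$. Your version arguably makes the use of each hypothesis more transparent, while the paper's keeps a simpler invariant. One small blemish: the justification you offer for $\ti{(a_i+a_{i+1})}\cap\ti{(a_j+a_{j+1})}=\emptyset$ when $|i-j|\ge 2$ (``be interior to $a_i+a_{i+1}$ while lying on $\delta a_{i+1}$'') does not by itself describe an impossible configuration. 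The clean argument for the one non-trivial case $j=i+2$ is that a common point would have to lie in $a_{i+1}\cap a_{i+2}$ while being interior to both $a_i+a_{i+1}$ and $a_{i+2}+a_{i+3}$, hence interior to their set-theoretic intersection, which by the non-contact conjuncts equals $a_{i+1}\cap a_{i+2}$, and that has empty interior since $a_{i+1}\cdot a_{i+2}=0$. The fact itself is true and the rest of your proof stands.
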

\begin{proof}
Since $a_1+\dots+a_n$ is interior-connected, let $\alpha_1' \subseteq
\ti{(a_1+\dots+a_n)}$ be a Jordan arc connecting $p_1$ to $p_n$.
Since $\neg C(a_1,(a_3 + \dots + a_n))$, $\alpha_1'$ must contain a
point $p_1' \in \ti{a}_2$ such that $\alpha'_1[p_1,p'_1] \subseteq
\ti{(a_1 + a_{2})}$.  For convenience, let $p_0 = p_1$, let $\alpha_0$
be the degenerate Jordan arc located at $p_1$, and let $a_0$ be the
empty region.

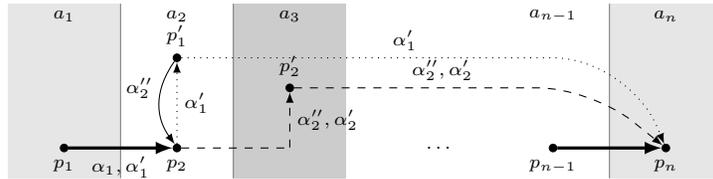
\begin{figure}[ht]
\begin{center}
	\begin{tikzpicture}[>=latex,point/.style={circle,draw=black,minimum size=1mm,inner sep=0pt},yscale=0.8]\scriptsize
            \fill[gray!20] (-0.75,1.5) rectangle +(1.5,2.9);
            \fill[gray!40] (2.25,1.5) rectangle +(1.5,2.9);
            \fill[gray!20] (7.25,1.5) rectangle +(1.5,2.9);
            \draw[gray] (0.75,1.5) -- ++(0,2.9);
            \draw[gray] (2.25,1.5) -- ++(0,2.9);
            \draw[gray] (7.25,1.5) -- ++(0,2.9);
            \node at (0,4.2)  {$a_1$};
            \node at (1.5,4.2)  {$a_2$};
            \node at (3,4.2)  {$a_3$};
            \node at (8,4.2)  {$a_n$};
            \node at (6.5,4.2)  {$a_{n-1}$};
            \node at (5,2) {\dots};
            \node (p1) at (0,2) [point,fill=black,label=below:{$p_1$}] {};
            \node (p2) at (1.5,2) [point,fill=black,label=below:{$p_2$}] {};
            \node (p2p) at (1.5,3.5) [point,fill=black,label=above:{$p_1'$}] {};
            \node (p3) at (3,2) [inner sep=0pt,minimum size=0pt] {};
            \node (p3p) at (3,3) [point,fill=black,label=above:{$p_2'$}] {};
            \node (pn) at (8,2) [point,fill=black,label=below:$p_n$] {};
			\draw[very thick,->] (p1) -- (p2) node[below,midway]{$\alpha_1,\alpha_1'$};
			\draw[dotted,->,rounded corners=10mm] (p2p) -- ++(6,0) node[above,midway]{$\alpha_1'$} -- (pn);
			\draw[dotted,->] (p2) -- (p2p) node[right,midway]{$\alpha_1'$};
			\draw[thin,->] (p2p) to[out=240,in=120] (p2);
            \node at (1,3) {$\alpha_2''$};
			\draw[dashed,->] (p2) -- (p3) -- (p3p) node[right,midway]{$\alpha_2'',\alpha_2'$};
			\draw[dashed,->,rounded corners=7mm] (p3p) -- ++(4,0) node[above,midway]{$\alpha_2'',\alpha_2'$} -- (pn);
            \node (pn1) at (6.5,2) [point,fill=black,label=below:$p_{n-1}$] {};
            \draw[very thick, ->] (pn1) -- (pn);
	\end{tikzpicture}
\end{center}
\caption{The constraint $\stacki(a_1,\dots, a_n)$ ensures the existence of a Jordan arc
$\alpha=\alpha_1\cdots\alpha_{n-1}$ which connects a point $p_1\in \ti a_1$ to a point $p_n\in\ti a_n$.}
\label{fig:stacki}
\end{figure}

We inductively define, for all $1 \leq i<n$, arcs
$\alpha_{i-1}, \alpha'_i$ and points $p_i, p'_i$ with the
following properties: $\alpha_{i-1} \subseteq \ti{(a_{i-1} +
  a_{i})}$ and runs from $p_{i-1}$ to $p_i$;  $\alpha'_i
\subseteq \ti{(a_i + \dots + a_n)}$; $\alpha_0 \cdots
\alpha_{i-1} \alpha'_{i}$ is a Jordan arc from $p_0$ to $p_n$; and
$p'_i \in \alpha'_{i} \cap \ti{a}_{i+1}$ with
$\alpha'_i[p_i,p'_i] \subseteq \ti{(a_i + a_{i+1})}$.  Suppose that,
for some $1 \leq i \leq n-2$, the requisite entities have
already been defined (notice that this is already the case for
$i=1$).  Since $a_{i+1}+\cdots+a_n$ is interior-connected, let $\alpha_{i+1}''
\subseteq \ti{(a_{i+1}+\cdots+a_n)}$ be a Jordan arc connecting $p_i'$
to $p_n$. Since we certainly have $(a_1 + \dots + a_i) \cdot
(a_{i+1} + \dots + a_n) = 0$, $\alpha''_{i+1}$ can intersect
$\alpha_0 \cdots\alpha_{i-1}\alpha_{i}'$ only in its final segment
$\alpha_i'$. Let $p_{i+1}$ be the first point of $\alpha_{i}'$ lying
on $\alpha_{i+1}''$; let $\alpha_i$ be the initial segment of
$\alpha_{i}'$ ending at $p_{i+1}$; and let $\alpha_{i+1}'$ be the
final segment of $\alpha_{i+1}''$ starting at $p_{i+1}$. By
construction, then, $\alpha_0 \cdots \alpha_i \alpha_{i+1}'$ is a
Jordan arc from $p_0$ to $p_n$, and $\alpha_i \subseteq
\ti{(a_i + a_{i+1})}$.  Moreover, since $\neg C(a_{i+1}, (a_{i+3}
+ \cdots + a_{n}))$, $\alpha'_{i+1}$ must contain a point $p'_{i+1} \in
\ti{a}_{i+2}$ such that $\alpha'_{i+1}[p_{i+1},p'_{i+1}] \subseteq
\ti{(a_{i+1} + a_{i+2})}$. Continuing up to the value $i = n-1$, we
have defined $\alpha_1, \dots, \alpha_{n-2}$, $\alpha'_{n-1}$
and $p_1, \dots,
p_{n-1}$.  It remains only to define $\alpha_{n-1}$; for this we
simply set $\alpha_{n-1} =\alpha_{n-1}'$.

For all $1 < i < n$, we have $p_i\in \alpha_{i-1}\cap\alpha_i$,
whence, by Lemma~\ref{lemma:rc-intersect}, $p_i\in \ti{a}_i$. It also
follows that the $\alpha_i$ are non-degenerate.
\end{proof}

It should be noted that $\stacki(r_1,\dots, r_n)$ is a not a
\cBci-formula, as it contains (negative) occurrences of the contact
predicate $C$. It turns out, however, that we can eliminate them. To
this end, consider the \cBci-formula
\begin{multline*}
\noncontacti(r_1,\dots,r_5) \ \ = \ \
\bigwedge_{i=1}^5 \bigl(\ic(r_i) \land (r_i \neq 0)\bigr) \ \ \land \ \ \bigwedge_{1 \leq i < j \leq 5} (r_i \cdot r_j = 0) \ \ \land \\
\bigwedge_{j=3}^5 \ic(r_1 + r_j) \ \ \land \ \ \bigwedge_{2 \leq i< j \leq 5}\hspace*{-0.5em} \ic(r_i + r_j).
\end{multline*}	
This formula is similar to formula~\eqref{eq:2vs3} encoding the
non-planar graph $K_5$ (hence the name); however, there is no
requirement that $r_1 + r_2$ is interior-connected.
\begin{lemma}\label{lma:Cci2BciStar}
\textup{(}i\textup{)} For each tuple $(a_1,\dots,a_5)$ of elements of
$\RC(\R^2)$ satisfying $\noncontacti(r_1,\dots,r_5)$, we have $\neg
C(a_1,a_2)$.  \textup{(}ii\textup{)} If regions $b_1$ and $b_2$ can be
separated by a Jordan curve then there exist polygons
$(a_1,\dots,a_5)$ satisfying $\noncontacti(r_1,\dots,r_5)$ such that
$b_1 \leq a_1$ and $b_2\leq a_2$.
\end{lemma}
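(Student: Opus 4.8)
\emph{Part (i).} The plan is to assume $(a_1,\dots,a_5)$ satisfies $\noncontacti(r_1,\dots,r_5)$ and, for contradiction, that $C(a_1,a_2)$, and then to manufacture an embedding of $K_5$ in the sphere $S^2=\R^2\cup\{\infty\}$. Fix $p\in a_1\cap a_2$; since $a_1\cdot a_2=0$ gives $\ti a_1\cap\ti a_2=\emptyset$, and $a_i=\tc{\ti a_i}$, no neighbourhood of $p$ can lie in $a_1$ or in $a_2$, so $p\in\delta a_1\cap\delta a_2$. Fix a point $p_i\in\ti a_i$ for each $i$. Throughout I would use the elementary fact that if $h\notin\{i_1,\dots,i_m\}$ and $a_h\cdot a_{i_\ell}=0$ for all $\ell$, then $a_h\cap\ti{(a_{i_1}+\dots+a_{i_m})}=\emptyset$ (a point of this set would have an open neighbourhood inside $a_{i_1}\cup\dots\cup a_{i_m}$, meeting $\ti a_h$ and so yielding a nonempty open subset of the nowhere dense set $\bigcup_\ell(a_h\cap a_{i_\ell})$); and, in the same vein together with Lemma~\ref{lemma:rc-intersect}, that $\ti{(a_i+a_j)}\cap\ti{(a_k+a_l)}=\emptyset$ when $\{i,j\}\cap\{k,l\}=\emptyset$, while $\ti{(a_i+a_j)}\cap\ti{(a_i+a_k)}=\ti a_i$ when $j\ne k$.

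For each of the nine edges $\{i,j\}\ne\{1,2\}$ of $K_5$ the formula supplies $\ic(a_i+a_j)$, so $\ti{(a_i+a_j)}$ is a connected, hence arc-connected, open subset of $S^2$ containing $p_i$ and $p_j$ but not $p$ (a neighbourhood of $p$ meets $\ti a_1$ or $\ti a_2$, which is interior-disjoint from $a_i+a_j$ as $\{1,2\}\not\subseteq\{i,j\}$); I would choose a simple arc $A_{ij}$ from $p_i$ to $p_j$ inside it. Having fixed these nine compact arcs, pick $\varepsilon>0$ with $\overline{B(p,\varepsilon)}$ disjoint from all of them; then $\ti a_1\cup B(p,\varepsilon)\cup\ti a_2$ is a connected open set (the ball meets $\ti a_1$ and $\ti a_2$ since $p\in\delta a_1\cap\delta a_2$), so it contains a simple arc $A_{12}$ from $p_1$ to $p_2$. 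A short case-check using the choice of $\varepsilon$ and the facts above shows that any two of these ten arcs meet only at a shared endpoint $p_i$, or inside $\ti a_i$ for some common vertex $i$.

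The remaining — and crucial — step is to untangle, for each $i$, the (at most four) arcs incident to $p_i$. Each such arc, read from $p_i$, begins inside the open set $\ti a_i$ and eventually leaves it, necessarily through a point of $\delta a_i$ that is accessible from $\ti a_i$ (its initial segment witnesses accessibility). I would reroute the parts of these arcs that lie in $\ti a_i$ so that the incident arcs become pairwise disjoint apart from $p_i$, leaving their behaviour outside $\ti a_i$ unchanged; the governing fact — of the kind already packaged in the constructions of Lemmas~\ref{lma:FrameLemmaInt} and~\ref{lma:StackLemmai} — is that finitely many arcs emanating from an interior point of a planar open set and reaching distinct accessible frontier points can be so disentangled. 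Carrying this out at $p_1,\dots,p_5$ produces a topological embedding of $K_5$ in $S^2$, which is impossible. This is the step I expect to be the main obstacle: the non-crossing bookkeeping is routine, but making the local rerouting precise (for genuinely wild regular closed sets) requires care. The conclusion is $\neg C(a_1,a_2)$.

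\emph{Part (ii).} Suppose $b_1$ and $b_2$ lie in the two components $D_{\mathrm{in}}$ (bounded) and $D_{\mathrm{out}}$ (unbounded) of $\R^2\setminus J$ for some Jordan curve $J$; the cases $b_1=0$ or $b_2=0$ are trivial. Since $b_1$ and $J$ are compact and disjoint, $\mathrm{dist}(b_1,J)>0$, and likewise $\mathrm{dist}(b_2,J)>0$. First I would build a polygon $a_1$ with $b_1\le a_1\subseteq D_{\mathrm{in}}$ and $\ic(a_1)$, by covering the compact set $b_1$ with finitely many small squares inside $D_{\mathrm{in}}$ and joining them with thin polygonal tubes routed through the connected open set $D_{\mathrm{in}}$, arranged so that $a_1$ is a simply connected polygon with connected interior; symmetrically, using that $D_{\mathrm{out}}$ is connected (and a half-plane to absorb any unbounded part of $b_2$), build a polygon $a_2$ with $b_2\le a_2\subseteq D_{\mathrm{out}}$ and $\ic(a_2)$, so that $a_1\cdot a_2=0$. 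Finally I would place three thin polygonal bridges $a_3,a_4,a_5$, each running from $a_1$ across $J$ to $a_2$ and sharing a boundary segment with $a_1$, with $a_2$, and with each of the other two bridges, all five polygons being pairwise interior-disjoint; this is possible because the required edge-adjacency graph is $K_5$ minus the edge $\{1,2\}$, which is planar. For each pair $\{i,j\}\ne\{1,2\}$ the relative interior of the shared boundary segment of $a_i$ and $a_j$ lies in $\ti{(a_i+a_j)}$ and joins $\ti a_i$ to $\ti a_j$, so $\ti{(a_i+a_j)}$ is connected; each $a_i$ is a polygonal topological disc (or the complement of one), so $\ic(a_i)$; and the $a_i$ are nonempty and pairwise interior-disjoint. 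Hence $(a_1,\dots,a_5)$ satisfies $\noncontacti(r_1,\dots,r_5)$ with $b_1\le a_1$ and $b_2\le a_2$.
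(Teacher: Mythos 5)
Your part (i) takes a genuinely different route from the paper's, and it is worth comparing. You argue by contradiction: from a hypothetical contact point $p\in a_1\cap a_2$ you manufacture an eleventh arc $A_{12}$ through a small ball $B(p,\varepsilon)$, add it to the nine arcs $A_{ij}$ supplied by the $\ic$ conjuncts, disentangle everything at the vertices $p_i$, and invoke non-planarity of $K_5$. The paper does \emph{not} argue by contradiction and never needs an arc from $p_1$ to $p_2$: it builds a Jordan curve $\Gamma=\gamma_{3,4}\gamma_{4,5}\gamma_{3,5}^{-1}\subseteq\ti{(a_3+a_4+a_5)}$, notes that $a_1$ and $a_2$, being connected and disjoint from $a_3+a_4+a_5$, each lie entirely in one residual domain of $\Gamma$, and then uses the arcs $\gamma_{2,3},\gamma_{2,4},\gamma_{2,5}$ to cut $p_2$'s residual domain into three cells bounded by sets of the form $\ti{(a_2+a_j+a_k)}$; if $p_1$ lay in the same residual domain it would be trapped in one of these cells, contradicting the existence of some $\gamma_{1,\ell}\subseteq\ti{(a_1+a_\ell)}$ reaching $p_\ell$ on $\Gamma$. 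The payoff is that the paper's argument only needs the six arcs of the wheel (the triangle plus the three spokes from $p_2$) to be pairwise non-crossing except at shared endpoints; the $\gamma_{1,j}$ only need to exist, and there is no $A_{12}$ to build. Your version must disentangle all ten arcs, including $A_{12}$, and this is precisely where the hard work lies.

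On that disentangling step, which you yourself flag as the main unfilled obstacle: I agree it is a genuine gap as written. An arc $A_{ij}$ can leave and re-enter $\ti a_i$ many times, so the ``part of $A_{ij}$ inside $\ti a_i$'' is in general a union of sub-arcs scattered throughout a possibly wild open set, and two incident arcs need not exit $\ti a_i$ through distinct accessible boundary points. The classical fact you want (finitely many accessible boundary points of a connected planar open set can be reached from an interior point by pairwise disjoint end-cuts) does not straightforwardly give pairwise-disjoint \emph{full} arcs without additional surgery. To be fair, the paper's proof also compresses a similar step into the phrase ``routine,'' but the wheel structure it needs is strictly simpler than a full $K_5$, and its argument gives several of the arcs (the $\gamma_{1,j}$) a free pass. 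Your part (ii) is in the same spirit as the paper's; the paper's construction is a little slicker in that it takes $a_1$ and $a_2$ to be the two complementary components of the thickened Jordan curve, which automatically absorbs $b_1$, $b_2$ and handles unboundedness without the tube-and-square covering you set up.
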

\begin{proof}
({\em i}) For all $i$ ($1 \leq i \leq 5$), pick a point $p_i \in
  \ti{a}_i$. Then, for all $j$ ($3 \leq j \leq 5$) let $\gamma_{1,j}$ be an
  arc from $p_1$ to $p_j$ lying in $\ti{(a_1 + a_j)}$, and, for all $i$,
  $j$ ($2 \leq i < j \leq 5$), let $\gamma_{i,j}$ be an arc from $p_i$
  to $p_j$ lying in $\ti{(a_i + a_j)}$. It is routine to show that the
  various $\gamma_{i,j}$ can be chosen so that they intersect only at
  their endpoints. Thus, $\Gamma = \gamma_{3,4} \gamma_{4,5}
  \gamma_{3,5}^{-1}$ forms a Jordan curve in $\ti{(a_3 + a_4 +a_5)}$, and
  the arcs $\gamma_{2,3}$, $\gamma_{2,4}$ and $\gamma_{2,5}$ join
  $\Gamma$ to the point $p_2$ lying in one of its residual domains.
  (Fig.~\ref{fig:SepBci} illustrates the situation where $p_2$ lies in
  the bounded residual domain.)  Since $a_1$ and $a_2$ are (interior-)
  connected and cannot intersect $\Gamma$, they each lie in one of its
  residual domains. It suffices to show that $a_1$ and
  $a_2$ lie in {\em different} residual domains. To see this, observe
  that the arcs $\gamma_{2,3}$, $\gamma_{2,4}$ and $\gamma_{2,5}$
  divide the residual domain of $\Gamma$ containing $p_2$ into three
  regions, bounded by arcs lying in $\ti{(a_2 + a_3+ a_4)}$, $\ti{(a_2
    + a_4+ a_5)}$ and $\ti{(a_2 + a_5+ a_3)}$, respectively.  But if
  $a_1$ and $a_2$ lie on the same side of $\Gamma$, then $p_1$ lies in
  one of these regions, contradicting the existence of arcs
  $\gamma_{1,j} \subseteq \ti{(a_1 + a_j)}$ connecting $p_1$ to $p_j$;
  for $j = 3,4,5$.
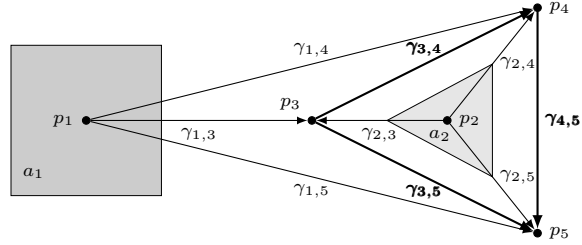
\begin{figure}[ht]
	\begin{center}
		\begin{tikzpicture}[>=latex,point/.style={circle,draw=black,minimum size=1mm,inner sep=0pt}]\scriptsize
            \filldraw[ultra thin,fill=gray!40] (-4,-1) rectangle +(2,2);
            \node at (-3.7,-0.7) {$a_1$};
            \filldraw[ultra thin,fill=gray!20] (1,0) -- (2.4,0.75) -- (2.4,-0.75) -- cycle;
            \node at (1.7,-0.2) {$a_2$};
            \node (p3) at (0,0) [point,fill=black,label=above left:{$p_3$}] {};
            \node (p4) at (3,1.5) [point,fill=black,label=right:{$p_4$}] {};
            \node (p5) at (3,-1.5) [point,fill=black,label=right:{$p_5$}] {};
            \draw[thick,->] (p3) -- (p4) node[midway,above] {$\pmb{\gamma_{3,4}}$};
            \draw[thick,->] (p4) -- (p5) node[midway,right] {$\pmb{\gamma_{4,5}}$};
            \draw[thick,<-] (p5) -- (p3) node[midway,below] {$\pmb{\gamma_{3,5}}$};
            \node (p1) at (-3,0) [point,fill=black,label=left:{$p_1$}] {};	
            \draw[<-] (p3) -- (p1) node[midway,below] {$\gamma_{1,3}$};
            \draw[<-] (p4) -- (p1) node[midway,above] {$\gamma_{1,4}$};
            \draw[<-] (p5) -- (p1) node[midway,below] {$\gamma_{1,5}$};
            \node (p2) at (1.8,0) [point,fill=black,label=right:{$p_2$}] {};	
            \draw[<-] (p3) -- (p2) node[midway,below] {$\gamma_{2,3}$};
            \draw[<-] (p4) -- (p2) node[midway,right] {$\gamma_{2,4}$};
            \draw[<-] (p5) -- (p2) node[midway,right] {$\gamma_{2,5}$};
	\end{tikzpicture}
	\end{center}
	\caption{The Jordan curve
    $\Gamma=\gamma_{3,4}\gamma_{4,5}\gamma_{3,5}^{-1}$ (thick lines) separating $a_1$ from $a_2$.}
	\label{fig:SepBci}
\end{figure}

({\em ii}) Let $\Gamma$ be a Jordan curve separating $b_1$ and
$b_2$. We may assume that $\Gamma$ is piecewise-linear. Now thicken $\Gamma$ to form an annular element of
$\RCP(\R^2)$, still disjoint from $b_1$ and $b_2$, and divide it into
the three interior-connected and non-overlapping polygons
$a_3,a_4,a_5$.  Choose $a_1$ and $a_2$ to be the components of the
complement of $a_3+a_4+a_5$ containing $b_1$ and $b_2$, respectively.
\end{proof}

We remark that
Lemma~\ref{lma:Cci2BciStar} guarantees that $(a_1,\dots,a_5)$ are
regular closed {\em polygons}. This fact will be important in
\SECT\ref{subsec:undecPlane}, where we prove the undecidability of
$\Sat(\cBci,\RCP(\R^2))$; for the main result of this section,
however, we require only that $(a_1,\dots,a_5)$ are regular closed
sets in $\R^2$:
\begin{theorem}\label{theo:inftyBci}
There is a \cBci-formula satisfiable over $\RC(\R^2)$, but only by
tuples featuring sets with infinitely many components.
\end{theorem}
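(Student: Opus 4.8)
The plan is to carry the ``onion'' forcing of \SECT\ref{infcomp} into the plane, replacing its uses of the contact predicate by interior-connectedness with the help of the planar gadgets just introduced.

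First I would take the \cBCci-formula $\ti{\phi}_\infty$ of Corollary~\ref{cor:inftyCci}: it is a conjunction of $\ic$-literals (all of positive polarity) together with $\neg C$-literals, it is satisfiable over $\RC(\R^2)$, and over any locally connected, unicoherent space---in particular over $\R^2$---a satisfying tuple must contain a set with infinitely many components. I want a \cBci-formula $\psi$ that still entails $\ti{\phi}_\infty$ but remains satisfiable over $\RC(\R^2)$. Entailment is the easy direction: keep every $\ic$-literal, and replace each literal $\neg C(\tau_1,\tau_2)$ by $\noncontacti(z_1,\dots,z_5)\land(\tau_1\le z_1)\land(\tau_2\le z_2)$ with a fresh tuple $z_1,\dots,z_5$. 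By Lemma~\ref{lma:Cci2BciStar}(\textit{i}) this conjunct forces $\neg C(\tau_1,\tau_2)$, so $\psi$ entails $\ti{\phi}_\infty$, hence $\phi_\infty$; restricting any $\RC(\R^2)$-model of $\psi$ to the variables of $\ti{\phi}_\infty$ and applying Corollary~\ref{cor:inftyCci} then yields a component of the tuple with infinitely many connected components.

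The substance of the proof lies in the converse: producing an $\RC(\R^2)$-model of $\psi$. Equivalently, I must choose a satisfying assignment for $\ti{\phi}_\infty$---necessarily ``onion-shaped''---in which, moreover, every pair $(\tau_1,\tau_2)$ that appears in a $\neg C$-literal is separated by a Jordan curve; for then Lemma~\ref{lma:Cci2BciStar}(\textit{ii}) furnishes regular closed sets (indeed polygons) $z_1,\dots,z_5$ realizing the corresponding $\noncontacti$-gadget with $\tau_1\le z_1$ and $\tau_2\le z_2$. \emph{The main obstacle} is exactly this Jordan-separability demand. In the literal nested-annuli picture of Fig.~\ref{fig:InfCmpSat} the components of $r_0$ and $r_2$ (and of $r_1$ and $r_3$) interleave radially, so these regions---whose non-contact $\cpartition(r_0,r_1,r_2,r_3)$ requires---cannot be separated by any single Jordan curve, and the pairs coming from~\eqref{eq:notC} cause the same trouble, since the ``background'' parts $r_{\md{i+1}}\cdot(-r'_{\md{i+1}})$ are annular and wrap around.

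To overcome this I would, before doing the substitution, first rewrite $\phi_\infty$ so that no non-contact is ever asserted between radially interleaved pieces. Following (and, if need be, refining beyond) the recipe used to prove Theorem~\ref{cor:inftyBc}, I split each $r_i$ into $r'_i$ and $r_i\cdot(-r'_i)$, introduce auxiliary ``combing'' regions and a second transversal, and break each $\neg C(r_i,r_{\md{i+2}})$ into a conjunction of non-contacts between small pieces; simultaneously I redraw the onion in an ``unrolled'' spiral layout in which the $r'_i$-blobs are all pushed to one side of a central strip and the transversals run down the other, so that each of the finitely many resulting pair-types can be separated by a large Jordan curve whose only crossing of the onion runs down the central strip. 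Making this bookkeeping close---partitioning finely enough, and arranging the $\RC(\R^2)$-realization symmetrically enough, that not a single demanded non-contact joins interleaved pieces---is the crux of the argument, and it is precisely what the planar Lemmas~\ref{lma:FrameLemmaInt}--\ref{lma:Cci2BciStar} of this section are designed to make possible. Once such a model of $\psi$ is in hand the theorem follows: $\psi$ is a \cBci-formula satisfiable over $\RC(\R^2)$, yet it entails $\ti{\phi}_\infty$, so every satisfying tuple features a set with infinitely many components.
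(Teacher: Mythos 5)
You have the right general shape of the argument---preserve the $\ic$-literals, replace each $\neg C(\tau_1,\tau_2)$ by $\noncontacti(\vec z)\land(\tau_1\le z_1)\land(\tau_2\le z_2)$, invoke Lemma~\ref{lma:Cci2BciStar}(\emph{i}) for entailment and Lemma~\ref{lma:Cci2BciStar}(\emph{ii}) for satisfiability---and you have correctly located the obstacle: the nested-annuli assignment of Fig.~\ref{fig:InfCmpSat} is not Jordan-separable, because $r_0$ and $r_2$ interleave radially. But the step you label ``the crux of the argument'' is precisely what you do not supply, and it is not merely bookkeeping: it is the content of the theorem.

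The difficulty is not resolved by ``unrolling the onion into a spiral.'' The formula $\ti{\phi}_\infty$ contains $\cpartition(r_0,\dots,r_3)$, so the $r_i$ must tile all of $\R^2$, with $r_0,r_2$ disjoint and non-contacting, $r_1,r_3$ likewise, and the formula forces infinitely many components. Any such realization makes at least one pair topologically linked: a spiral still winds infinitely and still interleaves its threads, so $r_0$ and $r_2$ (or whichever pair sits on alternating turns) still cannot be placed on opposite sides of a single Jordan curve, and the same problem recurs for the $\neg C(r'_i,\,r_{\md{i+1}}\cdot(-r'_{\md{i+1}}))$ literals whose second argument is the wrap-around background. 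Splitting the $r_i$ into finer pieces and introducing ``combing'' regions changes which pairs must be separated, but you give no reason why the resulting finite collection of required separations is simultaneously realizable, and no concrete figure that achieves it; you merely assert it ``ought'' to be possible and gesture at Lemmas~\ref{lma:FrameLemmaInt} and~\ref{lma:StackLemmai} without using them.

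The paper does not, in fact, modify $\ti{\phi}_\infty$. It abandons the sub-cyclic partition entirely and writes a new \cBCci-formula $\phi^*_\infty$ built directly from $\frameFlai$ (a 6-element frame yielding a bounding Jordan curve) and three copies of $\stacki$ (each yielding a cross-cut), realized in the ``comb'' configuration of Fig.~\ref{fig:InfBci}, where the repeating material lies in parallel vertical strips inside a rectangle rather than in concentric rings. That topology \emph{does} admit a separating Jordan curve for every disjoint pair (Fig.~\ref{fig:BciInftySep}). Moreover, the infinitude-of-components argument for $\phi^*_\infty$ is a fresh planar proof---alternating cross-cuts $\beta_i$ and $\alpha_i$ carving residual domains $S_i,S'_i,R_i,R'_i$---and is not obtained by citing Corollary~\ref{cor:inftyCci}, which applies to a different formula. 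So the missing piece in your proposal is exactly the choice of formula and realization that makes separability go through, and it is the hardest part of the proof, not a routine adjustment of the earlier construction.
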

\begin{proof}
We first write a \cBCci-formula, $\phi^*_\infty$ with the required
properties, and then show that all occurrences of $C$ in it can be
eliminated.  Note that $\phi^*_\infty$ is not the same as the formula
$\ti{\phi}_\infty$ constructed for the proof of
Corollary~\ref{cor:inftyCci}.  As with the proof of
Theorem~\ref{theo:cBCcn}, we equivocate between variables and the
regions to which they are assigned in some putative interpretation
over $\RC(\R^2)$. If $k$ is an integer,
$\md{k}$ indicates the value of $k$ modulo 2.

Let $s_0,\dots,s_3$, $a$, $b$, $a_{i,j}$ and $b_{i,j}$ be variables,
for $0 \leq i <2$, $1 \leq j \leq 3$. The constraints
\begin{align}
	&\frameFlai(s_0,s_1,b,s_2,a,s_3),\label{eq:BciInf1}\\
	&\stacki(s_0,b_{1,1},b_{1,2},b_{1,3},b),\label{eq:BciInf2}\\ %
	& \bigwedge_{i = 0}^1
           \stacki(b_{i,2},a_{i,1},a_{i,2},a_{i,3},a),\label{eq:BciInf3}\\ %
	& \bigwedge_{i = 0}^1
           \stacki(a_{\md{i-1},2},b_{i,1},b_{i,2},b_{i,3},b)\label{eq:BciInf4} %
\end{align}
are evidently satisfied by the arrangement of Fig.~\ref{fig:InfBci}.
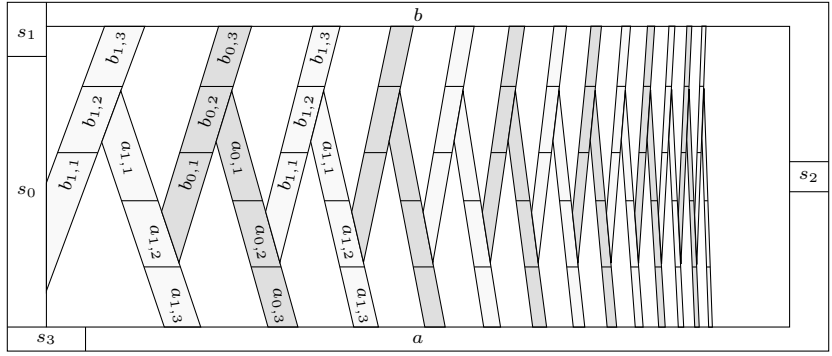
\begin{figure}[ht]
\begin{center}
\begin{tikzpicture}[xscale=1.3,yscale=0.8,
		c0/.style={fill=gray!5},
		c1/.style={fill=gray!5},
		c2/.style={fill=gray!25},
		c3/.style={fill=gray!25}
	]
		\scriptsize{
		\newcounter{md}\newcounter{mdf} \newcounter {mdd}
		\coordinate (H) at (0,5);
		\coordinate (P) at (0,0);
		\coordinate (Q) at (H);
		\coordinate (start) at (0,1.5);\coordinate (start') at ($(start)+(1,0)$);
		\coordinate (first) at (0,2.9);\coordinate (first') at ($(first)+(1,0)$);
		\coordinate (second) at (0,4);\coordinate (second') at ($(second)+(1,0)$);
		\coordinate (end) at (H);\coordinate (end') at ($(end)+(1,0)$);
		
		\coordinate (width) at (.8,0);
		
		\foreach \d in {0,...,25}
		{
			\pgfmathparse{.91^\d}			\let\factor=\pgfmathresult 
			\pgfmathparse{2*mod(\d,2)-1}	\let\prt=\pgfmathresult
			\pgfmathsetcounter{mdf}{mod(\d,4)}
			\pgfmathsetcounter{md}{mod(\d,2)}
			\pgfmathsetcounter{mdd}{mod((\d+2)/2,2)}
			
			\coordinate (Pm) at ($(intersection of P--Q and start--start')$);
			\coordinate (Qm) at ($(end-|Pm)+\factor*(width)$);
			\foreach \a/\ind in {-90/0,90/} 
			{
				\coordinate (Pm') at ($(Pm)!{.2cm*\factor}!{\prt*\a}:(Qm)$);
				\coordinate (Qm') at ($(Qm)!{.2cm*\factor}!{-\prt*\a}:(Pm)$);
				\coordinate (P\ind) at ($(intersection of Pm'--Qm' and P--Q)$);
				\coordinate (Q\ind) at ($(intersection of Pm'--Qm' and end--end')$);
			}
			\draw[c\themdf] (P0)--(Q0)--(Q)--(P)--cycle;			
			
			\draw ($(intersection of first--first' and P0--Q0)$)
				--($(intersection of first--first' and P--Q)$);
			\draw ($(intersection of second--second' and P0--Q0)$)
				--($(intersection of second--second' and P--Q)$);			
			
			\coordinate (start) at ($(H)-(start)$);\coordinate (start') at ($(start)+(1,0)$);
			\coordinate (first) at ($(H)-(first)$);\coordinate (first') at ($(first)+(1,0)$);			
			\coordinate (second) at ($(H)-(second)$);\coordinate (second') at ($(second)+(1,0)$);
			\coordinate (end) at ($(H)-(end)$);\coordinate (end') at ($(end)+(1,0)$);
			
			\ifnum \d<6
				\ifnum \themd=0
					\node[rotate=81] at ($(Pm)!.3!(Qm)$) {$b_{\themdd,1}$};
					\node[rotate=81] at ($(Pm)!.6!(Qm)$) {$b_{\themdd,2}$};
					\node[rotate=81] at ($(Pm)!.9!(Qm)$) {$b_{\themdd,3}$};
				\fi
				\ifnum \themd=1
					\node[rotate=-81] at ($(Pm)!.2!(Qm)$) {$a_{\themdd,1}$};
					\node[rotate=-81] at ($(Pm)!.6!(Qm)$) {$a_{\themdd,2}$};
					\node[rotate=-81] at ($(Pm)!.9!(Qm)$) {$a_{\themdd,3}$};
				\fi
			\fi
		}
		}
		\draw (-0.4,-0.4) rectangle ($(8,.4)+(H)$); 	\draw (0,0) rectangle ($(7.6,0)+(H)$);
		\draw (0,0)--++(-.4,0);
		\draw (H)--++(0,.4);
		\draw ($(H)!.1!(0,0)$)--++(-.4,0);
		\draw ($(.4,0)$)--++(0,-.4);
		\draw ($(H)!.45!(0,0)+(7.6,0)$) --++(.4,0);
		\draw ($(H)!.55!(0,0)+(7.6,0)$) --++(.4,0);
		\node at ($(H)!.55!(0,0)+(-.2,0)$){$s_0$};
		\node at ($(H)!.025!(0,0)+(-.2,0)$){$s_1$};
		\node at (0,-.2){$s_3$};
		\node at ($(H)+(3.8,.2)$){$b$};
		\node at ($(H)!.5!(0,0)+(7.8,0)$){$s_2$};
		\node at (3.8,-.2){$a$};
		
	\end{tikzpicture}
\end{center}
\caption{A tuple satisfying
  {\eqref{eq:BciInf1}--\eqref{eq:BciInf4}}: the pattern of components of the
  $a_{i,j}$ and $b_{i,j}$ repeats forever.}
\label{fig:InfBci}
\end{figure}%
Let $\phi^*_\infty$ be the conjunction of~\eqref{eq:BciInf1}--\eqref{eq:BciInf4}
as well as formulas
\begin{align}
	(r\cdot r'=0), \qquad  \text{ for distinct variables $r$ and $r'$.}\label{eq:BciInf5}
\end{align}
Note that the regions $a_{i,j}$ and $b_{i,j}$ have infinitely many
components. We will show that this is true for every satisfying tuple
of $\phi^*_\infty$.

By \eqref{eq:BciInf1} and Lemma~\ref{lma:FrameLemmaInt}, there is a
Jordan curve $\sigma\lambda_0\mu_0^{-1}$ whose segments are Jordan
arcs lying in the respective sets $\ti{(s_3 + s_0 + s_1)}$, $\ti{(s_2 +
a+s_3)}$ and $\ti{(s_1 + b+s_2)}$; see
Fig.~\ref{subfig:BciFrame}.  Note that all points in $s_0$ that are on
$\sigma\lambda_0\mu^{-1}_0$ are on $\sigma$.  Let $o'$ be the common
point of $\mu_0$ and $\lambda_0$ and $\tilde q_{1,1}\in\sigma
\cap\ti{s}_0$.

A word is required concerning the generality of this and other
diagrams in this section.  The reader is to imagine the figure drawn
on a \emph{spherical} canvas, of which the sheet of paper or computer
screen in front of him is simply a small part.  This sphere represents
the plane with a `point' at infinity, under the usual stereographic
projection. We do not say where this point at infinity is, other than
that it never lies on a drawn arc.  In this way, a diagram in which
the spherical canvas is divided into $n$ cells represents $n$
different configurations in the plane---one for each of the cells in
which the point at infinity may be located. For example,
Fig.~\ref{subfig:BciFrame} represents two topologically distinct
configurations in $\R^2$, and, as such, depicts the arcs $\sigma$,
$\lambda_0$ and $\mu_0$ and points $\tilde q_{1,1}$, $o'$ in full
generality.  All diagrams in this proof are to be interpreted in this
way.  We stress that our `spherical diagrams' are simply a convenient
device for using one drawing to represent several possible
configurations in the Euclidean plane: in particular, we are
interested only in the satisfiability of \cBci-formulas over
$\RC(\R^2)$, not over the regular closed algebra of any other space!

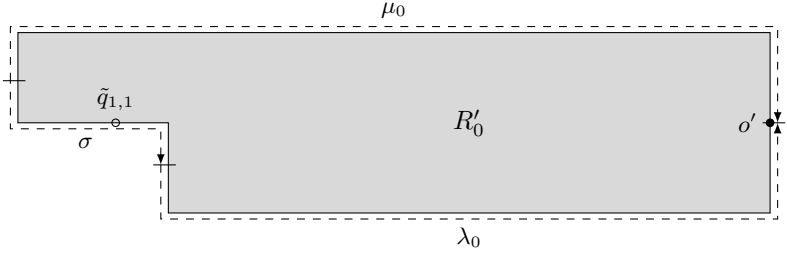
\begin{figure}[ht]
\begin{center}
	\begin{tikzpicture}[>=latex,point/.style={circle,draw=black,minimum size=1mm,inner sep=0pt},yscale=0.8]\small
        \filldraw[fill=gray!30] (-2,1.5) -- ++(2,0) -- ++(0,-1.5) -- ++(8,0) -- ++(0,3) -- ++(-10,0) -- cycle;
        \draw[thin] (7.9,1.5) -- +(0.3,0);
        \draw[thin] (-2.2,2.2) -- +(0.3,0);
        \draw[dashed,->] (-2.1, 2.2) -- (-2.1,3.1) -- (8.1,3.1) node[midway,above]{$\mu_0$} -- (8.1,1.5);
        \draw[thin] (-0.2,0.8) -- +(0.3,0);
        \draw[dashed,->] (-0.1, 0.8) -- (-0.1,-0.1) -- (8.1,-0.1) node[midway,below]{$\lambda_0$} -- (8.1,1.5);
        \node (q11t) at (-0.7,1.5) [point,label=above:{$\tilde q_{1,1}$}] {};
        \draw[dashed,<-] (-0.1, 0.8) -- (-0.1,1.4) -- (-2.1,1.4) node[midway,below]{$\sigma$} -- (-2.1,2.2);
        \node at (8,1.5) [point, fill=black, label=left:{$o'$}] {};
\node at (4,1.5) {\normalsize $R'_0$};

	\end{tikzpicture}
\end{center}
\caption{The arcs $\sigma$, $\mu_0$ and $\lambda_0$.}\label{subfig:BciFrame}
\end{figure}

Let $\tilde q_{1,3}\in \mu_0\cap\ti{b}$. By \eqref{eq:BciInf2} and
Lemma~\ref{lma:StackLemmai}, we can connect $\tilde q_{1,1}$ to
$\tilde q_{1,3}$ by a Jordan arc
$\tilde\beta_{1,1}\beta_{1,2}\tilde\beta_{1,3}$ whose segments lie in
the respective sets $\ti{(s_0+b_{1,1})}$,
$\ti{(b_{1,1}+b_{1,2}+b_{1,3})}$ and $\ti{(b_{1,3} + b)}$; see
Fig.~\ref{subfig:BciBeta0}.  Let $q_{1,1}$ be the last point on
$\tilde\beta_{1,1}$ that is on $\sigma$ and let $\beta_{1,1}$ be the
final segment of $\tilde\beta_{1,1}$ starting at $q_{1,1}$; by
Lemma~\ref{lemma:rc-intersect}, $q_{1,1}\in \ti{s}_0$. Similarly, let
$q_{1,3}$ be the first point on $\tilde\beta_{1,3}$ that is on $\mu_0$
and let $\beta_{1,3}$ be the initial segment of $\tilde\beta_{1,3}$
ending at $q_{1,3}$; by Lemma~\ref{lemma:rc-intersect}, $q_{1,3}\in
\ti{b}$. Hence, the arc $\beta_{1,1}\beta_{1,2}\beta_{1,3}$ lies in
exactly one of the regions bounded by $\sigma\lambda_0\mu_0^{-1}$: for
reasons that will emerge in the course of the proof, we denote that
region $R'_0$. Now, $\beta_{1,1}\beta_{1,2}\beta_{1,3}$ divides $R'_0$
into two sub-regions: we denote the sub-region whose boundary is
disjoint from $a$ by $S_1$, and the other sub-region by $S_1'$.  Let
$\mu_1=\beta_{1,3}\mu_0[q_{1,3},o']\subseteq \ti{(b_{1,3} + s_1 + b +
  s_2)}$. The arc $\beta_{1,2}$ contains a point $\tilde p_{1,1}\in
\ti{b}_{1,2}$; moreover, all points of
$\beta_{1,1}\beta_{1,2}\beta_{1,3}$ in $\ti{b}_{1,2}$ lie on
$\beta_{1,2}$.

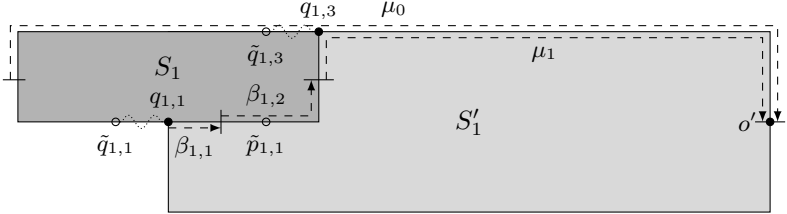
\begin{figure}[ht]
\begin{center}
	\begin{tikzpicture}[>=latex,point/.style={circle,draw=black,minimum size=1mm,inner sep=0pt},yscale=0.8]\small
        \filldraw[fill=gray!30] (0,0) rectangle +(8,3);
        \filldraw[fill=gray!60] (-2,1.5) rectangle +(4,1.5);
        \node (q11) at (0,1.5) [point,fill=black,label=above:{$q_{1,1}$}] {};
        \node (q11t) at (-0.7,1.5) [point,label=below:{$\tilde q_{1,1}$}] {};
		\draw[densely dotted,decorate,decoration=snake] (q11t) -- (q11);
        \node (q13) at (2,3) [point,fill=black,label=above:{$q_{1,3}$}] {};
        \node (q13t) at (1.3,3) [point,label=below:{$\tilde q_{1,3}$}] {};
		\draw[densely dotted,decorate,decoration=snake] (q13t) -- (q13);
        \node (p11t) at (1.3,1.5) [point,label=below:{$\tilde p_{1,1}$}] {};
        \node at (8,1.5) [point, fill=black, label=left:{$o'$}] {};
        \draw[thin] (0.7,1.3) -- +(0,0.4);
        \draw[thin] (1.8,2.2) -- +(0.4,0);
        \draw[dashed,->] (0.7, 1.6) -- (1.9,1.6) node[midway,above]{$\beta_{1,2}$} -- (1.9,2.2);
        \draw[dashed,->] (0, 1.4) -- (0.7,1.4) node[midway,below]{$\beta_{1,1}$};
        \draw[dashed,->] (2.1, 2.2) -- (2.1,2.9) -- (7.9,2.9) node[midway,below]{$\mu_1$} -- (7.9,1.5);
        \draw[thin] (7.8,1.5) -- +(0.4,0);
        \draw[thin] (-2.2,2.2) -- +(0.3,0);
        \draw[dashed,->] (-2.1, 2.2) -- (-2.1,3.1) -- (8.1,3.1) node[midway,above]{$\mu_0$} -- (8.1,1.5);
        \node at (0,2.4) {\normalsize $S_1$};
        \node at (4,1.5) {\normalsize $S_1'$};
%
%
	\end{tikzpicture}
\end{center}
\caption{The regions $S_1$ and $S_1'$.}\label{subfig:BciBeta0}
\end{figure}

We will now construct a cross-cut $\alpha_{1,1}\alpha_{1,2}
\alpha_{1,3}$ in $S_1'$. Let $\tilde p_{1,3}$ be a point in
$\lambda_0\cap\ti{a}$. By \eqref{eq:BciInf3} and
Lemma~\ref{lma:StackLemmai}, we can connect $\tilde p_{1,1}$ to
$\tilde p_{1,3}$ by a Jordan arc
$\tilde\alpha_{1,1}\alpha_{1,2}\tilde\alpha_{1,3}$ whose segments lie
in the respective sets $\ti{(b_{1,2}+a_{1,1})}$,
$\ti{(a_{1,1}+a_{1,2}+a_{1,3})}$ and $\ti{(a_{1,3} + a)}$; see
Fig.~\ref{subfig:BciAlpha0}.  Let $p_{1,1}$ be the last point on
$\tilde\alpha_{1,1}$ that is on $\beta_{1,2}$ and let $\alpha_{1,1}$
be the final segment of $\tilde\alpha_{1,1}$ starting at $p_{1,1}$; by
Lemma~\ref{lemma:rc-intersect}, $p_{1,1}\in \ti{b}_{1,2}$.  Similarly,
let $p_{1,3}$ be the first point on $\tilde\alpha_{1,3}$ that is on
$\lambda_0$ and let $\alpha_{1,3}$ be the initial segment of
$\tilde\alpha_{1,3}$ ending at $p_{1,3}$; by
Lemma~\ref{lemma:rc-intersect}, $p_{1,3}\in \ti{a}$.  Since
$\alpha_{1,1}\alpha_{1,2}\alpha_{1,3}$ does not intersect the
boundaries of $S_1$ and $S_1'$ except at its endpoints, it is a
cross-cut in one of these regions. Moreover, that region has to be
$S_1'$ since the boundary of $S_1$ is disjoint from $a$. So,
$\alpha_{1,1}\alpha_{1,2}\alpha_{1,3}$ divides $S_1'$ into two
sub-regions: we denote the sub-region whose boundary is disjoint from
$b$ by $R_1$, and the other sub-region by $R'_1$.  Let
$\lambda_1=\alpha_{1,3}\lambda_0[p_{1,3},o']\subseteq \ti{(a_{1,3} +
  s_3 + a + s_2)}$.  The arc $\alpha_{1,2}$ contains a point $\tilde
q_{2,1}\in \ti{a}_{1,2}$; moreover, all points of
$\alpha_{1,1}\alpha_{1,2}\alpha_{1,3}$ in $\ti{a}_{1,2}$ lie on
$\alpha_{1,2}$.

\begin{figure}[ht]
\begin{center}
	\begin{tikzpicture}[>=latex,point/.style={circle,draw=black,minimum size=1mm,inner sep=0pt},yscale=0.8]\small
        \filldraw[fill=gray!30] (0,0) rectangle +(8,3);
        \filldraw[fill=gray!60] (-2,1.5) rectangle +(4,1.5);
        \filldraw[fill=gray!5] (0,0) rectangle +(4,1.5);
        \node (p11) at (2,1.5) [point,fill=black,label=below:{$p_{1,1}$}] {};
        \node (p11t) at (1.3,1.5) [point,label=above:{$\tilde p_{1,1}$}] {};
		\draw[densely dotted,decorate,decoration=snake] (p11t) -- (p11);
        \node (p13) at (4,0) [point,fill=black,label=below:{$p_{1,3}$}] {};
        \node (p13t) at (3.3,0) [point,label=above:{$\tilde p_{1,3}$}] {};
		\draw[densely dotted,decorate,decoration=snake] (p13t) -- (p13);
        \node (q21t) at (3.3,1.5) [point,label=above:{$\tilde q_{2,1}$}] {};
        \node at (8,1.5) [point, fill=black, label=left:{$o'$}] {};
        \draw[thin] (2.7,1.7) -- +(0,-0.4);
        \draw[thin] (3.8,0.8) -- +(0.4,0);
        \draw[dashed,->] (2.7, 1.4) -- (3.9,1.4) node[midway,below]{$\alpha_{1,2}$} -- (3.9,0.8);
        \draw[dashed,->] (2, 1.6) -- (2.7,1.6) node[midway,above]{$\alpha_{1,1}$};
        \draw[dashed,->] (4.1, 0.8) -- (4.1,0.1) -- (7.9,0.1) node[midway,above]{$\lambda_1$} -- (7.9,1.5);
        \draw[thin] (7.8,1.5) -- +(0.4,0);
        \draw[thin] (-0.2,0.8) -- +(0.3,0);
        \draw[dashed,->] (-0.1, 0.8) -- (-0.1,-0.1) -- (8.1,-0.1) node[near end,below]{$\lambda_0$} -- (8.1,1.5);
        \node at (0,2.4) {\normalsize $S_1$};
        \node at (2,0.6) {\normalsize $R_1$};
        \node at (6,1.5) {\normalsize $R'_1$};
	\end{tikzpicture}
\end{center}
\caption{The regions $S_1$, $R_1$ and $R'_1$.}\label{subfig:BciAlpha0}
\end{figure}
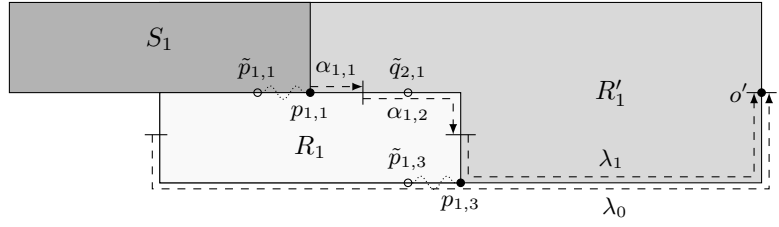

We can now forget about the region $S_1$, and start constructing a
cross-cut $\beta_{2,1}\beta_{2,2}\beta_{2,3}$ in $R'_1$. As before,
let $\tilde q_{2,3}\in\mu_1\cap\ti{b}$.  Then there is a Jordan arc
$\tilde\beta_{2,1}\beta_{2,2}\tilde\beta_{2,3}$ connecting $\tilde q_{2,1}$ to $\tilde
q_{2,3}$ such that its segments are contained in the respective sets
$\ti{(a_{1,2} + b_{0,1})}$, $\ti{(b_{0,1}+b_{0,2}+b_{0,3})}$ and
$\ti{(b_{0,3} + b)}$. As before, we choose $\beta_{2,1}\subseteq
\tilde\beta_{2,1}$ and $\beta_{2,3}\subseteq \tilde\beta_{2,3}$ so
that the Jordan arc $\beta_{2,1}\beta_{2,2}\beta_{2,3}$ without its
endpoints is disjoint from the boundaries of $R_1$ and $R'_1$. Hence
$\beta_{2,1}\beta_{2,2}\beta_{2,3}$ has to be a cross-cut in $R_1$ or
$R'_1$, and since the boundary of $R_1$ is disjoint from $b$ it has to
be a cross-cut in $R'_1$. So, $\beta_{2,1}\beta_{2,2}\beta_{2,3}$
separates $R'_1$ into two regions $S_2$ and $S_2'$ so that the
boundary of $S_2$ is disjoint from $a$.  Let
$\mu_2=\beta_{2,3}\mu_1[q_{2,3},o']\subseteq \ti{(b_{0,3} + b_{1,3} +
  s_1 + b+ s_2)}$.  Now, we can ignore the region $R_1$, and reasoning
as before we can construct a cross-cut
$\alpha_{2,1}\alpha_{2,2}\alpha_{2,3}$ in $S_2'$ dividing it into two
sub-regions $R_2$ and $R'_2$.

Evidently, this process continues forever: $R'_{i-1}$ is divided into
$S_i$ and $S'_i$ and $S'_i$ is divided into $R_i$ and $R'_i$. Now, the
boundary of $S_i$ contains the arc $\beta_{i,2}$, whence the interior
of $S_i$ contains points of $b_{\md{i},2}$. On the other hand, $S_i$
certainly lies outside $S'_{i+1}$; moreover, 
$\delta S'_{i+1}$ is a subset of $\alpha_{i,2} \cup \beta_{i+1,1} \cup \beta_{i+1,2}
  \cup \mu_{i+1} \cup \lambda_i$, whence
\begin{multline*}
\delta S'_{i+1}  \subseteq
\ti{(a_{\md{i},1} + a_{\md{i},2} + a_{\md{i},3})} \cup
   \ti{(a_{\md{i},2} + b_{\md{i+1},1})} \cup \\
 \ti{(b_{\md{i+1},1} + b_{\md{i+1},2} + b_{\md{i+1},3})} \cup{}\\
   \ti{(b_{0,3} + b_{1,3} + b + s_1 + s_2)} \cup 
      \ti{(a_{0,3} + a_{1,3} + a + s_2 + s_3)}.
\end{multline*}
Hence $\delta S'_{i+1}$ contains no points of $b_{\md{i},2}$. Yet
$S'_{i+1}$ evidently includes all the regions $S_{i +2k}$ for all $k
\geq 1$, each of which contains points of $b_{\md{i},2}$. It follows
that $b_{\md{i},2}$ has infinitely many components.

\begin{figure}[ht]
\begin{center}
\begin{tikzpicture}[xscale=1.3,yscale=0.8,
		c0/.style={fill=gray!5},
		c1/.style={fill=gray!5},
		c2/.style={fill=gray!25},
		c3/.style={fill=gray!25}
	]
		\clip (-.5,.5) rectangle (8,4.2);
		\scriptsize{
		\coordinate (H) at (0,5);
		\coordinate (P) at (0,0);
		\coordinate (Q) at (H);
		\coordinate (start) at (0,1.5);\coordinate (start') at ($(start)+(1,0)$);
		\coordinate (first) at (0,2.9);\coordinate (first') at ($(first)+(1,0)$);
		\coordinate (second) at (0,4);\coordinate (second') at ($(second)+(1,0)$);
		\coordinate (end) at (H);\coordinate (end') at ($(end)+(1,0)$);
		
		\coordinate (width) at (.8,0);
		
		\foreach \d in {0,...,25}
		{
			\pgfmathparse{.91^\d}			\let\factor=\pgfmathresult 
			\pgfmathparse{2*mod(\d,2)-1}	\let\prt=\pgfmathresult
			\pgfmathsetcounter{mdf}{mod(\d,4)}
			\pgfmathsetcounter{md}{mod(\d,2)}
			\pgfmathsetcounter{mdd}{mod((\d+2)/2,2)}
			
			\coordinate (Pm) at ($(intersection of P--Q and start--start')$);
			\coordinate (Qm) at ($(end-|Pm)+\factor*(width)$);
			\foreach \a/\ind in {-90/0,90/} 
			{
				\coordinate (Pm') at ($(Pm)!{.2cm*\factor}!{\prt*\a}:(Qm)$);
				\coordinate (Qm') at ($(Qm)!{.2cm*\factor}!{-\prt*\a}:(Pm)$);
				\coordinate (P\ind) at ($(intersection of Pm'--Qm' and P--Q)$);
				\coordinate (Q\ind) at ($(intersection of Pm'--Qm' and end--end')$);
			}
			\draw[c\themdf] (P0)--(Q0)--(Q)--(P)--cycle;			
			
			\draw ($(intersection of first--first' and P0--Q0)$)
				--($(intersection of first--first' and P--Q)$);
			\draw ($(intersection of second--second' and P0--Q0)$)
				--($(intersection of second--second' and P--Q)$);			
			
			\coordinate (start) at ($(H)-(start)$);\coordinate (start') at ($(start)+(1,0)$);
			\coordinate (first) at ($(H)-(first)$);\coordinate (first') at ($(first)+(1,0)$);			
			\coordinate (second) at ($(H)-(second)$);\coordinate (second') at ($(second)+(1,0)$);
			\coordinate (end) at ($(H)-(end)$);\coordinate (end') at ($(end)+(1,0)$);
			
			\ifnum \d<6
				\ifnum \themdd=1
					\ifnum \themd=0
						\node[rotate=81] at ($(Pm)!.6!(Qm)$) {$b_{\themdd,2}$};
					\fi
					\ifnum \themd=1
						\node[rotate=-81] at ($(Pm)!.6!(Qm)$) {$a_{\themdd,2}$};
					\fi
				\fi
			\fi
		}
		}
		\draw[thick] (4,1.6) ellipse (3.8 and 1.1);
		\draw (-0.4,-0.4) rectangle ($(8,.4)+(H)$); 	\draw (0,0) rectangle ($(7.6,0)+(H)$);
		\draw (0,0)--++(-.4,0);
		\draw (H)--++(0,.4);
		\draw ($(H)!.1!(0,0)$)--++(-.4,0);
		\draw ($(.4,0)$)--++(0,-.4);
		\draw ($(H)!.45!(0,0)+(7.6,0)$) --++(.4,0);
		\draw ($(H)!.55!(0,0)+(7.6,0)$) --++(.4,0);
		\node at ($(H)!.55!(0,0)+(-.2,0)$){$s_0$};
		\node at ($(H)!.025!(0,0)+(-.2,0)$){$s'$};
		\node at (0,-.2){$a'$};
		\node at ($(H)+(3.8,.2)$){$b$};
		\node at ($(H)!.5!(0,0)+(7.8,0)$){$s_2$};
		\node at (3.8,-.2){$a$};
	\end{tikzpicture}
\end{center}
\caption {Separating $a_{1,2}$ from $b_{1,2}$ by a Jordan curve.}
\label{fig:BciInftySep}
\end{figure}
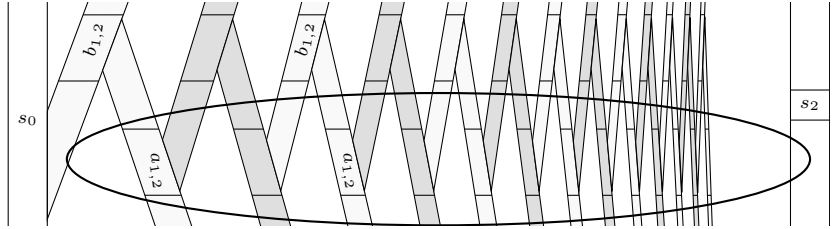

So far we know that the \cBCci-formula $\phi^*_{\infty}$ forces
infinitely many components.  Now we replace every conjunct in
$\phi^*_{\infty}$ of the form $\lnot C(s_1,s_2)$ by $\noncontacti(\vec{r}) \land (s_1 \leq r_1) \land (s_2 \leq r_2)$,
where $\vec{r}$ is a vector of fresh variables. By Lemma~\ref{lma:Cci2BciStar}~(\emph{i}), the resulting formula
entails $\phi^*_{\infty}$. Conversely, to show that the formula is satisfiable, we apply Lemma~\ref{lma:Cci2BciStar}~(\emph{ii}): it suffices
to separate every pair of disjoint regions in
Fig.~\ref{fig:InfBci} by a Jordan curve. Such a Jordan curve is shown in
Fig.~\ref{fig:BciInftySep} for $b_{1,2}$ and $a_{1,2}$. Other pairs of disjoint regions
are treated analogously.
\end{proof}

\subsection{Undecidability in the plane}
\label{subsec:undecPlane}
We now return to the question of decidability.  We know from
\SECT\ref{sec:undecidability} that $\Sat(\cL,\RCP(\R^2))$ is
undecidable, where $\cL$ is any of the languages \cBc, \cBCc{} or
\cBCci. We proceed to establish the undecidability of the problems
$\Sat(\cL,\RC(\R^2))$, where $\cL$ is any of the languages \cBc,
\cBci, \cBCc{} or \cBCci, and also of the problem
$\Sat(\cBci,\RCP(\R^2))$. Most of the techniques required have been
rehearsed in the proof of Theorem~\ref{theo:inftyBci}.  However, we
face a new difficulty. In the language \cBci, we can say that the
\emph{interior} of a region (rather than merely the region itself) is
connected.  Since, for open sets, connectedness implies
arc-connectedness, we were able, in the proof of
Theorem~\ref{theo:inftyBci}, to write formulas enforcing various
arrangements of Jordan arcs in the plane. When dealing with \cBc{} and
\cBCc, however, we can speak merely of the connectedness of a region
(rather than of its interior), which, for elements of $\RC(\R^2)$ does
not imply arc-connectedness; this complicates the business of
enforcing the requisite arrangements of Jordan arcs.

To overcome this difficulty, we employ the
technique of `wrapping' a region inside two bigger ones.  If $a$ and
$b$ are regions such that $\neg C(a, -b)$, we write $a \ll b$
(pronounced: $a$ \emph{is right inside} $b$).  Let us say that a
\emph{3-region} is a triple $\tseq{a} =
(a,\intermediate{a},\inner{a})$ of elements of $\RC(\R^2)$ such that
$0 \neq \inner{a} \ll \intermediate{a} \ll a$.  It helps to think of
$\tseq{a} = (a,\intermediate{a},\inner{a})$ as consisting of a kernel,
$\inner{a}$, encased in two protective layers: an inner shell,
$\intermediate{a}$ and an outer shell, $a$. As a simple example,
consider the sequence of 3-regions $\tseq{a}_1, \tseq{a}_2,
\tseq{a}_3$ depicted in Fig.~\ref{fig:stack}, where the kernels form a
sequence of externally touching polygons.
\begin{figure}[ht]
\begin{center}
\begin{tikzpicture}[	s0/.style={thick,fill=gray!40,fill opacity=0.5},			
						s1/.style={fill=gray!20,fill opacity=0.3},			
						s2/.style={thin,fill=white,fill opacity=0}] 		
	
	\newcount\mod
	\newcount\jj
	
	\coordinate (H0) at (2,0); 
	\coordinate (V0) at (0,0.8); 
	\coordinate (H1) at (2,0); 
	\coordinate (V1) at (0,.4); 
	
	\coordinate (HW) at (.4,0); 
	\coordinate (VW) at (0,.2); 
	\coordinate (P) at (0,0);
	
	\foreach \j in {2,1,0}
	{
	   \foreach \i in {1,2,3}
		{			
		    \pgfmathsetcount{\mod}{mod(\i,2)}
			\coordinate (H) at ($\j*(0.2,0)$); 
			\coordinate (V) at ($\j*(0,0.33)$); 
			
			\filldraw[s\j] ($(P)-(H)-(V)-.5*(V\the\mod)+(VW)$)
				--++($(V\the\mod)+2*(V)-2*(VW)$)
				--++($(HW)+(VW)$)
				--++($(H\the\mod)+2*(H)-2*(HW)$)
				--++($(HW)-(VW)$)
				--++($-1*(V\the\mod)-2*(V)+2*(VW)$)
				--++($-1*(HW)-(VW)$)
				--++($-1*(H\the\mod)-2*(H)+2*(HW)$)
				--++($-1*(HW)+(VW)$);			
			\ifnum \j=0 \node at ($(P)-0.5*(V\the\mod)+0.5*(H\the\mod)+(0,.2)$) {\small $\inner{a}_\i$}; \fi
			\ifnum \j=1 \node at ($(P)-0.5*(V\the\mod)+0.5*(H\the\mod)-.5*(V)+(0.,-0.05)$) {\small $\intermediate{a}_\i$}; \fi
			\ifnum \j=2 \node at ($(P)-0.5*(V\the\mod)+0.5*(H\the\mod)-.75*(V)+(0.,-0.05)$) {\small $a_\i$}; \fi
		    \coordinate (P) at ($(P)+(H\the\mod)$);
		}
	\coordinate (P) at (0,0);
	}
\end{tikzpicture}
\end{center}
\caption{A chain of 3-regions satisfying $\stack(\tseq{a}_1,
\tseq{a}_2, \tseq{a}_3)$.
}\label{fig:stack}
\end{figure}
When describing arrangements of 3-regions, we use the variable
$\tseq{r}$ for the triple of variables $(r, \intermediate{r},
\inner{r})$, taking the following conjuncts to be implicit:
\begin{equation*}
(\inner{r} \neq 0) \ \ \land \ \ (\inner{r} \ll \intermediate{r}) \ \ \land \ \ (\intermediate{r} \ll r).
\end{equation*}
In the sequel, when depicting arrangements of 3-regions, we standardly
draw only the kernels of these 3-regions, leaving the reader to
imagine the encasing layers of shell. (This is simply to reduce
diagrammatic clutter.)

For $n \geq 2$, define the formula
\begin{multline*}
\stack(\tseq{r}_1, \dots,
\tseq{r}_n) \ \ =\ \ \bigwedge_{i = 1}^{n-1}
     c(\intermediate{r}_i + \inner{r}_{i + 1} + \cdots + \inner{r}_n) \ \  \land \ \ c(\intermediate{r}_n) \ \
\land\ \
\bigwedge_{\begin{subarray}{c}1 \leq i, j \leq n\\j-i > 1\end{subarray}} \neg C(r_i,r_j).
\end{multline*}
(Observe that the term $c(\intermediate{r}_i + \inner{r}_{i + 1} +
\cdots + \inner{r}_n)$ features the {\em inner shell} of $\tseq{r}_1$,
and the {\em kernels} of $\tseq{r}_2, \dots, \tseq{r}_n$.) Thus, the
triple of 3-regions $(\tseq{a}_1, \tseq{a}_2, \tseq{a}_3)$ in
Fig.~\ref{fig:stack} satisfies $\stack(\tseq{r}_1, \tseq{r}_2,
\tseq{r}_3)$.  This formula allows us to construct sequences of arcs
with useful properties.
\begin{lemma}
\label{lma:stackLemma}
Fix $n \geq 2$, and let $\tseq{a}_1,\dots,\tseq{a}_n$ be a tuple of
3-regions satisfying $\stack(\tseq{r}_1,\dots,\tseq{r}_n)$. Then, for
every point $p_0\in \intermediate{a}_1$ and every point $p_n \in
\inner{a}_n$, there exist points $p_1, \dots, p_{n-1}$ and Jordan arcs
$\alpha_1, \dots, \alpha_n$ such that\textup{:}
\textup{(}i\textup{)} $\alpha = \alpha_1 \cdots \alpha_n$ is a Jordan arc from $p_0$ to
		$p_n$\textup{;} \textup{(}ii\textup{)} $p_i \in \intermediate{a}_{i+1} \cap \alpha_i$, for all $1 \leq i < n$\textup{;} and
	\textup{(}iii\textup{)}	$\alpha_i \subseteq a_i$, for all $1 \leq i \leq n$.
\end{lemma}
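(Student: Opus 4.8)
The plan is to build the Jordan arc one shell at a time, using the three-layer structure of the $3$-regions to turn the connectedness supplied by the $c$-conjuncts of $\stack$ into arc-connectedness.

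\smallskip
\noindent\textbf{Step 1 (wrapping).} First I would record the elementary consequences of the ``right inside'' relation. From the proof of Lemma~\ref{lemma:rc-intersect}, $-s=T\setminus\ti s$ for every regular closed $s$, so $a\ll b$ is equivalent to $a\subseteq\ti b$; hence the implicit conjuncts of each $\tseq r_j$ give $\inner a_j\subseteq\ti{(\intermediate a_j)}$ and $\intermediate a_j\subseteq\ti a_j$. For $1\le i\le n-1$ put $V_i:=\ti a_i\cup\ti{(\intermediate a_{i+1})}\cup\cdots\cup\ti{(\intermediate a_n)}$, an open set, and $K_i:=\intermediate a_i+\inner a_{i+1}+\cdots+\inner a_n$. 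The conjunct $c(\intermediate r_i+\inner r_{i+1}+\cdots+\inner r_n)$ of $\stack$ makes $K_i$ connected, and the wrapping inclusions give $K_i\subseteq V_i$; so $K_i$ lies in a single component $V_i'$ of $V_i$, which, being an open connected subset of $\R^2$, is arc-connected.

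\smallskip
\noindent\textbf{Step 2 (extracting the arcs).} I would then proceed for $i=1,\dots,n-1$. Since $p_{i-1}\in\intermediate a_i\subseteq K_i\subseteq V_i'$ and $p_n\in\inner a_n\subseteq K_i\subseteq V_i'$, there is a Jordan arc $\gamma_i\subseteq V_i'$ from $p_{i-1}$ to $p_n$. Let $t_i$ be the least parameter at which $\gamma_i$ leaves $\ti a_i$ (for $i=n-1$ it can happen that $\gamma_{n-1}$ never leaves $\ti a_{n-1}$ because $a_{n-1}$ and $a_n$ may meet; then take $t_i$ to be the least parameter at which $\gamma_i$ enters $\intermediate a_n$ instead), set $p_i:=\gamma_i(t_i)$ and $\alpha_i:=\gamma_i\restriction[0,t_i]$. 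On $[0,t_i)$ the arc stays in $\ti a_i$, so $\alpha_i\subseteq\ti a_i\cup\{p_i\}\subseteq a_i$, and $\alpha_i$ runs from $p_{i-1}$ to $p_i$; also $p_i\in\overline{\ti a_i}=a_i$, while $p_i\in V_i\setminus\ti a_i\subseteq\bigcup_{j\ge i+1}\ti{(\intermediate a_j)}$. For $j\ge i+2$ the set $\ti{(\intermediate a_j)}\subseteq a_j$ is disjoint from $a_i$ by the conjunct $\neg C(r_i,r_j)$ of $\stack$, so necessarily $p_i\in\ti{(\intermediate a_{i+1})}\subseteq\intermediate a_{i+1}$, which is (ii). For the last arc, the conjunct $c(\intermediate r_n)$ makes $\intermediate a_n$ connected and $\intermediate a_n\subseteq\ti a_n$, so $\intermediate a_n$ sits in a single, hence arc-connected, component of $\ti a_n$; since $p_{n-1}\in\intermediate a_n$ and $p_n\in\inner a_n\subseteq\intermediate a_n$ lie there, I obtain a Jordan arc $\alpha_n\subseteq\ti a_n\subseteq a_n$ from $p_{n-1}$ to $p_n$, which is (iii) for $i=n$.

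\smallskip
\noindent\textbf{Step 3 (injectivity) --- the hard part.} What remains is to turn $\alpha_1\cdots\alpha_n$ into a genuine Jordan arc, and this is where I expect the real work to lie. Non-consecutive arcs cause no trouble, since $\alpha_i\subseteq a_i$ and $\alpha_j\subseteq a_j$ are disjoint whenever $|i-j|>1$; but $\stack$ only forbids $C(r_i,r_j)$ for $|i-j|>1$, so consecutive outer shells may overlap and $\alpha_i,\alpha_{i+1}$ may meet at points other than their shared endpoint. Each such pair can be disentangled by the trimming device used throughout this section: cut $\alpha_{i+1}$ back to its final segment beginning at the last of its points lying on $\alpha_i$, and shorten $\alpha_i$ accordingly. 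The delicate point is to verify that after these finitely many local repairs each surviving $i$-th sub-arc still contains a point of $\intermediate a_{i+1}$; here the two extra layers earn their keep, since $\intermediate a_{i+1}\subseteq\ti a_{i+1}$ forces every stray intersection of $\alpha_i$ with $\alpha_{i+1}$ into $\ti a_i\cap\ti a_{i+1}$, and by additionally routing $\gamma_i$ in Step~2 through a chosen point of $\inner a_{i+1}\subseteq\ti{(\intermediate a_{i+1})}$ one can keep such intersections off the portion of $\alpha_i$ that realizes $p_i\in\intermediate a_{i+1}$. Once this bookkeeping is carried out, $\alpha=\alpha_1\cdots\alpha_n$ is the desired Jordan arc satisfying (i)--(iii).
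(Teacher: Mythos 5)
Your Steps 1 and 2 are correct and run essentially parallel to the paper's construction: you extract candidate arcs $\beta_i$ (your $\gamma_i$) from the interior-connectedness supplied by the $c$-conjuncts of $\stack$, truncate at the first entry into the next shell, and use the non-contact constraints $\neg C(a_i,a_j)$, $|i-j|>1$, to identify the landing point $p_i$ with a point of $\intermediate{a}_{i+1}$. The real divergence from the paper, and the gap, is in Step 3.

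Your Step 3 proposes the naive trimming device (take the final segment of $\alpha_{i+1}$ from the last intersection $q$ with $\alpha_i$, and shorten $\alpha_i$ to end at $q$). This move destroys property (\textit{ii}): after trimming, $\alpha_i$ ends at $q$, not at $p_i$, and $q$ is merely a crossing point somewhere in $\ti{a}_i\cap\ti{a}_{i+1}$ with no reason to lie in $\intermediate{a}_{i+1}$, while $p_i$ --- the only point of $\alpha_i$ known to lie in $\intermediate{a}_{i+1}$ --- may sit on the part of $\alpha_i$ that you cut away. You notice that this is the ``delicate point,'' but your proposed repair (routing $\gamma_i$ through a chosen point of $\inner{a}_{i+1}$) does not address it: the trim happens at the last intersection $q$, which is determined by $\alpha_{i+1}$, and routing $\alpha_i$ through any particular point does not prevent $q$ from falling on the earlier part of $\alpha_i$ (and in any case $\alpha_i$ is confined to $\ti{a}_i\cup\{p_i\}$, so it cannot wander into $\inner{a}_{i+1}$ before $p_i$).

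The paper's resolution is different, and this is the step you're missing. Rather than \emph{shortening} $\alpha_i$, the paper \emph{extends} it: if the next candidate arc $\beta'_{i+1}$ (from $p_i$ to $p_n$) has last intersection $q_i\neq p_i$ with $\beta_i[v_{i-1},p_i]$, one inserts a detour $\gamma_i\subseteq\ti{a}_i\cup\{p_i\}$ from $p_i$ to a nearby point $v_i$ on $\beta'_{i+1}[q_i,p_n]$, meeting both arcs only at its endpoints, and sets $\alpha_i=\beta_i[v_{i-1},p_i]\gamma_i$ and $\beta_{i+1}=\beta'_{i+1}[v_i,p_n]$. Thus $p_i$ remains an interior point of $\alpha_i$ (not an endpoint) satisfying (\textit{ii}), the two arcs meet only at $v_i$, and the disentangling is carried out incrementally at each step rather than once at the end, which is what makes the bookkeeping close. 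To repair your proof you would need to replace the trimming in Step 3 with this append-a-detour device, performed in Step 2 before constructing the next $\gamma_{i+1}$.
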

\begin{proof}
Let $v_0 = p_0$. Since $v_0\in \ti{a}_1$, $p_n\in
\ti{\intermediate{a}}_n$ and $\intermediate{a}_1 + \inner{a}_2 +
\cdots + \inner{a}_n$ is connected, we see that $v_0$ and $p_0$ lie in
the same component of $\ti{(a_1 + \intermediate{a}_2 + \cdots +
  \intermediate{a}_n)}$. So let $\beta_1$ be a Jordan arc connecting
$v_0$ to $p_n$ in $\ti{(a_1 + \intermediate{a}_2 + \cdots +
  \intermediate{a}_n)}$. Since $a_1$ is disjoint from all the $a_i$
except $a_2$, let $p_1$ be the first point of $\beta_1$ lying in
$\intermediate{a}_2$, so $\beta_1[v_0,p_1]\subseteq \ti{a}_1\cup \{
p_1\}$, i.e., the arc $\beta_1[v_0,p_1]$ is either included in
$\ti{a}_1$, or is an end-cut of $\ti{a}_1$. (We do not rule out $v_0 =
p_1$.) Similarly, let $\beta'_2$ be a Jordan arc connecting $p_1$ to
$p_n$ in $\ti{(a_2 + \intermediate{a}_3 + \cdots +
  \intermediate{a}_n)}$, and let $q_1$ be the last point of $\beta'_2$
lying on $\beta_1[v_0,p_1]$. If $q_1 = p_1$, then set $v_1 = p_1$,
$\alpha_1 = \beta_1[v_0,p_1]$, and $\beta_2 = \beta'_2$, so that the
endpoints of $\beta_2$ are $v_1$ and $p_n$.  Otherwise, we have $q_1
\in \ti{a}_1$.  We can now construct an arc $\gamma_1 \subseteq
\ti{a}_1 \cup \{ p_1 \}$ from $p_1$ to a point $v_1$ on
$\beta'_2[q_1,p_n]$, such that $\gamma_1$ intersects
$\beta_1[v_0,p_1]$ and $\beta'_2[q_1,p_n]$ only at its endpoints,
$p_1$ and $v_1$; see Fig.~\ref{fig:stackLemma}. Let $\alpha_1 =
\beta_1[v_0,p_1]\gamma_1$, and let $\beta_2 = \beta'_2[v_1,p_n]$.

\begin{figure}[ht]
\begin{center}
\begin{tikzpicture}[>=latex, point/.style={circle,draw=black,minimum size=1mm,inner sep=0pt}]
    \node[point,label=below:{\small $v_0 = p_0$}] (P0) at (-0.25,0) {};
    \node[point,label=above right:{\small $q_1$}] (Q1) at (1,0) {};
    \node[point,label=below:{\small $v_1$}] (Q1') at (1.5,-1) {};
    \node[point,label=above:{\small $p_1$}] (P1) at (3,0) {};
    \draw[->,thick] (P0) to node[above] {\footnotesize $\beta_1$} (Q1);
    \draw[->,thick] (Q1) to node[below] {\footnotesize $\beta_1$} (P1);
    \draw[->,dashed] (P1) to node[above] {\footnotesize $\beta_1$} (4.5,0);
	\draw[->,ultra thin] (P1) -- ++(.5,0.5) --++(-.5,.5)  -- (2.3,1);
	\draw[ultra thin] (2.3,1) to node[below, very near start]{\footnotesize $\beta_2'$} (1.2,1) -- ++(-.5,-0.5) -- (Q1);
    \draw[->,ultra thin] (Q1) to node[left]{\footnotesize $\beta_2'$} (Q1');
	\draw[->,thick] (P1) to node[right]{\footnotesize $\gamma_1$} (Q1');
	
	\draw[->,thick] (Q1') to node[below] {\footnotesize $\beta_2$} (3,-1);
    \node at (1.7,-1.8) {$\alpha_1 = \beta_1[v_0,p_1]\gamma_1$};
    \node[point,label=below:{\small $v_{n-2}$}] (Qn2) at (5,-1) {};
    \node[point,label=above right:{\small $q_{n-1}$}] (Qn1) at (6.5,-1) {};
    \node[point,label=below:{\small $v_{n-1}$}] (Qn1') at (7,-2) {};
    \node[point,label=above:{\small $p_{n-1}$}] (Pn1) at (8.5,-1) {};
    \node[point,label=above:{\small $p_n$}] (Pn) at (10,-1) {};
    \draw[->,thick] (Qn2) to node[above] {\footnotesize $\beta_{n-1}$} (Qn1);
    \draw[->,thick] (Qn1) to node[below] {\footnotesize $\beta_{n-1}$} (Pn1);
    \draw[->,dashed] (Pn1) to node[above] {\footnotesize $\beta_{n-1}$} (Pn);
	\draw[->,ultra thin] (Pn1) -- ++(.5,0.5) --++(-.5,.5)  -- (7.8,0);
	\draw[ultra thin] (7.8,0) to node[below, very near start]{\footnotesize $\beta_n'$} (6.7,0) -- ++(-.5,-0.5) -- (Qn1);
    \draw[->,ultra thin] (Qn1) to node[left]{\footnotesize $\beta_n'$} (Qn1');
	\draw[->,thick] (Pn1) to node[right]{\footnotesize $\gamma_{n-1}$} (Qn1');
	\draw[->,thick] (Qn1') -- ++(1.5,0) to node[below] {\footnotesize $\beta_n$} (Pn);
    \node at (4.5,-2.2) {$\alpha_i = \beta_i[v_{i-1},p_i]\gamma_i$};
    \node at (8,-2.7) {$\alpha_n = \beta_n$};
\end{tikzpicture}
\end{center}
\caption{Proof of Lemma~\ref{lma:stackLemma}.}\label{fig:stackLemma}
\end{figure}
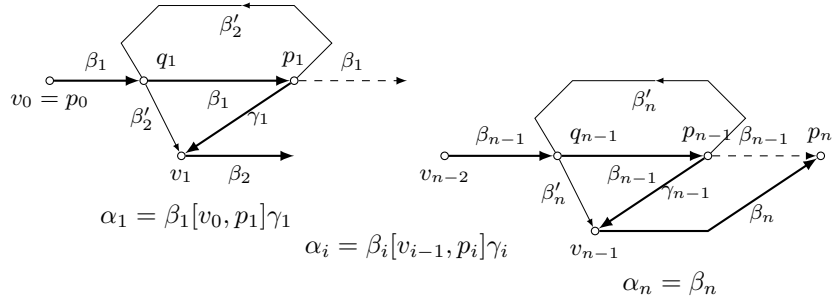

Since $\beta_2$ contains a point $p_2 \in \intermediate{a}_3$, we may
iterate this procedure, obtaining $\alpha_2, \alpha_3, \dots
\alpha_{n-1}, \beta_{n}$. We remark that $\alpha_i$ and $\alpha_{i+1}$
have a single point of contact by construction, while $\alpha_i$ and
$\alpha_j$ ($i < j-1$) are disjoint by the constraint $\neg C(a_i,
a_j)$.  Finally, we let $\alpha_n = \beta_n$; see Fig.~\ref{fig:stackLemma}.
\end{proof}

In fact, we can add a `switch' to the formula $\stack(\tseq{r}_1,
\dots, \tseq{r}_n)$, in the following sense.  Recall from
\SECT\ref{prelim} that if $a, a_0, \dots, a_{n-1}$ are regions satisfying
$\colourComp(r; r_0,\dots,r_{n-1})$, then every connected subset of
$a$---and in particular, any component of $a$---is included in exactly
one of the $a_0,\dots,a_{n-1}$.  Let $z$ be a variable, and consider
what happens when we replace the variable $\intermediate{r}_1$ in
$\stack(\tseq{r}_1, \dots, \tseq{r}_n)$ by the term $(- z) \cdot
\intermediate{r}_1$, and add the conjunct
$\colourComp(\intermediate{r}_1; \ z, -z)$. The result is
\begin{multline*}
\stack_z(\tseq{r}_1, \dots,
\tseq{r}_n) \ \ = \ \
  \colourComp(\intermediate{r}_1; \ z, -z) \ \wedge \
  c(((- z) \cdot \intermediate{r}_1) + \inner{r}_{2} + \cdots + \inner{r}_n)\ \wedge \\
  \bigwedge_{i = 2}^{n-1}
       c(\intermediate{r}_i + \inner{r}_{i+1} + \cdots + \inner{r}_n) \ \ \land \ \ c(\intermediate{r}_n) \ \
  \land \ \
  \bigwedge_{\begin{subarray}{c}1 \leq i, j \leq n\\j-i > 1\end{subarray}} \neg C(r_i,r_j).
\end{multline*}
Now let $\tseq{a}_1,\dots,\tseq{a}_n$ be 3-regions and $d$ a region
satisfying $\stack_d(\tseq{r}_1,\dots,\tseq{r}_n)$. The first conjunct
of the formula ensures that any component of $\intermediate{a}_1$ is
either included in $d$ or included in $-d$. The remaining conjuncts
then have the same effect as $\stack(\tseq{r}_1, \dots,
\tseq{r}_n)$---but only for those components of $\intermediate{a}_1$
included in $-d$. That is, if $p \in(-d)\cdot \intermediate{a}_1$, we
can find an arc $\alpha_1 \cdots \alpha_n$ starting at $p$, with the
properties of Lemma~\ref{lma:stackLemma}.  However, if $p \in d\cdot
\intermediate{a}_1$, no such arc need exist.  Thus, the variable $z$
functions so as to `de-activate' $\stack_z(\tseq{r}_1, \dots,
\tseq{r}_n)$ when we are dealing with a component of
$\intermediate{r}_1$ satisfying $\intermediate{r}_1 \leq z$.

As a further application of Lemma~\ref{lma:stackLemma}, consider the
formula
\begin{multline*}
 \frameFla(\tseq{r}_0, \dots, \tseq{r}_n) \ \ = \ \  \stack(\tseq{r}_0, \dots, \tseq{r}_{n-1}) \wedge \neg C(r_n,r_1+\dots+r_{n-2})\wedge{}\\
     c(\intermediate r_n)\wedge (\intermediate{r}_0\cdot \intermediate r_n\neq 0)\wedge
	(\inner{r}_{n-1}\cdot \intermediate r_n\neq 0).
\end{multline*}
This formula allows us to construct Jordan curves in the plane, in the
following sense:
\begin{lemma}
\label{lma:FrameLemma}
Fix $n \geq 3$, and let $\tseq{a}_0,\dots,\tseq{a}_{n}$ be a tuple of
3-regions satisfying $\frameFla(\tseq{r}_0,\dots,\tseq{r}_{n})$. Then there exist Jordan arcs
$\gamma_0, \dots, \gamma_n$ such that $\gamma_0\cdots\gamma_n$ is a
Jordan curve and $\gamma_i \subseteq a_i$, for all $0 \leq i \leq n$.
\end{lemma}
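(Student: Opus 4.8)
The plan is to glue together two arcs---a ``core'' arc obtained from the sub-stack $\stack(\tseq{a}_0,\dots,\tseq{a}_{n-1})$ via Lemma~\ref{lma:stackLemma}, and a ``closing'' arc drawn through the outermost shell $a_n$---and then to trim the result to a simple closed curve, the extra conjuncts of $\frameFla$ being exactly what makes the trimming work.

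First I would fix two anchor points. Since $\intermediate{a}_0\cdot\intermediate{a}_n\neq 0$, the closed sets $\intermediate{a}_0$ and $\intermediate{a}_n$ meet; pick $p$ in their intersection, and similarly pick $q\in\inner{a}_{n-1}\cap\intermediate{a}_n$. Because $\intermediate{a}_0\ll a_0$ and $\inner{a}_{n-1}\ll\intermediate{a}_{n-1}\ll a_{n-1}$ we have $p\in\intermediate{a}_0$ and $q\in\inner{a}_{n-1}$; and because the sub-stack contains $\neg C(r_0,r_{n-1})$ (legitimate since $n-1>1$) the regions $a_0,a_{n-1}$ are disjoint, so $p\neq q$. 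Applying Lemma~\ref{lma:stackLemma} to $\stack(\tseq{a}_0,\dots,\tseq{a}_{n-1})$ (with a harmless shift of indices) at the points $p$ and $q$ yields Jordan arcs $\delta_0,\dots,\delta_{n-1}$ with $\delta_i\subseteq a_i$ such that $\beta:=\delta_0\cdots\delta_{n-1}$ is a Jordan arc from $p$ to $q$ in which consecutive arcs share exactly one point and $\delta_i\cap\delta_j=\emptyset$ whenever $|i-j|>1$.

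Next I would build the closing arc and trim it. By the conjunct $c(\intermediate{r}_n)$, $\intermediate{a}_n$ is connected, and by the implicit conjunct $\intermediate{r}_n\ll r_n$ it lies in $\ti{a}_n$; hence it is contained in a single component $U$ of $\ti{a}_n$. As $\R^2$ is locally connected, $U$ is open, and an open connected subset of the plane is arc-connected, so $q$ and $p$ (both in $\intermediate{a}_n\subseteq U$) can be joined by a Jordan arc $\eta\subseteq U\subseteq a_n$. The decisive conjunct is $\neg C(r_n,\,r_1+\dots+r_{n-2})$: it forces $a_n$ to be disjoint from $a_1,\dots,a_{n-2}$, whence $\eta$ misses $\delta_1,\dots,\delta_{n-2}$ altogether and $\eta\cap\beta\subseteq\delta_0\cup\delta_{n-1}$. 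Now traverse $\eta$ from $q$ to $p$: let $q'$ be the last point of $\eta$ lying on $\delta_{n-1}$ and $p'$ the first point of $\eta$ after $q'$ lying on $\delta_0$ (these exist since $q\in\delta_{n-1}$, $p\in\delta_0$ and $\delta_0,\delta_{n-1}$ are closed, and $q'\neq p'$ since $\delta_0\cap\delta_{n-1}=\emptyset$). The sub-arc $\gamma_n:=\eta[q',p']\subseteq a_n$ then meets $\beta$ in exactly the two points $q'$ and $p'$. Letting $\beta'$ be the sub-arc of $\beta$ that runs from $p'$ along $\delta_0$, through $\delta_1,\dots,\delta_{n-2}$, and along $\delta_{n-1}$ up to $q'$, the concatenation $\beta'\gamma_n$ is a Jordan curve; cutting it at the junctions of the $\delta_i$ produces arcs $\gamma_0\subseteq a_0$ (part of $\delta_0$), $\gamma_i=\delta_i\subseteq a_i$ for $1\leq i\leq n-2$, $\gamma_{n-1}\subseteq a_{n-1}$ (part of $\delta_{n-1}$), and $\gamma_n\subseteq a_n$, with $\gamma_0\gamma_1\cdots\gamma_n$ the required Jordan curve. (Using $a_n\cap a_1=a_n\cap a_{n-2}=\emptyset$ one sees that $p'$ is not the $\delta_0$--$\delta_1$ junction and $q'$ not the $\delta_{n-2}$--$\delta_{n-1}$ junction, so the sub-arcs of $\delta_0,\delta_{n-1}$ are non-degenerate; degenerate Jordan arcs would be admissible anyway.)

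I expect the only genuinely delicate point to be the guarantee that the closing arc through $a_n$ can be made to touch the core arc $\beta$ in exactly two points, so that their union is a simple closed curve rather than something more tangled. This is exactly what the extra conjuncts of $\frameFla$ secure: $\neg C(r_n,r_1+\dots+r_{n-2})$ confines all possible crossings to the end links $\delta_0$ and $\delta_{n-1}$, and the two non-emptiness conjuncts force $\eta$ to start on $\delta_{n-1}$ and finish on $\delta_0$; after that the trimming is routine. Everything else is a direct appeal to Lemma~\ref{lma:stackLemma} and to elementary facts about open connected subsets of $\R^2$.
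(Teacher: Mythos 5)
Your proof is correct and follows essentially the same route as the paper's: obtain the core arc from $\stack(\tseq{a}_0,\dots,\tseq{a}_{n-1})$ via Lemma~\ref{lma:stackLemma} with endpoints anchored in $\intermediate{a}_0\cap\intermediate{a}_n$ and $\inner{a}_{n-1}\cap\intermediate{a}_n$, close the loop with an arc through $\ti{a}_n$, use $\neg C(r_n,r_1+\dots+r_{n-2})$ to confine intersections to the two end links, and trim. The only (immaterial) difference is the trimming orientation: you locate the cut points by traversing the closing arc $\eta$, whereas the paper locates them by traversing the end links $\alpha_0$ and $\alpha_{n-1}$; both yield a Jordan curve with the required decomposition.
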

\begin{proof}
By Lemma~\ref{lma:stackLemma}, let
$\alpha_0,\gamma_1,\dots,\gamma_{n-2},\alpha_{n-1}$ be Jordan arcs in
the respective regions $a_0,\dots,a_{n-1}$ such that
$\alpha_0\cdots\alpha_{n-1}$ is a Jordan arc connecting a point
$\tilde p\in \intermediate{a}_0\cdot \intermediate a_n$ to a point
$\tilde q\in \inner{a}_{n-1}\cdot \intermediate a_n$; see
Fig.~\ref{fig:FrameLemma}.
\begin{figure}[ht]
\begin{center}
\begin{tikzpicture}[>=latex,point/.style={circle,draw=black,minimum size=1mm,inner sep=0pt}]\small
%
%
			\node (tp) at (.75,1.5) [point,fill=black,label=left:{$\tilde p$}] {};
			\node (p) at (1.5,1.5) [point,fill=black,label=above left:{$p$}]{};
            \node (pq) at (2,0.5) [point,fill=black] {};
			\draw[decorate,decoration=snake,densely dotted] (tp) -- (p);
			\draw (tp) -| (pq) node[near end,right] {$\gamma_0$} --++(-1.7,0) node[midway,below]{$\gamma_1$}node[left]{$\dots$};
			\node (tq) at (-.75,1.5) [point,fill=black,label=right:{$\tilde q$}] {};
			\node (q) at (-1.5,1.5) [point,fill=black,label=above right:{$q$}] {};
            \node (qq) at (-2,0.5) [point,fill=black] {};
			\draw[decorate,decoration=snake,densely dotted] (tq) -- (q);
			\draw (tq) -| (qq)  node[near end,left] {$\gamma_{n-1}$} --++(1.7,0) node[midway,below]{$\gamma_{n-2}$};
			\draw(1.5,1.5) to[out=40,in=140] (-1.5,1.5);
            \node at (0, 2.3) {$\gamma_n$};
            \draw[<->,dashed] (-1.9,0.5) -- (-1.9,1.4) node[midway,right]{$\alpha_{n-1}$} -- (-0.75,1.4);
            \draw[ultra thin] (tq) -- +(0,-0.2);
            \draw[<->,dashed] (1.9,0.5) -- (1.9,1.4) node[midway,left]{$\alpha_0$} -- (0.75,1.4);
            \draw[ultra thin] (tp) -- +(0,-0.2);
\end{tikzpicture}
\end{center}
\caption{Establishing a Jordan curve.}
\label{fig:FrameLemma}
\end{figure}
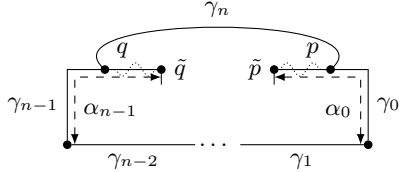%
Because $\intermediate a_n$ is a connected subset of the interior of
$a_n$, let $\alpha_n\subseteq \ti a_n$ be an arc connecting $\tilde p$
and $\tilde q$. Note that $\alpha_n$ does not intersect $\alpha_i$,
for $1\leq i \leq n-2$. Let $p$ be the last point on $\alpha_0$ that
is on $\alpha_n$ (possibly $\tilde p$), and $q$ be the first point on
$\alpha_{n-1}$ that is on $\alpha_n$ (possibly $\tilde q$). Let
$\gamma_0$ be the final segment of $\alpha_0$ starting at $p$ and let
$\gamma_{n-1}$ be the initial segment of $\alpha_{n-1}$ ending at $q$.
Finally, let $\gamma_n = \alpha_n[p,q]$ or $\gamma_n = \alpha_n[q,p]$,
depending on whether $p$ or $q$ is encountered first on
$\alpha_n$. Then the arcs $\gamma_i$, $0\leq
i\leq n$, are as required.
\end{proof}

We are now ready to prove the main result of this section.  Again,
recall that if $a, a_0, \dots, a_{n-1}$ are regions satisfying
$\colourComp(r; r_0,\dots,r_{n-1})$, then
every connected subset of $a$---and in particular, any Jordan arc
$\alpha \subseteq a$---is included in exactly one of the $a_i$, for $0
\leq i < n$. In this case, it is sometimes helpful to think of
$\alpha$ as being `labelled' by a letter of the alphabet $a_0, \dots,
a_{n-1}$.

\begin{theorem}\label{theo:cBCc2}
The problems $\Sat(\cBCc,\RC(\R^2))$ and $\Sat(\cBCc,\RCP(\R^2))$ are r.e.-hard.
\end{theorem}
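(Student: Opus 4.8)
The plan is to reduce the Post correspondence problem to $\Sat(\cBCc,\RC(\R^2))$, using a formula that, when satisfiable at all, is already satisfied by regular closed polygons, so that $\Sat(\cBCc,\RCP(\R^2))$ is covered by the same argument. The architecture mirrors the proof of Theorem~\ref{theo:cBCcn}: given a PCP-instance $\pcpW=(\pcpw^1,\pcpw^2)$ over alphabets $T$, $U$, I would build a $\cBCc$-formula $\phi_\pcpW$ such that (\emph{i}) if $\pcpW$ is positive then $\phi_\pcpW$ is satisfiable over $\RCP(\R^2)$, and hence over $\RC(\R^2)$, and (\emph{ii}) if $\phi_\pcpW$ is satisfiable over $\RC(\R^2)$ then $\pcpW$ is positive. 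The two ingredients of Theorem~\ref{theo:cBCcn} that are not available here — finite decomposibility, and the tree-structure of component graphs granted by Lemma~\ref{lma:noCycles} — are to be replaced by the plane-specific apparatus already set up: $3$-regions together with the formulas $\stack$, $\stack_z$, $\frameFla$ and Lemmas~\ref{lma:stackLemma} and~\ref{lma:FrameLemma}.

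The scaffold would be assembled as follows. A $\frameFla(\dots)$ conjunct, via Lemma~\ref{lma:FrameLemma}, forces a master Jordan curve bounding the whole configuration; $\stack(\dots)$ and $\stack_z(\dots)$ conjuncts, via Lemma~\ref{lma:stackLemma}, force a `spiralling' family of Jordan curves $\Gamma_1,\Gamma_2,\dots$, each enclosing the previous one, each assembled from boundedly many Jordan arcs labelled (through $\colourComp$) by letters of $U$, so that the layers spell a word $\upsilon\in U^*$. Fresh variables $w_1$ and $w^*$ mark the innermost and a distinguished `stop' layer, just as in~\eqref{eq:init1}--\eqref{eq:stopB}, and each layer additionally carries, for $k=1,2$, a \emph{position colour} $p^k_{h,\ell}$. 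Analogues of~\eqref{eq:positionLabels}--\eqref{eq:endBlocks}, built from the relations $\Psi^k$, then force the two position-colourings to group the layers of the spiral into contiguous blocks realising $\pcpw^1(\tau)$ and $\pcpw^2(\tau)$ for one and the same $\tau\in T^*$ read off the `letter' colouring $\vec t$; analogues of~\eqref{eq:patternLabels}--\eqref{eq:letterLabels} and~\eqref{eq:morphism} give $\pcpw^1(\tau)=\pcpw^2(\tau)=\upsilon$. The switch variable $z$ in $\stack_z$ is used, exactly as in the discussion following Lemma~\ref{lma:stackLemma}, to de-activate the arc-construction on those components of the intermediate shells that do not belong to the spiral proper. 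By construction every occurrence of $C$ in $\phi_\pcpW$ is negative and every occurrence of $c$ positive, so $\phi_\pcpW$ is a conjunction of $\cBCc$-literals — which is what will make the downstream $\cBc$- and $\cBCci$-variants routine.

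For the forward direction, a positive $\pcpW$ with solution $\tau=t_1\cdots t_m$ is realised by drawing the spiral explicitly as a finite chain of concentric polygonal annuli — one per position of $\upsilon$ — each $3$-region being a polygonal kernel encased in two polygonal shells, yielding a satisfying assignment in $\RCP(\R^2)$. For the converse, starting from a satisfying assignment over $\RC(\R^2)$, I would use Lemmas~\ref{lemma:rc-intersect}, \ref{lma:stackLemma} and~\ref{lma:FrameLemma} to recover, from the \emph{interiors} of the assigned regions, an honest nested family of labelled Jordan curves, follow it outward from $w_1$ until $w^*$ is reached, and read off $\tau$ and $\upsilon$ from the position and letter colours, concluding that $\pcpW$ is positive.

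Two points will be where the real work lies. The first — flagged in the preamble to \SECT\ref{subsec:undecPlane} — is that over $\RC(\R^2)$ the predicate $c$ asserts only connectedness, not arc-connectedness, so one cannot directly extract Jordan arcs from connected regions the way one does with $\ic$ in \SECT\ref{subsec:infCompPlane}; the $3$-region wrapping $\inner r \ll \intermediate r \ll r$ and Lemma~\ref{lma:stackLemma} are precisely what restore arcs, and threading the two shells of every variable consistently through the whole PCP-encoding of Theorem~\ref{theo:cBCcn} is delicate bookkeeping. The second, and I expect the harder obstacle, is that finiteness of the encoded word is no longer free: Lemma~\ref{lma:noCycles} and finite decomposibility are what guaranteed termination in Theorem~\ref{theo:cBCcn}, whereas by Theorem~\ref{theo:inftyBci} an `onion' over $\RC(\R^2)$ genuinely can have infinitely many layers. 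So $\phi_\pcpW$ must carry extra conjuncts — exploiting the `stop' region $w^*$ together with block-boundary constraints analogous to~\eqref{eq:block1j} and~\eqref{eq:endBlocks} — forcing any satisfying configuration to induce a \emph{finite} spiral that actually reaches $w^*$; equivalently, ruling out an infinite nested family of Jordan curves consistent with all the conjuncts, so that the extracted $\tau$ is a bona fide finite PCP solution. Establishing this termination property by a Jordan-curve separation argument is, I expect, the crux of the proof.
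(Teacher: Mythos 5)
Your proposal correctly identifies the right toolkit (3-regions, $\stack$, $\stack_z$, $\frameFla$, Lemmas~\ref{lma:stackLemma} and~\ref{lma:FrameLemma}) and correctly flags the two obstacles: that $c$ over $\RC(\R^2)$ does not yield arc-connectedness, and that nothing here grants finiteness. But the architecture you sketch is not what the paper does, and the termination mechanism you gesture at does not work.

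The architectural difference: you propose a single onion/spiral whose layers carry, for $k=1,2$, both position-colourings $p^k_{h,\ell}$ simultaneously, a direct port of Theorem~\ref{theo:cBCcn}. But in Theorem~\ref{theo:cBCcn} the grouping of $H(\vec r)$ into blocks indexed by $H(\vec s)$ relies on \emph{both} component graphs being trees, which comes from Lemma~\ref{lma:noCycles} and hence from finite decomposibility — unavailable here. The paper instead performs \emph{two parallel constructions}, one per morphism, in two disjoint "windows" of the plane: a $\frameFla$-forced Jordan curve $\Gamma$ with a $\stack$-forced chord $\chi_1\chi_2\chi_3$ through it, and in each window a sequence of arcs $(\zeta_i,\eta_i,\kappa_i)$ (resp.\ $(\zeta'_i,\eta'_i,\kappa'_i)$). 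Equality of the two sequence lengths, and later equality of the two block-words $\tau=\tau'$, is enforced \emph{geometrically}: each $\eta_i$ must cross the chord and is then forced to intersect exactly one $\zeta'_j$, yielding a 1--1 correspondence (Stage 4), and similarly for blocks via new arcs $\theta_j$ (Stage 6). No second partition plays the role of $\vec s$; the role of "grouping" is taken over entirely by this chord-crossing matchmaking.

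The termination gap is the more serious one. You correctly say that ruling out an infinite nested family is the crux, but the mechanism you propose — the stop region $w^*$ plus block-boundary constraints "by a Jordan-curve separation argument" — does not actually get you there. In Theorem~\ref{theo:cBCcn}, $w^*$ works only because $H(\vec r)$ is a finite tree, so the path from $A_1$ must eventually hit $w^*$; that argument evaporates here, and Theorem~\ref{theo:inftyBci} confirms the danger that onion-like $\cBci$- (and a fortiori $\cBCc$-) formulas can genuinely force infinitely many layers over $\RC(\R^2)$. The paper's actual termination argument is an accumulation-point argument: the conjuncts $\neg C(b_{i,j},d_k)$ for $j\ne 5$ force every $\beta_i$ to cross the central chord $\chi_2$ at a point $o_i \in b_{\md i,5}$, and $b_{0,5},b_{1,5},b_{2,5}$ are pairwise non-contacting; if the construction did not terminate, the $o_i$ would accumulate on the compact chord, and the accumulation point would lie in the closure of two of the $b_{i,5}$, contradicting non-contact. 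The process stops when a component of $\intermediate a_{i,3}$ lands inside the switch region $\intermediate z$. Your single-spiral design has no fixed chord for the $o_i$ to accumulate on and hence no obvious analogue of this argument; indeed $w^*$ does not appear in the paper's proof of Theorem~\ref{theo:cBCc2} at all. Without a concrete replacement for this accumulation step, the proposal has a genuine gap precisely at the point you yourself identify as the crux.
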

\begin{proof}
Again,
we proceed via a reduction of the Post correspondence problem
(PCP), constructing, for any instance $\pcpW$, a formula $\psi_\pcpW$
with the property that the following are equivalent: (\emph{i}) $\pcpW$
is positive; (\emph{ii}) $\psi_\pcpW$ is satisfiable over
$\RCP(\R^2)$; (\emph{iii}) $\psi_\pcpW$ is satisfiable over
$\RC(\R^2)$.  This establishes the theorem.  As with the proofs of
Theorems~\ref{theo:cBCcn} and~\ref{theo:inftyBci}, we equivocate
between variables and the regions to which they are assigned in some
putative interpretation over $\RC(\R^2)$. In this proof, if $k$ is an
integer, $\md{k}$ indicates the value of $k$ modulo 3. The proof
proceeds in six stages.
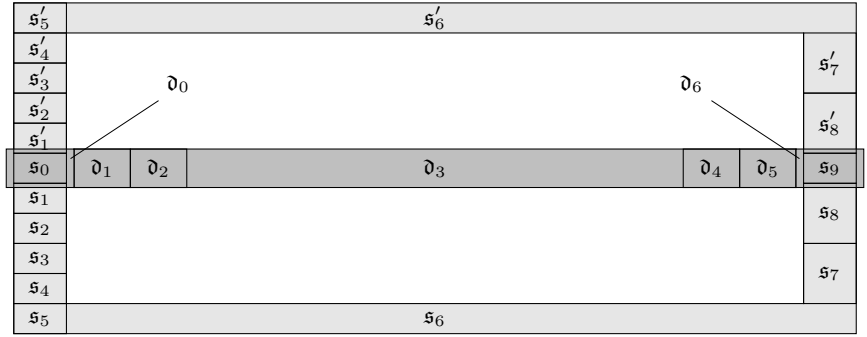
\begin{figure}[ht]\begin{center}
	\begin{tikzpicture}[yscale=0.8]\small
		\coordinate (sh1) at (0,.32);	\coordinate (sw1) at (.45,0); 
		\coordinate (sh) at (0,.25);	\coordinate (sw) at (.35,0); 
		\coordinate (bh) at ($9*(sh)$);	
		\coordinate (bw) at (10.5,0); 
		\coordinate (dw) at (.75,0); 
				
		\draw[fill=gray!20] ($-1*(sw)-(bh)-2*(sh)$) rectangle ($(bw)+(sw)+(bh)+2*(sh)$);
		\draw[fill=white] ($(sw)-(bh)$) rectangle ($(bw)-(sw)+(bh)$);
		\draw[fill=gray!50] ($(sw1)-(sh1)$) rectangle ($(bw)-(sw1)+(sh1)$) node[midway]{$\tseq{d}_3$};
		\draw ($(sw1)-(sh1)$) rectangle ($(sw1)+(sh1)+(dw)$) node[midway]{$\tseq{d}_1$};
		\draw ($(sw1)-(sh1)+(dw)$) rectangle ($(sw1)+(sh1)+2*(dw)$) node[midway]{$\tseq{d}_2$};
		
		\draw ($(bw)-(sw1)+(sh1)$) rectangle ($(bw)-(sw1)-(sh1)-(dw)$) node[midway]{$\tseq{d}_5$};
		\draw ($(bw)-(sw1)+(sh1)-(dw)$) rectangle ($(bw)-(sw1)-(sh1)-2*(dw)$) node[midway]{$\tseq{d}_4$};
		
		\draw[fill=gray!50] ($-1*(sh1)-(sw1)$) rectangle ($(sw1)+(sh1)$);
		\draw ($-1*(sh)-(sw)$) rectangle ($(sw)+(sh)$) node[midway]{$\tseq{s}_0$};
		
		\draw[fill=gray!50] ($(bw)-(sh1)-(sw1)$) rectangle ($(bw)+(sw1)+(sh1)$);
		\draw ($(bw)-(sh)-(sw)$) rectangle ($(bw)+(sw)+(sh)$) node[midway]{$\tseq{s}_9$};
		
		\foreach \t/\s in {-1/,1/'}
		{
			\foreach \x in {1,2,3,4,5}
				\draw ($2*\t*\x*(sh)-(sh)-(sw)$) rectangle ($2*\t*\x*(sh)+(sw)+(sh)$) node[midway]{$\tseq{s}_\x\s$};
			\foreach \x/\l in {1/8,2/7}
				\draw ($-1*\t*(sh)+4*\t*\x*(sh)-2*(sh)+(bw)-(sw)$) rectangle ($-1*\t*(sh)+4*\t*\x*(sh)+(bw)+(sw)+2*(sh)$)
					node[midway]{$\tseq{s}_\l\s$};
		}		
		\node at ($.5*(bw)-(bh)-(sh)$) {$\tseq{s}_6$};
		\node at ($.5*(bw)+(bh)+(sh)$) {$\tseq{s}_6'$};
		
		\draw[ultra thin] ($.15*(bw)+.5*(bh)$) node[above right]{$\tseq{d}_0$} -- ($.5*(sh1)+.9*(sw1)$);
		\draw[ultra thin] ($.85*(bw)+.5*(bh)$) node[above left]{$\tseq{d}_6$} -- ($(bw)+.5*(sh1)-.9*(sw1)$);
		
	\end{tikzpicture}
	\end{center}
\caption{A tuple of 3-regions
  satisfying~{\eqref{eq:PCPFrame}}--{\eqref{eq:PCPCord}} (showing
  kernels only).  The 3-regions $\tseq{s}_0$ and $\tseq{s}_9$ are
  right inside $\tseq{d}_0$ and $\tseq{d}_6$, respectively, as
  specified by~\eqref{eq:PCPCord:endpoints}.}
\label{fig:concrete1}
\end{figure}

\medskip

\noindent
\textbf{Stage 1.} In the first stage, we define an assemblage of arcs
that will serve as scaffolding for the ensuing construction.  Consider
the arrangement of polygonal 3-regions depicted in
Fig.~\ref{fig:concrete1}, assigned to the 3-region variables
$\tseq{s}_0, \dots, \tseq{s}_9$, $\tseq{s}_8', \dots, \tseq{s}_1'$,
$\tseq{d}_0,\dots, \tseq{d}_6$ as indicated.  (Note that we have here
followed the convention of depicting only the kernels of 3-regions.)

It is easy to verify that this arrangement can be made to satisfy the
following formulas:
\begin{align}
\label{eq:PCPFrame}
& \frameFla(\tseq{s}_0, \tseq{s}_1,\dots, \tseq{s}_8,\tseq{s}_9,\tseq{s}_8',\dots, \tseq{s}_1'),\\
\label{eq:PCPCord:endpoints}
& (s_0 \leq \intermediate{d}_0) \land (s_9 \leq \inner{d}_6),\\
\label{eq:PCPCord} & \stack(\tseq{d}_0,\dots,\tseq{d}_6).
\end{align}
And obviously, the arrangement can be made to satisfy any formula
\begin{equation}\label{eq:pcp:C}
\neg C(r,r'),
\end{equation}
for which the corresponding 3-regions $\tseq{r}$ and $\tseq{r}'$ are
drawn as not being in contact. (Remember, $r$ is the outer shell of the 3-region $\tseq{r}$, and similarly for
$r'$; so we must take these shells to hug the kernels depicted in
Fig.~\ref{fig:concrete1} quite closely.)  Thus, for example,
\eqref{eq:pcp:C} includes $\neg C(s_0, d_1)$, but not $\neg C(s_0,
d_0)$ or $\neg C(d_0, d_1)$.

Now suppose $\tseq{s}_0, \dots, \tseq{s}_9$, $\tseq{s}_8', \dots,
\tseq{s}_1'$, $\tseq{d}_0,\dots, \tseq{d}_6$ is {\em any} collection
of 3-regions (not necessarily polygonal)
satisfying~\eqref{eq:PCPFrame}--\eqref{eq:pcp:C}.  By
Lemma~\ref{lma:FrameLemma} and~\eqref{eq:PCPFrame}, let $\gamma_0,
\dots, \gamma_9,\gamma_8',\dots,\gamma_1'$ be Jordan arcs included in
the respective regions\linebreak
$s_0, \dots, s_9,s_8',\dots,s_1'$, such that
$\Gamma = \gamma_0 \cdots \gamma_9 \cdot \gamma_8' \cdots \gamma_1'$
is a Jordan curve (note that $\gamma_i'$ and $\gamma_i$ have opposite
directions). We select points $\tilde{o}$ on $\gamma_0$ and
$\tilde{o}'$ on $\gamma_9$; see Fig.~\ref{fig:arcs0}.
By~\eqref{eq:PCPCord:endpoints}, $\tilde{o} \in \intermediate{d}_0$
and $\tilde{o}' \in \inner{d}_6$. By Lemma~\ref{lma:stackLemma}
and~\eqref{eq:PCPCord}, let $\tilde{\chi}_1$, $\chi_2$, $\tilde{\chi}_3$
be Jordan arcs in the respective regions
\begin{equation*}
(d_0 + d_1),\qquad (d_2 + d_3 + d_4),\qquad (d_5+ d_6)
\end{equation*}
such that $\tilde{\chi}_1 \chi_2 \tilde{\chi}_3$ is a Jordan arc from
$\tilde{o}$ to $\tilde{o}'$. Let $o$ be the last point of
$\tilde{\chi}_1$ lying on $\Gamma$, and let $\chi_1$ be the final
segment of $\tilde{\chi}_1$, starting at $o$.  Let $o'$ be the
first point of $\tilde{\chi}_3$ lying on $\Gamma$, and let $\chi_3$ be
the initial segment of $\tilde{\chi}_3$, ending at $o'$.
By~\eqref{eq:pcp:C}, we see that the arc $\chi_1\chi_2\chi_3$
intersects $\Gamma$ only in its endpoints, and is thus a chord of
$\Gamma$, as shown in Fig.~\ref{fig:arcs0}.
\begin{figure}[ht]
\begin{center}
	\begin{tikzpicture}[>=latex,point/.style={circle,draw=black,minimum size=1mm,inner sep=0pt},yscale=0.7]\small
        \draw (0,0) rectangle +(10,4);
        \draw (0,2) -- +(10,0);
        \draw[ultra thin] (0,0.5) -- +(-0.2,0);
        \draw[ultra thin] (0,3.5) -- +(-0.2,0);
        \draw[->,dashed] (-0.1,3.5) -- +(0,-3) node[near start,left] {$\gamma_0$};
        \draw[ultra thin] (10,0.5) -- +(0.2,0);
        \draw[ultra thin] (10,3.5) -- +(0.2,0);
        \draw[->,dashed] (10.1,0.5) -- +(0,3) node[near end,right] {$\gamma_9$};
        \draw[->,dashed] (-0.1,0.5) -- ++(0,-0.6) -- ++(10.2,0) node[midway,below] {$\gamma_1\cdots\gamma_8$} -- ++(0,0.6);
        \draw[->,dashed] (10.1,3.5) -- ++(0,0.6) -- ++(-10.2,0) node[midway,above] {$\gamma_8'\cdots\gamma_1'$} -- ++(0,-0.6);
        \node (to) at (0,1.3) [point,fill=black,label=left:{$\tilde o$}] {};
        \node (o) at (0,2) [point,fill=black,label=left:{$o$}] {};
		\draw[decorate,decoration=snake,densely dotted] (to) -- (o);
        \node (top) at (10,1.3) [point,fill=black,label=right:{$\tilde o'$}] {};
        \node (op) at (10,2) [point,fill=black,label=right:{$o'$}] {};
		\draw[decorate,decoration=snake,densely dotted] (top) -- (op);
        \draw[ultra thin] (2,1.8) -- +(0,0.4);
        \draw[ultra thin] (8,1.8) -- +(0,0.4);
        \draw[ultra thin] (0,1.3) -- +(0.2,0);
        \draw[ultra thin] (10,1.3) -- +(-0.2,0);
        \draw[->,dashed] (0.1,1.3) -- (0.1,1.9) -- (2,1.9) node[midway,below] {$\tilde{\chi}_1$};
        \draw[->,dashed] (8,1.9) -- (9.9,1.9) node[midway,below] {$\tilde{\chi}_3$} -- (9.9,1.3);
        \draw[->,dashed] (0,2.1) -- +(2,0) node[midway,above] {$\chi_1$};
        \draw[->,dashed] (2,2.1) -- +(6,0) node[midway,above] {$\chi_2$};
        \draw[->,dashed] (8,2.1) -- +(2,0) node[midway,above] {$\chi_3$};
    \end{tikzpicture}
\end{center}
\caption{The arcs $\gamma_0, \dots, \gamma_9,\gamma_8',\dots,\gamma_1'$ and $\chi_1, \chi_2, \chi_3$.}
\label{fig:arcs0}
\end{figure}
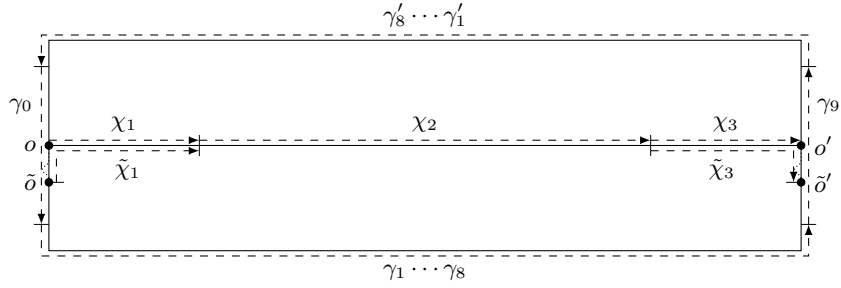

As before, we treat these diagrams as being drawn on a spherical
canvas.  For ease of reference, we refer to the two rectangles in
Fig.~\ref{fig:arcs0} as the `upper window' and `lower window', it
being understood that these are simply handy labels: in particular,
either (but not both) of these `windows' may be unbounded.

\medskip

\noindent
\textbf{Stage 2.}  In this stage, we construct a sequence
of triples $(\zeta_i,\eta_i,\kappa_i)$ of arcs of indeterminate length
$n \geq 1$, such that the members of the former sequence all lie in
the lower window. (Recall that $\md{k}$ denotes $k$ modulo 3).  Let
$\tseq{a}$, $\tseq{b}$, $\tseq{z}$, $\tseq{a}_{i,j}$ and
$\tseq{b}_{i,j}$ ($0 \leq i < 3$, $1 \leq j \leq 6$) be 3-region
variables, 
and consider the formulas
\begin{align}
\label{eq:aSeq1}
& (s_6 \leq \inner{a}) \wedge (s'_6 \leq \inner{b})
\wedge (s_3 \leq \intermediate{a}_{0,3}), \\
\label{eq:aSeq:b}
& \bigwedge_{i= 0}^2
\stack_{\intermediate{z}}(\tseq{a}_{\md{i-1},3}, \tseq{b}_{i,1}, \dots, \tseq{b}_{i,6}, \tseq{b}),\\
\label{eq:aSeq:a}
&  \bigwedge_{i= 0}^2
\stack(\tseq{b}_{i,3}, \tseq{a}_{i,1}, \dots, \tseq{a}_{i,6}, \tseq{a}).
\end{align}
The arrangement of polygonal 3-regions depicted in
Fig.~\ref{fig:concrete2} (with $\tseq{z}$ assigned appropriately) is one such
satisfying assignment.
\begin{figure}[ht]
	\begin{center}
	\begin{tikzpicture}[yscale=0.8]\small
		\coordinate (sh1) at (0,.35);	\coordinate (sw1) at (.45,0); 
		\coordinate (sh) at (0,.3);	\coordinate (sw) at (.35,0); 
		\coordinate (dsh) at ($(sh1)-(sh)$); \coordinate (dsw) at ($(sw1)-(sw)$); 		
		\coordinate (bh) at ($8*(sh)$);	
		\coordinate (bw) at (10.7,0); 
			
		\clip[] ($-1*(sw1)-1.3*(bh)$) rectangle ($(bw)+1.3*(bh)$);
		
		\draw[fill=gray!20] ($(sw)+(bw)-(bh)-2*(sh)$) --($-1*(sw)-(bh)-2*(sh)$) -- ($-1*(sw)+(bh)+2*(sh)$)-- ($(bw)+(sw)+(bh)+2*(sh)$);
		\draw[fill=white] ($(bw)+(sw)-(bh)-(0,0.1)$)--($(sw)-(bh)-(0,0.1)$)--($(sw)+(bh)+(0,0.1)$)--($(bw)+(sw)+(bh)+(0,0.1)$);
		\draw ($(sw)-(bh)-(0,0.1)$) --++ ($-2*(sh)+(0,0.1)$);\draw ($(sw)+(bh)+(0,0.1)$) --++ ($2*(sh)-(0,0.1)$);
		
		\draw ($(sw)+(bw)-(bh)-(0,0.05)+(dsh)$) rectangle ++($-1*(bw)-(dsw)-2*(sh1)$);
		\draw ($(sw)+(bw)+(bh)+(0,0.05)-(dsh)$) rectangle ++($-1*(bw)-(dsw)+2*(sh1)$);
		\node (b) at (3,-1) {$\tseq{a}$}; 
        \draw[ultra thin] (b) -- (2,-2.4);
		\node at ($.3*(bw)-(bh)-(sh)-(0,0.1)$) {$\tseq{s}_6$};
		\node (a) at (0.8, 1) {$\tseq{b}$}; 
        \draw[ultra thin] (a) -- (1.5,2.4);
		\node at ($.3*(bw)+(bh)+(sh)+(0,0.1)$){$\tseq{s}_6'$};
		\draw ($-1*(sh)-(sw)$) rectangle ($(sh)+(sw)$) node[midway]{$\tseq{s}_3$};
		\draw ($-1*(sh1)-(sw1)$) rectangle ($(sh1)+(sw1)$);
		\node (a03) at (1.3,-1) {$\tseq{a}_{0,3}$}; 
		\draw[ultra thin] (a03) -- (0.45,-0.4); 
		
		\draw[fill=gray!40] ($.12*(bw)+.5*(sh1)+.5*(bh)-(sh)$) rectangle ($1.1*(bw)+.5*(sh1)+.5*(bh)+(sh)$);
		\node at ($.7*(bw)+.5*(sh1)+.5*(bh)$) {$\tseq{d}_3$};
		\foreach \sc in {.85}
		{
			\coordinate (P) at ($(sw1)+\sc*(sw)$);		
			\foreach \x in {1,2,0,1}
			{
				\foreach \y/\d/\do/\s in {1/1/-1/b,2/1/-1/b,3/1/-1/b,1/-1/1/a,2/-1/1/a,3/-1/1/a}
				{
					\ifnum \y=3
						\draw ($(P)-\sc*(sw)-(sh1)$) rectangle ($(P)+2*(sw)-\sc*(sw)+(sh1)$)
							node[midway]{$\tseq{\s}_{\x,\y}$}; 
						\draw ($(P)+(sw)-2*\sc*(sw)+\d*(sh1)$) rectangle ($(P)+(sw)+\d*(bh)$)
							node[midway]{$\tseq{\s}_{\x,5}$}; 
						\draw ($(P)+(sw)-2*\sc*(sw)+\d*(sh1)$) rectangle ($(P)+(sw)+\d*(sh1)+2*\d*(sh)$)
							node[midway]{$\tseq{\s}_{\x,4}$}; 
						\draw ($(P)+(sw)-2*\sc*(sw)+\d*(bh)+2*\do*(sh)$) rectangle ($(P)+(sw)+\d*(bh)$)
							node[midway]{$\tseq{\s}_{\x,6}$}; 
						\coordinate (P) at ($(P)+2*(sw)$); 
					\else
						\draw ($(P)-\sc*(sw)-(sh)$) rectangle ($(P)+\sc*(sw)+(sh)$)
							node[midway]{$\tseq{\s}_{\x,\y}$}; 
						\coordinate (P) at ($(P)+2*\sc*(sw)$); 
					\fi
				}
			}
		}
	\end{tikzpicture}
	\end{center}
\caption{A tuple of 3-regions
  satisfying~{\eqref{eq:aSeq1}}--{\eqref{eq:aSeq:a}} (showing kernels
  only).  The arrangement of components of the $\tseq{a}_{i,j}$ and
  $\tseq{b}_{i,j}$ repeats an indeterminate number of times.
  The 3-regions $\tseq{s}_3$, $\tseq{s}_6$, $\tseq{s}'_6$ and
  $\tseq{d}_3$ are as in Fig.~\ref{fig:concrete2}.
  }
\label{fig:concrete2}
\end{figure}
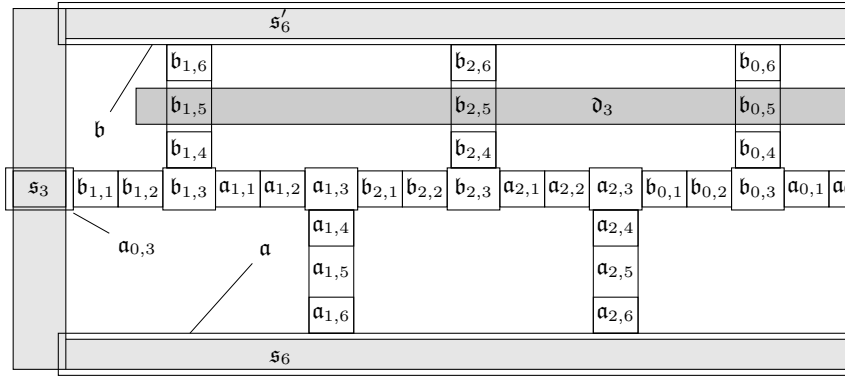
We stipulate that~\eqref{eq:pcp:C} applies now to all regions depicted
in either Fig.~\ref{fig:concrete1} or Fig.~\ref{fig:concrete2}, and we
further stipulate
\begin{equation}\label{eq:znots}
z \cdot (s_0 + \dots + s_9 + s'_1 + \dots + s'_8 + d_1 + \dots + d_4) = 0.
\end{equation}
Note that $d_5$ does not appear in this constraint; thus, $\ti{z}$ may
intersect the arc $\chi_3$.  Again, these additional constraints are
evidently satisfiable.

Now suppose we are given \emph{any} collection of regions (not
necessarily polygonal)
satisfying~\eqref{eq:PCPFrame}--\eqref{eq:znots}.  And let the arcs
$\gamma_0, \dots, \gamma_9,\gamma_8',\dots,\gamma_1'$ and $\chi_1,
\chi_2, \chi_3$ be as defined above.  It will be convenient in this
stage to rename $\gamma_6$ and $\gamma'_6$ as $\lambda_0$ and $\mu_0$,
respectively.  Thus, $\lambda_0$ forms the bottom edge of the lower
window, and $\mu_0$ the top edge of the upper window. Likewise, we
rename $\gamma_3$ as $\alpha_0$, forming part of the left-hand side of
the lower window. Let $\tilde{q}_{1,1}$ be any point of $\alpha_0$,
$p^*$ any point of $\lambda_0$, and $q^*$ any point of $\mu_0$; see
Fig.~\ref{subfig:arcs1}.  By~\eqref{eq:aSeq1}, then, $\tilde{q}_{1,1}
\in \intermediate{a}_{0,3}$, $p^* \in \inner{a}$, and $q^* \in
\inner{b}$.  Certainly, the constraint~\eqref{eq:znots} ensures that $\tilde{q}_{1,1} \in
(-\intermediate{z})$.  By Lemma~\ref{lma:stackLemma}
and~\eqref{eq:aSeq:b}, we may draw an arc $\tilde{\beta}_1$ from
$\tilde{q}_{1,1}$ to $q^*$, with successive segments
$\tilde{\beta}_{1,1}, \beta_{1,2}, \dots, \beta_{1,5},
\tilde{\beta}_{1,6}$ lying in the respective regions
\begin{equation*}
a_{0,3} +
b_{1,1},\ \  b_{1,2}, \ \ \dots, \ \ b_{1,5}, \ \ b_{1,6} + b;
\end{equation*}
further, we can guarantee that $\beta_{1,2}$ contains a point
$\tilde{p}_{1,1} \in \intermediate{b}_{1,3}$.  Denote the last point
of $\beta_{1,5}$ by $q_{1,2}$. Also, let $q_{1,1}$ be the last point
of $\tilde{\beta}_1$ lying on $\alpha_0$, and $q_{1,3}$ the first
point of $\tilde{\beta}_1$ lying on $\mu_0$. Finally, let $\beta_1$ be
the segment of $\tilde{\beta}_1$ between $q_{1,1}$ and $q_{1,2}$; and
let $\mu_1$ be the segment of $\tilde{\beta}_1$ from $q_{1,2}$ to
$q_{1,3}$ followed by the final segment of $\mu_0$ from $q_{1,3}$; see
Fig.~\ref{subfig:arcs1}.  By repeatedly using the constraints
in~\eqref{eq:pcp:C}, it is easy to see that $\beta_1$ and the initial segment of $\mu_1$ up to $q_{1,3}$
together form a chord of $\Gamma$. Adding the constraint
\begin{equation}\label{eq:connectbd}
c(b_{1,5} + d_3),
\end{equation}
and taking into account the constraints in~\eqref{eq:pcp:C} ensures
that this chord divides the residual domain of $\Gamma$ containing $\chi_2$
into the regular closed sets $S_1$ and $S_1'$, as shown in Fig.~\ref{subfig:arcs1}.
The wiggly lines indicate that we do not care about the
exact positions of $\tilde{q}_{1,1}$ or $q^*$; otherwise,
Fig.~\ref{subfig:arcs1} is again completely general.
Note that $\mu_1$ lies entirely in $b_{1,6} + b$, and hence
certainly in the region
\begin{equation}\label{eq:bstar}
b^* = b_{0,6} + b_{1,6} + b_{2,6} + b.
\end{equation}

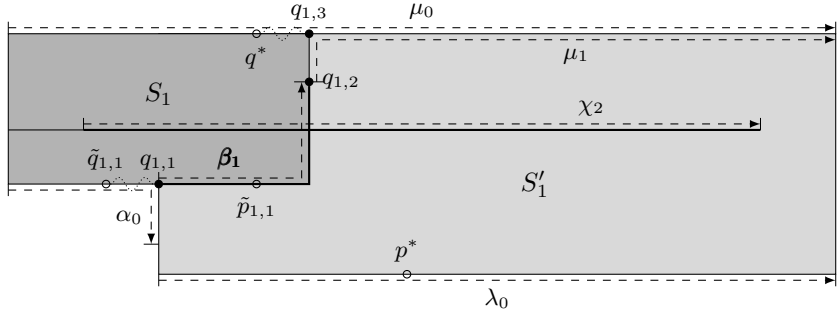
\begin{figure}[ht]
\begin{center}
	\begin{tikzpicture}[>=latex,point/.style={circle,draw=black,minimum size=1mm,inner sep=0pt},yscale=0.8]\small
        \filldraw[fill=gray!30] (0,0) rectangle +(9,4);
        \filldraw[fill=gray!60] (-2,1.5) rectangle +(4,2.5);
        \node (q11) at (0,1.5) [point,fill=black,label=above:{$q_{1,1}$}] {};
        \node (q11t) at (-0.7,1.5) [point,label=above:{$\tilde q_{1,1}$}] {};
		\draw[densely dotted,decorate,decoration=snake] (q11t) -- (q11);
        \node (q13) at (2,4) [point,fill=black,label=above:{$q_{1,3}$}] {};
        \node (q13t) at (1.3,4) [point,label=below:{$q^*$}] {};
		\draw[densely dotted,decorate,decoration=snake] (q13t) -- (q13);
        \node (p11t) at (1.3,1.5) [point,label=below:{$\tilde p_{1,1}$}] {};
        \draw[ultra thin] (0,1.5) -- +(0,0.2);
        \draw[ultra thin] (1.8,3.2) -- +(0.4,0);
        \node (q12) at (2,3.2) [point,fill=black,label=right:{$q_{1,2}$}] {};
        \draw[dashed,->] (0, 1.6) -- (1.9,1.6) node[midway,above]{$\pmb{\beta_1}$} -- (1.9,3.2);
        \draw[dashed,->] (2.1, 3.2) -- (2.1,3.9) -- (9,3.9) node[midway,below]{$\mu_1$};
        \draw[ultra thin] (-2,4) --++(0,.2); \draw[ultra thin] (9,4) --++(0,.2);
        \draw[dashed,->] (-2,4.1) -- (9,4.1) node[midway,above]{$\mu_0$};
	\draw[ultra thin] (0,0) --++(0,-.2);\draw[ultra thin] (9,0) --++(0,-.2);
        \draw[dashed,->] (0,-0.1) -- (9,-0.1) node[midway,below]{$\lambda_0$};
	\draw[ultra thin] (-2,1.5) --++(0,-.2);\draw[ultra thin] (0,0.5)--++(-.2,0);
        \draw[dashed,->] (-2,1.4) -- (-0.1,1.4) -- (-0.1,0.5) node[midway,left]{$\alpha_0$};
        \node (q13t) at (3.3,0) [point,label=above:{$p^*$}] {};
        \node at (0,3) {\normalsize $S_1$};
        \node at (5,1.5){\normalsize $S_1'$};
        \draw (-2,2.4) -- +(9,0);
        \draw[thick] (-1,2.4) -- +(9,0);
        \draw[->,dashed] (-1,2.5) -- +(9,0) node[near end,above] {$\chi_2$};
        \draw[ultra thin] (-1,2.4) -- +(0,0.2);
        \draw[ultra thin] (8,2.4) -- +(0,0.2);
        \draw[thick] (q11) -| (q12);
	\end{tikzpicture}
\end{center}
\caption{The arc $\beta_1$.}\label{subfig:arcs1}
\end{figure}

Recall that $\tilde{p}_{1,1} \in \intermediate{b}_{1,3}$ and $p^* \in
\inner{a}$.  By Lemma~\ref{lma:stackLemma} and~\eqref{eq:aSeq:a}, we
may draw an arc $\tilde{\alpha}_1$ from $\tilde{p}_{1,1}$ to $p^*$,
with successive segments $\tilde{\alpha}_{1,1}, \alpha_{1,2},
\dots, \alpha_{1,5}, \tilde{\alpha}_{1,6}$ lying in the respective
regions
\begin{equation*}
b_{1,3} + a_{1,1}, \ \ a_{1,2}, \ \ \dots, \ \ a_{1,5}, \ \ a_{1,6} +
a;
\end{equation*}
further, we can guarantee that the segment $\alpha_{1,2}$ contains a
point $\tilde{q}_{2,1}\in \intermediate{a}_{1,3}$.  (Thus:
$\alpha_{1,2}$ lies in $a_{1,2}$, but nevertheless contains at least
one point lying in $\intermediate{a}_{1,3}$.) Denote the last point of
$\alpha_{1,5}$ by $p_{1,2}$.  Also, let $p_{1,1}\in b_{1,3}$ be the
last point of $\tilde{\alpha}_1$ lying on $\beta_1$, and $p_{1,3}$ the
first point of $\tilde{\alpha}_1$ lying on $\lambda_0$.
From~\eqref{eq:pcp:C}, and these points must be arranged as shown in
Fig.~\ref{subfig:arcs2}. In particular, the segment of
$\tilde{\alpha}_1$ between $p_{1,1}$ and $p_{1,3}$ is a chord in
$S_1'$ and divides it into regions $R_1$ and $R_1'$. Let $\alpha_1$ be
the segment of $\tilde{\alpha}_1$ between $p_{1,1}$ and $p_{1,2}$.
\begin{figure}[ht]
\begin{center}
	\begin{tikzpicture}[>=latex,point/.style={circle,draw=black,minimum size=1mm,inner sep=0pt},yscale=0.8]\small
        \filldraw[fill=gray!30] (0,0) rectangle +(9,4);
        \filldraw[fill=gray!60] (-2,1.5) rectangle +(4,2.5);
        \filldraw[fill=gray!5] (0,0) rectangle +(4,1.5);
        \node (q11) at (0,1.5) [point,fill=black,label=above:{$q_{1,1}$}] {};
        \node (q13) at (2,4) [point,fill=black,label=above:{$q_{1,3}$}] {};
        \node (q13t) at (1.3,4) [point,label=below:{$q^*$}] {};
        \draw[ultra thin] (0,1.5) -- +(0,0.2);
        \draw[ultra thin] (1.8,3.2) -- +(0.4,0);
        \node (q12) at (2,3.2) [point,fill=black,label=right:{$q_{1,2}$}] {};
        \draw[dashed,->] (0, 1.6) -- (1.9,1.6) node[midway,above]{$\pmb{\beta_1}$} -- (1.9,3.2);
        \draw[dashed,->] (2.1, 3.2) -- (2.1,3.9) -- (9,3.9) node[midway,below]{$\mu_1$};
		\draw[ultra thin] (-2,4) --++(0,.2); \draw[ultra thin] (9,4) --++(0,.2);
        \draw[dashed,->] (-2,4.1) -- (9,4.1) node[midway,above]{$\mu_0$};		
	\draw[ultra thin] (0,0) --++(0,-.2);\draw[ultra thin] (9,0) --++(0,-.2);
        \draw[dashed,->] (0,-0.1) -- +(9,0) node[near end,below]{$\lambda_0$};
	\draw[ultra thin] (-2,1.5) --++(0,-.2);\draw[ultra thin] (0,0.5)--++(-.2,0);
        \draw[dashed,->] (-2,1.4) -- (-0.1,1.4) -- (-0.1,0.5) node[midway,left]{$\alpha_0$};
        \node at (0,3) {\normalsize $S_1$};
        \node at (6,1.5){\normalsize $R_1'$};		
        \draw (-2,2.4) -- +(11,0);
        \draw[ultra thin] (-1,2.4) -- +(0,0.2);
        \draw[ultra thin] (8,2.4) -- +(0,0.2);
        \draw[thick] (-1,2.4) -- +(9,0);
        \draw[->,dashed] (-1,2.5) -- +(9,0) node[near end,above] {$\chi_2$};
        \node (p11) at (2,1.5) [point,fill=black,label=below:{$p_{1,1}$}] {};
        \node (p11t) at (1.3,1.5) [point,label=below:{$\tilde p_{1,1}$}] {};
		\draw[densely dotted,decorate,decoration=snake] (p11t) -- (p11);
        \node (p13) at (4,0) [point,fill=black,label=below:{$p_{1,3}$}] {};
        \node (p13t) at (3.3,0) [point,label=above:{$p^*$}] {};
		\draw[densely dotted,decorate,decoration=snake] (p13t) -- (p13);
        \node (q21t) at (3.3,1.5) [point,label=above:{$\tilde q_{2,1}$}] {};
        \node (p12) at (4,0.8) [point,fill=black,label=right:{$p_{1,2}$}] {};
        \draw[ultra thin] (2,1.5) -- +(0,-0.2);
        \draw[ultra thin] (3.8,0.8) -- +(0.4,0);
        \draw[dashed,->] (2, 1.4) -- (3.9,1.4) node[midway,below]{$\pmb{\alpha_1}$} -- (3.9,0.8);
        \draw[dashed,->] (4.1, 0.8) -- (4.1,0.1) -- (9,0.1) node[midway,above]{$\lambda_1$};
        \node at (2,0.6) {\normalsize $R_1$};
        \draw[thick] (q11) -| (q12);
        \draw[thick] (p11) -| (p12);
        \node at (2,2.4) [point,fill=black,label=below right:{$\pmb{o_1}$}] {};
	\end{tikzpicture}
\end{center}
\caption{The arcs $\beta_1$ and $\alpha_1$.}\label{subfig:arcs2}
\end{figure}%
Noting
that~\eqref{eq:pcp:C} entails
\begin{equation*}
\neg C(a_{1,1} + \dots + a_{1,6}, \ \ s_0 + s_9 + d_0+ \cdots + d_5),
\end{equation*}
we can be sure that $\alpha_1$ lies entirely in the `lower' window,
whence $\beta_1$ crosses the central chord, $\chi_2$
at least once. Let
$o_1$ be the first such point (measured along $\chi_2$ from left to
right).  Finally, let $\lambda_1$ be the segment of $\tilde{\alpha}_1$
between $p_{1,2}$ and $p_{1,3}$, followed by the final segment of
$\lambda_0$ from $p_{1,3}$. Note that $\lambda_1$ lies entirely in
$a_{1,6} + a$, and hence certainly in the region
\begin{equation}\label{eq:r2:astar}
a^* = a_{0,6} + a_{1,6} + a_{2,6} + a.
\end{equation}
%
The region
$S_1$ may now be forgotten. 

By construction, the point $\tilde{q}_{2,1}$ lies in some component of
$\intermediate{a}_{1,3}$, and, from the presence of the `switching'
variable $\intermediate{z}$ in~\eqref{eq:aSeq:a}, that component is
either included in $\intermediate{z}$ or included in
$-\intermediate{z}$. Suppose the latter.  Then we can repeat the above
construction to obtain an arc $\tilde{\beta}_2$ from $\tilde{q}_{2,1}$
to $q^*$, with successive segments $\tilde{\beta}_{2,1}$,
$\beta_{2,2}$, \dots, $\beta_{2,5}$, $\tilde{\beta}_{2,6}$ lying in
the respective regions $a_{1,3} + b_{2,1}$, $b_{2,2}$, \dots,
$b_{2,5}$, $b_{2,6} + b$; further, we can guarantee that $\beta_{2,2}$
contains a point $\tilde{p}_{2,1} \in \intermediate{b}_{2,3}$.  Denote
the last point of $\beta_{2,5}$ by $q_{2,2}$. Also, let $q_{2,1}$ be
the last point of $\tilde{\beta}_2$ lying on $\alpha_1$, and $q_{2,3}$
the first point of $\tilde{\beta}_2$ lying on $\mu_1$.  Again, we let
$\beta_2$ be the segment of $\tilde{\beta}_2$ between $q_{2,1}$ and
$q_{2,2}$; and we let $\mu_2$ be the segment of $\tilde{\beta}_2$ from
$q_{2,1}$ to $q_{2,3}$, followed by the final segment of $\mu_1$ from
$q_{2,3}$.  Note that $\mu_2$ lies in the set $b^*$.  It is easy to
see that the segment of $\tilde{\beta}_2$ from $q_{2,1}$ to $q_{2,3}$
is a cross-cut in $R_1'$ dividing it into regions $S_2$ and $S_2'$, as
shown in Fig.~\ref{subfig:arcs3}.  Indeed,
$\beta_2=\tilde{\beta}_2[q_{2,1},q_{2,2}]$ cannot enter the interior
of the region $R_1$, for, by construction, it can have only one point
of contact with $\alpha_1$, and the constraints~\eqref{eq:pcp:C}
ensure that it cannot intersect any other part of $\delta R_1$.  Since
$q^* \in \inner{a}$ is guaranteed to lie outside $R_1$, we evidently
have that $\beta_2 \subseteq -R_1$. By the
constraints~\eqref{eq:pcp:C}, $\beta_2$ lies in the interior of $R_1'$
except for its first point, which lies on the boundary of $R_1'$;
hence the reversal of $\beta_2$ is an end-cut in $R_1'$.  Similarly,
$\tilde{\beta}_2[q_{2,2},q_{2,3}]$ is an end-cut in $R_1'$ as well,
and thus $\tilde{\beta}_2[q_{2,1},q_{2,3}]$ is a cross-cut in
$R_1'$. This observation having been made, $R_1$ may now be forgotten.
\begin{figure}[hbt]
\begin{center}
	\begin{tikzpicture}[>=latex,point/.style={circle,draw=black,minimum size=1mm,inner sep=0pt},yscale=0.8]\small
        \filldraw[fill=gray!30] (0,0) rectangle +(9,4);
        \filldraw[fill=white,thin] (-2,1.5) rectangle +(4,2.5);
        \filldraw[fill=gray!60] (2,1.5) rectangle +(4,2.5);
        \filldraw[fill=gray!5] (0,0) rectangle +(4,1.5);
        \node (q11) at (0,1.5) [point,fill=black,label=above:{$q_{1,1}$}] {};
        \node (q21) at (4,1.5) [point,fill=black,label=above:{$q_{2,1}$}] {};
        \node (q21t) at (3.3,1.5) [point,label=above:{$\tilde q_{2,1}$}] {};
		\draw[densely dotted,decorate,decoration=snake] (q21t) -- (q21);
        \node (q23) at (6,4) [point,fill=black,label=above:{$q_{2,3}$}] {};
        \node (q13t) at (1.3,4) [point,label=below:{$q^*$}] {};
		\draw[densely dotted,decorate,decoration=snake] (q13t) -- (q23);
        \draw[ultra thin] (0,1.5) -- +(0,0.2);
        \draw[ultra thin] (4,1.5) -- +(0,0.2);
        \draw[ultra thin] (1.8,3.2) -- +(0.4,0);
        \draw[ultra thin] (5.8,3.2) -- +(0.4,0);
        \node (q12) at (2,3.2) [point,fill=black,label=right:{$q_{1,2}$}] {};
        \node (q22) at (6,3.2) [point,fill=black,label=right:{$q_{2,2}$}] {};
        \draw[dashed,->] (0, 1.6) -- (1.9,1.6) node[midway,above]{$\pmb{\beta_1}$} -- (1.9,3.2);
        \draw[dashed,->] (4, 1.6) -- (5.9,1.6) node[midway,above]{$\pmb{\beta_2}$} -- (5.9,3.2);
        \draw[dashed,->] (2.1, 3.2) -- (2.1,3.9) -- (9,3.9) node[near start,below]{$\mu_1$};
        \draw[dashed,->] (6.1, 3.2) -- (6.1,3.8) -- (9,3.8) node[midway,below]{$\mu_2$};
        \draw[ultra thin] (-2,4) --++(0,.2); \draw[ultra thin] (9,4) --++(0,.2);
        \draw[dashed,->] (-2,4.1) -- (9,4.1) node[midway,above]{$\mu_0$};
	\draw[ultra thin] (0,0) --++(0,-.2);\draw[ultra thin] (9,0) --++(0,-.2);
        \draw[dashed,->] (0,-0.1) -- +(9,0) node[near end,below]{$\lambda_0$};
	\draw[ultra thin] (-2,1.5) --++(0,-.2);\draw[ultra thin] (0,0.5)--++(-.2,0);
        \draw[dashed,->] (-2,1.4) -- (-0.1,1.4) -- (-0.1,0.5) node[midway,left]{$\alpha_0$};
        \node at (0,3) {\normalsize $S_1$};
        \node at (4,3) {\normalsize $S_2$};
        \draw (-2,2.4) -- +(11,0);
        \draw[ultra thin] (-1,2.4) -- +(0,0.2);
        \draw[ultra thin] (8,2.4) -- +(0,0.2);
        \draw[thick] (-1,2.4) -- +(9,0);
        \draw[->,dashed] (-1,2.5) -- +(9,0) node[very near end,above] {$\chi_2$};
        \node (p11) at (2,1.5) [point,fill=black,label=below:{$p_{1,1}$}] {};
        \node (p13) at (4,0) [point,fill=black,label=below:{$p_{1,3}$}] {};
        \node (p13t) at (3.3,0) [point,label=above:{$p^*$}] {};
        \node (p12) at (4,0.8) [point,fill=black,label=right:{$p_{1,2}$}] {};
        \draw[ultra thin] (2,1.5) -- +(0,-0.2);
        \draw[ultra thin] (3.8,0.8) -- +(0.4,0);
        \draw[dashed,->] (2, 1.4) -- (3.9,1.4) node[midway,below]{$\pmb{\alpha_1}$} -- (3.9,0.8);
        \draw[dashed,->] (4.1, 0.8) -- (4.1,0.1) -- (9,0.1) node[midway,above]{$\lambda_1$};
        \node at (2,0.6) {\normalsize $R_1$};
        \node at (7.5,1.5){\normalsize $S_2'$};		
        \draw[thick] (q11) -| (q12);
        \draw[thick] (p11) -| (p12);
        \draw[thick] (q21) -| (q22);
        \node at (2,2.4) [point,fill=black,label=below right:{$\pmb{o_1}$}] {};
        \node at (6,2.4) [point,fill=black,label=below right:{$\pmb{o_2}$}] {};
	\end{tikzpicture}
\end{center}
\caption{The arc $\beta_2$.}\label{subfig:arcs3}
\end{figure}
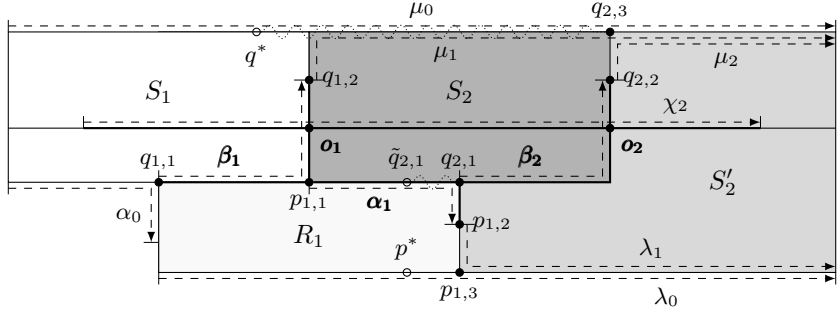%

Symmetrically, we construct the arc $\tilde{\alpha}_2$ in $b_{2,3} +
a_{2,1} + \cdots + a_{2,6} + a$, and points $p_{2,1}$, $p_{2,2}$,
$p_{2,3}$, together with the arcs $\alpha_2$ and $\lambda_2$. Again,
we know from~\eqref{eq:pcp:C} that $\alpha_2$ lies entirely in the
`lower' window, whence $\beta_2$ must cross the central chord,
$\chi_2$, at least once. Let $o_2$ be the first such point (measured
along $\chi_2$ from left to right); see Fig.~\ref{subfig:arcs3}.

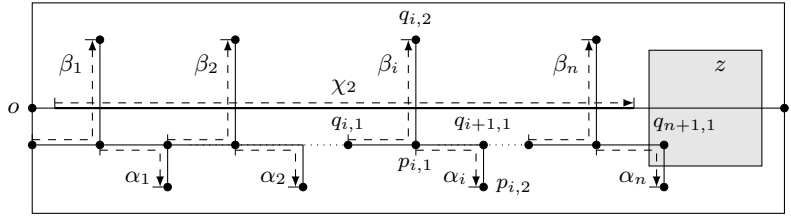
\begin{figure}[ht]
\begin{center}
	\begin{tikzpicture}[>=latex,point/.style={circle,draw=black,minimum size=1mm,inner sep=0pt},yscale=0.7]\small
        \filldraw[fill=gray!20] (8.2,0.9) rectangle +(1.5,2.2);
        \node at (9.15,2.8) {$z$};
        \draw (0,0) rectangle +(10,4);
        \draw (0,2) -- +(10,0);
        \node (o) at (0,2) [point,fill=black,label=left:{$o$}] {};
        \node (op) at (10,2) [point,fill=black] {};
        \draw[ultra thin] (0.3,2) -- +(0,0.2);
        \draw[ultra thin] (8,2) -- +(0,0.2);
        \draw[->,dashed] (0.3,2.1) -- +(7.7,0) node[midway,above] {$\chi_2$};
        \draw[thick] (0.3,2) -- +(7.7,0);
        \foreach \i/\x/\l/\f in  {1/0/0/2,2/1.8/0,i/4.2/1,n/6.6/0}
        {
			\ifnum \l=0
                \node (q\i 1) at (\x,1.3) [point,fill=black] {};
                \node (p\i 1) at ($(q\i 1)+(0.9,0)$) [point,fill=black] {};
                \node (q\i 2) at ($(q\i 1)+(0.9,2)$) [point,fill=black] {};
                \node (p\i 2) at ($(q\i 1)+(1.8,-0.8)$) [point,fill=black] {};
            \fi
	\ifnum \l=1
		\ifnum \f=2
			\node (q\i 1) at (\x,1.3) [point,fill=black,label=left:{\footnotesize $q_{\i,1}$}] {};
		\else
			\node (q\i 1) at (\x,1.3) [point,fill=black,label=above:{\footnotesize $q_{\i,1}$}] {};
		\fi
                \node (p\i 1) at ($(q\i 1)+(0.9,0)$) [point,fill=black,label=below:{\footnotesize $p_{\i,1}$}] {};
                \node (q\i 2) at ($(q\i 1)+(0.9,2)$) [point,fill=black,label=above:{\footnotesize $q_{\i,2}$}] {};
                \node (p\i 2) at ($(q\i 1)+(1.8,-0.8)$) [point,fill=black,label=right:{\footnotesize $p_{\i,2}$}] {};
            \fi
            \draw (q\i 1) -| (q\i 2);
            \draw (p\i 1) -| (p\i 2);
            \draw[->,dashed] ($(q\i 1)+(0,0.1)$) -| ($(q\i 2)+(-0.1,0)$) node[very near end,left] {$\beta_\i$};
            \draw[->,dashed] ($(p\i 1)+(0,-0.1)$) -| ($(p\i 2)+(-0.1,0)$) node[very near end,left] {$\alpha_\i$};
            \draw[ultra thin] (p\i 2) -- +(-0.2,0);
            \draw[ultra thin] (q\i 1) -- +(0,0.2);
            \draw[ultra thin] (p\i 1) -- +(0,-0.2);
            \draw[ultra thin] (q\i 2) -- +(-0.2,0);
        }
        \draw[dotted] (2.1,1.3) -- (qi1);
        \draw[dotted] (5.4,1.3) -- (qn1);
        \node (qip 1) at (6,1.3) [point,fill=black,label=above:{\footnotesize $q_{i+1,1}$}] {};
        \node (qip 1) at (8.4,1.3) [point,fill=black,label=above right:{\footnotesize\hspace*{-1em}$q_{n+1,1}$}] {};
    \end{tikzpicture}
\end{center}
\caption{The sequence of pairs of arcs $(\beta_i,\alpha_i)$.}
\label{fig:arcs2}
\end{figure}

This process continues, generating arcs $\beta_i \subseteq
a_{\md{i-1},3} + b_{\md{i},1} + \cdots + b_{\md{i},5}$ and $\alpha_i
\subseteq b_{\md{i},3} + a_{\md{i},1} + \cdots + a_{\md{i},5}$, as
long as $\alpha_i$ contains a point $\tilde{q}_{i+1,1} \in (-\intermediate{z})$. That
we eventually reach a value $i = n$ for which no such point exists
follows from~\eqref{eq:pcp:C}. For the conjuncts $\neg C(b_{i,j},
d_k)$, for $j \neq 5$, together entail $o_i \in b_{\md{i},5}$, for
every $i$ such that $\beta_i$ is defined; and these points cycle on
$\chi_2$ through the regions $b_{0,5}$, $b_{1,5}$ and $b_{2,5}$. If
there were infinitely many $\beta_i$, the $o_i$ would have an
accumulation point, lying in all three regions, contradicting, say,
$\neg C(b_{0,5},b_{1,5})$.  The resulting sequence of arcs and points
is shown, schematically, in Fig.~\ref{fig:arcs2}. It follows that
the final arc $\alpha_n$ contains a point $q_{n+1,1}\in \intermediate{z}$.

We
finish this stage in the construction by `re-packaging' the pairs of arcs
$(\beta_i,\alpha_i)$.  Specifically, for all $1 \leq i \leq n$, let
$\zeta_i$ be the initial segment of $\beta_i$ up to the point
$p_{i,1}$ followed by the initial segment of $\alpha_i$ up to the
point $q_{i+1,1}$; let $\eta_i$ be the final segment of $\beta_i$ from
the point $p_{i,1}$; and let $\kappa_i$ be the final segment of
$\alpha_i$ from the point $q_{i+1,1}$:
\begin{align*}
\zeta_i = \beta_i[q_{i,1},p_{i,1}]\alpha_i[p_{i,1},q_{i+1,1}],\quad
 \eta_i = \beta_i[p_{i,1},q_{i,2}],\quad
\kappa_i = \alpha_i[q_{i+1,1}, p_{i,2}]
\end{align*}
(see Fig.~\ref{fig:arcs2repackaged}).
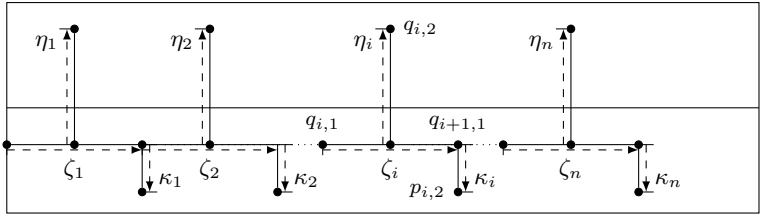
\begin{figure}[ht]
\begin{center}
	\begin{tikzpicture}[>=latex,point/.style={circle,draw=black,minimum size=1mm,inner sep=0pt},yscale=0.7]\small
        \draw (0,0) rectangle +(10,4);
        \draw (0,2) -- +(10,0);
        \foreach \i/\x/\l/\f in  {1/0/0/2,2/1.8/0,i/4.2/1,n/6.6/0}
        {
			\ifnum \l=0
                \node (q\i 1) at (\x,1.3) [point,fill=black] {};
                \node (p\i 1) at ($(q\i 1)+(0.9,0)$) [point,fill=black] {};
                \node (q\i 2) at ($(q\i 1)+(0.9,2.2)$) [point,fill=black] {};
                \node (p\i 2) at ($(q\i 1)+(1.8,-0.9)$) [point,fill=black] {};
            \fi
	\ifnum \l=1
		\ifnum \f=2
			\node (q\i 1) at (\x,1.3) [point,fill=black,label=left:{\footnotesize $q_{\i,1}$}] {};
		\else
			\node (q\i 1) at (\x,1.3) [point,fill=black,label=above:{\footnotesize $q_{\i,1}$}] {};
		\fi
                \node (p\i 1) at ($(q\i 1)+(0.9,0)$) [point,fill=black,label=below:{}] {};
                \node (q\i 2) at ($(q\i 1)+(0.9,2.2)$) [point,fill=black,label=right:{\footnotesize $q_{\i,2}$}] {};
                \node (p\i 2) at ($(q\i 1)+(1.8,-0.9)$) [point,fill=black,label=left:{\footnotesize $p_{\i,2}$}] {};
            \fi
            \draw (q\i 1) -| (q\i 2);
            \draw (p\i 1) -| (p\i 2);
            \draw[->,dashed] ($(p\i 1)+(-0.1,0)$) -- ($(q\i 2)+(-0.1,0)$) node[very near end,left] {$\eta_\i$};
            \draw[->,dashed] ($(q\i 1)+(0,-0.1)$) -- node[midway, below] {$\zeta_\i$} ($(q\i 1)+(1.8,-0.1)$);
            \draw[->,dashed] ($(q\i 1)+(1.9,0)$) -- ($(p\i 2)+(0.1,0)$) node[near end,right] {$\kappa_\i$};
            \draw[ultra thin] ($(q\i 1)+(1.8,0)$) -- +(0.2,0);
            \draw[ultra thin] (p\i 2) -- +(0.2,0);
            \draw[ultra thin] (q\i 1) -- +(0,-0.2);
            \draw[ultra thin] (q\i 2) -- +(-0.2,0);
        }
        \draw[dotted] (2.1,1.3) -- (qi1);
        \draw[dotted] (5.4,1.3) -- (qn1);
        \node (qip 1) at (6,1.3) [point,fill=black,label=above:{\footnotesize $q_{i+1,1}$}] {};
        \node (qip 1) at (8.4,1.3) [point,fill=black] {}; 
    \end{tikzpicture}
\end{center}
\caption{The sequence of triples of arcs $(\zeta_i,\eta_i,\kappa_i)$
formed by `re-packaging' $(\beta_i,\alpha_i)$ from Fig.~\ref{fig:arcs2}.}
\label{fig:arcs2repackaged}
\end{figure}
Defining, for $0 \leq i < 3$,
\begin{align}
\label{eq:r2:ai}
r_i & = a_{\md{i-1},3} + b_{i,1} + \cdots + b_{i,4} + a_{i,1} + \cdots + a_{i,4},\\
\label{eq:r2:bi}
b_i & = b_{i,2} + \cdots + b_{i,5},\\
\label{eq:r2:aai}
a_i & = a_{i,2} + \cdots + a_{i,5},
\end{align}
the constraints~\eqref{eq:pcp:C} guarantee that, for $1 \leq i \leq
n$,
\begin{equation*}
\zeta_i \subseteq r_{\md{i}},\qquad \eta_i \subseteq b_{\md{i}} \quad\text{and}\quad \kappa_i \subseteq a_{\md{i}}.
\end{equation*}
Observe that the arcs $\zeta_i$ are located entirely in the `lower
window,' and that each arc $\eta_i$ connects $\zeta_i$ to some point
$q_{i,2}$, which in turn is connected to $q^* \in \lambda_0$
by an arc in $b^*$.

\smallskip

\noindent
\textbf{Stage 3.}  We now repeat Stage~2 symmetrically, with the
`upper' and `lower' windows exchanged. Let $\tseq{a}'_{i,j}$,
$\tseq{b}'_{i,j}$ be 3-region variables (with indices in the same
ranges as for $\tseq{a}_{i,j}$, $\tseq{b}_{i,j}$). Let $\tseq{a}' =
\tseq{b}$, $\tseq{b}' = \tseq{a}$.
The formulas
\begin{align}
\tag{\ref{eq:aSeq1}$'$}
& (s'_3 \leq \intermediate{a}'_{0,3}),\\
\tag{\ref{eq:aSeq:b}$'$}
& \bigwedge_{i = 0}^2\stack_{\intermediate{z}}(\tseq{a}'_{\md{i-1},3}, \tseq{b}'_{i,1}, \dots, \tseq{b}'_{i,6}, \tseq{b}'),\\
\tag{\ref{eq:aSeq:a}$'$}
& \bigwedge_{i = 0}^2\stack(\tseq{b}'_{i,3}, \tseq{a}'_{i,1}, \dots, \tseq{a}'_{i,6}, \tseq{a}'),\\
%
\tag{\ref{eq:connectbd}$'$}
& c(b'_{1,5} + d_3)
\end{align}
then establish sequences of $n'$ triples of arcs $(\zeta'_i,
\eta'_i,\kappa'_i)$ satisfying
\begin{equation*}
\zeta_i' \subseteq r_{\md{i}}',\qquad
\eta_i' \subseteq b_{\md{i}}'\quad\text{and}\quad \kappa_i\subseteq a_{\md{i}}',
\end{equation*}
for $1 \leq i \leq n'$, where the $r'_i$, $b'_i$ and $a'_i$ are
defined as in \eqref{eq:r2:ai}--\eqref{eq:r2:aai} but with the primed
variables.  The arcs $\zeta'_i$ are located entirely in the `upper
window', and each arc $\eta'_i$ connects $\zeta'_i$ to a point
$q'_{i,2}$, which in turn is connected to a point ${q^*}'$ by an arc
in the region
${b^*}' = b'_{0,6} + b'_{1,6} + b'_{2,6} + b'$.

\medskip

\noindent
\textbf{Stage 4.}  Our next task is to write constraints to ensure
that $n = n'$, and that, furthermore, each $\eta_i$ (also each
$\eta'_i$) connects $\zeta_i$ to $\zeta'_i$, for $1\leq i\leq
n$. From~\eqref{eq:znots}, the only arc depicted in
Fig.~\ref{fig:arcs0} that $\ti{z}$ may intersect is
$\chi_3$. Recalling that $\zeta_n$ and $\zeta'_{n'}$ contain points
$q_{n+1,1}$ and $q'_{n'+1,1}$, respectively, both lying in
$\intermediate{z} \subseteq \ti{z}$, the constraint
\begin{equation}
c(\intermediate{z}) 
\end{equation}
ensures that $q_{n+1,1}$ and $q'_{n'+1,1}$ may be joined by an arc,
say $\zeta^*$, lying in $\ti{z}$, and also lying entirely in the upper
and lower windows, crossing the chord $\chi_1\chi_2\chi_3$ only in
$\chi_3$.  Without loss of generality, we may assume that $\zeta^*$
contacts each of $\zeta_n$ and $\zeta'_{n'}$ in just one
point. Bearing in mind that the formulas~\eqref{eq:pcp:C} force
$\eta_n\subseteq b_0 + b_1 + b_2$ and $\eta'_{n'} \subseteq b_0' +
b_1' + b_2'$ to cross the chord $\chi_1\chi_2\chi_3$ in its central
section, $\chi_2$, and bearing in mind~\eqref{eq:znots}, we see that
the following constraint ensures that $\zeta^*$ is as shown in
Fig.~\ref{fig:arcZeta}:
\begin{align}
\label{eq:zeta}
z\cdot(b^* + b_0 + b_1 + b_2 + 
{b^*}' + b'_0 + b'_1 + b'_2) = 0.
\end{align}
\begin{figure}[ht]
\begin{center}
	\begin{tikzpicture}[>=latex,point/.style={circle,draw=black,minimum size=1mm,inner sep=0pt},yscale=0.7]\small
        \filldraw[fill=gray!20] (8.2,0.4) rectangle +(1.5,3.2);
        \node at (9.25,2.8) {$z$};
        \draw (0,0) rectangle +(10,4);
        \draw (0,2) -- +(10,0);
        \node (o) at (0,2) [point,fill=black] {};
        \node (op) at (10,2) [point,fill=black] {};
        \node at (8,2) [point,fill=black] {};
        \draw[ultra thin] (0.3,2) -- +(0,0.2);
        \draw[ultra thin] (8,2) -- +(0,0.2);
        \draw[->,dashed] (0.3,2.1) -- +(7.7,0) node[midway,above] {$\chi_2$};
        \draw[thick] (0.3,2) -- +(9.7,0);
        \draw[->,dashed] (8.1,2.1) -- (10,2.1) node[midway,above] {$\chi_3$};
        \foreach \i/\x/\l in  {1/0/1,2/1.8/1,n/6.6/0}
        {
            \node (q\i 1) at (\x,1) [point,fill=black] {};
            \coordinate (p\i 1) at ($(q\i 1)+(0.6,0)$);
            \coordinate (q\i 2) at ($(q\i 1)+(0.6,2.6)$);
            \node (q\i 1n) at ($(q\i 1)+(1.8,0)$) [point,fill=black] {};
            \draw (q\i 1) -- (q\i 1n);
            \draw[ultra thin] (q\i 1n) -- +(0,-0.2);
            \draw[ultra thin] (q\i 1) -- +(0,-0.2);
            \node (q\i 1p) at (\x,3) [point,fill=black] {};
            \coordinate (p\i 1p) at ($(q\i 1p)+(0.8,0)$);
            \coordinate (q\i 2p) at ($(q\i 1p)+(0.8,-2.6)$);
            \node (q\i 1np) at ($(q\i 1p)+(1.8,0)$) [point,fill=black] {};
            \draw (q\i 1p) -- (q\i 1np);
            \draw[ultra thin] (q\i 1np) -- +(0,0.2);
            \draw[ultra thin] (q\i 1p) -- +(0,0.2);
            \ifnum \l=1
                \draw[->,dashed] ($(q\i 1)+(0,-0.1)$) -- ($(q\i 1n)+(0,-0.1)$) node[near end,below] {$\zeta_\i$};
                \draw[->,dashed] ($(q\i 1p)+(0,0.1)$) -- ($(q\i 1np)+(0,0.1)$) node[near end,above] {$\zeta'_\i$};
            \fi
        }
        \foreach \i in {n}
        {
            \draw[->,dashed] ($(q\i 1)+(0,-0.1)$) -- ($(q\i 1n)+(0,-0.1)$) node[near end,below] {$\zeta_n$};
            \draw[->,dashed] ($(q\i 1p)+(0,0.1)$) -- ($(q\i 1np)+(0,0.1)$) node[near end,above] {$\zeta'_{n'}$};
            \node at (q\i 2) [point,fill=black] {};
            \draw[thick] (p\i 1) -- (q\i 2);
            \draw[ultra thin] (q\i 2) -- +(-0.2,0);
            \draw[->,dashed] ($(p\i 1)+(-0.1,0)$) -- ($(q\i 2)+(-0.1,0)$) node [midway,left] {$\eta_n$};
            \node at (q\i 2p) [point,fill=black] {};
            \draw[thick] (p\i 1p) -- (q\i 2p);
            \draw[ultra thin] (q\i 2p) -- +(0.2,0);
            \draw[->,dashed] ($(p\i 1p)+(0.1,0)$) -- ($(q\i 2p)+(0.1,0)$) node [midway,right] {$\eta'_{n'}$};
        }
        \node at (qn1n) [point,fill=black,label=below right:{$q_{n+1,1}$}] {};
        \node at (qn1np) [point,fill=black,label=above right:{$q'_{n'+1,1}$}] {};
        \draw (qn1n) -| ++(1.1,1);
        \draw (qn1np) -| ++(0.3,-1);
		\draw[densely dotted,decorate,decoration=snake] ($(qn1n)+(1.1,1)$) -- ($(qn1np)+(0.3,-1)$);
        \node at (9,1.3) {$\zeta^*$};
        \foreach \i in  {1}
        {
            \node at (q\i 2) [point,fill=black] {};
            \draw (p\i 1) -- (q\i 2) -- ++(0,0.4);
            \draw[ultra thin] (q\i 2) -- +(-0.2,0);
            \draw[ultra thin] (q\i 2) -- +(0.2,0);
            \draw[->,dashed] ($(p\i 1)+(-0.1,0)$) -- ($(q\i 2)+(-0.1,0)$) node [midway,left] {$\eta_\i$};
            \draw[->,dashed] ($(q\i 2)+(0.1,0)$) -- ++(0,0.3) -- ++(9.3,0) node [very near start,below] {$\mu_\i$};
        }
        \foreach \i in  {2}
        {
            \node at (q\i 2p) [point,fill=black] {};
            \draw (p\i 1p) -- (q\i 2p) -- ++(0,-0.4);
            \draw[ultra thin] (q\i 2p) -- +(0.2,0);
            \draw[->,dashed] ($(p\i 1p)+(0.1,0)$) -- ($(q\i 2p)+(0.1,0)$) node [midway,right] {$\eta'_\i$};
            \draw[->,dashed] ($(q\i 2p)+(0.1,0)$) -- ++(0,-0.3) -- ++(7.3,0) node [very near start,above] {$\lambda'_\i$};
        }
    \node at (5,0) [point,fill=black,label=below:{$p^*$}] {};
    \node at (3,4) [point,fill=black,label=above:{$q^*$}] {};
    \end{tikzpicture}
\end{center}
\caption{The arc $\zeta^*$.}\label{fig:arcZeta}
\end{figure}
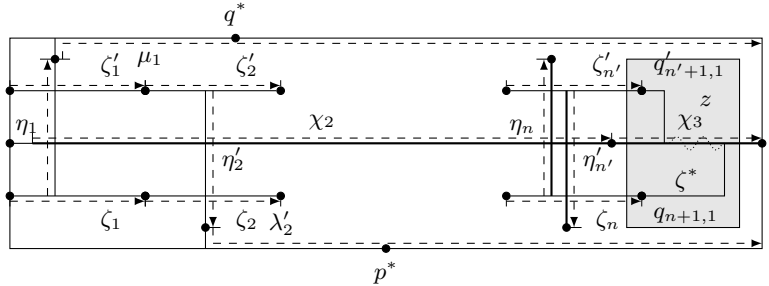

Now consider the arc $\eta_1$. Recalling that $\eta_1$ crosses
$\chi_2$ and connects $\zeta_1$ to some point $q_{1,2}$, which in turn
is connected to the point $q^*$ by an arc in $b^*$,
we see by inspection of Fig.~\ref{fig:arcZeta} that~\eqref{eq:zeta}
together with
\begin{align}
& \bigwedge_{i = 0}^2 \neg C(r'_i, b^*)
\end{align}
forces $\eta_1$ to cross one of the arcs $\zeta'_{j'}\subseteq
r'_{\md{j'}}$, for $1 \leq j' \leq n'$; and the constraints
\begin{align}
\label{eq:r2:aip:bj}
& \bigwedge_{i = 0}^2 \neg C(r'_i,\ \ b_{\md{i-1}} + b_{\md{i+1}})
\end{align}
ensure that $j' \equiv
1 $ modulo 3.  Now suppose $j' \geq 4$. We write the constraints
\begin{align}
\tag{\ref{eq:r2:aip:bj}$'$}\label{eq:abPrime}
& \bigwedge_{i = 0}^2 \neg C(b_i',\ \ r_{\md{i-1}} + r_{\md{i+1}}),\\
\label{eq:bbPrime}
& \bigwedge_{i = 0}^2 \neg C(b_i',\ \ b_{\md{i-1}} + b_{\md{i+1}}).
\end{align}
The arc $\eta'_2$
must connect $\zeta'_2$ to the point $q'_{2,2}$, which in turn is connected to the point $p^*$ on the bottom
edge of the lower window by an arc in ${b^*}'$, which is now impossible without $\eta'_2\subseteq b_2'$
crossing either $\zeta_1\subseteq r_1$ or $\eta_1\subseteq b_1$---both forbidden
by~\eqref{eq:abPrime}--\eqref{eq:bbPrime}.  Thus, $\eta_1$ intersects
$\zeta'_j$ if and only if $j = 1$. Symmetrically, $\eta'_1$ intersects
$\zeta_j$ if and only if $j = 1$.  And the reasoning can now be
repeated for $\eta_2, \eta_2', \eta_3, \eta_3', \dots$, leading
to the 1--1 correspondence depicted in
Fig.~\ref{fig:arcCorrespondence}.
In particular, we are guaranteed that $n = n'$.
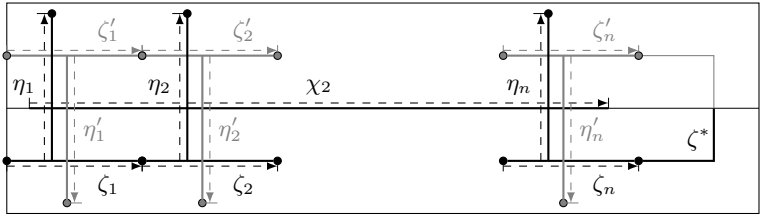
\begin{figure}[ht]
\begin{center}
	\begin{tikzpicture}[>=latex,point/.style={circle,draw=black,minimum size=1mm,inner sep=0pt},yscale=0.7]\small
        \draw (0,0) rectangle +(10,4);
        \draw[ultra thin] (0,2) -- +(10,0);
        \draw[ultra thin] (0.3,2) -- +(0,0.2);
        \draw[ultra thin] (8,2) -- +(0,0.2);
        \draw[->,dashed] (0.3,2.1) -- +(7.7,0) node[midway,above] {$\chi_2$};
        \draw[thick] (0.3,2) -- +(7.7,0);
%
        \foreach \i/\x/\l in  {1/0/0,2/1.8/0,n/6.6/0}
        {
            \node (q\i 1) at (\x,1) [point,fill=black] {};
            \coordinate (p\i 1) at ($(q\i 1)+(0.6,0)$);
            \coordinate (q\i 2) at ($(q\i 1)+(0.6,2.8)$);
            \node (q\i 1n) at ($(q\i 1)+(1.8,0)$) [point,fill=black] {};
            \draw[thick] (q\i 1) -- (q\i 1n);
            \draw[->,dashed] ($(q\i 1)+(0,-0.1)$) -- ($(q\i 1n)+(0,-0.1)$) node[near end,below] {$\zeta_\i$};
            \draw[ultra thin] (q\i 1n) -- +(0,-0.2);
            \draw[ultra thin] (q\i 1) -- +(0,-0.2);
            \node (q\i 1p) at (\x,3) [point,fill=gray] {};
            \coordinate (p\i 1p) at ($(q\i 1p)+(0.8,0)$);
            \coordinate (q\i 2p) at ($(q\i 1p)+(0.8,-2.8)$);
            \node (q\i 1np) at ($(q\i 1p)+(1.8,0)$) [point,fill=gray] {};
            \draw[thick,gray] (q\i 1p) -- (q\i 1np);
            \draw[gray,->,dashed] ($(q\i 1p)+(0,0.1)$) -- ($(q\i 1np)+(0,0.1)$) node[near end,above] {$\zeta'_\i$};
            \draw[gray,ultra thin] (q\i 1np) -- +(0,0.2);
            \draw[gray,ultra thin] (q\i 1p) -- +(0,0.2);
            \node at (q\i 2) [point,fill=black] {};
            \draw[thick] (p\i 1) -- (q\i 2);
            \draw[ultra thin] (q\i 2) -- +(-0.2,0);
            \draw[->,dashed] ($(p\i 1)+(-0.1,0)$) -- ($(q\i 2)+(-0.1,0)$) node [midway,left] {$\eta_\i$};
            \node at (q\i 2p) [point,fill=gray] {};
            \draw[thick,gray] (p\i 1p) -- (q\i 2p);
            \draw[gray,ultra thin] (q\i 2p) -- +(0.2,0);
            \draw[gray,->,dashed] ($(p\i 1p)+(0.1,0)$) -- ($(q\i 2p)+(0.1,0)$) node [midway,right] {$\eta'_\i$};
        }
        \draw[thick] (qn1n) -| ++(1,1);
        \draw[gray] (qn1np) -| ++(1,-1);
        \node at (9.2,1.4) {$\zeta^*$};
    \end{tikzpicture}
\end{center}
\caption{The 1--1 correspondence between the $\zeta_i$ and the
  $\zeta'_i$ established by the $\eta_i$ and the $\eta'_i$.}
\label{fig:arcCorrespondence}
\end{figure}

\medskip

\noindent
\textbf{Stage 5.}  Recall the given PCP-instance, $\pcpW = (\pcpw,
\pcpw')$ over alphabets $T$ and $U$. In the sequel, we use the
standard imagery of `tiles', where each tile $t \in T$ has an `upper
string', $\pcpw'(t) \in U^*$ and a `lower string', $\pcpw(t) \in
U^*$. Thus, the problem is to determine whether there is some
non-empty sequence of tiles such that the concatenated upper and lower
strings both spell out the same string in $U^*$. We shall label the
arcs $\zeta_1, \dots, \zeta_n$ so as to define a string $\tau \in T^*$
(with $|\tau| = m \leq n$); likewise we shall label the arcs
$\zeta'_1, \dots, \zeta'_n$ so as to define another string $\tau' \in
T^*$ (with $|\tau'| = m' \leq n$). Then the arcs $\eta_1,\dots,\eta_n$
will be labelled with the regions in $\vec{u}$, so to define a string
$\upsilon \in U^*$, with $|\upsilon| = n$. We shall then add conjuncts
to $\psi_\pcpW$ ensuring $\pcpw(\tau) = \pcpw'(\tau') = \upsilon$ and
$\tau = \tau'$, which will guarantee that $\pcpW$ is positive.

For all $1 \leq h \leq |T|$, $1 \leq \ell \leq |\pcpw(t_h)|$
and $0\leq i<3$, let $p_{h,\ell}$ be a fresh variable, and let
these variables be ordered in some way as the tuple $\vec{p}$.
As in the proof of Theorem~\ref{theo:cBCcn}, we think of $p_{h,\ell}$
as standing for the $\ell$th position in the string $\pcpw(t_h)$,
where $t_h\in T$. We use $\vec p$ to label the components of $r_i$, $0\leq i<3$,
but since the $r_i$ are not pairwise disjoint, we require a copy of the variables
$\vec p$ for each $i$. Hence, for all $1 \leq h \leq |T|$,
$1 \leq \ell \leq |\pcpw(t_h)|$ and $0\leq i<3$, let $p_{h,\ell}^i$ be a fresh variable,
and let $\vec p_i$ be an ordering of the variables with superscript $i$.
Consider the constraints
\begin{multline}
	\bigwedge_{i= 0}^2\bigl((r_i=\sum_{h = 1}^{|T|}\sum_{\ell= 1}^{|\pcpw(t_h)|} p_{h,\ell}^i)\ \wedge\
	\colourComp(r_i;\vec{p}_i)\bigr) \ \wedge \ 
	\bigwedge_{h = 1}^{|T|}\bigwedge_{\ell = 1}^{|\pcpw(t_h)|}(p_{h,\ell}=\sum_{i=0}^2p_{h,l}^i).
\end{multline}
The first conjunct ensures that each arc $\beta_{i,3} \subseteq
b_{\md{i},3}$ ($1 \leq i \leq n$) is included in exactly one of the
regions $\vec{p}_{\md{i}}$ and is disjoint from the rest of the regions in
$\vec p_{\md{i}}$ and all the regions in $\vec p_{\md{i-1}}$ and $\vec p_{\md{i+1}}$;
the second conjunct then ensures that $\zeta_i$ is contained in exactly one of the
regions $\vec p$, and that $\beta_{i,3}$ is disjoint from the rest of the regions in $\vec p$.
Note that the $\vec{p}$ do not actually form a partition, because
they cannot be made disjoint; nevertheless, we can think of the $\vec{p}$ as `labels' for arcs $\zeta_i$.
The regions in $\vec p_0$, $\vec p_1$ and $\vec p_2$ can now be forgotten.

Next, we organize the arcs $\zeta_i$ into (contiguous) blocks, $E_1,
\dots, E_m$ such that, in the $j$th block, $E_j$, the sequence of
labels reads $p_{h,1}, \dots, p_{h,|\pcpw(t_h)|}$, for some fixed $1
\leq h \leq |T|$. This amounts to insisting that: ({\em i}) the very
first arc, $\zeta_1$, must be labelled with $p_{h,1}$ for some $h$;
({\em ii}) if $\zeta_i$ ($i < n$) is labelled with $p_{h,\ell}$, where
$\ell < |\pcpw(t_h)|$, then the next arc, namely $\zeta_{i+1}$, must
be labelled with the next position in $\pcpw(t_h)$, namely
$p_{h,\ell+1}$; ({\em iii}) if $\zeta_i$ ($i < n$) is labelled with
the final position of $\pcpw(t_h)$, then the next arc must be labelled
with the initial position of some possibly different word
$\pcpw(t_{h'})$; and ({\em iv}) $\zeta_n$ must be labelled with the
final position of some word $\pcpw(t_h)$. To do this we simply
write:
\begin{align}
 &\bigwedge_{h=1}^{|T|}\bigwedge_{\ell = 2}^{|\pcpw(t_h)|} \neg C(p_{h,\ell}, \ s_3),\\
& \bigwedge_{i = 0}^2\bigwedge_{h = 1}^{|T|}\bigwedge_{\ell=1}^{|\pcpw(t_h)-1|}
\neg C(r_i \cdot p_{h,\ell},
       r_{\md{i+1}} \cdot (-r_{i}) \cdot (\sum_{\ell'\ne \ell+1} p_{h,\ell'} + \sum_{h' \ne h}\sum_{\ell'=1}^{|\pcpw(t_{h'})|}p_{h',\ell'})),
\\
 &\bigwedge_{i = 0}^2\bigwedge_{h = 1}^{|T|}\bigwedge_{h' = 1}^{|T|}\bigwedge_{\ell = 2}^{|\pcpw(t_{h'})|} \neg C(r_i \cdot p_{h,|\pcpw(t_h)|}, \ \ r_{\md{i+1}} \cdot (-r_{i}) \cdot p_{h',\ell}),\\
 &\bigwedge_{h = 1}^{|T|}\bigwedge_{\ell = 1}^{|\pcpw(t_h)|-1} \neg C(p_{h,\ell}, \ z).
\end{align}
Supposing the arcs of $j$th block $E_j$ to have labels reading
$p_{h,1}, \dots, p_{h,|\pcpw(t_h)|}$ (for some fixed $h$), then, we
write $h_j$ to denote the common subscript $h$.  The sequence of
indices $h_1,\dots,h_m$ corresponding to the successive blocks thus
defines a word $\tau = t_{h_1}\cdots t_{h_m}\in T^*$.

Using corresponding formulas, we label the arcs $\zeta'_i$ ($1 \leq i
\leq n$) with the tuple $\vec{p}\,'$ of variables $p'_{h,\ell}$, for
$1 \leq h \leq |T|$ and $1 \leq \ell\leq |\pcpw'(t_h)|$, so that, in
any satisfying assignment over $\RC(\R^2)$, every arc $\zeta'_i$ is
labeled with exactly one of the regions $\vec p\,'$ and $\beta'_{i,3}\subseteq b_{i,3}$
is disjoint from the rest of the regions in $\vec p\,'$. Further, we
can ensure that these labels are organized into (say) $m'$ contiguous
blocks, $E'_1, \dots, E'_{m'}$ such that in the $j$th block, $E_j'$,
the sequence of labels reads $p'_{h,1}, \dots, p'_{h,|\pcpw'(t_h)|}$,
for some fixed $h$. Again, writing $h'_j$ for the common value of $h$,
the sequence of indices $h'_1, \dots, h'_{m'}$ corresponding to the
successive blocks defines a word $\tau' = t_{h'_1}\cdots t_{h'_{m'}}
\in T^*$.

Now, the constraints
\begin{align}
&\partition(\vec{u}) \ \ \land \ \ \bigwedge_{i=0}^2
\colourComp(b_i; \ \vec{u})
\end{align}
ensure that, in any satisfying assignment over $\RC(\R^2)$, every arc
$\eta_i\subseteq b_{\md{i}}$, for $1 \leq i \leq n$, is included in (`labelled with')
exactly one of the regions in $\vec{u}$, so that the sequence of arcs
$\eta_1, \dots, \eta_n$ defines a string $\upsilon \in U^*$, with
$|\upsilon| = n$.

Securing $\pcpw(\tau) = \pcpw'(\tau') = \upsilon$ is easy. The
constraints
\begin{align}
& \bigwedge_{h = 1}^{|T|}\bigwedge_{\ell = 1}^{|\pcpw(t_h)|} \bigwedge_{\begin{subarray}{c}u_i \text{ is not the $\ell$th}\\\text{letter of $\pcpw(t_h)$}\end{subarray}} \neg C(u_i, \ p_{h,\ell})
\end{align}
ensure that, since $\eta_i$ intersects $\zeta_i$, for all $1 \leq i \leq
n$, the string $\upsilon \in U^*$ defined by the arcs $\eta_i$
must be identical to the string $\pcpw(t_{h_1}) \cdots
\pcpw(t_{h_m})$. But this is just to say that $\upsilon = \pcpw(\tau)$.
The equation $\upsilon = \pcpw'(\tau')$ is obtained similarly.

\medskip

\noindent
\textbf{Stage 6.} In the foregoing stages, we assembled conjuncts
of $\psi_\pcpW$ in such a way that,
given any satisfying assignment for $\psi_\pcpW$, we can construct
sequences of labelled arcs defining words $\upsilon \in U^*$ and
$\tau, \tau' \in T^*$ with $\pcpw(\tau) = \pcpw'(\tau') = \upsilon$, as described above.  In this stage, we add more
conjuncts to $\psi_\pcpW$ to enforce the equation $\tau =
\tau'$. This shows that, if $\psi_\pcpW$ is satisfiable over
$\RC(\R^2)$, then $\pcpW$ is positive.

In particular, it remains to show that $m = m'$ and that $h_j =
h'_j$, for all $1 \leq j \leq m$. To do so, we re-use the techniques
encountered in Stage~4.  We first introduce a new pair of variables,
$f_0$, $f_1$, which we refer to as `block colours', and with which we
label the arcs $\zeta_i$. Again, since the regions $r_i$
overlap, we additionally require regions $f_0^i$ and $f_1^i$,
for $0\leq i<3$. Consider the constraints:
\begin{equation}
\bigwedge_{i=0}^2 \bigl((r_i=f_0^i+f_1^i) \ \land\ \colourComp(r_i; \ f_0^i, f_1^i)\bigr) \quad \land\quad
\bigwedge_{k=0}^1\bigl(f_k=\sum_{i=0}^2f_k^i\bigr).
\end{equation}
It is readily checked that each $\zeta_i \subseteq r_{\md{i}}$ is
included in exactly one of the regions $f_0$ or $f_1$, and that $\beta_{i,3}$ is disjoint from the other.
(Again, however, $f_0, f_1$ do not form a partition, because they must overlap.) We force all arcs
in each block $E_j$ to have a uniform block colour, and we force the
block colours to alternate by writing:
\begin{align}
& \hspace{1cm}\bigwedge_{k = 0}^1\bigwedge_{h = 1}^{|T|}\bigwedge_{\ell = 1}^{|\pcpw(t_h)|-1}
\neg C(f_k \cdot p_{h,\ell}, \ \ f_{1-k} \cdot p_{h,\ell+1}), 
\\
& \hspace{1cm}\bigwedge_{k = 0}^1
\bigwedge_{h = 1}^{|T|}\bigwedge_{h' = 1}^{|T|}\bigwedge_{i = 0}^2
\neg C(f_k \cdot p_{h,|\pcpw(t_h)|}\cdot r_i,\ \  f_k \cdot p_{h',1}\cdot r_{\md{i+1}}\cdot(-r_i)).
\end{align}
Thus, we may speak unambiguously of the colour ($f_0$ or $f_1$) of a
block: if $E_1$ is coloured $f_0$, then $E_2$ will be coloured $f_1$,
$E_3$ coloured $f_0$, and so on.  Using variables
$f_0'$ and $f_1'$, we similarly establish a block structure $E'_1,
\dots, E'_{m'}$ on the arcs $\zeta'_i$.

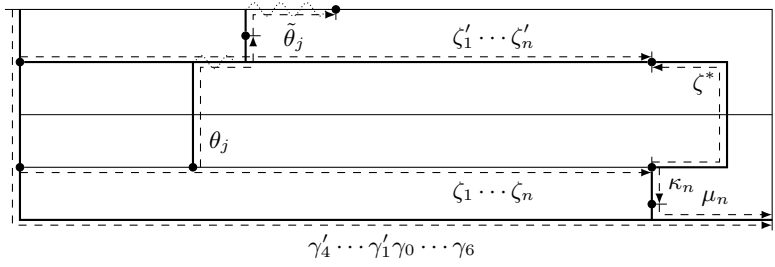
\begin{figure}[ht]
\begin{center}
	\begin{tikzpicture}[>=latex,point/.style={circle,draw=black,minimum size=1mm,inner sep=0pt},yscale=0.7]\small
        \draw (0,0) rectangle +(10,4);
        \draw[ultra thin] (0,2) -- +(10,0);
        \node (q11) at (0,1) [point,fill=black] {};
        \node (q11p) at (0,3) [point,fill=black] {};
        \node (qn1n) at (8.4,1) [point,fill=black] {};
        \node (qn1np) at (8.4,3) [point,fill=black] {};
        \node (pn2) at (8.4,0.3) [point,fill=black] {};
        \draw (q11) -- (qn1n);
        \draw[thick] (q11p) -- (qn1np);
        \draw[->,dashed] ($(q11)+(0,-0.1)$) -- ($(qn1n)+(0,-0.1)$) node[near end,below] {$\zeta_1\cdots\zeta_n$};
        \draw[->,dashed] ($(q11p)+(0,0.1)$) -- ($(qn1np)+(0,0.1)$) node[near end,above] {$\zeta_1'\cdots\zeta_n'$};
        \draw[thick] (qn1n) -- (8.4,0);
        \draw[->,dashed] ($(qn1n)+(0.1,0)$) -- ($(pn2)+(0.1,0)$) node[midway,right] {$\kappa_n$};
        \draw[ultra thin] (pn2) -- +(0.2,0);
        \draw[ultra thin] (qn1np) -- +(0,0.2);
        \draw[ultra thin] (qn1np) -- +(0,-0.2);
        \draw[ultra thin] (qn1n) -- +(0,0.2);
        \draw[->,dashed] ($(pn2)+(0.1,0)$) |- (10,0.1) node[near end,above] {$\mu_n$};
        \draw[thick] (0,4) |- (10,0);
        \draw[->,dashed] (-0.1,4) -- (-0.1,-0.1) -- (10,-0.1) node[midway,below] {$\gamma_4'\cdots\gamma_1'\gamma_0\cdots\gamma_6$};
        \draw[ultra thin] (0,4) -- +(-0.2,0);
        \draw[ultra thin] (10,0) -- +(0,-0.2);
        \draw[thick] (qn1n) -| ++(1,1);
        \draw[thick] (qn1np) -| ++(1,-1);
        \draw[dashed,<-] ($(qn1np)+(0,-0.1)$) -| ++(0.9,-0.9);
        \draw[dashed] ($(qn1n)+(0,0.1)$) -| ++(0.9,0.9);
        \node at (9.1,2.6) {$\zeta^*$};
        \node (ts) at (2.3,1) [point,fill=black] {};
        \node (tm) at (3,3.5) [point,fill=black] {};
        \draw[ultra thin] (tm) -- +(0.2,0);
        \node (te) at (4.2,4) [point,fill=black] {};
		\draw[densely dotted,decorate,decoration=snake] (2.3,3) -- (3,3);
		\draw[densely dotted,decorate,decoration=snake] ($(te)+(-1.2,0)$) -- (te);
        \draw[ultra thin] (te) -- +(0,-0.2);
        \draw[thick] (ts) -- (2.3,3);
        \draw[thick] (3,3) -- (tm);
        \draw[->,dashed] ($(ts)+(0.1,0)$) -- (2.4,2.9)  node [near start, right] {$\theta_j$} -- (3.1,2.9) -- ($(tm)+(0.1,0)$);
        \draw[thick] (tm) -- ++ (0,0.5);
        \draw[->,dashed] ($(tm)+(0.1,0)$) |- ($(te)+(0,-0.1)$) node [near end, below] {$\tilde{\theta}_j$};
    \end{tikzpicture}
\end{center}
\caption{Arc $\theta_j\tilde{\theta}_j$ intersecting $\zeta_1'\cdots\zeta_n'$.}
\label{fig:blockArcs}
\end{figure}

Now we match up the blocks in a 1--1 fashion, just as we matched up
the individual arcs in Stage 4. Let $\tseq{g}_0$, $\tseq{g}_1$,
$\tseq{g}'_0$ and $\tseq{g}'_1$ be new 3-region variables.  Recall
that every arc $\zeta_i$ contains some point of $b_{\md{i},3}$ (for
instance: $p_{i,1}$) and every such point is unambiguously labeled by
a region in $\vec p$ and a region in $(f_0,f_1)$. We wish to connect
any such arc that starts a block $E_j$ (i.e., any $\zeta_i$ labelled
by $p_{h,1}$ for some $h$) to the top edge of the upper window, with
the connecting arc depending on the block colour.  We can do this
using the constraints:
\begin{align}
\label{eq:r2:theta}
& 
\bigwedge_{k = 0}^1\bigl( (f_k \cdot (b_{0,3} + b_{1,3} + b_{2,3}) \leq \intermediate{g}_k)
                       \ \wedge \ \stack(\tseq{g}_k,\tseq{b}) 
                       \bigr).
\end{align}
Specifically, the first (actually: every) arc $\zeta_i$ in each block
$E_j$, for $1 \leq j \leq m$, is connected by an arc
$\theta_j\tilde{\theta}_j$ to some point on the upper edge of the
upper window, where $\theta_j \subseteq g_k$ and $\tilde{\theta}_j
\subseteq b$.  
Using corresponding formulas, we
%
ensure that the first arc in each block $E'_j$, for $1 \leq j \leq
m'$, is connected by an arc $\theta'_j\tilde{\theta}'_j$ to some point
on the bottom edge of the lower window, where $\theta'_j \subseteq
g'_k$ and $\tilde{\theta}'_j \subseteq b'$.

Recall from Stage~3 that
$q_{n+1,1}$ is connected by an arc $\kappa_n \subseteq a_0 + a_1 +
a_2$ to $p_{n,2}$,
which is in turn connected to the lower edge of the lower window by an
arc in lying in $a^*$. And recall from Stage~4 that $q_{n+1,1}$ is
connected by $\zeta^* \subseteq \ti{z}$ to $q'_{n+1,1}$. Thus, we see
from Fig.~\ref{fig:blockArcs} that the non-contact constraints
\begin{multline}
\label{eq:r2:gk}
    \neg C(g_0 + g_1,\ s'_4 +  \cdots  + s'_1 + s_0 + \cdots + s_5 + a^* + a_0 +a_1 + a_2 + z)
\end{multline}
ensure that each $\theta_j\tilde{\theta}_j$ ($1 \leq j \leq m$)
intersects one of the $\zeta'_i$ ($1 \leq i \leq n$). Indeed, since
$\tilde{\theta}_j \subseteq b$ cannot intersect any $\zeta'_i$, we
know that all such points of intersection lie on $\theta_j$. Using a corresponding formula, we
%
%
ensure that each $\theta'_j$ ($1 \leq j \leq m'$) intersects one of
the $\zeta_i$ ($1 \leq i \leq n$).

We now write the constraints
\begin{align}
& \bigwedge_{k = 0}^1\bigl(\neg C(g_k, f_{1-k}') \wedge  \neg C(g'_k, f_{1-k})\bigr).
\end{align}
Thus, any $\theta_j$ included in $g_k$ must join some arc $\zeta_i$ in
a block with colour $f_k$ to some arc $\zeta'_{i'}$ in a block
with colour $f_k'$; and symmetrically for the $\theta'_j$.
Adding
\begin{align}
& \neg C(g_0 + g'_0, \ \ g_1 + g'_1)
\end{align}
then ensures, via reasoning similar to that employed in
Stage~4, that $\theta_1$ connects the block $E_1$ to the block $E'_1$,
$\theta_2$ connects $E_2$ to $E'_2$, and so on; and similarly for the
$\theta'_j$ (as shown, schematically, in
Fig.~\ref{fig:blockCorrespondence}).  Thus, we have a 1--1
correspondence between the two sets of blocks, whence $m = m'$.
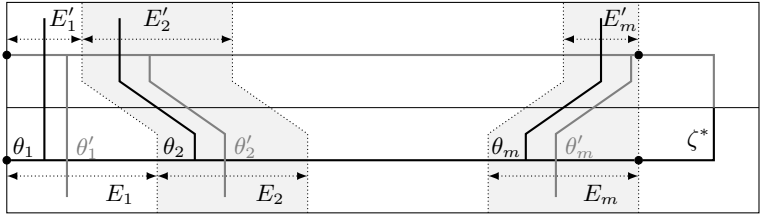
\begin{figure}[ht]
\begin{center}
	\begin{tikzpicture}[>=latex,point/.style={circle,draw=black,minimum size=1mm,inner sep=0pt},yscale=0.7]\small
        \fill[gray!10] (6.4,0) -- +(0,1.5) -- +(1,2.5) -- +(1,4) -- +(2,4) -- +(2,0) -- cycle;
        \fill[gray!10] (2,0) -- +(0,1.5) -- +(-1,2.5) -- +(-1,4) -- +(1,4) -- +(1,2.5) -- +(2,1.5) -- +(2,0) -- cycle;
        \draw (0,0) rectangle +(10,4);
        \draw[ultra thin] (0,2) -- +(10,0);
        \node (q11) at (0,1) [point,fill=black] {};
        \node (q11p) at (0,3) [point,fill=black] {};
        \node (qn1n) at (8.4,1) [point,fill=black] {};
        \node (qn1np) at (8.4,3) [point,fill=black] {};
        \draw[thick] (q11) -- (qn1n);
        \draw[thick,gray] (q11p) -- (qn1np);
        \draw[thick] (qn1n) -| ++(1,1);
        \draw[thick,gray] (qn1np) -| ++(1,-1);
        \node at (9.2,1.4) {$\zeta^*$};
        \draw[densely dotted,<->] (0,0.7) -- +(2,0) node [below,near end] {$E_1$};
        \draw[densely dotted,<->] (2,0.7) -- +(2,0) node [below,near end] {$E_2$};
        \draw[densely dotted,<->] (6.4,0.7) -- +(2,0) node [below,near end] {$E_m$};
        \draw[densely dotted,<->] (0,3.3) -- +(1,0) node [above,near end] {$E_1'$};
        \draw[densely dotted,<->] (1,3.3) -- +(2,0) node [above,midway] {$E_2'$};
        \draw[densely dotted,<->] (7.4,3.3) -- +(1,0) node [above,near end] {$E_m'$};
        \draw[thick] (0.5,1) -- (0.5,3.7) node[left,pos=0.09] {$\theta_1$};
        \draw[thick,gray] (0.8,3) -- (0.8,0.3) node[right,pos=0.65] {$\theta_1'$};
        \draw[thick] (2.5,1) -- (2.5,1.5) node[midway,left] {$\theta_2$} -- (1.5,2.5) -- (1.5,3.7);
        \draw[thick,gray] (1.9,3) -- (1.9,2.5) -- (2.9,1.5) -- (2.9,0.3) node[right,pos=0.2] {$\theta_2'$};
        \draw[thick] (6.9,1) -- (6.9,1.5) node[midway,left] {$\theta_m$\hspace*{-0.2em}} -- (7.9,2.5) -- (7.9,3.7);
        \draw[thick,gray] (8.3,3) -- (8.3,2.5) -- (7.3,1.5) -- (7.3,0.3) node[right,pos=0.2] {$\theta_m'$};
        \draw[densely dotted] (2,0) -- +(0,1.5) -- +(-1,2.5) -- +(-1,4);
        \draw[densely dotted] (4,0) -- +(0,1.5) -- +(-1,2.5) -- +(-1,4);
        \draw[densely dotted] (6.4,0) -- +(0,1.5) -- +(1,2.5) -- +(1,4);
        \draw[densely dotted] (8.4,0) -- +(0,4);
    \end{tikzpicture}
\end{center}
\caption{The 1--1 correspondence between the $E_j$ and the
  $E'_j$ established by the $\theta_j$ and the $\theta'_j$.}
\label{fig:blockCorrespondence}
\end{figure}

Finally, we regard elements of alphabet $T$ as fresh variables and
order them to form tuple $\vec{t}$. These variables are used for
labelling the components of $g_0$ and of $g_1$, and hence the arcs
$\theta_1, \dots, \theta_m$:
\begin{align}
& \partition(\vec{t}\,) \quad\land\quad \colourComp(g_0; \ \vec{t}\,)\quad\land\quad \colourComp(g_1; \ \vec{t}\,).
\end{align}
(Note that this time we can take the regions $\vec{t}$ to form a
partition.)  Adding the constraints
\begin{align}
& \bigwedge_{k = 0}^1 \bigwedge_{h = 1}^{|T|}\neg C(\sum_{\begin{subarray}{c}1 \leq h'\leq |T|\\h'\ne h\end{subarray}}(g_k \cdot t_{h'}), \ \ \sum_{\ell=1}^{|\pcpw(t_h)|} p_{h,\ell} + \sum_{\ell=1}^{|\pcpw'(t_h)|} p'_{h,\ell})
\end{align}
instantly ensures that the sequences of tile indices $h_1, \dots, h_m$
and $h'_1, \dots, h'_m$ are identical. In other words, $\tau = \tau'$.

This
completes the argument that, if $\psi_\pcpW$ has a satisfying
assignment over $\RC(\R^2)$, then $\pcpW$ is a positive instance of
the PCP. By extending the arrangement of Fig.~\ref{fig:concrete2} in
the obvious way, we see that, if $\pcpW$ is a positive instance of the
PCP, then $\psi_\pcpW$ has a satisfying assignment over $\RCP(\R^2)$,
and hence (trivially) a satisfying assignment over $\RC(\R^2)$.
\end{proof}

The case \cBCci{} is dealt with as for Corollary~\ref{cor:inftyCci}:
we simply replace all occurrences of $c$ in $\psi_\pcpW$ with
$\ic$. Denoting the resulting $\cBCci$-formula by
$\psi^\circ_\pcpW$, we see that the following are equivalent:
(\emph{i}) $\pcpW$ is positive; (\emph{ii}) $\psi^\circ_\pcpW$ is
satisfiable over $\RCP(\R^2)$; (\emph{iii}) $\psi^\circ_\pcpW$ is
satisfiable over $\RC(\R^2)$. Thus,
\begin{corollary}\label{cor:LRCR2:Cci}
The problems $\Sat(\cBCci,\RC(\R^2))$ and $\Sat(\cBCci,\RCP(\R^2))$ are r.e.-hard.
\end{corollary}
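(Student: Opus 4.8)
The plan is to repeat, now for \cBCci, exactly the device already used to pass from Theorem~\ref{theo:inftyCc} to Corollary~\ref{cor:inftyCci} and from Theorem~\ref{theo:cBCcn} to Corollary~\ref{cor:cBCcin}. Starting from the \cBCc-formula $\psi_\pcpW$ built in the proof of Theorem~\ref{theo:cBCc2}, I would form the \cBCci-formula $\psi^\circ_\pcpW$ by replacing every occurrence of the predicate $c$ with the predicate $\ic$, leaving the Boolean term-formers and all occurrences of $C$ untouched. The claim to establish is the equivalence of: (\emph{i}) $\pcpW$ is positive; (\emph{ii}) $\psi^\circ_\pcpW$ is satisfiable over $\RCP(\R^2)$; (\emph{iii}) $\psi^\circ_\pcpW$ is satisfiable over $\RC(\R^2)$. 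Since the set of positive PCP instances is r.e.-complete, and since r.e.-hardness of $\Sat(\cBCci,\RCP(\R^2))$ and of $\Sat(\cBCci,\RC(\R^2))$ follows already from (\emph{i})$\Rightarrow$(\emph{ii}) and (\emph{iii})$\Rightarrow$(\emph{i}) respectively, this yields the corollary. The implication (\emph{ii})$\Rightarrow$(\emph{iii}) is trivial, as $\RCP(\R^2)$ is a subalgebra of $\RC(\R^2)$.

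For (\emph{iii})$\Rightarrow$(\emph{i}), the key observation is that every occurrence of $c$ in $\psi_\pcpW$ has positive polarity. Indeed, the only negated atoms in $\psi_\pcpW$ are contact atoms $\neg C(\cdot,\cdot)$: these include all conjuncts abbreviated by $\colourComp(\cdot;\cdot)$ and all the $\ll$-constraints implicit in the 3-region notation, as well as the explicit non-contact constraints. The predicate $c$ enters $\psi_\pcpW$ only through the positive conjuncts $c(\cdot)$ inside $\stack$, $\stack_z$ and $\frameFla$, and through the free-standing conjuncts such as $c(b_{1,5}+d_3)$, $c(b'_{1,5}+d_3)$ and $c(\intermediate z)$. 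Since every regular closed set with connected interior is connected, $\ic(\tau)$ implies $c(\tau)$ under any interpretation over a frame in $\RegC$; hence $\psi^\circ_\pcpW$ entails $\psi_\pcpW$. So satisfiability of $\psi^\circ_\pcpW$ over $\RC(\R^2)$ gives satisfiability of $\psi_\pcpW$ over $\RC(\R^2)$, whence $\pcpW$ is positive by Theorem~\ref{theo:cBCc2}.

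For (\emph{i})$\Rightarrow$(\emph{ii}), recall that when $\pcpW$ is positive, Theorem~\ref{theo:cBCc2} supplies a concrete arrangement of polygonal 3-regions over $\RCP(\R^2)$ satisfying $\psi_\pcpW$, obtained by extending the configurations of Figs.~\ref{fig:concrete1} and~\ref{fig:concrete2} along the solution word. It then suffices to check that this same arrangement satisfies $\psi^\circ_\pcpW$, i.e.\ that every region asserted connected by some conjunct $c(\cdot)$ of $\psi_\pcpW$ is in fact interior-connected in this model. Each such region is a finite Boolean combination of the polygons used, hence itself a regular closed polygon, and by inspection of the figures its interior is path-connected: this holds for the sets $\intermediate r_i+\inner r_{i+1}+\cdots+\inner r_n$ and $\intermediate r_n$ occurring in the various $\stack$-conjuncts, for the ``de-activated'' analogue $((-z)\cdot\intermediate r_1)+\inner r_2+\cdots+\inner r_n$ in $\stack_z$, and for $b_{1,5}+d_3$, $b'_{1,5}+d_3$ and $\intermediate z$. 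Thus the polyhedral witness for $\psi_\pcpW$ is simultaneously a polyhedral witness for $\psi^\circ_\pcpW$.

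I expect no genuine difficulty here: the entire argument is the same polarity-plus-inspection move used twice before in the paper. The only points needing care are the bookkeeping claim that no $c$ in $\psi_\pcpW$ lies under an odd number of negations — immediate from the presentation of the conjuncts in the proof of Theorem~\ref{theo:cBCc2}, but worth stating explicitly — and the routine check that the connected pieces of the concrete polyhedral model in fact have connected interiors.
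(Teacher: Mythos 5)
Your proposal is correct and follows exactly the paper's own route: the paper disposes of this corollary in a single sentence by saying "the case \cBCci{} is dealt with as for Corollary~\ref{cor:inftyCci}: we simply replace all occurrences of $c$ in $\psi_\pcpW$ with $\ic$," and your write-up simply spells out the two implications (positivity of $c$ together with $\ic\Rightarrow c$ for one direction; inspection of the polygonal witness for the other) that the paper leaves implicit. No gaps.
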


Employing the techniques of the proof of Theorem~\ref{cor:inftyBc}, we show that
\begin{theorem}\label{cor:LRCR2}
The problems $\Sat(\cBc,\RC(\R^2))$ and $\Sat(\cBc,\RCP(\R^2))$ are r.e.-hard.
\end{theorem}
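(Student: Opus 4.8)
The plan is to follow the proof of Theorem~\ref{cor:inftyBc}: I take as starting point the \cBCc-formula $\psi_\pcpW$ built in the proof of Theorem~\ref{theo:cBCc2}, for which $\pcpW$ is positive iff $\psi_\pcpW$ is satisfiable over $\RCP(\R^2)$ iff $\psi_\pcpW$ is satisfiable over $\RC(\R^2)$. Every occurrence of the contact predicate in $\psi_\pcpW$ is negative; this includes the explicit conjuncts (those buried in $\stack$, $\stack_z$, $\frameFla$, those of~\eqref{eq:pcp:C}, and the ones introduced throughout Stages~2--6) as well as the \emph{implicit} conjuncts $\neg C(\inner r,-\intermediate r)$ and $\neg C(\intermediate r,-r)$ attached to each 3-region variable $\tseq r$. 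I would convert $\psi_\pcpW$ into a \cBc-formula $\psi^c_\pcpW$ by replacing each such conjunct $\neg C(\sigma,\tau)$ with $\noncontact(\hat\sigma,\hat\tau)\land(\sigma\le\hat\sigma)\land(\tau\le\hat\tau)$, where $\hat\sigma,\hat\tau$ are terms over $\sigma$, $\tau$ and fresh ``connector'' variables. Since $\noncontact(\hat\sigma,\hat\tau)$ entails $\neg C(\sigma,\tau)$ once $\sigma\le\hat\sigma$ and $\tau\le\hat\tau$, and the positive occurrences of $c$ are left in place, $\psi^c_\pcpW$ is a \cBc-formula that entails $\psi_\pcpW$.

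One direction of the desired three-way equivalence is then free: if $\psi^c_\pcpW$ is satisfiable over $\RC(\R^2)$ --- a fortiori over $\RCP(\R^2)$ --- then so is $\psi_\pcpW$, and Theorem~\ref{theo:cBCc2} makes $\pcpW$ positive. Combining this with the converse below gives the chain ``$\pcpW$ positive'' $\Rightarrow$ ``$\psi^c_\pcpW$ satisfiable over $\RCP(\R^2)$'' $\Rightarrow$ ``$\psi^c_\pcpW$ satisfiable over $\RC(\R^2)$'' $\Rightarrow$ ``$\pcpW$ positive'', which establishes r.e.-hardness of both $\Sat(\cBc,\RC(\R^2))$ and $\Sat(\cBc,\RCP(\R^2))$ (the latter also following from Corollary~\ref{cor:cBcin}).

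For the converse I would take the explicit polyhedral model of $\psi_\pcpW$ obtained by extending Fig.~\ref{fig:concrete2} in the usual way, and enlarge it to a model of $\psi^c_\pcpW$ by choosing the connectors. For a conjunct $\neg C(\sigma,\tau)$ with $\sigma,\tau$ bounded, I set $\hat\sigma$ to be $\sigma$ plus a thin polygonal connector joining its components, routed through the scaffolding so as to avoid everything from which it must stay disjoint, and likewise for $\hat\tau$: then $\hat\sigma,\hat\tau$ are connected polygons that stay disjoint, so $\hat\sigma+\hat\tau$ is disconnected and $\noncontact(\hat\sigma,\hat\tau)$ holds. When one region figures in several $\neg C$-conjuncts whose fattenings would clash, I allocate a fresh copy of the connector per conjunct, exactly as the variables $s_i,t_i$ are used in the proof of Theorem~\ref{cor:inftyBc}. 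The implicit $\ll$-conjuncts need a little extra care, because the complement side forces $\noncontact$ to make a complement connected; I would handle this by redrawing the model so that every inner shell $\intermediate r$ (and every outer shell $r$) is \emph{simply connected} --- a thickened spanning tree of the components of $\inner r$ (respectively $\intermediate r$), laid along the tube that contains them --- so that $-\intermediate r$ (respectively $-r$) is itself connected and may serve as the outer connector, while the inner connector is a slight polygonal fattening of $\inner r$ kept inside $\ti{\intermediate r}$.

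The hard part will be the bookkeeping rather than any new idea: $\psi_\pcpW$ carries a long list of $\neg C$-conjuncts, and I must verify \emph{simultaneously}, within a single explicit arrangement, that all connectors can be chosen so as to be disjoint whenever two $\noncontact$-conjuncts would otherwise demand overlapping regions, on top of redrawing all shells to be simply connected. This is tedious but, as with Corollary~\ref{cor:cBcin} and the proof of Theorem~\ref{cor:inftyBc}, conceptually routine.
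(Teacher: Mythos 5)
Your overall strategy matches the paper's: reduce from $\psi_\pcpW$, exploit that every occurrence of $C$ in it is negative, and replace each $\neg C(\sigma,\tau)$ by a $\noncontact$-certificate on fattened copies with fresh connector variables per conjunct; your handling of the explicit conjuncts is sound. The gap is in your treatment of the implicit conjuncts $\inner r\ll\intermediate r$ and $\intermediate r\ll r$. You propose to redraw the model so that every shell $\intermediate r$ and $r$ becomes \emph{connected}---a thickened spanning tree of its kernel's components---so that $-\intermediate r$ and $-r$ are connected and may serve as outer connectors. But such a redrawing is not available. In Fig.~\ref{fig:concrete2} the kernels of, say, $\tseq a_{0,3}$, $\tseq a_{1,3}$ and $\tseq a_{2,3}$ occur interleaved periodically along the central corridor, and~\eqref{eq:pcp:C} forces their outer shells to be pairwise disjoint. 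Three or more families of interleaved points on a line cannot each be spanned by pairwise-disjoint connected plane sets: this is the familiar obstruction that three pairwise-interleaving chords of a cycle admit no non-crossing drawing. Making the shells connected would therefore destroy the non-contact constraints on which the whole encoding rests. (If you instead leave $r$ as a disjoint union of small disks, so that $-r$ is already connected, the other horn bites: in $\noncontact(\intermediate r+r',-r)$ the connector $r'$ must be $\le r$ to avoid $-r$, yet the components of $\intermediate r$ sit in different components of $r$, so no such $r'\le r$ can join them.)

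The paper's actual remedy side-steps both problems, and this is precisely the idea missing from your argument: represent $-r$ as $s_1+s_2$, where each $s_i$ is a polygon with connected complement (the two halves of $-r$ above and below a crenellated boundary threading between the components of $r$, as in Fig.~\ref{fig:crenellate}), and replace $\intermediate r\ll r$ by $\bigl((-r)=s_1+s_2\bigr)\land\noncontact(\intermediate r+r_1,s_1)\land\noncontact(\intermediate r+r_2,s_2)$. Now the connector $r_1$ need only avoid $s_1$---it may run through $s_2$'s portion of $-r$ to bridge the gaps between components of $\intermediate r$---and symmetrically for $r_2$; yet the two conjuncts jointly still entail $\neg C(\intermediate r,-r)$, since $\intermediate r$ is shown disjoint from $s_1$ and from $s_2$ separately. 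The conjunct $\inner r\ll\intermediate r$ is treated identically.
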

\begin{proof}
Again, observe that all conjuncts of $\psi_\pcpW$ featuring the
predicate $C$ are \emph{negative} (remember that there are additional
such literals implicit in the use of 3-region variables, e.g.,
$(\intermediate{r} \ll r)$; but let us ignore these for the moment.)
Recall the formula $\noncontact(r,s)$ from the proof of
Theorem~\ref{cor:inftyBc} and consider the effect of replacing any
literal $\neg C(r,s)$ in $\psi_\pcpW$ by the corresponding instance of
$\noncontact(r+r',s+s')$, where $r'$ and $s'$ are fresh variables;
denote the resulting formula by $\psi$. It is easy to see that $\psi$
entails $\psi_\pcpW$; hence if $\psi$ is satisfiable, then $\pcpW$ is
a positive instance of the PCP.

We next show that, if $\pcpW$ is a positive instance of the PCP, then
$\psi$ is satisfiable over $\RCP(\R^2)$. For consider a tuple from
$\RCP(\R^2)$ satisfying $\psi_\pcpW$, and based on the arrangement of
Fig.~\ref{fig:concrete2}. Note that if $\tseq{r}$ and $\tseq{s}$ are
3-regions whose outer shells, $r$ and $s$ are not in contact (e.g.,
$\tseq{a}_{0,1}$ and $\tseq{a}_{0,3}$), then $r$ and $s$ have ({\em
  i}) finitely many components, and ({\em ii}) connected
complements. Hence, it is easy to find polygons $r'$ and $s'$
satisfying $\noncontact(r+r',s+s')$. Fig.~\ref{fig:connectingRsAndSs}
represents the situation schematically.  We may therefore assume that
all such literals $\neg C(r,s)$ have been eliminated from
$\psi_\pcpW$.
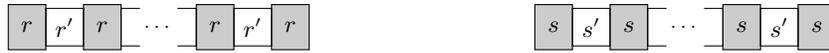
\begin{figure}[ht]
	\begin{center}
	\begin{tikzpicture}\small
        \foreach \l/\y in {r/0,s/7}
		{
            \foreach \x in {0,1,2.5,3.5}
            {
			     \draw[fill=black!20] (\x+\y,-0.3) rectangle ++(0.5,0.6) node[midway]{$\l$};
            }
            \foreach \x in {0.5,3}
            {
			     \draw (\x+\y,-0.25) rectangle ++(0.5,0.5) node[midway]{$\l'$};
            }
			\draw (1.75+\y,-0.25) --++ (-0.25,0) --++(0,0.5)--++(0.25,0);
			\node at (2+\y,0){\scriptsize$\dots$};
			\draw (2.25+\y,-0.25) --++ (0.25,0) --++(0,0.5)--++(-0.25,0);
		}		
	\end{tikzpicture}
	\end{center}
	\caption{Disjoint connected regions $r+r'$ and $s+s'$ for regions $r$ and $s$ with finitely many components and connected complements.}\label{fig:connectingRsAndSs}
\end{figure}

We are not quite done, however. We must show that we can replace the
\emph{implicit} non-contact constraints
$(\inner{r}\ll\intermediate{r})$ and $(\intermediate{r}\ll r)$ that
come with the use of each 3-region variable $\tseq{r}$ by suitable
\cBc-formulas.  Since the two conjuncts are identical in form, we
only show how to deal with $(\intermediate{r}\ll r)$, which, we recall,
is an alternative notation for
$\neg C(\intermediate{r}, -r)$.  Since the complement of $-r$
is in general not connected, a direct use of
$\noncontact(\intermediate{r}+r',(-r)+s')$ will result in an
unsatisfiable formula.  Instead, we represent $-r$ as the sum of two
regions $s_1$ and $s_2$ with connected complements, and then proceed
as before. In particular, we replace $(\intermediate{r} \ll r)$ by 
$((-r)=(s_1+s_2))  \land \noncontact(\intermediate r + r_1,s_1)
	\land \noncontact(\intermediate r + r_2,s_2)$.
%
For $i=1,2$, $\intermediate r+r_i$ is a connected region that is disjoint from
$s_i$. So, $\intermediate r$ is disjoint from $s_1$ and $s_2$, and hence
disjoint from their sum, $-r$. Fig.~\ref{fig:crenellate} shows regions
$r_1,s_1$ satisfying the above formula; the other pair, $r_2,s_2$ is the mirror image.
\begin{figure}[ht]
	\begin{center}
		\begin{tikzpicture}[scale=1.5,>=latex]\small
			\fill[gray!60] (-0.8,-0.6) rectangle (4.3,0);
			\draw (-0.8,0)--(4.3,0);
			\foreach \x in {0,1,3}
			{
				\filldraw[fill=white] ($(\x,0)+(0,-0.4)$) rectangle +(0.8,0.8);
				\filldraw[fill=gray!20] ($(\x,0)+(0.2,-.2)$) rectangle +(.4,.4) node[midway]{$\intermediate{r}$};
			}			
			\filldraw[fill=gray!20] (.3,.2)--++(0,.4)--++(1.8,0)--++(0,-.3)--(1.5,.3)--++(0,-.1)--++(-.2,0)--++(0,.1)--(.5,.3)--++(0,-.1)--++(-.2,0)--++(0,.1);
			\filldraw[fill=gray!20] (2.6,.6)--++(0.9,0)--++(0,-.4)--++(-.2,0)--++(0,.1)--(2.6,.3);
            \fill[white] (2.05,-0.61) rectangle +(0.7,1.22);
			\node at (-0.3,-.4) {$s_1$};
			\node at (1.3,.45) {$r_1$};
            \node at (2.4,0) {\dots};
            \node (r) at (-0.6,0.4) {$r$};
            \draw[ultra thin,->] (r) -- (0.1,0.1);
            \draw[ultra thin,->] (r) -- (1.1,0.1);
            \draw[ultra thin,->] (r) -- (3.1,0.1);
		\end{tikzpicture}
	\end{center}
	\caption{Disjoint connected regions $\intermediate{r}+r_1$ and $s_1\leq (-r)$ for $\intermediate{r}$ right inside $r$.}
	\label{fig:crenellate}
\end{figure}

Let $\psi^c_\pcpW$ be the result of replacing in $\psi_\pcpW$ all the (explicit
or implicit) conjuncts containing the predicate $C$, as just described. We have
thus shown that if $\psi^c_\pcpW$ is satisfiable over $\RC(\R^2)$ then
$\pcpW$ is positive, and, conversely, if $\pcpW$ is positive then $\psi^c_\pcpW$ is
satisfiable over $\RCP(\R^2)$. This completes the proof.
\end{proof}

\begin{theorem}\label{cor:LRCR2:Bci}
The problems $\Sat(\cBci,\RC(\R^2))$ and $\Sat(\cBci,\RCP(\R^2))$ are r.e.-hard.
\end{theorem}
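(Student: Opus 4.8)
The strategy mirrors the passage from \cBCc{} to \cBc{} carried out in Theorem~\ref{cor:inftyBc}, and from \cBCc{} to \cBci{} via $\noncontacti$ in the proof of Theorem~\ref{theo:inftyBci}, but now applied to the formula $\psi^c_\pcpW$ produced in the proof of Theorem~\ref{cor:LRCR2}. Recall that $\psi^c_\pcpW$ is a pure \cBc-formula: every occurrence of the contact predicate $C$ has already been eliminated in favour of the abbreviation $\noncontact(\cdot,\cdot)$, which in turn is built only from $c$ and Boolean operations. Moreover, all occurrences of the connectedness predicate $c$ in $\psi^c_\pcpW$ have \emph{positive} polarity inside $\noncontact$? — no: one must be careful here, since $\noncontact(\tau_1,\tau_2) = c(\tau_1)\wedge c(\tau_2)\wedge\neg c(\tau_1+\tau_2)$ contains a \emph{negative} occurrence of $c$. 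The point, rather, is that $\psi^c_\pcpW$ contains no occurrences of $C$ whatsoever, so it is already a \cBc-formula; what we must still check is that replacing every $c$ by $\ic$ preserves both directions of the reduction.

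So the plan is: first, set $\psi^\circ{}^c_\pcpW$ to be the result of uniformly replacing every $c$ in $\psi^c_\pcpW$ by $\ic$. For the easy direction — if $\pcpW$ is positive, then $\psi^\circ{}^c_\pcpW$ is satisfiable over $\RCP(\R^2)$ — I would re-examine the witnessing configuration of Fig.~\ref{fig:concrete2} together with the polygonal regions $r_1,s_1,r_2,s_2$ supplied in Figs.~\ref{fig:connectingRsAndSs} and~\ref{fig:crenellate}, and verify that each region there whose connectedness was asserted in fact has \emph{connected interior}, and each region whose \emph{non}-connectedness (via the $\neg c(\tau_1+\tau_2)$ conjunct) was asserted in fact has \emph{disconnected interior}. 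Since the witnessing regions are polyhedra built as finite unions of boxes with the obvious separations, interior-connectedness coincides with connectedness for each individual asserted piece, while for a disjoint union of two polyhedra with disjoint \emph{closures} the interior is also disconnected; so the replacement goes through. For the hard direction — if $\psi^\circ{}^c_\pcpW$ is satisfiable over $\RC(\R^2)$ (or over $\RCP(\R^2)$), then $\pcpW$ is positive — the key observation is that interior-connectedness implies connectedness for regular closed sets, so every conjunct $\ic(\tau)$ entails $c(\tau)$; and conversely, $\neg\ic(\tau_1+\tau_2)$ does \emph{not} in general entail $\neg c(\tau_1+\tau_2)$, so the naive replacement does \emph{not} obviously entail $\psi^c_\pcpW$. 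This is the main obstacle.

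The way around it is the one already used for $\noncontacti$ in Theorem~\ref{theo:inftyBci}: the abbreviation $\noncontact(\tau_1,\tau_2)$ was only ever used in $\psi^c_\pcpW$ in the guaranteed form $\noncontact(r+r',s+s')$ where $r',s'$ are \emph{fresh} variables (and similarly for the 3-region shell constraints, which were rewritten as $((-r)=(s_1+s_2))\wedge\noncontact(\intermediate r+r_1,s_1)\wedge\noncontact(\intermediate r+r_2,s_2)$ with fresh $r_1,r_2$). The crucial fact, proved exactly as in the $\neg C$-direction of the passage from \cBCc{} to \cBc, is that $\noncontact(\tau_1,\tau_2)$ implies $\neg C(\tau_1',\tau_2')$ for all $\tau_1'\le\tau_1$, $\tau_2'\le\tau_2$ — and for \emph{this} implication we do not need $\neg c$, we only need that $\tau_1$ and $\tau_2$ are each connected and their sum is not: if they were in contact, their union would be connected. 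So I would instead introduce the \cBci-abbreviation
\[
\noncontacti_*(\tau_1,\tau_2) \ \ = \ \ \ic(\tau_1)\wedge\ic(\tau_2)\wedge\neg c(\tau_1+\tau_2),
\]
leaving the \emph{negative} occurrence as $\neg c$ rather than $\neg\ic$. But $\neg c$ is still not available in \cBci. Hence the right move is precisely the $\noncontacti$ device of Lemma~\ref{lma:Cci2BciStar}: replace each implicit or explicit literal $\neg C(\tau_1,\tau_2)$ in $\psi_\pcpW$ (not in $\psi^c_\pcpW$) directly by $\noncontacti(\vec r)\wedge(\tau_1\le r_1)\wedge(\tau_2\le r_2)$ with $\vec r$ fresh, exactly as was done for $\phi^*_\infty$ in the last paragraph of the proof of Theorem~\ref{theo:inftyBci}. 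By Lemma~\ref{lma:Cci2BciStar}(i) the resulting \cBci-formula entails $\psi_\pcpW$, giving the ``only if'' direction; and by Lemma~\ref{lma:Cci2BciStar}(ii), to establish the ``if'' direction it suffices to separate, by a Jordan curve (indeed a piecewise-linear one, yielding polygons), every pair of disjoint regions occurring in the polygonal witnessing configuration of Fig.~\ref{fig:concrete2}.

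So the remaining work is: (1) check that in the configuration of Fig.~\ref{fig:concrete2}, every pair of regions $\tau_1,\tau_2$ for which $\psi_\pcpW$ asserts $\neg C(\tau_1,\tau_2)$ — the explicit ones in~\eqref{eq:pcp:C} and the block-matching stages, plus the implicit shell constraints $\inner r\ll\intermediate r\ll r$ — can be separated by a piecewise-linear Jordan curve; (2) check, as in the easy direction above, that the positively-occurring $c$'s remaining in $\psi_\pcpW$ (there are none that are not inside a $\noncontact$, once we recall $\psi_\pcpW$ itself only uses $c$ positively, e.g. in $c(\intermediate r_n)$, $c(b_{1,5}+d_3)$, etc.) can be replaced by $\ic$ while preserving satisfiability in $\RCP(\R^2)$, since in the witnessing polygonal arrangement these regions have connected interiors. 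Point (1) is routine but tedious: the kernels in Fig.~\ref{fig:concrete2} are boxes stacked in a planar ``onion'', and any two disjoint ones are separated either by a straight line or by a rectangular annulus, exactly as in Fig.~\ref{fig:BciInftySep}. I would therefore conclude, as in Corollary~\ref{cor:LRCR2:Cci} and Theorem~\ref{cor:LRCR2}, that the following are equivalent: $\pcpW$ is positive; the resulting \cBci-formula is satisfiable over $\RCP(\R^2)$; it is satisfiable over $\RC(\R^2)$. Since the set of positive PCP instances is r.e.-complete, both $\Sat(\cBci,\RC(\R^2))$ and $\Sat(\cBci,\RCP(\R^2))$ are r.e.-hard. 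The main obstacle, as indicated, is the asymmetry of $\ic$ versus $c$ under the $\neg$ in $\noncontact$; resolving it forces us to go back to $\psi_\pcpW$ and use $\noncontacti$ rather than to post-process $\psi^c_\pcpW$.
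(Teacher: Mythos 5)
Your analysis of where the difficulty lies is largely on target: you correctly observe that naively replacing $c$ by $\ic$ in $\psi^c_\pcpW$ fails (since $\neg\ic$ does not entail $\neg c$), and you correctly conclude that the right move is to go back to $\psi_\pcpW$ (equivalently $\psi^\circ_\pcpW$) and replace the occurrences of $\neg C$ via $\noncontacti$ and Lemma~\ref{lma:Cci2BciStar}. This matches the paper's route. However, your treatment of the satisfiability direction has a genuine gap.

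You claim that in the witnessing configuration ``any two disjoint ones are separated either by a straight line or by a rectangular annulus,'' and that this suffices for Lemma~\ref{lma:Cci2BciStar}(\emph{ii}). This fails for the implicit shell constraints $\intermediate{r}\ll r$, i.e.\ $\neg C(\intermediate{r},-r)$, whenever the 3-region $\tseq{r}$ has more than one component (as $\tseq{a}_{i,j}$, $\tseq{b}_{i,j}$, etc.\ do in Fig.~\ref{fig:concrete2}). In that case $\intermediate{r}$ is a disjoint union of boxes $I_1,\dots,I_k$ and $-r$ is the (connected, unbounded) closure of the complement of the disjoint boxes $R_1,\dots,R_k$ of $r$. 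Any Jordan curve separating $\intermediate{r}$ from $-r$ must lie in $\R^2\setminus(\intermediate{r}\cup(-r)) = \ti{r}\setminus\intermediate{r}$, which is a disjoint union of $k$ open annuli; a connected Jordan curve lies in just one of them and therefore leaves the remaining $I_j$'s on the same side as $-r$. So no separating Jordan curve exists when $k\ge 2$, and Lemma~\ref{lma:Cci2BciStar}(\emph{ii}) cannot be applied directly. (The paper flags exactly this when it says that Lemma~\ref{lma:Cci2BciStar} ``cannot directly'' be used because ``the regions in question may not necessarily be bounded.'') The missing ingredient is the paper's notion of a \emph{quasi-bounded} region, the decomposition of each of $r$ and $s$ (or $\intermediate{r}$ and $-r$) into sums $r_1+r_2$, $s_1+s_2$ of regions with \emph{connected complements}, the appeal to Newman's lemma (Lemma~\ref{lma:Newman}) to ensure $\R^2\setminus(r_i+s_j)$ is connected, and the replacement of $\neg C(r,s)$ by the four-fold conjunction $\chi(r,s,\vec v)$ applying $\noncontacti$ to each pair $(r_i,s_j)$. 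Without that device your argument does not yield a satisfying polygonal assignment for the resulting $\cBci$-formula.
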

\begin{proof}
We begin with the \cBCci-formula $\psi^\circ_\pcpW$ constructed in the
proof of r.e.-hardness result in Corollary~\ref{cor:LRCR2:Cci}. We
proceed by eliminating occurrences of $C$. However, we cannot directly
use the same Lemma~\ref{lma:Cci2BciStar} as in the proof of
Theorem~\ref{theo:inftyBci} because the regions in question may not
necessarily be bounded.  For instance, consider the formula
$(\intermediate{s}_0\ll s_0)$, which is an alternative notation for
$\neg C(\intermediate{s}_0,-s_0)$: although the region
$\intermediate{s}_0$ in Fig.~\ref{fig:concrete1} is evidently bounded,
$-s_0$ is not. We proceed as follows.  Say that a region $r$ is
\emph{quasi-bounded} if either $r$ itself or its complement, $-r$, is
bounded. Since all the polygons in the tuple satisfying
$\psi^\circ_\pcpW$ are quasi-bounded, we can eliminate all occurrences
of $C$ from $\psi^\circ_\pcpW$ using the following
fact~\cite[p.~137]{Newman64}:
\begin{lemma}
\label{lma:Newman}
Let $F$, $G$ be disjoint, closed subsets of $\R^2$ such that
both $\R^2\setminus F$ and $\R^2 \setminus G$ are connected. Then
$\R^2\setminus (F \cup G)$ is connected.
\end{lemma}

So, suppose we have a conjunct $\neg C(r,s)$ in $\psi^\circ_\pcpW$. We
consider the following formula:
\begin{equation*}
\chi(r,s,\vec{v}) \  =  \ (r = r_1 + r_2) \land (s = s_1 + s_2) \land
  \bigwedge_{1 \leq i,j \leq 2}
    \bigl(\noncontacti(\vec{v}_{ij}) \land (r_i \leq v^1_{ij}) \land (s_j \leq v^2_{ij})\bigr),
\end{equation*}
where $\vec{v}$ is a vector of variables containing $r_1,r_2,s_1,s_2$
and the $v^1_{ij},\dots,v^5_{ij}$, for $1 \leq i,j\leq 2$, and
$\noncontacti(v^1,\dots,v^5)$ is the formula defined before
Lemma~\ref{lma:Cci2BciStar}. By
Lemma~\ref{lma:Cci2BciStar}~(\emph{i}), $\chi(r,s,\vec{v})$ entails
$\neg C(r,s)$ over $\RC(\R^2)$. We also show that, conversely, if $a$
and $b$ are disjoint quasi-bounded polygons then there exists a tuple
of polygons $\vec{e}$ such that $(a, b, \vec{e})$ satisfies
$\chi(r,s,\vec{v})$.  Indeed, it is routine to show that, for each
quasi-bounded region $a$, there exist a pair of regular closed polygons $a_1$ and
$a_2$ such that $a = a_1 + a_2$ and both $\R^2 \setminus a_1$ and
$\R^2 \setminus a_2$ are connected. Let $b_1$ and $b_2$ be chosen
analogously for $b$. Then, for all $1 \leq i,j \leq 2$, we have $a_i
\cap b_j = \emptyset$ and, by Lemma~\ref{lma:Newman}, $\R^2 \setminus
(a_i + b_j)$ is connected. Thus, there exists a piecewise-linear
Jordan curve in $\R^2 \setminus(a_i+b_j)$ separating $a_i$ and $b_j$.
By Lemma~\ref{lma:Cci2BciStar}~(\emph{ii}), let $\vec{e}_{ij}$ be a
tuple of polygons satisfying $\noncontacti(\vec{v}_{ij})$ and such
that $a_i \leq e^1_{ij}$ and $b_j \leq e^2_{ij}$. It should be clear
that the tuple of $a_1,a_2,b_1,b_2$ and the $\vec{e}_{ij}$, for $1
\leq i,j\leq 2$, is as required.

By replacing all occurrences of $C$ in $\psi^\circ_\pcpW$ as
described above, we obtain a \cBci-formula, say
$\psi^*_\pcpW$, such that, if $\psi^*_\pcpW$ is
satisfiable over $\RC(\R^2)$, then $\pcpW$ is a positive instance of
PCP; and, conversely, if $\pcpW$ is a positive instance of PCP, then
$\psi^*_\pcpW$ is satisfiable over $\RCP(\R^2)$.
\end{proof}

\section{The language \cBci{} in dimensions greater\\ than 2}
\label{sec:3d}
In this section, we consider the complexity of satisfying
\cBci-formulas by regular closed polyhedra and regular
closed sets in three-dimensional Euclidean space. We proceed by
analysing the connections between geometrical and graph-theoretic
interpretations of \cBci.

A topological space $T$ in which the intersection of {\em any} family
of open sets is open is called an \emph{Aleksandrov space}.  Every
quasi-order $(W,R)$, that is, a transitive and reflexive relation $R$
on $W$, can be regarded as an Aleksandrov space by taking $X \subseteq
W$ to be open just in case $x \in X$ and $xRy$ imply $y\in X$. (Hence,
$X$ is closed just in case $x \in X$ and $yRx$ implies $y \in X$.) It
can be shown~\cite{Bourbaki66} that every Aleksandrov space is the
homeomorphic image of one constructed in this way.  In the sequel, we
shall silently treat any quasi-order $(W,R)$ as a topological space.

It turns out that, to satisfy all satisfiable \cBc- and
\cBci-formulas, Aleksandrov spaces of rather primitive structure are
enough. Call a quasi-order $(W_0\cup W_1,R)$ a \emph{quasi-saw} if
$W_0$ and $W_1$ are disjoint and $R$ is the reflexive closure of a
relation $R' \subseteq W_1 \times W_0$ with domain $W_1$. The points
in $W_i$ are said to be \emph{of depth} $i$ in $(W_0\cup W_1,R)$; see
Fig.~\ref{fig:qsaw}.
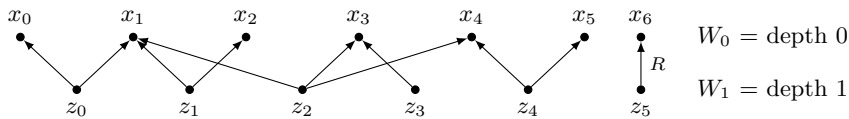
\begin{figure}[ht]
\begin{center}
\begin{tikzpicture}[>=latex,point/.style={circle,draw=black,minimum size=1mm,inner sep=0pt},yscale=0.7]\small
\foreach \x in {0,1,2,3,4,5}
{%
    \node (x \x) at (1.5*\x,1) [point,fill=black,label=above:{$x_\x$}] {};
}
\foreach \x in {0,1,2,3,4}
{
    \node (z \x) at (1.5*\x + 0.75,0) [point,fill=black,label=below:{$z_\x$}] {};
}
\node (x 6) at (8.25,1) [point,fill=black,label=above:{$x_6$}] {};
\node (z 5) at (8.25,0) [point,fill=black,label=below:{$z_5$}] {};
\foreach \z/\x in {0/0,0/1,1/1,1/2,2/1,2/3,2/4,3/3,4/4,4/5}
{
    \draw[->] (z \z) -- (x \x);
}
\draw[->] (z 5) -- (x 6) node[midway,right] {\scriptsize $R$};
\node at (10,1) {$W_0$ = depth 0};
\node at (10,0) {$W_1$ = depth 1};
\end{tikzpicture}
\end{center}
\caption{Quasi-saw.}
\label{fig:qsaw}
\end{figure}

Every regular closed set $X$ in a quasi-saw $(W_0 \cup W_1,R)$ is
uniquely defined by its points of depth 0:
\begin{align*}
\text{for each } z\in W_1,\quad z\in X \quad\text{iff}\quad \text{there is } x\in W_0 \cap X \text{ such that } zRx.
\end{align*}
A \emph{quasi-saw model} is a model based on a quasi-saw (with
variables interpreted by regular closed sets).  The proof of the
following lemma follows from~\cite[Lemmas 4.1 and 4.2]{iscloes:kp-hwz10} (see also~\cite{Wolter&Z00ecai}). But the critical observation can, in essence,
be found already in~\cite{McKinsey&Tarski44} and~\cite{Kripke63}: for
every formula $\varphi$ and
every topological model $\mathfrak{I}$, there exist a finite Aleksandrov
model $\mathfrak{A}$ and a continuous
function $f\colon \mathfrak{I}
\to \mathfrak{A}$ such that $\tau^\mathfrak{A} = f(\tau^\mathfrak{I})$ for every term $\tau$ in $\phi$.
\begin{lemma}\label{lemma:quasi-saw}
Let $\varphi$ be a \cBc- or \cBci-formula. If $\varphi$ is satisfiable
over $\RegC$ then it can be satisfied in a finite quasi-saw model.
\end{lemma}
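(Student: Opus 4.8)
The plan is to prove Lemma~\ref{lemma:quasi-saw} in two steps: first, reduce satisfiability over $\RegC$ to satisfiability in a \emph{finite Aleksandrov model}; second, transform a finite Aleksandrov model into a finite \emph{quasi-saw} model while preserving the truth value of $\varphi$.

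\medskip

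\noindent
\textbf{Step 1: from $\RegC$ to finite Aleksandrov models.}
Suppose $\varphi$ is satisfied by an interpretation $\mathfrak{I}$ over some frame $(T,\RC(T))$. By the classical McKinsey--Tarski / Kripke machinery, for the finitely many terms $\tau_1,\dots,\tau_N$ occurring in $\varphi$ we obtain a finite Aleksandrov space $\mathfrak{A}$ (equivalently, a finite quasi-order $(W,R)$) and a continuous map $f\colon T \to \mathfrak{A}$ with $\tau_i^{\mathfrak A} = f[\tau_i^{\mathfrak I}]$ for every $i$. One then checks that the relevant predicates are preserved: equalities of terms are obviously preserved; $C(\tau_i,\tau_j)$ is preserved because $f$ is a continuous \emph{surjection} onto $\mathfrak{A}$, so $\tau_i^{\mathfrak I}\cap \tau_j^{\mathfrak I}\neq\emptyset$ iff the images meet (one direction is immediate, the other uses surjectivity and that $f$ maps closed-to-closed up to closure in finite spaces); and $c(\tau)$, $\ic(\tau)$ are preserved because continuous images of connected sets are connected, while for the converse direction, in a finite space connectedness of $f[\tau^{\mathfrak I}]$ forces connectedness of $\tau^{\mathfrak I}$ via a standard argument using that $f^{-1}$ of a separation separates. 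Since all occurrences work out uniformly under negation, $\mathfrak{A}$ together with $f$ yields a finite Aleksandrov model of $\varphi$. This is exactly the content of \cite[Lemmas 4.1, 4.2]{iscloes:kp-hwz10}, which I would cite rather than re-prove.

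\medskip

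\noindent
\textbf{Step 2: from finite Aleksandrov models to quasi-saws.}
Given a finite Aleksandrov model $\mathfrak{A}$ on a quasi-order $(W,R)$, the idea is to ``unravel'' it to depth two: build a quasi-saw $(W_0\cup W_1, R_s)$ where $W_0$ is a copy of $W$ (the points of depth $0$), and $W_1$ contains, for each $R$-edge $wRw'$ with $w\neq w'$ (and, say, one extra point below each $w$), a new point $z$ of depth $1$ with $z\mathrel{R_s}$ exactly the depth-$0$ copies of the points $R$-above $z$'s ``source''. Concretely, a clean choice is: put a point of depth $1$ below each pair $\{w : w\in\,\uparrow\! v\}$ for suitable $v$, chosen so that the \emph{specialization pattern} on depth-$0$ points that matters for the terms of $\varphi$ is reproduced. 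We then interpret each variable $r$ in the quasi-saw by the regular closed set determined (as in the displayed formula of the paper) by its depth-$0$ points, taken to be the copy of $r^{\mathfrak A}\cap W_0'$ where $W_0'$ are the points that were ``open'' in the relevant sense. The verification is that, for every term $\tau$ of $\varphi$, the depth-$0$ trace of $\tau$ in the quasi-saw equals the copy of $\tau^{\mathfrak A}$ restricted appropriately, and that $C$, $c$/$\ic$ are preserved; connectedness arguments go through because the bipartite ``saw'' retains enough of the incidence structure between points and their neighbourhoods. Again this is the substance of the cited lemmas and of \cite{Wolter&Z00ecai}, so I would reference those and only sketch the correspondence, emphasizing that the construction is uniform in $\varphi$ and blows the model up only polynomially in $|W|$ (in fact this polynomial bound is what later yields \NP-ness for \cBci).

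\medskip

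\noindent
\textbf{Main obstacle.}
The delicate point is the preservation of $c$ and $\ic$ in \emph{both} directions through the reductions --- in particular, that collapsing to a finite space (Step~1) does not spuriously connect a region, and that unravelling to a quasi-saw (Step~2) does not spuriously disconnect the interior of a region (relevant for $\ic$, where we must ensure $\ti\tau$ stays connected). The safe route is to lean on the already-published Lemmas~4.1 and 4.2 of \cite{iscloes:kp-hwz10}, whose statements are precisely tailored to give a quasi-saw model, so that the proof here reduces to: (i) invoke the finite-Aleksandrov reduction with the continuous map $f$ as stated in the excerpt, and (ii) invoke the Aleksandrov-to-quasi-saw step from those lemmas. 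I would therefore write the proof as a short bridging argument citing \cite{iscloes:kp-hwz10, Wolter&Z00ecai, McKinsey&Tarski44, Kripke63}, rather than reconstructing the topology from scratch.
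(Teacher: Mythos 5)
Your proposal matches the paper's proof exactly: the paper also gives no fresh construction, instead citing Lemmas~4.1 and 4.2 of \cite{iscloes:kp-hwz10} (together with \cite{Wolter&Z00ecai}) and remarking that the key ingredient --- a finite Aleksandrov model $\mathfrak{A}$ and a continuous $f\colon\mathfrak{I}\to\mathfrak{A}$ with $\tau^{\mathfrak{A}}=f(\tau^{\mathfrak{I}})$ for all terms $\tau$ of $\varphi$ --- goes back to \cite{McKinsey&Tarski44} and \cite{Kripke63}, which is precisely the bridging argument you propose. One caution about your Step~1 gloss, though: the claim that, for a continuous map into a finite space, ``connectedness of $f[\tau^{\mathfrak I}]$ forces connectedness of $\tau^{\mathfrak I}$'' is false in general and is not the direction required; what the cited construction actually guarantees, by choosing $f$ to respect the Boolean/closure structure of the finitely many terms of $\varphi$, is that \emph{disconnectedness} of $\tau^{\mathfrak I}$ (resp.\ of $\ti{(\tau^{\mathfrak I})}$) is reflected as disconnectedness of $\tau^{\mathfrak A}$ (resp.\ $\ti{(\tau^{\mathfrak A})}$), which is what negative literals $\neg c(\tau)$, $\neg\ic(\tau)$ need --- but since you defer to the published lemmas rather than re-deriving $f$, the proof strategy is sound.
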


We begin by briefly discussing the results of~\cite{KPZ10kr} for the
\emph{polyhedral} case. Denote by $\ConRC$ the class of all frames over
connected topological spaces with regular closed regions.
For a $\cBci$-formula $\phi$, let $\phi^{\bullet}$ be the result of
replacing every occurrence of $\ic$ in $\phi$ with $c$. Evidently, the
mapping $\phi \mapsto \phi^{\bullet}$ is a bijection from \cBci{} to
\cBc.
\begin{theorem}\label{theo:BciRCRP3}
For all $n \geq 3$, the mapping $\phi \mapsto \phi^\bullet$
constitutes a reduction of $\Sat(\cBci,\RCP(\R^n))$ to
$\Sat(\cBc,\ConRC)$. Hence, the problems $\Sat(\cBci,\RCP(\R^n))$
coincide, and are all \ExpTime-complete.
\end{theorem}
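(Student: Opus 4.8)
The plan is to establish the biconditional behind the claimed reduction: for every \cBci-formula $\phi$ and every $n\ge3$, $\phi$ is satisfiable over $\RCP(\R^n)$ if and only if $\phi^\bullet$ is satisfiable over $\ConRC$. Both implications will be routed through a common intermediate object, a \emph{finite connected quasi-saw model}, and both rest on the one structural fact that, in the arrangement of $\R^n$ cut out by finitely many hyperplanes, each codimension-$1$ cell lies in the closure of exactly two top-dimensional cells. Granting the biconditional, the theorem follows at once: the map $\phi\mapsto\phi^\bullet$ is a linear-time computable bijection from \cBci{} onto \cBc{} (its inverse replaces each $c$ by $\ic$), so the biconditional makes $\Sat(\cBci,\RCP(\R^n))$ and $\Sat(\cBc,\ConRC)$ mutually many-one reducible, whence the former is \ExpTime-complete because the latter is~\cite{KPZ10kr} ($\ExpTime$-membership is in any case clear, since adjoining $c(1)$ reduces $\Sat(\cBc,\ConRC)$ to $\Sat(\cBc,\RegC)$); and since the biconditional does not mention $n$, we get $\Sat(\cBci,\RCP(\R^n))=\{\phi : \phi^\bullet\in\Sat(\cBc,\ConRC)\}$ for all $n\ge3$, so these sets literally coincide.

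For the direction from $\RCP(\R^n)$ to $\ConRC$, I would take a tuple $\vec a$ of regular closed polyhedra in $\R^n$ satisfying $\phi$ and pass to the finite cell complex cut out by all hyperplanes appearing in the defining half-spaces of the $a_i$, so that each $a_i$ is the union of the closures of the top-dimensional cells it contains. Let $\mathfrak Q$ be the quasi-saw whose depth-$0$ points are the top cells, whose depth-$1$ points are the codimension-$1$ cells, and in which a depth-$1$ point is below a depth-$0$ point iff the former lies in the closure of the latter; by the structural fact, this really is a quasi-saw. Interpret each variable $r$ by the regular closed subset of $\mathfrak Q$ whose depth-$0$ part is the set of top cells inside $r^{\mathfrak I}$. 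One checks routinely that this extends to a Boolean homomorphism with $\tau^{\mathfrak J}$ having depth-$0$ part the top cells inside $\tau^{\mathfrak I}$ for every term $\tau$. The decisive point is that $\tau^{\mathfrak J}$ is connected in $\mathfrak Q$ iff $\ti{(\tau^{\mathfrak I})}$ is connected in $\R^n$: connectedness in the quasi-saw is connectedness of the bipartite incidence graph on the depth-$0$ and depth-$1$ points of $\tau^{\mathfrak J}$, and because a codimension-$1$ cell has exactly two top cofaces, the non-pendant depth-$1$ nodes of that graph are precisely those lying in $\ti{(\tau^{\mathfrak I})}$, so the graph is connected exactly when the open polyhedron $\ti{(\tau^{\mathfrak I})}$ is. Taking $\tau=1$ shows $\mathfrak Q$ is a connected space, so this is a model over $\ConRC$, and the equivalence ``connected in $\mathfrak Q$'' $\leftrightarrow$ ``interior-connected in $\R^n$'' is exactly what turns $\models\phi$ into $\models\phi^\bullet$.

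For the converse, the harder direction, suppose $\phi^\bullet$ holds in a model $\mathfrak I$ over a connected space $T$. By the observation preceding Lemma~\ref{lemma:quasi-saw}, the associated finite Aleksandrov model $\mathfrak A$ arises via a continuous map $f$ from $\mathfrak I$, and $f$ is onto since $f(1^{\mathfrak I})=1^{\mathfrak A}$ is the whole of $\mathfrak A$; hence $\mathfrak A$, being a continuous image of a connected space, is connected, and so by Lemma~\ref{lemma:quasi-saw} (and its proof) $\phi^\bullet$ is satisfied in a finite connected quasi-saw $\mathfrak Q=(W_0\cup W_1,R)$. I would then realise $\mathfrak Q$ in $\R^n$ for $n\ge3$: attach to each depth-$0$ point $x$ a small polyhedral junction $C_x$ and to each depth-$1$ point $z$ a polyhedral ``spider'' $G_z$, a thickened tree with one leg running into $C_x$ for each up-neighbour $x$ of $z$, with one distinguished spider enlarged so that the $C_x$ and the $G_z$ together exhaust $\R^n$. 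Since this is a thickening of a drawing of the finite incidence graph of $\mathfrak Q$, and any finite graph embeds in $\R^3$, the pieces can be taken pairwise disjoint except along the $C_x$; this is the sole use of $n\ge3$, and it is unavoidable, since that incidence graph may contain $K_{3,3}$. Interpreting a variable $r$ whose depth-$0$ part is $A$ by the polyhedron $[r]$ consisting of the $C_x$ for $x\in A$ together with, for each $z$ meeting $A$, the sub-spider of $G_z$ carrying only the legs toward $A$, one verifies that $r\mapsto[r]$ is a Boolean homomorphism into $\RCP(\R^n)$ (the exhausting spider gives $[1]=\R^n$ and makes complements behave) and that $\ti{[\tau^{\mathfrak J}]}$ is connected iff $\tau^{\mathfrak J}$ is connected in $\mathfrak Q$, since even a ``boundary'' depth-$1$ point still physically links all of its in-$A$ neighbours through the core of its spider. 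This yields an interpretation over $\RCP(\R^n)$ satisfying $\phi$.

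I expect this polyhedral realisation of the connected quasi-saw to be the main obstacle: drawing and thickening the incidence graph is routine once $n\ge3$, but laying out the spiders so that they genuinely tile $\R^n$ (so that $0$, $1$ and complements come out right) and so that ``boundary'' depth-$1$ points neither create nor destroy interior-connectedness demands careful bookkeeping. A secondary point requiring attention is the sharpening of Lemma~\ref{lemma:quasi-saw} to produce a \emph{connected} quasi-saw from a connected starting space; this is immediate from the continuous-surjection construction behind that lemma, but it is essential, since the version of the theorem with $\RegC$ in place of $\ConRC$ is false (a formula forcing its whole space to be disconnected has no model in the connected space $\R^n$).
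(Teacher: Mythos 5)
Your forward direction --- passing to the cell complex cut out by the defining hyperplanes, observing that each codimension-$1$ cell bounds exactly two top cells, and reading off a connected $2$-quasi-saw over which $c$ and $\ic$ coincide --- is essentially the paper's argument and is sound. The converse direction has a genuine gap, and it sits exactly in the ``spider'' step. For $r\mapsto[r]$ to be a Boolean homomorphism, each spider $G_z$ must be \emph{partitioned} among its legs (the legs toward $A$ go to $[\tau]$, the rest to $[-\tau]$); so when only some of $z$'s successors lie in $A$, the only part of $G_z$ that $[\tau]$ receives is the legs toward $A$. For a naive thickened tree with $k\ge 4$ legs (say, four wedges meeting along an axis), two non-adjacent legs share no $(n-1)$-face, so $\ti{(L_{z,x_1}+L_{z,x_3})}$ is disconnected even though $\{x_1,z,x_3\}$ is connected in $\mathfrak Q$. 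The claimed equivalence ``$\ti{[\tau^{\mathfrak J}]}$ is connected iff $\tau^{\mathfrak J}$ is connected in $\mathfrak Q$'' therefore fails, and the phrase ``linked through the core of its spider'' is precisely where the argument stops being an argument: the core is either measure zero or else split among the legs, and in neither case does it supply interior connectivity. Repairing the construction as written would require the legs of each spider to be \emph{pairwise} interior-adjacent --- a Tietze-style configuration with a separate shared face for every pair of legs --- together with verification that such a tiling exists for every arity and introduces no spurious adjacencies; none of this is delivered by ``thickened tree.''

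The paper sidesteps all of this with an intermediate combinatorial step that your proposal omits: before realizing anything in $\R^n$, it passes from the connected quasi-saw to a connected $2$-\emph{quasi-saw}, in which every depth-$1$ point has exactly two $R$-successors. Such a conversion preserves the Boolean algebra of regular closed sets and the truth of $c(\tau)$ for every term (for instance, replace a depth-$1$ node $z$ with successors $x_1,\dots,x_k$ by $\binom{k}{2}$ new nodes $z_{ij}$ with $z_{ij}Rx_i$ and $z_{ij}Rx_j$; the bipartite component structure restricted to any regular closed set is unchanged). A connected $2$-quasi-saw is just the coding of a connected graph on $W_0$, which the paper then identifies with the neighbourhood graph of a connected partition of $\R^n$ into interior-connected polyhedra. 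In that realization every depth-$1$ point is a single shared face between two junctions; the arity-$\ge 3$ difficulties vanish, and the interior-connectedness of $[\tau]$ manifestly tracks connectedness of $\tau^{\mathfrak J}$. That $2$-quasi-saw reduction is the missing idea. You are right that $n\ge 3$ is needed for the drawability of the (possibly $K_{3,3}$-containing) graph, but after the $2$-quasi-saw conversion that is the \emph{only} obstruction, whereas in your version it is a second and subsidiary one alongside the unresolved spider-leg adjacency problem.
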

\begin{proof}
A \emph{connected partition} in $\RCP(\R^n)$ is a tuple
$X_1,\dots,X_k$ of non-empty polyhedra having connected and pairwise
disjoint interiors, which sum to the entire space $\R^n$. The
\emph{neighbourhood graph} $(V,E)$ of this partition has vertices $V =
\{X_1, \dots, X_k\}$ and edges
\begin{equation*}
E = \{ (X_i, X_j) \mid i \ne j
\text{ and } \ti{(X_i + X_j)} \text{ is connected}\};
\end{equation*}
see Fig.~\ref{fig:conn-part}.
\begin{figure}[ht]
\begin{center}
\begin{tikzpicture}[point/.style={circle,draw=black,minimum size=1mm,inner sep=0pt},scale=0.2mm]
\filldraw[fill=gray!20] (0,-2,-3) -- (0,2,-3) -- (0,2,2) -- (0,-2,2) --   cycle;
\filldraw[fill=gray!60,fill opacity=0.5] (0,0,0) -- (3,0,0) -- (3,-2,-3) -- (0,-2,-3) --   cycle;
\filldraw[fill=gray!10,fill opacity=0.5] (0,0,-3) -- (0,0,2) -- (3,0,2) -- (3,0,-3) --   cycle;
\filldraw[fill=white,fill opacity=0.5] (0,0,0) -- (3,0,0) -- (3,2,0) -- (0,2,0) --   cycle;
\begin{scope}[dashed,draw=white]
\clip (0,0,0) -- (3,0,0) -- (3,2,0) -- (0,2,0) --   cycle;
\draw (0,-2,-3) -- (0,2,-3);
\draw (0,0,2) -- (0,0,-3) -- (3,0,-3);
\end{scope}
\begin{scope}[dashed,draw=white]
\clip (0,0,-3) -- (0,0,2) -- (3,0,2) -- (3,0,-3) --   cycle;
\draw (0,-2,-3) -- (0,2,-3);
\draw (0,-2,-3) -- (0,0,0);
\end{scope}
\filldraw[dashed,fill=gray,fill opacity=0.5] (0,0,-1.5) -- (2,0,0) -- (0,1.5,0)  --   cycle;
\draw (2,0,0) -- (0,1.5,0);
\node (1) at (-1.5,0,0) {\small $X_1$};
\node (2) at (1.8,-1,1.5) {\small $X_2$};
\node (3) at (3,-1,-2.5) {\small $X_3$};
\node (4) at (1.8,1.5,-2.5) {\small $X_4$};
\node (5) at (2.5,.5,2) {\small $X_5$};
\node (6) at (0.3,0.3,-0.3) {\small $X_6$};
\begin{scope}[xshift=110mm]
\node [label=left:{\small $X_1$}] (v1) at (-1.5,0,0) [point] {};
\node [label=right:{\small $X_2$}](v2) at (1.8,-1.5,1.5)[point] {};
\node [label=right:{\small $X_3$}](v3) at (1.8,-1.5,-2.5)[point] {};
\node [label=right:{\small $X_4$}](v4) at (1.8,1.5,-2.5)[point] {};
\node [label=right:{\small $X_5$}](v5) at (1.8,1.5,1.5)[point] {};
\node [label=above:{\small $X_6$}] (v6) at (0,0.8,0)[point] {};
\draw (v1) -- (v2);
\draw (v2) -- (v3);
v\draw (v1) -- (v3);
\draw (v3) -- (v4);
\draw (v2) -- (v5);
\draw (v1) to [bend right, looseness=0.3] (v5);
\draw (v1) to [bend left, looseness=1.5] (v4);
\draw (v3) to [bend left, looseness=0.3] (v6);
\draw (v4) -- (v5);
\draw (v1) -- (v6);
\draw (v4) -- (v6);
\draw (v5) -- (v6);
\end{scope}
\end{tikzpicture}
\end{center}
\caption{A connected partition in $\RCP(\R^3)$ and its neighbourhood graph.}\label{fig:conn-part}
\end{figure}
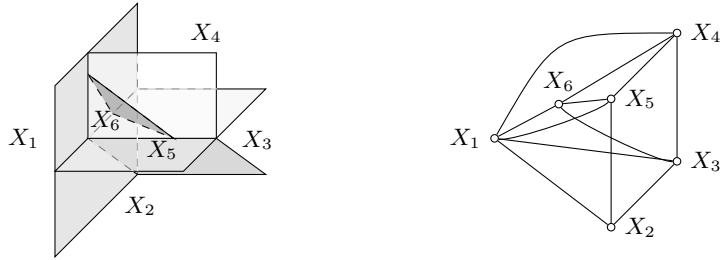
Clearly, every connected partition in
$\RCP(\R^n)$ has a connected neighbourhood graph; and conversely, one
can show that \emph{every} connected graph is the neighbourhood graph
of some connected partition in $\RCP(\R^n)$. Furthermore, every
neighbourhood graph $(V,E)$ gives rise to a quasi-saw $(W_0 \cup W_1,
R)$, where $W_0 = V$, $W_1 = \{z_{x,y} \mid (x,y) \in E\}$, and $R$ is
the reflexive closure of $\{(z_{x,y}, x), (z_{x,y}, y) \mid (x,y) \in
E \}$. Note that in this quasi-saw every point of depth 1 has
precisely two $R$-successors. Such quasi-saws are called
\emph{2-quasi-saws}. Conversely, every connected
2-quasi-saw $(W_0\cup W_1,R)$ can be represented as the neighbourhood
graph $(W_0,E)$ of some connected partition, where
\begin{align*}
E = \{ (x,y) \mid x\ne y \text{ and there is } z\in W_1 \text{ with } zRx \text{ and } zRy\}.
\end{align*}
{From} this, we see that a \cBci-formula $\varphi$ is satisfiable over
$\RCP(\R^n)$ if and only if $\varphi$ is satisfiable over a connected
2-quasi-saw. But, over 2-quasi-saws, connectedness coincides with
interior-connectedness.  Thus, $\varphi$ is satisfiable over
$\RCP(\R^n)$ if and only if $\phi^{\bullet}$ is satisfiable over a
connected 2-quasi-saw. The problem $\Sat(\cBc,\ConRC)$ is known to be
\ExpTime-complete~\cite{iscloes:kp-hwz10}.
\end{proof}

Having shown that the problem $\Sat(\cBci,\RCP(\R^3))$ is
\ExpTime-complete, we now turn our attention to the satisfiability of
\cBci-formulas over the complete Boolean algebra $\RC(\R^3)$, where
the picture changes drastically: for instance, the
\cBci-formula~\eqref{eq:wiggly} is not satisfiable over 2-quasi-saws,
but has a quasi-saw model as in Fig.~\ref{fig:broom}.
\begin{figure}[ht]
\centering\begin{tikzpicture}[>=latex, point/.style={circle,draw=black,minimum size=1mm,inner sep=0pt},yscale=0.7]\small
\node (r1) at (2,0.8)  [point,fill=black,label=left:{$x_1$}] {};
\node (r2) at (4,0.8)  [point,fill=black,label=right:{$x_2$}] {};
\node (r3) at (6,0.8)  [point,fill=black,label=right:{$x_3$}] {};
\node (r) at (4,0)  [point,label=below:{$z$}] {};
\draw[->] (r) -- (r1);
\draw[->] (r) -- (r2);
\draw[->] (r) -- (r3);
\end{tikzpicture}
\caption{A quasi-saw model $\mathfrak{A}$ of~\eqref{eq:wiggly}: $r_i^\mathfrak{A} = \{x_i,z\}$.}
\label{fig:broom}
\end{figure}
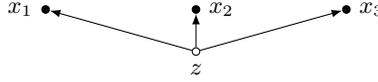
In fact, it is shown in~\cite{KPZ10kr} that every \cBci-formula
$\phi$ satisfiable over $\ConRC$ can be satisfied in a connected
quasi-saw model of size bounded by a polynomial function of $|\phi|$,
and thus the problem $\Sat(\cBci,\ConRC)$ is \NP-complete. The
following theorem says, in essence, that such polynomial models also
give rise to `small' models over regular closed subsets of $\R^n$, for
$n \geq 3$:
\begin{theorem}\label{theo:BciRCR3}
The problems $\Sat(\cBci,\RC(\R^n))$, for all $n \geq 3$, coincide with
$\Sat(\cBci,\ConRC)$ and are all \NP-complete.
\end{theorem}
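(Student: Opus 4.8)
## Proof Proposal for Theorem~\ref{theo:BciRCR3}

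The plan is to establish the two inclusions $\Sat(\cBci,\ConRC) \subseteq \Sat(\cBci,\RC(\R^n))$ and $\Sat(\cBci,\RC(\R^n)) \subseteq \Sat(\cBci,\ConRC)$ separately, and then invoke the known \NP-completeness of $\Sat(\cBci,\ConRC)$ from~\cite{KPZ10kr}. The reverse inclusion is the easier one: since $\R^n$ is a connected topological space, every frame $(\R^n, \RC(\R^n))$ lies in $\ConRC$, so any $\cBci$-formula satisfiable over $\RC(\R^n)$ is trivially satisfiable over $\ConRC$. All the work is therefore in the forward direction, and for that I would follow the strategy already visible in the excerpt: reduce to \emph{quasi-saw models}, then embed a finite quasi-saw model into $\R^n$ by means of a concrete geometric construction.

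First I would take a $\cBci$-formula $\phi$ satisfiable over $\ConRC$. By the polynomial-model result of~\cite{KPZ10kr}, $\phi$ is satisfiable in a \emph{connected} quasi-saw model $\mathfrak{A}$ based on a quasi-saw $(W_0 \cup W_1, R)$ whose size is polynomial in $|\phi|$. Recall that every regular closed set in a quasi-saw is determined by its depth-$0$ points, and that for a connected quasi-saw the ambient space is connected. The core of the proof is then a geometric realization lemma: given any finite connected quasi-saw $(W_0\cup W_1,R)$, one can build, inside $\R^3$ (and hence $\R^n$ for $n\ge 3$ by cylindrification), a collection of regular closed sets---one for each point of $W_0$---such that the Boolean combinations and the interior-connectedness predicate behave exactly as in the quasi-saw model. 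The natural picture is the ``broom'' of Fig.~\ref{fig:broom} generalized: represent each depth-$0$ point $x$ by a small region $U_x$, and each depth-$1$ point $z$ with $zRx_1,\dots,zRx_k$ by a ``hub'' region that touches precisely $U_{x_1},\dots,U_{x_k}$; the extra room afforded by the third dimension lets arbitrarily many such hubs meet a common region $U_x$ without forcing unwanted contacts, which is exactly what fails in $\R^2$ (this is the $K_5$-style obstruction of~\eqref{eq:2vs3}). One must check that for each subterm $\tau$ of $\phi$, the set $\tau^{\mathfrak{I}}$ realized in $\R^n$ has connected interior if and only if $\tau^{\mathfrak{A}}$ is connected in the quasi-saw---this reduces, by the structure of quasi-saws, to a statement about connectedness of the induced subgraph on the depth-$0$ points, which is preserved by the construction.

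The main obstacle I anticipate is making the geometric realization \emph{faithful} for the interior-connectedness predicate in both directions simultaneously, while keeping the construction polynomial-sized (needed for the \NP\ upper bound, though for mere r.e.-membership size is irrelevant; here we genuinely need a polynomial bound so that guessing the quasi-saw model and verifying it is an \NP\ procedure). Specifically, one must ensure that \emph{no spurious interior-connectedness} arises: if $\tau^{\mathfrak{A}}$ is disconnected in the quasi-saw, the realized region $\tau^{\mathfrak{I}}$ must have disconnected interior, so the hub-regions and the $U_x$'s must be separated carefully (e.g.\ by thin polyhedral buffers, or by arranging everything in general position). Conversely, whenever $zRx$ and $zRy$ the regions $U_x$ and $U_y$ together with the hub for $z$ must form an interior-connected set. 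A clean way to handle this is to realize the quasi-saw as a simplicial-complex-like structure in $\R^3$: thicken a geometric graph drawing (possible in $\R^3$ since any finite graph embeds) into solid tubes, take $U_x$ to be a ball around the vertex $x$ and the hub for $z$ to be a small ball joined by tubes to the balls of its $R$-successors. Then $\tau^{\mathfrak{I}}$ is a union of such pieces, and its interior is connected iff the corresponding pieces are ``linked'' through shared tubes, matching the quasi-saw semantics. Finally, the \NP\ lower bound is inherited from $\Sat(\cBci,\ConRC)$ via the reverse inclusion, and the upper bound follows because one can guess a polynomial quasi-saw model and verify it in polynomial time; hence all the problems $\Sat(\cBci,\RC(\R^n))$ for $n\ge 3$ coincide with $\Sat(\cBci,\ConRC)$ and are \NP-complete.
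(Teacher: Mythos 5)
Your high-level skeleton is right: reduce to a polynomial-size connected quasi-saw model via~\cite{KPZ10kr}, embed that model into $\RC(\R^3)$, cylindrify for $n>3$, and note the reverse inclusion is trivial since $(\R^n,\RC(\R^n))\in\ConRC$. But your geometric realization step---balls for depth-$0$ points joined by tubes through hubs for depth-$1$ points---cannot work, for two connected reasons.

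First, the embedding $f\colon\RC(W_0\cup W_1,R)\to\RC(\R^3)$ must be a Boolean algebra homomorphism (the language contains the constants $0$ and $1$, and negation), which forces $\sum_{x\in W_0}f(\{x\}^{\uparrow})=\R^3$. A finite arrangement of balls and thickened tubes does not exhaust $\R^3$; its complement is a huge connected open set. If you assign that leftover to one fixed $B_{x_0}$, then $B_{x_0}$ wraps around everything and $\ti{(B_{x_0}+B_y)}$ becomes connected for \emph{every} $y$, destroying the hard direction (disconnectedness in the quasi-saw must become disconnected interior in $\R^3$).

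Second, and more fundamentally, a polyhedral or finitely-decomposible construction like yours is provably impossible here. Formula~\eqref{eq:wiggly} is satisfiable over $\ConRC$ (it has the broom model of Fig.~\ref{fig:broom}) but is unsatisfiable over $\RCP(\R^n)$ for all $n\ge 1$; so any realization of such quasi-saw models in $\RC(\R^3)$ must use genuinely non-polyhedral regular closed sets. This is exactly what the paper's proof delivers: starting from seed balls $B^1_x$ arranged around balls $D_z$ ($z\in W_1$) with a distinguished $z_0$ owning the unbounded complement, it runs an $\omega$-indexed iteration over an enumeration of all rational points (to make the final $B_x$ sum to $\R^3$) and all rational piecewise-linear arcs (to ``break'' any arc between components that would otherwise produce spurious interior-connectedness), and then defines $B_x=\sum_{k}B^k_x$ in the \emph{complete} Boolean algebra $\RC(\R^3)$. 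Your appeal to ``thin polyhedral buffers'' or ``general position'' does not engage with this: the obstruction is not that the pieces are badly positioned, but that no finite piecewise-linear arrangement can simultaneously partition $\R^3$ and keep the disconnected $f(X)$'s disconnected. The hard direction of the faithfulness claim is precisely the part your proposal leaves as an assertion, and it is where essentially all the work of the paper's proof lies.
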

\begin{proof}
We need only establish the special case $n=3$; the general result
follows by cylindrification. So, suppose $\varphi$ is satisfied in a
model $\mathfrak{A}$ over a connected quasi-saw $(W_0\cup W_1,R)$ of
size bounded by a polynomial function of $|\phi|$. Let $W_i$ be the
set of points of depth $i = 0,1$.  Without loss of generality we may
assume that there is a point $z_0\in W_1$ with $z_0 R x$ for all $x\in
W_0$ (adding such a point cannot change the truth-values of
subformulas of $\varphi$ of the form $(\tau_1 = \tau_2)$ or
$\ic(\tau)$).  We show now how $\mathfrak{A}$ can be embedded into a
model over $\RC(\R^3)$. In the remainder of this proof, we repeatedly
rely on the fact that, if $r$ and $s$ are interior-connected, regular
closed subsets of some topological space, with $r \cdot s \neq 0$,
then $r+s$ is also interior-connected.

We select \emph{open} balls $D_z$ for $z\in W_1\setminus\{z_0\}$
such that their closures are pairwise non-intersecting, and define
$D_{z_0} = \R^3 \setminus \bigcup_{z\in W_1\setminus\{z_0\}}
\tc{D}_z$.  Thus, each $D_z$ is connected and open, and the open set
$D  = \bigcup_{z \in W_1} D_z$ is dense. Then we take pairwise disjoint sets
$B^1_x$ for $x\in W_0$, each homeomorphic to a \emph{closed} ball, and
arranged so that, for all $x \in W_0$ and $z \in W_1$,  $D_z \nsubseteq B^1_x$ and $B^1_x \cap D_z \neq \emptyset$ if and only if
$zRx$.

We describe a construction in which the regular closed sets $B^1_x$
are expanded to sets $B_x$ so as to exhaust the entire space, $\R^3$.
First, let $q_1, q_2, \dots$ be an enumeration of all the points in
$D$ with \emph{rational} coordinates.  Consider any piecewise-linear
Jordan arc $\alpha$ such that the endpoints of each linear segment of
$\alpha$ have rational coordinates: call such an $\alpha$ {\em
  rational piecewise-linear}; and let $\alpha_1, \alpha_2, \dots$ be
an enumeration of all the rational piecewise-linear arcs with both
endpoints in the open set $D$.  We define, for all $k \geq 1$, a
collection $\{B^k_x \mid x \in W_0\}$ of interior-connected, pairwise
disjoint, regular closed sets in $\R^3$.  The case $k=1$ has already
been dealt with. Suppose, then, for $k \geq 1$, the sets $B^k_x$ have
been defined; we construct the sets $B^{k+1}_x$ in two steps:
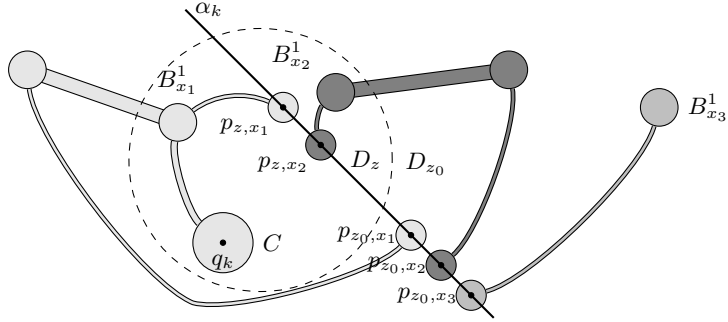
\begin{figure}[ht]
\begin{center}
\begin{tikzpicture}[
clball/.style={circle,draw=black,minimum size=3mm,inner sep=0pt},
opball/.style={circle,dashed,draw=black,minimum size=30mm,inner sep=0pt}
]
\draw[double=gray!20,double distance=5pt] (-0.4,-1.5) to (1.6,-2.2);
\draw[double=gray,double distance=5pt] (6,-1.5) to (3.7,-1.8);
%
\node (x10) at (-0.4,-1.5)[clball,minimum size=5mm,fill=gray!20] {};
\node (x20) at (6,-1.5)[clball,minimum size=5mm,fill=gray] {};
%
\node [label=above:{\small $B_{x_1}^1$}] (x1) at (1.6,-2.2)[clball,minimum size=5mm,fill=gray!20] {};
\node [label=above left:{\small $B_{x_2}^1$}] (x2) at (3.7,-1.8)[clball,minimum size=5mm,fill=gray] {};
\node [label=right:{\small $B_{x_3}^1$}] (x3) at (8,-2)[clball,minimum size=5mm,fill=gray!50] {};
%
\node (z1) at (2.7,-2.7)[opball,minimum size=35mm] {};
\node at (4.1,-2.7) {\small $D_z$};
\node at (4.9,-2.75) {\small $D_{z_0}$};
\node (xc1) at (2.2,-3.8)[clball,fill=gray!20,minimum size=8mm,label=right:{\small $C$}] {};
\node [label=below:{\small $q_k$}](q) at (xc1)[circle,inner sep=0pt,minimum size=2,draw=black,fill=black] {};
\draw[double=gray!20,double distance=2pt] (xc1) to [bend left, looseness=0.5] (x1);
%
\node (xcp1) at (3,-2)[clball,fill=gray!20,minimum size=4mm] {};
\node [label=below left:{\small $p_{z,x_1}$}] at (xcp1)[circle,inner sep=0pt,minimum size=2,draw=black,fill=black] {};
\draw[double=gray!20,double distance=1pt] (xcp1) to [bend right, looseness=0.9] (x1);
\node (xcp2) at (3.5,-2.5)[clball,fill=gray,minimum size=4mm] {};
\node [label=below left:{\small $p_{z,x_2}$}] at (xcp2)[circle,inner sep=0pt,minimum size=2,draw=black,fill=black] {};
\draw[double=gray,double distance=1pt] (xcp2) to [bend left, looseness=0.5] (x2);
\node (xcp10) at (4.7,-3.7)[clball,fill=gray!20,minimum size=4mm] {};
\node [label=left:{\small $p_{z_0,x_1}$}] at (xcp10)[circle,inner sep=0pt,minimum size=2,draw=black,fill=black] {};
\draw[double=gray!20,double distance=1pt] (xcp10) to [bend left, looseness=0.5] (1.8,-4.6) to [bend left, looseness=0.5] (x10);
\node (xcp20) at (5.1,-4.1)[clball,fill=gray,minimum size=4mm] {};
\node [label=left:{\small $p_{z_0,x_2}$}] at (xcp20)[circle,inner sep=0pt,minimum size=2,draw=black,fill=black] {};
\draw[double=gray,double distance=1pt] (xcp20) to [bend right, looseness=0.5] (x20);
\node (xcp30) at (5.5,-4.5)[clball,fill=gray!50,minimum size=4mm] {};
\node [label=left:{\small $p_{z_0,x_3}$}] at (xcp30)[circle,inner sep=0pt,minimum size=2,draw=black,fill=black] {};
\draw[double=gray!50,double distance=1pt] (xcp30) to [bend right, looseness=0.5] (x3);
\draw[thick] (1.7,-0.7) -- ++(4.1,-4.1) node[at start,right] {\small $\alpha_k$};
\end{tikzpicture}
\end{center}
\caption{Filling $D_z$ and $D_{z_0}$ with the sets $B_{x_i}$, for
  $z R x_1$, $z R x_2$ and $z_0 R x_i$, $i = 1,2,3$. 
  }
\label{fig:apollonian}
\end{figure}
\begin{enumerate}
\item If $q_k \in B^k_x$ for some $x \in W_0$,
  let $\hat{B}^k_{x'} = B^k_{x'}$, for every $x' \in W_0$.
  Otherwise, $q_k \in D_z$ for some $z \in W_1$. Pick some
  $x\in W_0$ with $zRx$ and let $C\subseteq D_z$ be a regular closed
  interior-connected set containing $q_k$ and a point in $\ti{(B^1_x)}
  \cap D_z$ in its interior (e.g., a closed ball centred on $q_k$ and
  a regular closed `rod' connecting it to $B^1_x$, as depicted in
  Fig.~\ref{fig:apollonian}). Let $\hat{B}^k_x = B^k_x + C$, and let
  $\hat{B}^k_{x'} = B^k_{x'}$ for all other $x' \in W_0$.  The sets
  $\hat{B}^k_{x'}$, for $x' \in W_0$, are interior-connected, and
  $C$ can be chosen so that the $\hat{B}^k_{x'}$ are
  pairwise disjoint.
\item Let $\hat{B}^k = \bigcup_{x\in W_0} \hat{B}^k_x$. For
  each $z \in W_1$ such that $\alpha_k \cap D_z$ is not contained in
  $\hat{B}^k$ and for each $x \in W_0$ such that $zRx$, choose a
  distinct point $p_{z,x} \in \alpha_k\cap D_z$, not lying in
  $\hat{B}^k$.  If $p_{z,x}$ is defined, let $C_{z,x} \subseteq D_z$
  be a regular closed interior-connected set containing $p_{z,x}$ and
  a point in $\ti{(B^k_x)}\cap D_z$ in its interior, see
  Fig.~\ref{fig:apollonian}; otherwise, let $C_{z,x}=\emptyset$.  Set
  $B^{k+1}_x = \hat{B}^k_x + \sum_{z \in W_1} C_{z,x}$, for all $x \in
  W_0$. The sets $B^{k+1}_x$ are interior-connected and, clearly, the
  $C_{z,x}$ can be chosen such that the $B^{k+1}_x$ are also pairwise
  disjoint.
\end{enumerate}
This completes the definition of the sets $\{B^{k}_x \mid x \in
  W_0\}$ for all $k \geq 1$.

Since $\RC(\R^3)$ is a complete Boolean algebra, define $B_x  =  \sum_{k =1}^\infty B^k_x$, for each $x
\in W_0$.
We show that the sets $B_x$ are interior-connected and form a
partition (i.e., their pairwise products are empty, and they sum to
$\R^3$).  Indeed, for distinct $x, y \in W_0$, we certainly have, for
all $k, \ell \geq 1$, $B^k_x \cdot B^\ell_y = 0$, whence, by the
distributivity law, $B_x \cdot B_y = 0$. And since, for all $k \geq
1$, $B^k_x$ is interior-connected and includes the non-empty,
interior-connected set $B^1_x$, the set $\bigcup_{k=1}^\infty
\ti{(B^k_x)}$ is connected. But $\ti{B}_x$ lies in between
$\bigcup_{k=1}^\infty \ti{(B^k_x)}$ and its closure, and hence is also
connected. Finally, by Step~1 of the above construction, every
rational point of the set $D$ lies in some $B_x$, so that $\sum_{x \in
  W_0} B_x \supseteq D$, whence $\sum_{x \in W_0} B_x = 1$.  This
completes the definition of the partition $\{B_x \mid x \in W_0\}$.

Now define a function $f\colon \RC(W,R) \rightarrow \RC(\R^3)$ by
\begin{equation*}
f(X) =  \sum_{x \in X \cap W_0} B_x.
\end{equation*}
Let $X \in \RC(W,R)$, and let $z$ be a point of $W_1$. We claim that $z
\in \ti{X}$ implies $D_z \subseteq f(X)$; further, if $zRx$, then $D_z
+ B_x$ is interior-connected. Indeed, if $z \in \ti{X}$, with $zRx$,
then $x \in X$. And since, by Step 1 of the above construction, every
rational point of $D_z$ lies in $B_x$ for some such $x$, it follows
that $D_z \subseteq \sum \{B_x \mid x \in X \cap W_0\} = f(X)$. The second
statement follows easily from the choice of the sets $B^1_x$ and the
fact that the sets $B_x$ are interior-connected.

We are now ready to show that $f$ is a Boolean algebra homomorphism,
and that $X \in \RC(W,R)$ is interior-connected if and only if $f(X)$
is interior-connected.  Trivially, $f(X+Y) = f(X)+f(Y)$; and since the
$B_x$ form a partition, $f(-X) = \sum_{x \in W_0 \setminus X} B_x =
-f(X)$. Now suppose $X \in \RC(W,R)$ is interior-connected, and let
$p, q$ be points in $\ti{f(X)}$. Then there exist points $p', q'$ in
the same components of $\ti{f(X)}$ as $p$, $q$, respectively, such
that, for some $k \geq 1$ and $x, y \in X \cap W_0$, we have $p' \in
B^k_x$ and $q' \in B^k_y$.  Since $X$ is interior-connected, we can
find sequences of points $x_0, \dots, x_m$ in $X \cap W_0$, and $z_1,
\dots, z_m \in \ti{X} \cap W_1$ such that $x = x_0$, $y = x_m$ and
$z_iRx_{i-1}$ and $z_iRx_{i}$, for all $1 \leq i \leq m$. But we have
shown above that the sets $D_{z_i}$ are subsets of $f(X)$, and that
the sets $D_{z_i} + B_{x_i}$ and $D_{z_i} + B_{x_{i-1}}$ are
interior-connected.  Hence, $p'$ and $q'$ lie in the same component of
$\ti{f(X)}$, whence $p$ and $q$ do as well. That is: $f(X)$ is
interior-connected, as required. Finally, suppose $X \in \RC(W,R)$ is
not interior-connected, so that we may find elements $x, y \in W_0$
lying in different components of $\ti{X}$. We show that $f(X)$ is not
interior-connected. For suppose otherwise.  Then there exists a
rational piecewise-linear arc $\alpha$ with endpoints in the sets $D
\cap \ti{(B^1_x)}$ and $D \cap \ti{(B^1_y)}$, and lying entirely in
$\ti{f(X)}$. But $\alpha$ occurs as some $\alpha_k$ in our
enumeration. It follows that there will be a first point $q'$ of
$\alpha_k$ lying in a set $\hat{B}^k_{y'}$ such that $x$ and $y'$ lie
in different components of $\ti{X}$; and there will be a last point
$p'$ of $\alpha_k$, occurring strictly before $q'$ and lying in a set
$\hat{B}^k_{x'}$. Obviously, $x'$ and $y'$ lie in different components
of $\ti{X}$.  Let $\alpha'$ be the interior segment of $\alpha_k$
between $p'$ and $q'$ (i.e.~without the end-points); thus $\alpha'$
does not intersect $\hat{B}^k$.  By construction of the sets $D_z$ ($z
\in W_1$), either $\alpha'$ lies entirely in $D_z$ for some $z \in
W_1 \setminus \{z_0\}$, or $\alpha'$ intersects $D_{z_0}$.  In the
former case, since $zRx'$ and $zRy'$, but $x'$ and $y'$ lie in
different components of $\ti{X}$, it follows that $z \notin
\ti{X}$. Thus, there exists $x'' \in W_0$ such that $zRx''$ and $x''
\notin X$. But then Step~2 in the above construction ensures that
$\alpha'\subseteq \alpha_k$ contains points of $B^{k+1}_{x''}\subseteq
B_{x''}$, contradicting the supposition that $\alpha_k \subseteq
\ti{f(X)}$.  On the other hand, if $\alpha' \cap D_{z_0} \neq
\emptyset$, then, since $X$ is not interior-connected, it follows that
$X \neq 1$, and so there certainly exists $x'' \in W_0$ such that
$z_0Rx''$ and $x'' \notin X$. By the same reasoning as before,
$\alpha'$ contains points of $B^{k+1}_{x''}\subseteq B_{x''}$,
contradicting the supposition that $\alpha_k \subseteq \ti{f(X)}$.

Now simply define an interpretation $\mathfrak{I}$ over $\RC(\R^3)$ by
setting $r^{\mathfrak{I}} = f(r^{\mathfrak{A}})$. It immediately
follows from the previous paragraph that $\phi$ is true in
$\mathfrak{I}$.
\end{proof}

This resolves the decidability and complexity of the problems
{\small $\Sat(\cBci,\RC(\R^n))$} and $\Sat(\cBci,\RCP(\R^n))$ for all $n \geq
3$.  Recall from \SECT\ref{sec:undecidability} that
$\Sat(\cL,\RCP(\R^n))$ is undecidable for all $n \geq 2$, where $\cL$
is any of \cBc, \cBCc{} or \cBCci; and recall from \SECT\ref{sec:2d}
that $\Sat(\cBci,\RCP(\R^2))$ is undecidable, and that
$\Sat(\cL,\RC(\R^2))$ is also undecidable, where $\cL$ is any of
\cBc, \cBci, \cBCc{} or \cBCci. At the time of writing, it is
not known whether any of the problems $\Sat(\cBc,\RC(\R^n))$,
$\Sat(\cBCc,\RC(\R^n))$ or $\Sat(\cBCci,\RC(\R^n))$, for $n \geq 3$,
is decidable. The best currently available lower bound can be found
in~\cite{iscloes:kp-hwz10}, where all three problems are shown to be
\ExpTime-hard.

\section*{Acknowledgements}
Roman Kontchakov and Michael Zakharyaschev acknowledge the support of
the EPSRC, grant ref.~EP/E034942/1. Yavor Nenov acknowledges the
support of the EPSRC, DTA account~FA01413.  Ian Pratt-Hartmann
acknowledges the support of the EPSRC, grant ref.~EP/E035248/1, and
expresses his thanks to the Department of Mathematics and Computer
Science, University of Wroc{\l}aw, and the Transregional Collaborative
Research Center SFB/TR 8 ``Spatial Cognition'', University of Bremen,
for their hospitality during the writing of this paper.

\bigskip
\begin{flushleft}
\begin{minipage}{5cm}
\begin{tabbing}
Yavor Nenov, Ian Pratt-Hartmann\\
School of Computer Science\\ 
University of Manchester\\
Manchester, M13 9PL, United Kingdom
\end{tabbing}
\end{minipage}

\bigskip

\begin{minipage}{5cm}
\begin{tabbing}
Roman Kontchakov, Michael Zakharyaschev\\
Department of Computer Science and Information Systems\\
Birkbeck, University of London\\ Malet Street, London, WC1E 7HX, United Kingdom
\end{tabbing}
\end{minipage}
\end{flushleft}

\bibliographystyle{plain}
\bibliography{utl}
\end{document}